\tikzset{every picture/.style={font issue=\footnotesize},
            font issue/.style={execute at begin picture={#1\selectfont}}
           }
\newtheorem{theorem}{Theorem}[section]
\newtheorem{lemma}[theorem]{Lemma}
\newtheorem{corollary}[theorem]{Corollary}
\newtheorem{proposition}[theorem]{Proposition}
\newtheorem{remark}[theorem]{Remark}
\newtheorem{definition}[theorem]{Definition}
\def\defeq{\mathrel{\mathop:}=}
\DeclarePairedDelimiter{\rbra}{\lparen}{\rparen} 
\DeclarePairedDelimiter{\cbra}{\lbrace}{\rbrace} 
\DeclarePairedDelimiter{\sbra}{\lbrack}{\rbrack} 
\newcommand{\indicator}{\mathbf{1}}
\DeclareMathOperator*{\E}{\mathbb{E}}
\DeclareMathOperator*{\Var}{\mathbf{Var}}
\newcommand{\condition}{\;\middle\vert\;}
\renewcommand{\epsilon}{\varepsilon}
\newcommand{\e}{\mathrm{e}}
\newcommand{\Nat}{\mathbb{N}}
\newcommand{\Real}{\mathbb{R}}
\newcommand{\Int}{\mathbb{Z}}
\newcommand{\binset}{\{0,1 \}}
\newcommand{\Otilde}{\widetilde{O}}
\newcommand{\constref}[1]{C_{\mbox{\tiny\ref{#1}}}}
\newcommand{\constrefs}[2]{C_{\mbox{\tiny\ref{#1}(\ref{#2})}}}
\newcommand{\varianceref}[1]{\variance_{\mbox{\tiny\ref{#1}}}}
\newcommand{\calF}{\mathcal{F}}
\newcommand{\taucons}{\tau_{\mathrm{cons}}}
\newcommand{\alphanorm}{\gamma}
\newcommand{\taudeltaplus}{\tau^{+}_\delta}
\newcommand{\tauetaplus}{\tau^{+}_\eta}
\newcommand{\cweak}{c^{\mathrm{weak}}}
\newcommand{\calphaup}{c^{\uparrow}_\alpha}
\newcommand{\calphadown}{c^{\downarrow}_\alpha}
\newcommand{\cdeltaplus}{c^{+}_\delta}
\newcommand{\opinion}{\mathsf{opn}}
\newcommand{\bounded}{D}
\newcommand{\variance}{s}
\newcommand{\taunormup}{\tau^{\uparrow}_\alphanorm}
\newcommand{\taunormdown}{\tau^{\downarrow}_\alphanorm}
\newcommand{\taunormplus}{\tau^{+}_{\alphanorm}}
\newcommand{\taudisappeari}{\tau^*_i}
\newcommand{\tauincreasei}{\tau^{\mathrm{active}}_i}
\newcommand{\calE}{\mathcal{E}}
\newcommand{\cnormup}{c^{\uparrow}_\alphanorm}
\newcommand{\cnormdown}{c^{\downarrow}_\alphanorm}
\newcommand{\alphain}{\alpha^{\mathrm{in}}}
\newcommand{\alphaout}{\alpha^{\mathrm{out}}}
\newcommand{\Bin}{\mathrm{Bin}}
\newcommand{\taudeltaup}{\tau^{\uparrow}_\delta}
\newcommand{\cdeltaup}{c^{\uparrow}_\delta}
\newcommand{\taudeltadown}{\tau^{\downarrow}_\delta}
\newcommand{\cdeltadown}{c^{\downarrow}_\delta}
\newcommand{\tauiup}{\tau^{\uparrow}_i}
\newcommand{\ciup}{c_\alpha^{\uparrow}}
\newcommand{\tauidown}{\tau^{\downarrow}_i}
\newcommand{\cidown}{c_\alpha^{\downarrow}}
\newcommand{\tauiweak}{\tau^{\mathrm{weak}}_i}
\newcommand{\taujweak}{\tau^{\mathrm{weak}}_j}
\newcommand{\cactive}{c^{\mathrm{active}}}
\newcommand{\taujup}{\tau^{\uparrow}_j}
\newcommand{\taujdown}{\tau^{\downarrow}_j}
\newcommand{\drift}{R}
\newcommand{\tinyexists}{{}^{\exists}}
\newcommand{\tinyforall}{{}^{\forall}}
\newcommand{\taux}{\tau_X}
\newcommand{\tauxplus}{\tau_X^+}
\newcommand{\tauxminus}{\tau_X^-}
\newcommand{\deltaratio}{\eta}
\newcommand{\tauetaup}{\tau^{\uparrow}_\eta}
\newcommand{\cetaup}{c^\uparrow_\eta}
\newcommand{\tauivanish}{\tau^{\mathrm{vanish}}_i}
\newcommand{\xdelta}{x_{\delta}}
\newcommand{\xeta}{x_{\eta}}
\newcommand{\Cdeltavarplus}{C_{\delta}}
\newcommand{\tauiactive}{\tau^{\mathrm{active}}_i}
\newcommand{\taustar}{\tau^*}
\newcommand{\xnorm}{x_{\alphanorm}}
\newcommand{\driftnorm}{\drift_{\alphanorm}}
\newcommand{\tauphiplus}{\tau_\varphi^+}
\newcommand{\tauphiup}{\tau^\uparrow_\varphi}
\newcommand{\cphiup}{c_\varphi^{\uparrow}}
\crefname{figure}{Figure}{Figures}
\crefname{remark}{Remark}{Remarks}
\crefname{equation}{}{}
\title{3-Majority and 2-Choices with Many Opinions}
\author{
    Nobutaka Shimizu\\
    \small{Institute of Science Tokyo}\\
    \small{\texttt{\href{shimizu.n.ah@m.titech.ac.jp}{shimizu.n.ah@m.titech.ac.jp}}}
   \and
    Takeharu Shiraga\\
    \small{Chuo University}\\
   \small{\texttt{\href{shiraga.076@g.chuo-u.ac.jp}{shiraga.076@g.chuo-u.ac.jp}}}
}
\date{\today}
\begin{document}
\maketitle

\begin{abstract}
  We present the first nearly‑optimal bounds on the consensus time for the well‑known synchronous consensus dynamics, specifically 3‑Majority and 2‑Choices, for an \emph{arbitrary} number of opinions. 
  In synchronous consensus dynamics, we consider an $n$-vertex complete graph with self‑loops, where each vertex holds an opinion from $\{1,\dots,k\}$.
  At each discrete-time round, all vertices update their opinions simultaneously according to a given protocol. 
  The goal is to reach a consensus, where all vertices support the same opinion.
  In 3-Majority, each vertex chooses three random neighbors with replacement and updates its opinion to match the majority, with ties broken randomly.
  In 2‑Choices, each vertex chooses two random neighbors with replacement. 
  If the selected vertices hold the same opinion, the vertex adopts that opinion. 
  Otherwise, it retains its current opinion for that round.

  Improving upon a line of work [Becchetti et al., SPAA'14], [Becchetti et al., SODA'16], [Berenbrink et al., PODC'17], [Ghaffari and Lengler, PODC'18], we prove that, for every $2\le k \le n$, 
  3-Majority (resp.\ 2-Choices) reaches consensus within $\widetilde{\Theta}(\min\{k,\sqrt{n}\})$ (resp.\ $\widetilde{\Theta}(k)$) rounds with high probability. 
  Prior to this work, the best known upper bound on the consensus time of 3-Majority was $\widetilde{O}(k)$ if $k \ll n^{1/3}$ and $\widetilde{O}(n^{2/3})$ otherwise, and for 2-Choices, the consensus time was known to be $\widetilde{O}(k)$ for $k\ll \sqrt{n}$.
\end{abstract}

\tableofcontents
\section{Introduction}
We present nearly tight bounds on the convergence time of two well-known consensus dynamics: 3-Majority and 2-Choices. These bounds apply to any number of opinions under the synchronous update rule on a complete graph with self-loops. Specifically, we provide upper and lower bounds that differ by at most polylogarithmic factors.

In synchronous consensus dynamics, we consider a distributed system consisting of an $n$-vertex graph where each vertex holds an element from a finite set $[k]=\{1,\dots,k\}$, referred to as an \emph{opinion}. At each discrete-time round, all vertices simultaneously update their opinions according to a protocol. The goal is to reach a consensus, where all vertices support the same opinion, which must be initially supported by at least one vertex (validity condition). Additionally, the protocol should satisfy the plurality condition: if the most popular initial opinion has a sufficiently large margin, consensus will favor this opinion. The main quantity of interest is the consensus time, the number of rounds required to reach consensus. For background and applications of consensus dynamics, see \cite{consensus_dynamics_becchetti20} and references therein.

3-Majority and 2-Choices are simple probabilistic protocols that satisfy both validity and plurality conditions while achieving a small consensus time with high probability. In 3-Majority, each vertex $u$ chooses three random neighbors with replacement and updates its opinion to match the majority, with ties broken randomly. In 2-Choices, each vertex $u$ chooses two random neighbors with replacement. If the selected vertices hold the same opinion $\sigma$, $u$ updates its opinion to $\sigma$. Otherwise, $u$ does not change its opinion in that round.

Throughout this paper, unless otherwise noted, the underlying graph is the $n$-vertex complete graph with self-loops (thus, choosing a random neighbor corresponds to choosing a vertex uniformly at random). The main result of this paper is as follows:
\begin{theorem}[Main] \label{thm:main theorem}
The consensus time of 3-Majority is $\widetilde{\Theta}(\min\{\sqrt{n},k\})$\footnote{$\widetilde{\Theta}(\cdot)$ and $\widetilde O(\cdot)$ hide polylogarithmic factors.} with high probability\footnote{The term ``with high probability'' means that the event holds with probability $1-O(n^{-c})$ for some constant $c>0$.} for all $2\le k\le n$.
Moreover, if $k = o(\sqrt{n}/\log n)$ and the most popular opinion is supported by $\omega(\sqrt{n\log n})$ more vertices than any other opinion, then 3-Majority reaches consensus on the most popular opinion with high probability.

Similarly, the consensus time of 2-Choices is $\tilde{\Theta}(k)$ with high probability for all $2\le k \le n$. Moreover, if $k=o(n/(\log n)^2)$ and 
the most popular opinion is supported by $\omega(\sqrt{\alpha_1 n\log n})$ more vertices than any other opinion where $\alpha_1$ is the fraction of vertices supporting the most popular opinion, then 2-Choices reaches consensus on the most popular opinion.
\end{theorem}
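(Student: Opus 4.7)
The plan is to prove each of the four parts of the theorem via a phase-based drift analysis on suitable opinion-count potentials, and then derive the plurality claims as by-products of the upper-bound analysis. Throughout, let $c_i(t)$ denote the number of vertices supporting opinion $i \in [k]$ at round $t$, and let $\Phi(t) = \sum_j c_j(t)^2$ be the squared-mass potential. The starting point for both dynamics is the one-step drift identity
\begin{equation*}
\E[c_i(t+1) \mid c(t)] - c_i(t) = \frac{c_i(t)}{n^2}\bigl(n c_i(t) - \Phi(t)\bigr),
\end{equation*}
which is identical for 3-Majority and 2-Choices; the distinction lies entirely in the per-round variance of $c_i(t+1)$, which is $\Theta(n/k)$ for 3-Majority and $\Theta(n/k^2)$ for 2-Choices in the near-balanced regime, because in 2-Choices only a $\Theta(1/k)$ fraction of vertices actually change opinion per round.

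For the 3-Majority upper bound I would partition time into a symmetry-breaking phase and a consolidation phase. The positive variance of each $\Delta c_i$ contributes a drift of $\Theta(n)$ to $\Phi$ per round in the near-balanced regime, so Freedman/Azuma concentration yields $\Phi(t) = \Omega(n^{3/2})$ within $\widetilde{O}(\sqrt n)$ rounds when $k \ge \sqrt n$; at that point the drift identity gives the leading opinion a multiplicative advantage of order $1/\sqrt n$ per round and consolidation completes in $O(\log n)$ further rounds. When $k < \sqrt n$, $\Phi(0) = n^2/k$ already exceeds the symmetry-breaking threshold, but the gap between the leading opinion and the rest has multiplicative drift only $\Theta(1/k)$ and noise of order $\sqrt{n/k}$ per step, so the leading opinion reaches support $\Omega(n/k)$ only after $\widetilde{O}(k)$ rounds. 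These two regimes combine to give the $\widetilde{O}(\min\{\sqrt n,k\})$ bound.

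For the 3-Majority lower bound I would start from the symmetric configuration $c_i(0) = n/k$ and track the pairwise gaps $D_{ij}(t) = c_i(t) - c_j(t)$. The drift identity makes $D_{ij}$ a near-martingale with per-step variance $\Theta(n/k)$ while the distribution is near-balanced; Freedman-type concentration plus a union bound over $i,j$ gives $\max_{i,j} |D_{ij}(t)| = o(n/k)$ with high probability until $t = \widetilde{\Omega}(\min\{k,\sqrt n\})$, so no opinion can vanish before this time. The 2-Choices upper and lower bounds follow the same template, but the smaller per-step variance $\Theta(n/k^2)$ (with the same drift identity and the same $\Theta(1/k)$ multiplicative gap drift) shifts both phases to $\widetilde{\Theta}(k)$ uniformly in $k$, regardless of whether $k$ is above or below $\sqrt n$.

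The plurality refinements follow from the same identities. Once the initial gap $c_1(0) - c_j(0)$ exceeds the relevant noise threshold ($\omega(\sqrt{n\log n})$ for 3-Majority, $\omega(\sqrt{\alpha_1 n\log n})$ for 2-Choices), the drift on $c_1 - c_j$ has the correct sign throughout the symmetry-breaking phase with high probability, and a martingale-concentration argument shows the gap never crosses zero, so the plurality opinion survives and wins the consensus. The main obstacle I anticipate is the tight analysis in the large-$k$ regime of many balanced opinions, where a naive union bound over the $k$ opinions loses polylogarithmic factors and gives only the previously known $\widetilde{O}(n^{2/3})$ bound; to recover the nearly tight $\widetilde{O}(\sqrt n)$ bound I would introduce a thresholded ``active-opinion'' count $|\{i : c_i \ge n/(k \operatorname{polylog} n)\}|$ and show it decreases monotonically, allowing a union bound whose cost depends only on the number of currently active opinions rather than on $k$.
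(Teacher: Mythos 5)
The core scaffolding of your proposal matches the paper: you identify the squared-mass potential $\Phi(t)=\sum_j c_j(t)^2$ (the paper's $n^2\gamma_t$), you write the same one-step drift identity $\E[c_i(t+1)\mid c(t)]-c_i(t)=\frac{c_i(t)}{n^2}(nc_i(t)-\Phi(t))$, you recognize that the per-round noise scale for 2-Choices is a factor of $k$ smaller than for 3-Majority, and you correctly flag multi-step Freedman-type concentration as the tool that lets you beat the $k\approx n^{1/3}$ barrier of one-step Chernoff arguments. This is all aligned with the paper's framework.

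The main structural gap is that your argument for eliminating opinions reasons about ``the leading opinion'' as if it will simply pull away, and this is not enough. In the regime after $\gamma_t$ has grown, many opinions can still be within a constant factor of each other, and the paper's way of handling this is the weak/strong classification ($i$ is weak if $\alpha_t(i)<(1-c)\gamma_t$) together with the \emph{pairwise} argument: for every pair of strong opinions $i,j$, the bias $\delta_t(i,j)$ has both a multiplicative drift and an additive (variance-driven) drift, so after $O(\log n/\gamma_0)$ rounds at least one of $i,j$ is weak, and weak opinions vanish. A union bound over pairs then gives the result with no loss of $\log$ factors. Your proposal has no analogue of this pairwise elimination; you would need it (or something equivalent) to actually finish the consolidation argument. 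Relatedly, your estimate that consolidation after $\Phi=\Omega(n^{3/2})$ takes $O(\log n)$ rounds is off by a factor of $\sqrt n$: a multiplicative advantage of $1+\Theta(1/\sqrt n)$ per round needs $\Theta(\sqrt n\log n)$ rounds to amplify from $\Theta(1/\sqrt n)$ to $\Theta(1)$, and the paper's Theorem~\ref{thm:consensus time large alphanorm} gives $O(\log n/\gamma_0)=O(\sqrt n)$ here, not $O(\log n)$. This does not change the final $\widetilde O(\sqrt n)$ bound but the stated reasoning is wrong.

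Your diagnosis of the large-$k$ obstruction is also not quite right, and the fix you propose is problematic. You attribute the earlier $\widetilde O(n^{2/3})$ barrier to a lossy union bound over $k$ opinions, and propose to repair it by tracking a thresholded active-opinion count $|\{i:c_i\ge n/(k\,\mathrm{polylog}\,n)\}|$ and showing it is monotonically decreasing. Two objections. First, the true bottleneck in prior work is the per-round noise $\Theta(1/\sqrt{nk})$ overwhelming the per-round drift $\Theta(1/k^2)$ once $k\gg n^{1/3}$ (this is Section~\ref{sec:make rigorous} of the paper), not the union bound, which the paper takes over $\binom{n}{2}\le n^2$ pairs at cost $n^{-3}$ each without loss. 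Second, monotone decrease of such a count will not hold with high probability per round---a small opinion can be pushed above your threshold by fluctuations---and this style of opinion-count bound is exactly the route of \citet{ignore_or_comply}, which the paper explicitly notes does \emph{not} extend to 2-Choices. Since your theorem needs a unified treatment of 2-Choices for all $k$, the active-opinion-count detour is a dead end; the paper avoids it by showing directly that $\gamma_t$ grows additively (Theorem~\ref{thm:growth of alphanorm}) and then falling back on Theorem~\ref{thm:consensus time large alphanorm}, a route that works for both dynamics.

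Your lower-bound and plurality sketches are essentially fine. Tracking all pairwise gaps $D_{ij}$ from the balanced start and showing $\max_{ij}|D_{ij}|=o(n/k)$ via Freedman for $\widetilde\Omega(k)$ rounds does give the lower bound when $k\lesssim\sqrt{n/\log n}$; the paper instead bounds $\Pr[\tauiup\le \Theta(k)]$ (\cref{item:tauiup} of \cref{lem:drift analysis for basic}), which is slightly more economical but equivalent in spirit. The plurality claim as you describe it (the gap $c_1-c_j$ retains its sign with high probability once it starts above the noise scale) is the right idea and is what \cref{lem:initial bias weak} formalizes, though you again need the weak-opinion machinery to finish: once $j$ is weak, you still need \cref{lem:weakvanish} to actually remove it.
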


Prior to this work, the best known upper bound for the consensus time of 3-Majority was $\Otilde(k)$ if $k= O(n^{1/3}/\sqrt{\log n})$ and $\Otilde(n^{2/3})$ otherwise \cite{nearly_tight_analysis,ignore_or_comply}.
For 2-Choices, the consensus time was known to be $\Otilde(k)$ for $k=O(\sqrt{n/\log n})$ \cite{nearly_tight_analysis}.
\Cref{thm:main theorem} improves both of these bounds, as summarized in \cref{fig:comparison}.
In particular, for 2-Choices, \cref{thm:main theorem} provides the first upper bound that holds for \emph{any} $k$.
For more detailed results that take the logarithmic terms into account, see \cref{sec:proof_outline} (\cref{thm:consensus time large alphanorm,thm:growth of alphanorm,thm:plurality,thm:lowerbound}).

\begin{figure}[t]
  \begin{center}
    \begin{minipage}{0.45\textwidth}
      \includegraphics[width=\textwidth]{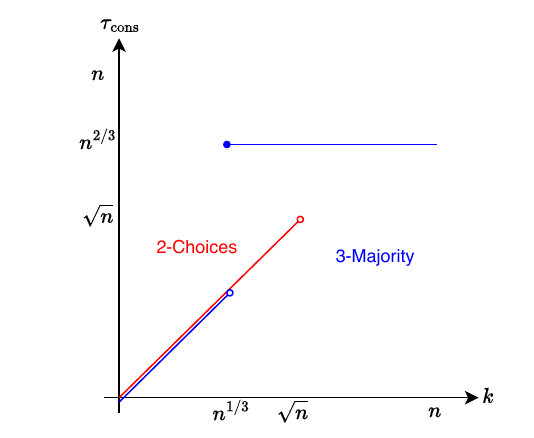}
      \subcaption{Prior to this work}
     \end{minipage}
    \hfill
    \begin{minipage}{0.45\textwidth}
      \includegraphics[width=\textwidth]{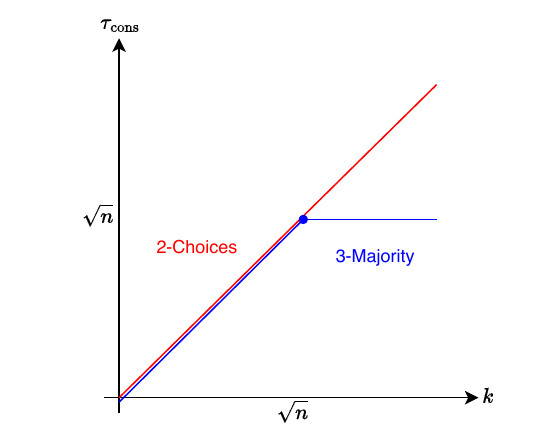}
      \subcaption{This work (\cref{thm:main theorem})}
    \end{minipage}
  \end{center}
  \caption{Upper bounds on the consensus time of 3-Majority (blue) and 2-Choices (red). Here, we ignore polylogarithmic factors. \label{fig:comparison}}
\end{figure}

\subsection{Related Results}
3-Majority on a complete graph with multiple opinions was initially studied by \cite{simple_dynamics} and subsequently by \cite{stabilizing_consensus,nearly_tight_analysis,ignore_or_comply}.
\citet{simple_dynamics} showed that the consensus time is $O(k\log n)$ with high probability for $k = O((n/\log n)^{1/3})$, assuming the most popular opinion has a significant margin.
\citet{stabilizing_consensus} removed this margin condition, proving a consensus time of $\Otilde(k^3)$ with high probability for $k = O(n^{1/3-\varepsilon})$ for any $\varepsilon>0$.
\citet{nearly_tight_analysis} improved this to $O(k\log n)$ for $k \le O(n^{1/3} / \sqrt{\log n})$.
For larger $k$, \citet{ignore_or_comply} showed that after $T$ steps, the number of remaining opinions is at most $O(n\log n / T)$ with high probability.
Combining this with \cite{nearly_tight_analysis}, the consensus time is $O(k\log n)=\Otilde(k)$ for $k=O(n^{1/3}/\sqrt{\log n})$ and $O(n^{2/3}(\log n)^{3/2})=\Otilde(n^{2/3})$ otherwise.

2-Choices was first implicitly studied in \cite{Doerr11}.
In their protocol, each vertex $u$ takes the median of its own opinion and those of two randomly chosen neighbors. For $k=2$, this coincides with 2-Choices, and they proved a consensus time of $O(\log n)$ with high probability. This proof technique also applies to 3-Majority, yielding the same upper bound.
Subsequent works \cite{twochoice_ICALP14,twochoice_expander_DISC15,twochoice_expander_DISC17,CNS19,SS19,quasi-majority,CNN+18} focused on 2-Choices and 3-Majority for $k=2$ on various graph classes (e.g., expander, stochastic block model, core-periphery graph).
For $k\ge2$, \citet{ignore_or_comply} proved a general lower bound.
For example, starting with the balanced initial configuration, the consensus time is $\Omega\qty(\min\{k,n/\log n\})$ with high probability for any $2 \le k\le n$.
This matches the lower bound of \cref{thm:main theorem} for 2-Choices.
\citet{nearly_tight_analysis} proved a consensus time of $O(k\log n)$ with high probability for $k=O(\sqrt{n/\log n})$.

In the asynchronous model, where a uniformly random vertex updates its opinion each round, the consensus time of 3-Majority was studied by \citet{hierarchy_Berenbrink} (for $k=2$) and \citet{async_3-majority_soda25} (for general $k$).
\citet{async_3-majority_soda25} showed that the consensus time is $\Otilde(\min(kn,n^{3/2}))$ with high probability for all $k\le n$ and any initial opinion configuration.
Considering that one round of synchronous dynamics equates to $n$ rounds of asynchronous dynamics, their result implies a consensus time of $\Otilde(\min(\sqrt{n},k))$ for synchronous 3-Majority.
However, their proof technique does not directly apply to synchronous dynamics, leaving the consensus time for synchronous dynamics as an open problem.
In \cref{sec:make rigorous}, we discuss the main obstacles in applying their technique to synchronous dynamics and how we overcome them.

\section{Proof Outline} \label{sec:proof_outline}
In this section, we outline the essential ideas underlying the proof of \cref{thm:main theorem}, focusing on the upper bound. 
First, we introduce two general results (\cref{thm:consensus time large alphanorm,thm:growth of alphanorm}) that form the upper bound of \cref{thm:main theorem}. 
Next, we present a heuristic argument for their proofs, focusing on 3‑Majority, and explain how to make this argument rigorous using Freedman's inequality. 
Lastly, we offer additional remarks on the general results regarding plurality consensus and lower bounds (\cref{thm:plurality,thm:lowerbound}). 
We conclude this section by listing some open problems.

We begin by introducing some notation.
For a given opinion $i\in [k]$, we define $\alpha_t(i)$ as the fraction of vertices that support opinion $i$ at round $t$. 
The key quantity of interest is the $\ell^2$-norm $$\alphanorm_t := \sum_{i\in [k]} \alpha_t(i)^2.$$
Note that $\alphanorm_t\geq 1/k$ holds for any $t$
since $1=(\sum_{i\in [k]}\alpha_t(i))^2\leq \sum_{i\in [k]}\alpha_t(i)^2\sum_{i\in [k]}1^2=\alphanorm_t k$ from the Cauchy-Schwarz inequality.
\subsection{General Results on Upper Bounds}
We introduce two general results that lead to the upper bound results of \cref{thm:main theorem}. 
The first shows that if the initial value of the $\ell^2$-norm $\alphanorm_0$ is sufficiently large, 
then the consensus times of 3-Majority and 2-Choices are $O\qty(\frac{\log n}{\alphanorm_0})$ with high probability.

\begin{theorem}[Starting from large $\alphanorm_0$] \label{thm:consensus time large alphanorm}
  The consensus time of 3-Majority starting from any initial configuration provided that $\alphanorm_0 \ge \frac{C\log n}{\sqrt{n}}$ for a sufficiently large constant $C>0$ is $O\qty(\frac{\log n}{\alphanorm_0})$ with high probability.

  Similarly, the consensus time of 2-Choices starting from any initial configuration provided that $\alphanorm_0 \ge \frac{C(\log n)^2}{n}$ for a sufficiently large constant $C>0$ is $O\qty(\frac{\log n}{\alphanorm_0})$ with high probability.
\end{theorem}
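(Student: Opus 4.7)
My plan is to analyze the $\ell^2$ potential $\alphanorm_t$ as a positive sub-martingale that grows multiplicatively, combining a drift lemma with Freedman's inequality inside a doubling-phase argument. The starting point is the common identity $\E[\alpha_{t+1}(i)\mid\calF_t]=\alpha_t(i)(1+\alpha_t(i)-\alphanorm_t)$, which one verifies directly from the two update rules, so 3-Majority and 2-Choices can be treated almost in parallel---the only difference being the conditional variance of the single-round update.

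\textbf{Step 1 (drift lemma).} Using $\E[\alphanorm_{t+1}\mid\calF_t]=\sum_i (\E[\alpha_{t+1}(i)\mid\calF_t])^2+\sum_i\Var(\alpha_{t+1}(i)\mid\calF_t)$ and the identity, the deterministic part expands to $\alphanorm_t+2(\sum_i\alpha_t(i)^3-\alphanorm_t^2)+\sum_i\alpha_t(i)^2(\alpha_t(i)-\alphanorm_t)^2$; the last two summands are non-negative by Cauchy-Schwarz and, in any configuration with a mild lead of some opinion over the average, can be bounded below by a constant multiple of $\alphanorm_t^2$. Near uniformity these terms collapse, but the variance contribution $\sum_i\Var(\alpha_{t+1}(i)\mid\calF_t)$---equal to $\Theta(1/n)$ for 3-Majority and only $\Theta(\alphanorm_t/n)$ for 2-Choices, because a 2-Choices vertex updates only when its two samples agree---injects a strictly positive drift that breaks the symmetry within a few rounds. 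The stated threshold on $\alphanorm_0$ is precisely where these two contributions combine to give an effective multiplicative drift $\E[\alphanorm_{t+1}-\alphanorm_t\mid\calF_t]\geq c\alphanorm_t^2$ that exceeds the one-round standard deviation.

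\textbf{Step 2 (Freedman concentration and doubling phases).} To convert the drift into a high-probability statement I apply Freedman's inequality to $M_t:=\alphanorm_t-\sum_{s<t}\E[\alphanorm_{s+1}-\alphanorm_s\mid\calF_s]$. Each increment is bounded by $O(1/\sqrt n)$ almost surely (each vertex changes each $\alpha_{t+1}(i)$ by at most $1/n$ and $\alphanorm$ is $2$-Lipschitz in $\ell^2$), and a direct computation using binomial/multinomial variance bounds the conditional per-step variance by $O(\alphanorm_t/n)$ for 3-Majority and $O(\alphanorm_t^2/n)$ for 2-Choices. I then partition time into phases $j=0,1,\dots$ with $\alphanorm_t\in[2^j\alphanorm_0,2^{j+1}\alphanorm_0]$ and set phase length $T_j=\Theta(\log n/(2^j\alphanorm_0))$; the accumulated drift in phase $j$ is $\Omega(2^j\alphanorm_0\log n)$, while the cumulative Freedman variance is $O(\log n/n)$ (3-Majority) or $O(2^j\alphanorm_0\log n/n)$ (2-Choices), so the thresholds on $\alphanorm_0$ are exactly what forces the drift to dominate the Freedman deviation in every phase. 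A union bound over the $O(\log n)$ phases then drives $\alphanorm_t$ to a constant in $\sum_j T_j=O(\log n/\alphanorm_0)$ rounds, and a standard two-stage endgame---first eliminating minority opinions, then analyzing the resulting biased two-opinion walk---absorbs the final $O(\log n)$ rounds into the same bound.

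\textbf{Main obstacle.} The most delicate step is the concentration in the critical regime where $\alphanorm_0$ sits near the threshold: per-phase drift and standard deviation become of the same order, and the $\log n$ budget must be split carefully between the per-phase Freedman confidence and the union bound over the $\Theta(\log n)$ phases. A secondary subtlety, particularly acute for 2-Choices, is that the deterministic drift contribution $\sum_i\alpha_t(i)^3-\alphanorm_t^2$ nearly vanishes at near-uniform configurations, so the analysis must rely entirely on the (smaller) variance term to restart the multiplicative growth whenever this happens---this delicate accounting is precisely what the weaker threshold $\alphanorm_0\geq C(\log n)^2/n$ for 2-Choices reflects.
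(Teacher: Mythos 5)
Your proposal is conceptually appealing but has two genuine gaps, and it takes a fundamentally different route from the paper that I don't think can be made to work as stated.

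\textbf{Gap 1: the multiplicative drift claim is false in the critical regime.} You claim that after accounting for both the deterministic term $2(\|\alpha_t\|_3^3-\alphanorm_t^2)$ and the variance term, the drift satisfies $\E_t[\alphanorm_{t+1}-\alphanorm_t]\ge c\,\alphanorm_t^2$. At or near the uniform configuration ($\alpha_t(i)\equiv 1/k$) the deterministic term vanishes identically, as you note, and the variance term gives only $\Theta(1/n)$ for 3-Majority (cf.\ \cref{item:expectation of alphanorm} of \cref{lem:basic inequality}, which gives $\E_{t-1}[\alphanorm_t]-\alphanorm_{t-1}\ge(1-\alphanorm_{t-1})/n$). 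But the theorem's assumption is $\alphanorm_0 \ge C\log n/\sqrt n$, i.e.\ $\alphanorm_0^2\ge C^2(\log n)^2/n\gg 1/n$. So the drift you actually have, $\Theta(1/n)$, is smaller than $\alphanorm_t^2$ by a factor of $(\log n)^2$, not larger, and the ``effective multiplicative drift'' never materializes. (Analogously for 2-Choices, where the variance drift is $\Theta(\alphanorm_t/n)\ll\alphanorm_t^2$ whenever $\alphanorm_t\gg 1/n$.) With only an additive $\Omega(1/n)$ drift, $\alphanorm_t$ takes $\Theta(\sqrt n\cdot\polylog n)$ rounds to reach a constant, not $O(\log n/\alphanorm_0)$; this is exactly the content of \cref{thm:growth of alphanorm}, which the paper uses only to get \emph{into} the regime of the present theorem, not to finish the proof.

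\textbf{Gap 2: the bounded-increment claim is false, and this is precisely the known obstacle for synchronous dynamics.} You assert that $|\alphanorm_{t+1}-\alphanorm_t|\le O(1/\sqrt n)$ almost surely, justified by ``each vertex changes each $\alpha_{t+1}(i)$ by at most $1/n$ and $\alphanorm$ is $2$-Lipschitz in $\ell^2$.'' Each vertex \emph{contributes} $\pm 1/n$ to $\alpha_{t+1}(i)$, but all $n$ vertices update simultaneously, so $\|\alpha_{t+1}-\alpha_t\|_2$ can be $\Theta(1)$ (for example, if every vertex switches to the same opinion), and hence $\alphanorm_{t+1}-\alphanorm_t$ can be $\Theta(1)$. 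The classical form of Freedman's inequality therefore does not apply. This is exactly the structural difficulty the paper singles out in \cref{sec:make rigorous}: the one-step jump of $\alpha_t(i)$ in the synchronous model has no good deterministic bound, which is why a direct port of the asynchronous argument fails. The paper's workaround is to replace the bounded-jump hypothesis with the Bernstein condition (\cref{def:Bernstein condition}, \cref{lem:Bernstein condition for sync processes}) and verify it holds for the one-step changes using independence (and, for $\alphanorm_t$, negative association of the per-vertex indicators). Your proposal neither invokes this relaxation nor offers a substitute, so the concentration step as written does not go through.

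\textbf{The paper's route is genuinely different.} Rather than driving $\alphanorm_t$ to a constant and then appealing to an ``endgame,'' the paper argues \emph{pairwise}: for every two opinions $i,j$, within $O(\log n/\alphanorm_0)$ rounds either $i$ or $j$ becomes weak (\cref{lem:gap amplification,lem:initial bias weak}), and any weak opinion then vanishes in another $O(\log n/\alphanorm_0)$ rounds (\cref{lem:weakvanish}). A union bound over all $\binom{k}{2}$ pairs finishes the proof of the theorem. This decomposition is crucial because it never needs $\alphanorm_t$ to be large in absolute terms; it only needs $\alphanorm_t$ not to decrease much (\cref{lem:taunormdown is large}), which is a much milder statement than multiplicative growth. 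Your sketch of a ``standard two-stage endgame---first eliminating minority opinions, then a biased two-opinion walk'' glosses over what the paper treats as the main body of the argument; in particular, controlling the vanishing times of many small opinions \emph{simultaneously} and to the right accuracy is exactly what the pairwise weak-vs-strong framework is designed to handle, and it is not standard.

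In short: the drift you invoke is of the wrong order, the Freedman step fails without the Bernstein-condition machinery, and the endgame is the bulk of the proof rather than a footnote.
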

Since $\alphanorm_0 \geq 1/k$, \cref{thm:consensus time large alphanorm} implies that the consensus time is $O(k \log n)$ with high probability for 3-Majority when $k =o( \sqrt{n}/\log n)$ and for 2-Choices when $k =o( n/(\log n)^2)$. These bounds match those of \cref{thm:main theorem} for such small $k$. Notably, these ranges of $k$ improve upon the previously best-known results~\cite{nearly_tight_analysis}, where the $O(k \log n)$ consensus time was shown for 3-Majority with $k = O(n^{1/3}/\sqrt{\log n})$ and for 2-Choices with $k = O(\sqrt{n/\log n})$.

The second general result guarantees that even when the $\ell^2$-norm $\alphanorm_0$ is initially small, it rapidly increases to a regime where \cref{thm:consensus time large alphanorm} becomes applicable.

\begin{theorem}[Growth of $\alphanorm_t$] \label{thm:growth of alphanorm}
  Let $c_*>0$ be any constant.
  For 3-Majority starting from any initial configuration, with high probability, we have $\alphanorm_T \ge \frac{c_*\log n}{\sqrt{n}}$ for some $T=O\qty(\sqrt{n}(\log n)^{2})$.

  Similarly, for 2-Choices starting from any initial configuration, with high probability, we have $\alphanorm_T \ge \frac{c_*(\log n)^2}{n}$ for some $T=O(n(\log n)^3)$.
\end{theorem}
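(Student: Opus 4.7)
The plan is to bound the drift of the $\ell^2$-potential $\alphanorm_t$ from below and then apply Freedman's inequality to turn that drift into a high-probability hitting-time statement. Once $\alphanorm_t$ crosses the threshold of \cref{thm:consensus time large alphanorm}, the upper bound of \cref{thm:main theorem} follows by composition. I describe the 3-Majority case explicitly; 2-Choices is analogous with constants reflecting its larger noise.

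The first step is to compute the drift. By symmetry, each vertex independently adopts opinion $i$ with probability $P_i := \alpha_t(i)\bigl(1+\alpha_t(i)-\alphanorm_t\bigr)$, and since the $n$ updates within a round are independent,
\[
\E[\alphanorm_{t+1}\mid\calF_t]\;=\;\Bigl(1-\tfrac{1}{n}\Bigr)\sum_{i\in[k]}P_i^2\;+\;\tfrac{1}{n}.
\]
Expanding $\sum_i P_i^2$ around $\alphanorm_t$ and using $\sum_i\alpha_t(i)^3\ge\alphanorm_t^2$ (Cauchy--Schwarz) yields, for an absolute constant $c_1>0$,
\[
\E[\alphanorm_{t+1}-\alphanorm_t\mid\calF_t]\;\geq\; c_1\Bigl(\alphanorm_t^2+\tfrac{1}{n}\Bigr)
\]
whenever $\alphanorm_t$ is bounded away from $1$. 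The schematic recursion $\alphanorm_{t+1}\gtrsim\alphanorm_t+\alphanorm_t^2+1/n$ splits into two regimes: while $\alphanorm_t\le 1/\sqrt n$ the additive $1/n$ term dominates, so $\alphanorm_t$ climbs at rate $\Omega(1/n)$ and reaches $1/\sqrt n$ within $O(\sqrt n)$ rounds; beyond that threshold the quadratic term takes over and integrating $\d x/\d t\sim x^2$ pushes $\alphanorm_t$ to $c_*\log n/\sqrt n$ in $O(\sqrt n\log n)$ further rounds. For 2-Choices, a vertex adopts a new opinion only when its two samples coincide, which damps the variance contribution by a factor of $\alphanorm_t$; the analogous bound $\E[\alphanorm_{t+1}-\alphanorm_t\mid\calF_t]\ge c(\alphanorm_t^2+\alphanorm_t/n)$ gives a multiplicative growth regime and hitting time $\Otilde(n)$ to reach $c_*(\log n)^2/n$.

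To promote the expected drift to a high-probability statement I would use a stopping-time martingale. Let $\tau$ be the first round at which $\alphanorm_t$ exceeds the target and consider the Doob martingale
\[
M_t\;:=\;\alphanorm_t-\alphanorm_0-\sum_{s<t}\E\bigl[\alphanorm_{s+1}-\alphanorm_s\bigm|\calF_s\bigr]
\]
stopped at $\tau\wedge T$. The $n$ independent vertex updates give, via Bernstein, $|\alphanorm_{s+1}-\alphanorm_s|=\Otilde(\sqrt{\alphanorm_s/n})$ with high probability, while a bounded-differences computation yields $\Var[\alphanorm_{s+1}\mid\calF_s]=O(\alphanorm_s/n)$. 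Summing conditional variances over $T=\Otilde(\sqrt n)$ rounds and plugging into Freedman's inequality gives $|M_T|=\Otilde(1/\sqrt n)$ with probability $1-n^{-\Omega(1)}$. Since the drift accumulates to $\Omega(T/n)=\Omega(1/\sqrt n)$, the Freedman deviation is absorbed (up to the polylogarithmic slack responsible for the $(\log n)^2$ factor in the theorem), so $\alphanorm_T$ crosses the target with high probability. The 2-Choices analysis is parallel, with the variance bookkeeping adjusted for the $\Omega(\alphanorm_t/n)$ drift regime.

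The main technical obstacle is controlling $\Var[\alphanorm_{s+1}\mid\calF_s]$ tightly enough that the concentration step does not lose polynomial factors in $k$. Because $\alphanorm_t=\sum_i\alpha_t(i)^2$ is quadratic in the coordinates, a naive coordinate-by-coordinate Azuma bound would give variance $O(k/n^2)$ per step, which is fatal for large $k$. The fix is to decompose opinions into \emph{large} ones (those with $\alpha_t(i)\gtrsim 1/\sqrt n$, of which there are at most $\Otilde(\sqrt n)$ in the regime of interest) and \emph{small} ones, handle the large opinions by their individual Chernoff deviations, and exploit near-multinomial cancellations in the cumulative contribution of the small opinions so that their aggregate variance is only $O(1/n^2)$. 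This uniform-in-$k$ variance bound is precisely the obstruction flagged in \cref{sec:make rigorous}; it does not arise in the asynchronous setting of \cite{async_3-majority_soda25}, where a single-vertex update has trivially tiny increments.
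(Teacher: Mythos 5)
There is a genuine gap in the drift claim. You assert $\E[\alphanorm_{t+1}-\alphanorm_t\mid\calF_t]\geq c_1(\alphanorm_t^2+1/n)$ and then split into a "$1/n$-dominates" regime and a "quadratic" regime, integrating $\d x/\d t\sim x^2$. But the quadratic term is not there. Expanding the paper's \cref{item:expectation of alphanorm} of \cref{lem:basic inequality} (or redoing the computation you sketch), the extra contribution to the drift beyond $(1-\alphanorm_{t-1})/n$ is proportional to $\norm{\alpha_{t-1}}_3^3-\alphanorm_{t-1}^2$, which by Cauchy--Schwarz is $\ge 0$ but can equal $0$ exactly: in the perfectly balanced configuration $\alpha_{t-1}(i)\equiv 1/k$ one has $\norm{\alpha_{t-1}}_3^3=\alphanorm_{t-1}^2=1/k^2$. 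Since the theorem is stated for \emph{any} initial configuration (including the balanced one), the only drift you may use is the additive $\Omega(1/n)$ term of \cref{eq:expectation of alphanorm intro}, exactly as in \cref{lem:taunormplus OST}. Fortunately the additive drift already gives the correct $\widetilde O(\sqrt n)$ time scale, so the second regime of your ODE heuristic is both unavailable and unnecessary; but as written the argument rests on a false inequality.

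Your concentration plan also diverges from the paper in a way that matters. You apply Freedman to the Doob martingale of $\alphanorm_t$. The difficulty is that the one-sided Bernstein condition the paper establishes (\cref{item:BC for alphanorm} of \cref{lem:Bernstein condition for sync processes}) is for $\alphanorm_{t-1}-\alphanorm_t$ itself, not for $\alphanorm_{t-1}-\alphanorm_t+(\E_{t-1}[\alphanorm_t]-\alphanorm_{t-1})$: the conditional drift $\E_{t-1}[\alphanorm_t]-\alphanorm_{t-1}$ is a nonnegative $\calF_{t-1}$-measurable shift that can be as large as $\Theta(\alphanorm_{t-1}^{3/2})$, which exceeds the per-step noise scale $\sqrt{s}\approx\alphanorm_{t-1}^{3/4}/\sqrt n$ once $\alphanorm_{t-1}\gg n^{-2/3}$; adding it breaks the sub-exponential moment bound. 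The paper sidesteps this by applying the optional stopping theorem to $\driftnorm t-\alphanorm_t$ (\cref{lem:taunormplus OST}), which only needs a one-sided drift inequality and no concentration, then controls the overshoot $\E[\alphanorm_{\taunormplus}]$ by a direct Chernoff-based case analysis on each coordinate $A_t(i)=n\alpha_t(i)$ (\cref{lem:taunormplus expectation of norm}), and finally boosts to high probability by $O(\log n)$ independent Markov trials. Finally, the "large/small opinions and near-multinomial cancellation" mechanism you propose for the variance is not what is needed: the paper linearizes via $\alphanorm_{t-1}-\alphanorm_t\le 2\sum_i\alpha_{t-1}(i)(\alpha_{t-1}(i)-\alpha_t(i))$, observes that the summands are negatively associated, and gets variance $O(\norm{\alpha_{t-1}}_3^3/n)=O(\alphanorm_{t-1}^{3/2}/n)$ uniformly in $k$, with no case split at all.
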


Combining \cref{thm:consensus time large alphanorm,thm:growth of alphanorm}, 
we immediately obtain the following upper bounds that hold for any initial configuration.
For 3-Majority, the consensus time is $O\qty(\sqrt{n}(\log n)^2+ \frac{\log n}{\log n/\sqrt{n}})=O(\sqrt{n}(\log n)^2)$, which improves upon the bound of $O(n^{2/3}(\log n)^{3/2})$ \cite{nearly_tight_analysis,ignore_or_comply}.
For 2-Choices, the consensus time is $O\qty(n(\log n)^3+ \frac{\log n}{ (\log n)^2/n})=O(n(\log n)^3)$, which further extends the range of $k$ in \cref{thm:consensus time large alphanorm} and is the first bound that holds for any $k$.
These bounds complement the upper bounds of \cref{thm:main theorem} for large $k$.

\subsection{Heuristic Argument for 3-Majority} \label{sec:heuristic argument}
Now, we present a heuristic argument for the proof of \cref{thm:consensus time large alphanorm,thm:growth of alphanorm}. 
We often use $\E_{t-1}[\cdot]$ to denote the expectation conditioned on the configuration at round $t-1$ (see \cref{sec:consensus dynamics} for details).
For example, a straightforward calculation (also used in previous works~\cite{simple_dynamics,stabilizing_consensus,nearly_tight_analysis,ignore_or_comply}; see \cref{lem:basic inequality} for details) shows that the expectation of $\alpha_t(i)$ conditioned on the configuration at round $t-1$ satisfies
\begin{align}
  \E_{t-1}[\alpha_t(i)] = \alpha_{t-1}(i)(1+\alpha_{t-1}(i) - \alphanorm_{t-1}) \label{eq:expectation of alpha intro}.
\end{align}
In view of \cref{eq:expectation of alpha intro}, one might expect that $\alpha_t(i)$ is likely to decrease if $\alpha_{t-1}(i) \ll \alphanorm_{t-1}$.
With this in mind, we say that an opinion is \emph{weak} at round $t$ if $\alpha_t(i) < (1-c)\alphanorm_t$, where $0<c<1/2$ is some suitable constant. Otherwise, we say that $i$ is \emph{strong}.
Observe that the most popular opinion is always strong in every round since $\max_i \alpha_t(i) \ge \alphanorm_t$.

\paragraph*{Weak Opinion Vanishing.}
We first show that within $O\left(\frac{\log n}{\alphanorm_0}\right)$ rounds, any weak opinion $i$ is likely to vanish.
\begin{lemma}[Weak Opinion Vanishing; see also \cref{lem:weakvanish}] \label{lem:weak opinion vanishing intro}
  Consider 3-Majority starting from an initial configuration with $\alphanorm_0 \ge \frac{C\log n}{\sqrt{n}}$ for a sufficiently large constant $C>0$.
  If an opinion $i$ is weak at round $0$, then $\alpha_T(i) = 0$ with probability $1-O(n^{-3})$ for some $T = O\qty(\frac{\log n}{\alphanorm_0})$.
\end{lemma}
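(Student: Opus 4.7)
Let $X_t \defeq n\alpha_t(i)$ denote the number of vertices holding opinion $i$ at round $t$. From \cref{eq:expectation of alpha intro},
\[
\E_{t-1}[X_t] = X_{t-1}\rbra*{1 + \alpha_{t-1}(i) - \alphanorm_{t-1}} \le X_{t-1}(1-c\alphanorm_{t-1})
\]
as long as $i$ is weak at round $t-1$, i.e.\ $\alpha_{t-1}(i)\le(1-c)\alphanorm_{t-1}$. My plan is to promote this one-step multiplicative contraction to a pathwise exponential decay of $X_t$ using Freedman's inequality, and then absorb the small residual count down to $0$ by a short averaging step.

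First, I would propagate the weak condition and a lower bound on $\alphanorm_t$ throughout the horizon $T=O(\log n/\alphanorm_0)$. Using the standard monotonicity-in-expectation $\E_{t-1}[\alphanorm_t]\ge \alphanorm_{t-1}$ for 3-Majority, together with a Freedman concentration for $\alphanorm_t$ itself, one shows that with probability $1-O(n^{-4})$ the norm satisfies $\alphanorm_t\ge\alphanorm_0/2$ for every $0\le t\le T$; on this event, any multiplicative decrease of $X_t$ automatically preserves the weak condition (with a slightly worse constant). Next I would apply Freedman's inequality to the martingale $M_t \defeq \sum_{s\le t}(X_s-\E_{s-1}[X_s])$. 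Its per-step conditional variance is at most $\E_{s-1}[X_s]\le X_{s-1}$, since $X_s\mid\calF_{s-1}$ is a sum of $n$ Bernoulli indicators of mean $O(X_{s-1}/n)$. A stopping-time argument at the first round where $X_t$ exceeds $2X_0\prod_{s<t}(1-c\alphanorm_{s-1}/2)$, or where $X_t$ drops below a safe threshold $x^*=\Theta(\log n)$, makes the cumulative conditional variance telescope as a geometric series of total mass $O(X_0/\alphanorm_0)$; Freedman then shows that this stopping event has probability $O(n^{-4})$ over $T_1=\Theta(\log n/\alphanorm_0)$ rounds.

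Once $X_t$ enters the tail regime $X_t\le x^*$, the weak condition is trivially satisfied and the same one-step drift still applies in expectation. Iterating for a further $T_2=O(\log n/\alphanorm_0)$ rounds gives, via the tower property on the good event, $\E[X_{T_1+T_2}\mid X_{T_1}]\le x^*(1-c\alphanorm_0/2)^{T_2}\le n^{-4}$; Markov's inequality combined with the integrality of $X_t$ then yields $X_{T_1+T_2}=0$ with probability $1-O(n^{-3})$. Taking $T=T_1+T_2$ and a union bound over the constantly many bad events completes the proof.

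The main obstacle, I expect, is the Freedman step in the bulk phase. Round-by-round Chernoff bounds would leak a factor of $T$ through the union bound and would fall short of the required $n^{-3}$ global failure guarantee; the leverage from Freedman is precisely that the cumulative conditional variance is dominated by the early large-$X_t$ rounds and is therefore $O(X_0/\alphanorm_0)$ rather than $O(TX_0)$. Keeping the Bernstein-type exponent sharp when $X_t$ is moderate, so that the typical per-step jump of order $\sqrt{X_{s-1}\log n}$ does not swamp the drift $X_{s-1}\alphanorm_{s-1}$, is the delicate calculation; it forces the safe threshold $x^*$ and the exact stopping rule for the variance budget to be chosen carefully, and it explains why one cannot simply run the multiplicative drift all the way down to $X_t=0$ without separating out the tail phase.
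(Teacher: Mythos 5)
Your overall decomposition — control $\alphanorm_t$ from below, preserve the weak condition via an "active" threshold, and drive $X_t = n\alpha_t(i)$ to zero via its multiplicative drift — matches the paper's structure, but your execution of the vanishing step diverges substantially and has a real gap. The paper (\cref{lem:weakvanish}) does \emph{not} apply a Freedman-type concentration to $\alpha_t(i)$ at all. It instead sets $r = 1 - (\cactive - \cnormdown)\alphanorm_0$, observes that $r^{-t\wedge\tau}\alpha_{t\wedge\tau}(i)$ is a supermartingale where $\tau = \min\{\tauiactive,\taunormdown,\tauivanish\}$, and then uses the discreteness $\alpha_T(i)\in\{0\}\cup[1/n,1]$ to turn a plain Markov bound $\E[Y_T]\le\E[Y_0]$ into $\Pr[\tau>T]\le n\,r^T$. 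Taking $T=C\log n/\alphanorm_0$ with large $C$ makes this $n^{-10}$ in a single step. Freedman (via \cref{lem:drift analysis for basic}~\ref{item:tauiactive} and \cref{lem:taunormdown is large}) enters only to rule out $\tauiactive\le T$ and $\taunormdown\le T$, never to prove the decrease itself. There is no bulk/tail phase split, and no $x^*$ threshold.

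The concrete gap in your bulk phase is an additive-versus-multiplicative mismatch. Freedman applied to $M_t = \sum_{s\le t}(X_s - \E_{s-1}[X_s])$ controls \emph{additive} deviations: with a variance budget of order $X_0/\alphanorm_0$ (your telescoping geometric sum), the typical deviation after $T_1$ rounds is $\Theta\bigl(\sqrt{(X_0/\alphanorm_0)\log n}\bigr)$. But your stopping rule wants $X_{T_1}$ to lie below a target of order $X_0\prod_{s<T_1}(1-c\alphanorm_s/2)$, which after $T_1 = \Theta(\log n/\alphanorm_0)$ rounds has dropped to $O(\log n)$ — much smaller than the Freedman fluctuation $\sqrt{X_0\log n/\alphanorm_0}$ (take $X_0\approx\sqrt n$, $\alphanorm_0\approx\log n/\sqrt n$: the fluctuation is then $\Theta(\sqrt n)$, vastly exceeding your $x^*=\Theta(\log n)$). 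So a single Freedman application cannot push $X_t$ down to $\Theta(\log n)$; at best a phase-by-phase re-application of Freedman (resetting the variance budget each time $X_t$ halves) can bring $X_t$ down to roughly $\Theta(\log n/\alphanorm_0)$, not $\Theta(\log n)$. You would then still need the Markov-plus-discreteness step with the $\tauiactive$ guard to finish off; but once you concede that, the bulk Freedman phase is doing no work that the exponential supermartingale would not already do. In short, the paper's Markov argument replaces both of your phases, and your stated threshold $x^*=\Theta(\log n)$ is unattainable via the announced Freedman calculation.

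A secondary, fixable issue: in your tail phase you assert the one-step drift continues to hold because "the weak condition is trivially satisfied when $X_t\le x^*$," but $X_t$ can fluctuate upward during those $T_2$ rounds, and your tower-of-expectations bound implicitly requires the weak (or at least the active) condition to persist on all intermediate rounds. The paper handles this by conditioning on $\tauiactive>T$ explicitly; your argument should too, which amounts to re-using the very stopping-time machinery you are trying to bypass.
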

Although our formal proof of \cref{lem:weak opinion vanishing intro} is more involved, the intuition behind it is based on the following heuristic argument.
For any weak opinion $i$, from \cref{eq:expectation of alpha intro}, we have $\E_{t-1}[\alpha_t(i)] = \alpha_{t-1}(i)(1+\alpha_{t-1}(i) - \alphanorm_{t-1}) \le (1-c\alphanorm_{t-1})\alpha_{t-1}(i)$.
Therefore, $\alpha_t(i)$ decreases by a factor of $1-c\alphanorm_{t-1}$ in every round in expectation.
To prove that $\alpha_t(i)$ vanishes quickly, we need to keep track of the value of $\alphanorm_t$.
Indeed, by a somewhat involved calculation (see \cref{lem:basic inequality} for details), we can show that
\begin{align}
  \E_{t-1}[\alphanorm_t] \ge \alphanorm_{t-1} + \frac{1-\alphanorm_{t-1}}{n} \ge \alphanorm_{t-1}. \label{eq:expectation of alphanorm intro}
\end{align}
In particular, $\alphanorm_t$ does not decrease in expectation during the dynamics (i.e., $\alphanorm_t$ is a submartingale).
This yields that $\alphanorm_t \gtrsim \alphanorm_0$ and thus $\alpha_t(i)$ vanishes within $O\qty(\frac{\log n}{\alphanorm_0})$ rounds.

\paragraph{Strong Opinion Weakening.}
Next, consider two distinct strong opinions, $i$ and $j$.
We claim that at least one of them becomes weak within $O\qty(\frac{\log n}{\alphanorm_0})$ rounds.
\begin{lemma}[Strong Opinion Weakening; see also \cref{lem:gap amplification,lem:initial bias weak}] \label{lem:strong opinion weakening intro}
  Consider 3-Majority starting with any initial configuration satisfying $\alphanorm_0 \ge C\sqrt{\frac{\log n}{n}}$ for a sufficiently large constant $C>0$.
  Then, there exists some $T = O\qty(\frac{\log n}{\alphanorm_0})$ such that,
  for any two distinct strong opinions $i$ and $j$, either $i$ or $j$ becomes weak within $T$ rounds with probability $1-O(n^{-3})$.
\end{lemma}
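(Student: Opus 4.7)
The plan is to track the signed gap $X_t := \alpha_t(i) - \alpha_t(j)$. A direct manipulation of \eqref{eq:expectation of alpha intro} yields
\[
\E_{t-1}[X_t] = X_{t-1}\bigl(1 + \alpha_{t-1}(i) + \alpha_{t-1}(j) - \alphanorm_{t-1}\bigr),
\]
so, as long as both $i$ and $j$ are strong at round $t-1$, the drift factor is at least $1 + (1-2c)\alphanorm_{t-1}$, giving multiplicative growth at rate $\Theta(\alphanorm_{t-1})$. Vertex-wise independence combined with $(I_v - J_v)^2 \le I_v + J_v$ (where $I_v, J_v$ indicate vertex $v$ adopting $i, j$, respectively) gives the conditional variance bound $\Var_{t-1}(X_t) = O\bigl((\alpha_{t-1}(i)+\alpha_{t-1}(j))/n\bigr) = O(\sqrt{\alphanorm_{t-1}}/n)$, using Cauchy--Schwarz on the last step. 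Combining \eqref{eq:expectation of alphanorm intro} with a Freedman-type concentration further shows $\alphanorm_t$ stays within a constant factor of $\alphanorm_0$ throughout the $O(\log n/\alphanorm_0)$-round time horizon of interest.

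The argument then decomposes into two sublemmas (\cref{lem:initial bias weak} and \cref{lem:gap amplification}). For Phase~1 (bias formation), starting from a possibly zero $X_0$, the recursion $\E[X_t^2] = (1+\alpha_{t-1}(i)+\alpha_{t-1}(j)-\alphanorm_{t-1})^2\E[X_{t-1}^2] + \E[\Var_{t-1}(X_t)]$ makes $\E[X_t^2]$ grow to $\Omega(\log n/n)$ within $T_1 = O(\log n/\alphanorm_0)$ rounds. A Paley--Zygmund (fourth-moment) estimate on $X_t$, followed by a Freedman tail bound to rule out large downward fluctuations, upgrades this second-moment fact into $|X_{T_1}| \ge \Delta := C_1\sqrt{\log n/n}$ with probability $1 - O(n^{-3})$, conditional on both $i, j$ remaining strong throughout (otherwise we are already done). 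The hypothesis $\alphanorm_0 \ge C\sqrt{\log n/n}$ is precisely what makes the per-step noise $O(\alphanorm_0^{1/4}/\sqrt{n})$ large enough to accumulate the $\sqrt{\log n/n}$ bias in the allotted time.

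For Phase~2 (gap amplification), once $|X_t| \ge \Delta$ the multiplicative drift dominates the noise. Applying Freedman's inequality to $\log|X_t|$ (conditioned on $X_t$ not changing sign, an event of probability $1 - O(n^{-3})$ since $|X_t| \gg \sqrt{\Var_{t-1}(X_t)}$) yields $|X_{T_1+s}| \ge \Delta\,(1+c'\alphanorm_0)^s(1-o(1))$ on a high-probability event. However, while both $i,j$ remain strong, Cauchy--Schwarz gives $\alpha_t(i)+\alpha_t(j) \le \sqrt{2\alphanorm_t}$, whereas strength demands $\alpha_t(i)+\alpha_t(j) \ge 2(1-c)\alphanorm_t$; hence any $|X_t| > \sqrt{2\alphanorm_t} - 2(1-c)\alphanorm_t = \Theta(\sqrt{\alphanorm_0})$ forces $\min(\alpha_t(i),\alpha_t(j)) < (1-c)\alphanorm_t$, contradicting the strong-opinion assumption. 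The exponential growth reaches this cap within an additional $T_2 = O(\log n/\alphanorm_0)$ rounds, so at least one of $i, j$ becomes weak by time $T := T_1 + T_2 = O(\log n/\alphanorm_0)$.

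The principal obstacle is Phase~1. Because $X_0$ may vanish, the bias must be harvested entirely from the stochastic fluctuations of 3-Majority, and it must survive both the time-varying drift coefficient and the possibility that the strong-opinion hypothesis fails at some intermediate round (which would change the drift). I would handle this by working with the rescaled process $M_t := X_t \big/ \prod_{s<t}\bigl(1+\alpha_s(i)+\alpha_s(j)-\alphanorm_s\bigr)$ stopped at the first round at which either $i$ or $j$ becomes weak, so that $M_t$ is a genuine martingale with zero-mean, bounded increments to which Freedman's inequality can be cleanly applied; the stopping ensures the conclusion in the ``escape'' case is already what we want. This is exactly the type of subtlety highlighted in \cref{sec:make rigorous} when adapting the asynchronous technique of \cite{async_3-majority_soda25} to the synchronous setting.
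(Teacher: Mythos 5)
The overall two-phase structure you sketch (build an $\Omega(\sqrt{\log n/n})$ bias, then amplify it multiplicatively until the strong-opinion constraint is violated) matches the paper's decomposition into \cref{lem:gap amplification} and \cref{lem:initial bias weak}. The stopped martingale construction at the end and the algebraic observation that $|X_t| > \Theta(\sqrt{\alphanorm_0})$ forces one opinion to become weak are both correct and appear in the paper. However, Phase~1 as written has a genuine gap.

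The claim that ``a Paley--Zygmund (fourth-moment) estimate on $X_t$, followed by a Freedman tail bound, upgrades this second-moment fact into $|X_{T_1}|\geq\Delta$ with probability $1-O(n^{-3})$'' cannot work. Paley--Zygmund yields a lower bound of the form $\Pr[X_{T_1}^2 > \theta\E[X_{T_1}^2]] \geq (1-\theta)^2\E[X_{T_1}^2]^2/\E[X_{T_1}^4]$; even with excellent control of the fourth moment this caps out at a constant. Indeed, starting from $X_0=0$ the process is symmetric, and at any single fixed time a constant-probability event of $|X_t|$ being near $0$ cannot be ruled out by a second-moment (or fourth-moment) argument. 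Freedman's inequality does not help here either: it controls deviations of a martingale \emph{away} from its conditional mean, so it can prevent a large bias from collapsing, but it cannot force a near-zero bias to grow with probability $1-n^{-\Omega(1)}$. This is precisely why the paper's \cref{lem:additive drift} is stated only with constant probability $1-c$, and why the separate boosting lemma, \cref{lem:nazo lemma}, is needed: it interleaves the constant-probability additive-drift step with the high-probability multiplicative-drift step across $O(\log(1/\varepsilon) + \log(x^*/x_0))$ phases, restarting whenever the bias falls back. Your writeup has no analogue of this iteration; without it the probability in Phase~1 stays bounded away from $1$.

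A secondary issue you should flag: to make the second-moment accounting work through the stopping time (the moment one of $i,j$ leaves the strong regime, or the bias first exceeds $\Delta$), the paper needs a non-trivial upper bound on $\E[\delta_\tau^2]$, established in \cref{lem:taudeltaplus general} via Bernstein tail bounds and truncation, because in the synchronous model the one-step jump of $\delta_t$ at the stopping time is not $O(1/n)$. Your stopped process $M_t$ handles the drift cleanly, but the bounded-increment hypothesis you invoke for Freedman is not available for this step and requires the Bernstein-condition machinery of \cref{sec:bernstein condition}. Also, the recursion $\E[X_t^2]=(1+\cdots)^2\E[X_{t-1}^2]+\E[\Var_{t-1}(X_t)]$ as written is not valid because the drift factor is random and cannot be pulled through the outer expectation; the paper instead deploys the optional stopping theorem on the supermartingale $s\cdot t-\delta_t^2$ (\cref{lem:taudeltaplus OST}), which sidesteps this.
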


Here, the condition $\alphanorm_0 \gg \sqrt{\frac{\log n}{n}}$ is slightly weaker than the condition $\alphanorm_0\gg\frac{\log n}{\sqrt{n}}$ of \cref{lem:weak opinion vanishing intro}.

The intuition behind \cref{lem:weak opinion vanishing intro} is as follows:
Fix two strong opinions $i,j$ and let $\delta_t = \alpha_t(i) - \alpha_t(j)$.
We may assume that $\delta_0 \geq  0$ without loss of generality.
From \cref{eq:expectation of alpha intro}, we have
\begin{align}
  \E_{t-1}[\delta_t] = (1+\alpha_{t-1}(i) + \alpha_{t-1}(j) - \alphanorm_{t-1}) \delta_{t-1} \label{eq:expectation of delta intro}.
\end{align}
Since $i,j$ are strong and $c<1/2$, we have $\E_{t-1}[\delta_t] \ge (1+(1-2c)\alphanorm_{t-1}) \delta_{t-1} \ge (1+\Omega(\alphanorm_{t-1}))\delta_{t-1}$.
Since $\alphanorm_t \gtrsim \alphanorm_0$, we have that $\delta_t$ increases by a factor of $(1+\Omega(\alphanorm_0))$ at every round in expectation unless either $i$ or $j$ become weak (see \cref{lem:deltaupweak} for details).
Moreover, even if the bias is initially zero, we can show that $\abs{\delta_T}$ grows to $\Omega(\sqrt{T\alphanorm_0/n})$ for a suitable choice of $T$.
The key insight is that the squared bias $\delta_t^2$ for two strong opinions $i,j$ exhibits an additive drift: by considering the variance of $\delta_t$, we establish that $\E_{t-1}[\delta_t^2]\geq \delta_{t-1}^2+\Omega(\alphanorm_0/n)$ (see \cref{,lem:additive drift} for details).
%
%
%
Combining them, we can conclude that either $i$ or $j$ becomes weak within $O\qty(\frac{\log n}{\alphanorm_0})$ rounds (otherwise, $\abs{\delta_t}$ becomes too large).

\paragraph*{Putting Them Together.}
Combining \cref{lem:weak opinion vanishing intro,lem:strong opinion weakening intro},
we can conclude that for any pair of distinct opinions $i,j$, at least one of them vanishes within \(O\left(\frac{\log n}{\alphanorm_0}\right)\) rounds with probability \(1-O(n^{-3})\).
By the union bound over $i,j$, we obtain \cref{thm:consensus time large alphanorm}.

On the other hand, from \cref{eq:expectation of alphanorm intro}, we know that $\alphanorm_t$ increases by $\Omega(1/n)$ at every round in expectation unless $\alphanorm_t \le 1/2$.
In particular, by our concentration technique explained in \cref{sec:make rigorous}, we can prove that $\alphanorm_T \approx \log n/\sqrt{n}$ for some $T=\Otilde(\sqrt{n})$, which yields \cref{thm:growth of alphanorm}.

\begin{remark}
  While \cref{thm:consensus time large alphanorm} bounds the consensus time for any $1\le k\ll \sqrt{n}$,
  the case of $k \ge \sqrt{n}$ can be handled by the result of \cite{ignore_or_comply}:
  They proved that the number of remaining opinions after $T$ rounds of 3-Majority is at most $O(n\log n / T)$ with high probability.
  Combined \cref{thm:consensus time large alphanorm} with their result for $T=\sqrt{n}\log n$, we can conclude that the consensus time is $\Otilde(\sqrt{n})$ with high probability for all $2\le k\le n$.
  However, their result does not hold for 2-Choices, whereas our argument based on the increasing of $\alphanorm_t$ (\cref{thm:growth of alphanorm}) can be applied to 2-Choices.
\end{remark}

These arguments can be extended to the 2-Choices dynamics, yielding a similar consensus time bound.
Specifically, \cref{lem:weak opinion vanishing intro,lem:strong opinion weakening intro} hold for 2-Choices as well if $\alphanorm_0 \ge \frac{(\log n)^2}{n}$.
The main difference is that the additive drift of $\alphanorm_t$ in 2-Choices is $\Omega\qty(\frac{1}{n^2})$, which is much smaller than that of 3-Majority.
This yields that $\alphanorm_t \ge \frac{(\log n)^2}{n}$ within $\Otilde(n)$ rounds in expectation.

We note that our argument seemingly simplifies the analysis of \cite{nearly_tight_analysis}, which classifies the opinions into three classes, divides time into epochs which consist of several consecutive rounds, and each epoch is further divided into two phases.

Interestingly, our argument can also be extended to the asynchronous 3-Majority dynamics, providing an alternative proof of the result of \cite{async_3-majority_soda25}.
We believe that our argument is simpler than the original proof of \cite{async_3-majority_soda25}.
In particular, \cite{async_3-majority_soda25} extended the proof technique of \cite{ignore_or_comply} to the asynchronous setting with a complicated coupling argument from Majorization Theory \cite{MOA11}.
We avoid this complication by directly analyzing the growth of $\alphanorm_t$.

\subsection{Making the Heuristic Argument Rigorous} \label{sec:make rigorous}
In \cref{sec:heuristic argument}, we presented a heuristic argument for the consensus time of the 3-Majority dynamics based on the expected behavior of $\alpha_t(i),\alphanorm_t$, and $\delta_t$.
To make it rigorous, we need concentration inequalities to show that the actual behavior of $\alpha_t(i), \alphanorm_t$, and $\delta_t$ are close to their expected values.

\paragraph*{Na\"ive Approach: One-Step Concentration via the Chernoff Bound.}
The most straightforward way to make the heuristic argument rigorous is to apply the Chernoff bound to argue that $\alpha_t(i) \approx \E_{t-1}[\alpha_t(i)]$ since $\alpha_t(i)$ can be written as the sum of $n$ independent random variables.
This approach was used in many previous works \cite{simple_dynamics,stabilizing_consensus,nearly_tight_analysis} in the range of $k \ll n^{1/3}$.

Unfortunately, this approach is not sufficient for the case of $k \gg n^{1/3}$.
In the balanced configuration where $\alpha_{t-1}(i) \approx 1/k$, we have that the variance $\Var_{t-1}[\alpha_t(i)]$ is roughly $\Theta(1/k)$.
Therefore, by the central limit theorem, we can argue that $\alpha_t(i) \approx \E_{t-1}[\alpha_t(i)] \pm \Theta(1/\sqrt{nk})$ at every round.
On the other hand, in the proof of \cref{lem:weak opinion vanishing intro}, we used the fact that $\alpha_t(i)$ for a weak opinion $i$ drops by a multiplicative factor of $1-\Omega(\alphanorm_0) = 1-\Omega(1/k)$.
In summary, the one-step concentration yields that 
\begin{align*}
  \alpha_t(i) \approx \qty(1-\Omega\qty(\frac{1}{k}))\alpha_{t-1}(i) \pm \Theta\qty(\frac{1}{\sqrt{nk}}) \approx \alpha_{t-1}(i) - \Omega\qty(\frac{1}{k^2}) \pm \Theta\qty(\frac{1}{\sqrt{nk}}).
\end{align*}
To ensure that $\alpha_t(i)$ keeps decreasing, we need to have $1/k^2 \gg 1/\sqrt{nk}$, which is equivalent to $k \ll n^{1/3}$.
In other words, the na\"ive approach can only handle the case of $k \ll n^{1/3}$ due to the standard deviation at every round.
This is the main obstruction to extending the proof of \cite{nearly_tight_analysis} to the case of $k \gg n^{1/3}$.

\paragraph*{Our Approach: Multi-Step Concentration via Freedman's Inequality (\cref{sec:drift analysis intro}).}
To remedy the above issue, we track the amortized change of $\alpha_t(i)$ during $T$ rounds.
Recall that the one-step concentration yields that $\alpha_t(i)$ differs from its expectation $\E_{t-1}[\alpha_t(i)]$ by $\Theta(1/\sqrt{nk})$.
Summing up $t=1,\dots,T$, the total gap between $\alpha_t(i)$ and its expectation is roughly $\Theta(T/\sqrt{nk})$.
In contrast, using our multi-step concentration technique described later, we can show that the total gap is indeed $\Theta(\sqrt{T/nk})$, which is much smaller than the na\"ive bound.
This suffices to our purpose since if we set $T\approx k$, then
\begin{align*}
  \alpha_T(i) \approx \alpha_0(i) - \Omega\qty(\frac{T}{k^2}) \pm \Theta\qty(\sqrt{\frac{T}{nk}}) \approx \alpha_0(i) - \Omega\qty(\frac{1}{k}) \pm \Theta\qty(\frac{1}{\sqrt{n}}).
\end{align*}
That is, we can show that $\alpha_T(i)$ is likely to decrease if $k\ll \sqrt{n}$.

The idea of multi-step concentration above appeared in \cite{nearly_tight_analysis} implicitly and was made explicit in \cite{async_3-majority_soda25} for the asynchronous 3-Majority dynamics.

Our multi-step concentration builds upon the \emph{Freedman's inequality}, which is a Bernstein-type concentration inequality for martingales \cite[Theorem 4.1]{Fre75}.
Recall that a sequence of random variables $(X_t)_{t\ge 0}$ is a submartingale if $\E_{t-1}[X_t] \ge X_{t-1}$.
The Freedman's inequality states that for a submartingale $(X_t)_{t\ge 0}$ such that $|X_t - X_{t-1}| \le D$ and $\Var_{t-1}[X_t - X_{t-1}] \le s$ for all $t$, we have 
\begin{align}
  \Pr\qty[\tinyexists t\le T,X_t \le \E[X_t] - h] \le \exp\qty(-\frac{h^2/2}{Ts + hD/3}). \label{eq:Freedman_ineq_intro}
\end{align}
\citet{async_3-majority_soda25} applied the Freedman's inequality to $\alpha_t(i)$ and other quantities to deduce multi-step concentration results in the asynchronous 3-Majority dynamics.
Here, they crucially used the fact that the one-step difference $\alpha_t(i) - \alpha_{t-1}(i)$ is at most $1/n$, which enables to set $D=1/n$ in the Freedman's inequality.
However, in the synchronous dynamics, $\alpha_t(i) - \alpha_{t-1}(i)$ can be $1$, which prevents us from applying the Freedman's inequality directly.
This is one of the main reason why the proof of \cite{async_3-majority_soda25} does not directly apply to the synchronous dynamics.

It is worth noting that, other than \cite{async_3-majority_soda25}, there are some works that (implicitly) consider the multi-step concentration analysis for asynchronous consensus dynamics including undecided dynamics \cite{fast_convergence_undecided} and chemical reaction network \cite{Condon2020-CRN}, where the authors regard the amortized change of quantities of interest as the outcome of a biased random walk.
These analysis compare the probabilities of increase and decrease of the quantity of interest at each step and then apply Gambler's ruin to deduce the concentration result.
Since this approach crucially relies on the boundedness of the one-step difference, it is not directly applicable to the synchronous dynamics.

\paragraph*{Bernstein Condition (\cref{sec:bernstein condition}).}

To apply the Freedman's inequality to $\alpha_t(i)$ in the synchronous dynamics,
  we relax the bounded jump condition that $\abs{X_t - X_{t-1}} \le D$ of the Freedman's inequality.
Specifically, we say that a real-valued random variable $X$ satisfies the \emph{$(D,s)$-Bernstein condition} if for any $-\frac{3}{D}<\lambda<\frac{3}{D}$, we have
\[
  \E\qty[\e^{\lambda X}] \le \exp\qty(\frac{\lambda^2 s^2/2}{1-\abs{\lambda} D/3}).
\]
The intuition behind this condition is that, if $\abs{\lambda X}$ is small enough and $\E[X]=0$, then the Taylor expansion yields
\begin{align*}
  \E\qty[\e^{\lambda X}] \approx \E\qty[ 1 + \lambda X + \frac{\lambda^2 X^2}{2} ] = 1 + \frac{\lambda^2 \Var[X]}{2} \le \exp\qty( \frac{\lambda^2 \Var[X]}{2} ).
\end{align*}
For example, if $\abs{X}\le D$ and $\Var[X]\le s$, then $X$ satisfies the $(D,s)$-Bernstein condition.
It is not hard to see that we can recover the Freedman's inequality \cref{eq:Freedman_ineq_intro} if each one-step difference $X_t - X_{t-1}$ satisfies the Bernstein condition (see \cite{FGL15} and \cref{cor:Freedman} details).

Our key observation is that, if $X$ can be written as the sum of independent random variables $X=Y_1+\dots+Y_m$ and each $Y_j$ satisfies $(D,s)$-Bernstein condition, then $X$ also satisfies the $(D,ms)$-Bernstein condition (see \cref{lem:Bernstein condition} for details).
Since the quantity $\alpha_t(i) - \alpha_{t-1}(i)$ conditioned on round $t-1$
can be written as the sum of $n$ independent random variables each of those satisfying $\qty(\frac{1}{n},s)$-Bernstein condition for some small $s$, we can apply the Freedman's inequality to $\alpha_t(i)$.

\subsection{Results on Plurality Consensus and Lower Bounds}\label{sec:plurality and lower}
In the proof of \cref{thm:main theorem}, we introduce two results that follow the approach of the above proofs.
First, as a byproduct of the proofs of \cref{lem:strong opinion weakening intro,lem:weak opinion vanishing intro} (specifically, \cref{lem:initial bias weak,lem:weakvanish}), we establish the following theorem on plurality consensus.
\begin{theorem}[Plurality consensus]\label{thm:plurality}
    Let $C>0$ be a sufficiently large constant.
    Consider 3-Majority starting with any initial configuration such that
    $\alphanorm_0\geq \frac{C\log n}{\sqrt{n}}$ and $\alpha_0(1)-\alpha_0(j)\geq C\sqrt{\frac{\log n}{n}}$ for all $j\neq 1$.
    Then, for some $T=O\qty(\frac{\log n}{\alphanorm_0})$, we have $\alpha_T(1)=1$ with high probability.

    Similarly, consider 2-Choices starting with any initial configuration such that $\alphanorm_0\geq \frac{C(\log n)^2}{n}$ and $\alpha_0(1)-\alpha_0(j)\geq C\sqrt{\frac{\alpha_0(1)\log n}{n}}$ for all $j\neq 1$.
    Then, for some $T=O\qty(\frac{\log n}{\alphanorm_0})$, we have $\alpha_T(1)=1$ with high probability.
\end{theorem}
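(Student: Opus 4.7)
The plan is to reduce the plurality statement to the two ingredients already used for \cref{thm:consensus time large alphanorm}---Weak Opinion Vanishing (\cref{lem:weak opinion vanishing intro}) and the gap amplification behind Strong Opinion Weakening (\cref{lem:strong opinion weakening intro})---applied pairwise to opinion $1$ against every other opinion $j\neq 1$. The main observation is that, by Cauchy--Schwarz, $\max_i\alpha_t(i)\ge\alphanorm_t$, so as long as opinion $1$ remains strictly the plurality it automatically satisfies $\alpha_t(1)\ge\alphanorm_t\ge(1-c)\alphanorm_t$ and is therefore strong. Hence it suffices to establish (i) opinion $1$ never loses its plurality lead over the horizon $T=O(\log n/\alphanorm_0)$, and (ii) every opinion $j\neq 1$ satisfies $\alpha_T(j)=0$.

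For (i), I fix $j\neq 1$ and track $\delta_t=\alpha_t(1)-\alpha_t(j)$, with $\delta_0\ge C\sqrt{\log n/n}>0$ by hypothesis. From \cref{eq:expectation of delta intro}, conditioned on opinion $1$ being strong we have $\E_{t-1}[\delta_t]\ge (1-c\alphanorm_{t-1})\delta_{t-1}$ regardless of whether $j$ is strong. Passing to the rescaled process $\widehat\delta_t=\delta_t/\prod_{s\le t}(1-c\alphanorm_{s-1})$ gives an (approximate) submartingale whose one-step increments are sums of $n$ independent contributions, each satisfying a $(1/n,O(\alphanorm_{t-1}/n^2))$-Bernstein condition. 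Applying the multi-step Bernstein--Freedman machinery of \cref{sec:make rigorous,sec:bernstein condition}, the geometric discount factor $\prod(1-c\alphanorm_{s-1})^{-2}$ converts the per-round variance of order $\alphanorm_0/n$ into a summed variance of only $O(1/n)$ over $t\le T$; Freedman's inequality then yields a total deviation of $\widetilde O(\sqrt{\log n/n})$, which is smaller than $\delta_0/2$ when $C$ is chosen large enough. Thus $\delta_t>0$ throughout $t\le T$ with probability $1-O(n^{-4})$.

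For (ii), fix $j\neq 1$. If $j$ is weak at round $0$, apply \cref{lem:weak opinion vanishing intro} directly. If $j$ is strong at round $0$, the gap amplification behind \cref{lem:strong opinion weakening intro} applied to the pair $\{1,j\}$ shows that one of the two opinions becomes weak within $T_1=O(\log n/\alphanorm_0)$ rounds; by step (i) that one must be $j$. A second invocation of \cref{lem:weak opinion vanishing intro} starting from the moment $j$ becomes weak then drives $\alpha_t(j)$ to $0$ by round $T_1+T_2=O(\log n/\alphanorm_0)$. A union bound over the at most $n$ opinions $j\neq 1$ (each failure event having probability $O(n^{-3})$) completes the 3-Majority case.

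The principal obstacle is step (i): the one-step Chernoff bound yields a per-round fluctuation of $\delta_t$ of order $\Theta(1/\sqrt{nk})$, whose naive sum over $T$ rounds dwarfs the initial bias $\sqrt{\log n/n}$ whenever $k\gg n^{1/3}$. This is exactly the obstruction discussed in \cref{sec:make rigorous}, and the Bernstein-condition formulation of \cref{sec:bernstein condition} is designed to replace the worst-case $T/\sqrt{nk}$ with a variance-based $\widetilde O(\sqrt{T/(nk)})$, which, combined with the geometric rescaling above, suffices to keep the deviation below $\delta_0$. The 2-Choices case follows the same blueprint with two modifications: the one-step variance of $\delta_t$ scales as $\Theta(\alpha_{t-1}(1)/n)$ rather than $\Theta(1/n)$, which is reflected in the altered initial-bias hypothesis $C\sqrt{\alpha_0(1)\log n/n}$; and the threshold $\alphanorm_0\ge C(\log n)^2/n$ accommodates the slower $\Omega(1/n^2)$ additive drift of $\alphanorm_t$ in 2-Choices over the $O(\log n/\alphanorm_0)$-round horizon.
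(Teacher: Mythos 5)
Your high-level plan coincides with the paper's---reduce to pairwise comparisons of opinion~$1$ against each $j\neq 1$, show $j$ becomes weak, then invoke \cref{lem:weakvanish}---but the two arguments diverge at the key intermediate step, and yours introduces a subtle issue. The paper proves \cref{thm:plurality} by directly applying \cref{lem:initial bias weak}, which already shows specifically that the \emph{less popular} opinion~$j$ (not ``one of the two'') becomes weak within $O(\log n/\alphanorm_0)$ rounds. The disambiguation you are chasing in step~(i) is handled inside the proof of \cref{lem:deltaupweak}: while $\alpha_t(i)\ge\alpha_t(j)$ and $j$ is not weak, $i$ cannot be weak either, so the stopping time $\tauidown$ cannot arrive before $\tauiweak$. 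Because you invoke only the heuristic \cref{lem:strong opinion weakening intro} (which leaves the identity of the weakening opinion ambiguous), you need the extra step~(i), a Freedman-based argument that $\delta_t(1,j)$ stays positive. That step is redundant once one uses the sharper \cref{lem:initial bias weak}.

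There is also a genuine gap in step~(i) as stated. You condition on ``opinion~$1$ being strong'' to obtain $\E_{t-1}[\delta_t]\ge(1-c\alphanorm_{t-1})\delta_{t-1}$, but opinion~$1$ is guaranteed to be strong only if it remains the plurality, i.e.\ only if $\delta_t(1,j')>0$ holds for \emph{all} $j'\neq 1$ at round $t$---which is precisely what the collection of step-(i) arguments is supposed to establish. This circularity can be repaired by working with the process stopped at the first time opinion~$1$ loses plurality, applying the drift bound to each $j'$, union-bounding, and then showing that stopping time never occurs; none of that is spelled out. Two further details are glossed over: the normalizer $\prod_{s\le t}(1-c\alphanorm_{s-1})$ is $\calF_{t-1}$-measurable and random, so the ``rescaled'' process only has the intended behavior on the high-probability event $\alphanorm_t\ge(1-\cnormdown)\alphanorm_0$ supplied by \cref{lem:taunormdown is large} (which the paper's proof explicitly invokes and yours does not), and restarting \cref{lem:weakvanish} at the random time $\taujweak$ needs the strong Markov property together with the same norm-persistence event to re-verify the $\alphanorm$-hypothesis of that lemma. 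The idea is sound, but the cleaner and essentially one-line route is to observe that \cref{lem:initial bias weak} already gives $\Pr[\taujweak>C\log n/\alphanorm_0]\le O(n^{-10})$ for each $j$, after which step~(i) and the appeal to gap amplification can be deleted.
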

\Cref{thm:plurality} presents new results regarding the initial bias required for plurality consensus.
For 3-Majority, under the same assumption on the initial bias (i.e., $\alpha_0(1) - \alpha_0(j) \ge \Omega(\sqrt{\log n / n})$ for all $j \neq 1$), previous work~\cite{simple_dynamics} requires $\alpha_0(1) =\Omega(1)$ (i.e., $\max_{i\in [k]}\alpha_0(i) = \Theta(1)$ and hence $\alphanorm_0 = \Theta(1)$) to achieve the plurality consensus.
This is a much stricter condition than our necessary condition that $\alphanorm_0\geq \Omega(\log n/\sqrt{n})$.
For 2-Choices, previous work by \citet{plurality_2Choices}
requires $\alpha_0(1) - \alpha_0(j) \ge \Omega(\sqrt{\log n/n})$ for all $j \neq 1$ in order to achieve plurality consensus.

Second, we introduce the following lower bound on the consensus time, which is an immediate consequence of the multi-step concentration technique (specifically, \cref{lem:drift analysis for basic}).
\begin{theorem}[Lower bound]
  \label{thm:lowerbound}
  Consider 3-Majority with $k\leq c\sqrt{n/\log n}$ for a sufficiently small constant $c>0$.
  Then, there exists an initial configuration such that the consensus time is $\Omega(k)$ with high probability.

  Similarly, consider 2-Choices with $k\leq cn/\log n$ for a sufficiently small constant $c>0$. 
  Then, there exists an initial configuration such that the consensus time is $\Omega(k)$ with high probability.
\end{theorem}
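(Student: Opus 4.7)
The natural starting point is the balanced configuration $\alpha_0(i) = 1/k$ for every $i \in [k]$, so $\alphanorm_0 = 1/k$ and the identity \cref{eq:expectation of alpha intro} gives $\E_{t-1}[\alpha_t(i)] = \alpha_{t-1}(i)$ exactly in this state, so each coordinate starts as a martingale. (The same identity holds for 2-Choices by a direct computation: a vertex with opinion $j\ne i$ switches to $i$ with probability $\alpha_{t-1}(i)^2$, and a vertex with opinion $i$ keeps it with probability $1-(\alphanorm_{t-1}-\alpha_{t-1}(i)^2)$, which adds up to the same expression.) The plan is to show that with high probability every $\alpha_t(i)$ stays in $[1/(2k),\,2/k]$ for $T = c'k$ rounds, where $c'>0$ is a sufficiently small constant; since no opinion can have vanished as long as every $\alpha_t(i)>0$, this rules out consensus by time $T=\Omega(k)$.

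To make this precise I would introduce the stopping time $\tau = \inf\{t : \exists j,\ \alpha_t(j) \notin [1/(2k),\,2/k]\}$ and work on the event $\{t \le \tau\}$, on which $\alphanorm_{t-1} = \Theta(1/k)$. There, the drift is tame: $|\E_{t-1}[\alpha_t(i)] - \alpha_{t-1}(i)| = \alpha_{t-1}(i)\,|\alpha_{t-1}(i) - \alphanorm_{t-1}| = O(1/k^2)$. The centered one-step increment $\alpha_t(i) - \E_{t-1}[\alpha_t(i)]$ is a sum of $n$ independent (given round $t-1$) per-vertex contributions of magnitude at most $1/n$, and a short per-vertex variance computation gives $\Var_{t-1}[\alpha_t(i)] = O(1/(nk))$ for 3-Majority and $O(1/(nk^2))$ for 2-Choices (the latter is smaller because a vertex with $\sigma_v \ne i$ switches to $i$ only when both its samples equal $i$, an event of probability $O(1/k^2)$). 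By the additivity of the Bernstein condition recalled in \cref{sec:make rigorous}, the centered round-$t$ increment therefore satisfies the $(1/n,\ O(1/(nk)))$-Bernstein condition for 3-Majority and the $(1/n,\ O(1/(nk^2)))$-Bernstein condition for 2-Choices.

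Feeding these parameters into \cref{lem:drift analysis for basic} and combining with the cumulative drift bound $O(T/k^2)$ yields, with probability $1 - O(n^{-2})$ and uniformly for $t \le T$,
\[
  |\alpha_{t \wedge \tau}(i) - 1/k| \le O\!\left(\frac{T}{k^2}\right) + O\!\left(\sqrt{\frac{T \log n}{n k^{\beta}}}\right) + O\!\left(\frac{\log n}{n}\right),
\]
where $\beta = 1$ for 3-Majority and $\beta = 2$ for 2-Choices. Taking $T = c'k$ and invoking the hypotheses $k \le c\sqrt{n/\log n}$ (for 3-Majority) or $k \le cn/\log n$ (for 2-Choices) with $c,c'$ sufficiently small, each of the three terms on the right-hand side is strictly less than $1/(4k)$. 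A union bound over the $k \le n$ opinions then shows $\alpha_t(i) \in (1/(2k),\,2/k)$ for every $i$ and every $t \le T$ with high probability, hence $\tau > T$ and in particular no consensus has been reached by time $T = \Omega(k)$.

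The main subtlety is the self-referential nature of the estimate: the Bernstein parameters and the drift bound above are only valid as long as the configuration remains close to balanced, which is precisely what the argument is supposed to establish. Working with the stopped process (or equivalently running a short bootstrap over $t$) closes this loop, and handling exactly this kind of controlled-drift concentration is what \cref{lem:drift analysis for basic} is tailored for.
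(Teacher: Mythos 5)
Your proposal starts from the same balanced initial configuration $\alpha_0(i)=1/k$ as the paper and reaches a correct conclusion, but it does noticeably more work than necessary. The paper's proof is a two-line argument built on a single \emph{one-sided} estimate: if consensus occurs by time $T$, then some opinion must have reached $\alpha_t(i)=1$, which in particular means it crossed $(1+\ciup)\alpha_0(i)$, so $\tauiup\le T$. Applying \cref{item:tauiup} of \cref{lem:drift analysis for basic} with $T=\constrefs{lem:drift analysis for basic}{item:tauiup}/\alpha_0(i)=\Theta(k)$ gives $\Pr[\tauiup\le T]\le\exp(-\Omega(n/k^2))$ for 3-Majority (resp.\ $\exp(-\Omega(n/k))$ for 2-Choices), and a union bound over the $k\le n$ opinions finishes the proof under the stated size restriction on $k$. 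Crucially, this upper-deviation bound is unconditional --- \cref{item:tauiup} only needs $t-1<\tauiup$, because the drift $\E_{t-1}[\alpha_t(i)]-\alpha_{t-1}(i)=\alpha_{t-1}(i)(\alpha_{t-1}(i)-\alphanorm_{t-1})\le\alpha_{t-1}(i)^2$ is bounded above without any control on $\alphanorm_{t-1}$, so no self-referential stopping time is needed.

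Your plan instead proves the stronger two-sided confinement $\alpha_t(i)\in[1/(2k),2/k]$ for all $i$ and all $t\le T$. This forces you to control the lower deviation of $\alpha_t(i)$, which in turn requires a lower bound on the drift $-\alpha_{t-1}(i)(\alphanorm_{t-1}-\alpha_{t-1}(i))$, and that is \emph{not} available unconditionally --- you need $\alphanorm_{t-1}=O(1/k)$, which is exactly the confinement you are trying to prove. You correctly identify this circularity and propose closing it with the stopped process, which is legitimate, but it is extra machinery the paper never needs. Also note a small numerical mismatch at the end: each of your three error terms being $<1/(4k)$ only gives $|\alpha_{t\wedge\tau}(i)-1/k|<3/(4k)$, i.e.\ $\alpha_{t\wedge\tau}(i)\in(1/(4k),7/(4k))$, which is not contained in your defining interval $[1/(2k),2/k]$ and hence does not directly force $\tau>T$; you would want the stopping-time window to be looser than the proved confinement (e.g.\ define $\tau$ with $[1/(10k),10/k]$). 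One more minor citation point: \cref{item:tauiup,item:tauidown} of \cref{lem:drift analysis for basic} fix the time horizon at $\constrefs{lem:drift analysis for basic}{item:tauiup}/\alpha_0(i)$, so to get your displayed estimate with a free parameter $T=c'k$ you would apply \cref{lem:Freedman stopping time additive} directly rather than those pre-packaged items. In short: correct idea, correct initial configuration, but you are proving a stronger statement than needed; the one-sided bound, which the paper already proved as \cref{item:tauiup} of \cref{lem:drift analysis for basic}, suffices and avoids the stopping-time bootstrap entirely.
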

\Cref{thm:lowerbound} guarantees the tightness of our upper bound results.
For 2-Choices, our lower bound coincides with \cite[Theorem 4.1]{ignore_or_comply}.
For 3-Majority, \cref{thm:lowerbound} establishes the first $\Omega(k)$ lower bounds for $k = \omega((n/\log n)^{1/4})$,
whereas the best previously known results~\cite{simple_dynamics} demonstrated a lower bound of $\Omega(k\log n)$ that holds for $k\le (n/\log n)^{1/4}$. 

Combining the earlier stated upper bound results (\cref{thm:consensus time large alphanorm,thm:growth of alphanorm}) with \cref{thm:plurality,thm:lowerbound}, we obtain \cref{thm:main theorem}.

\subsection{Open Question}
In this paper, we introduce two new technical tools: multi-step concentration via the Bernstein condition and drift analysis of the $\ell^2$-norm.
These tools allow us to derive nearly tight bounds for the consensus time of 3-Majority and 2-Choices across all ranges of $k$.
Additionally, these techniques open up several interesting research directions.

One direction is to apply our techniques to other consensus dynamics.
For instance, the $h$-Majority dynamics \cite{simple_dynamics,hierarchy_Berenbrink} generalizes the 3-Majority dynamics by having each vertex update its opinion to the majority opinion among $h$ randomly chosen neighbors (with ties broken randomly).
Another interesting dynamic is the undecided dynamics, which has been extensively studied in distributed computing \cite{undecided_Angluin,undecided_MFCS,fast_convergence_undecided,Becchetti2015-undecided,undecided-soda22}.
In particular, the consensus time for the undecided dynamics with arbitrary $2 \le k \le n$ opinions remains an open question (for both synchronous and asynchronous settings).

Another promising direction is to study convergence in the presence of an adversary.
In this scenario, an adversary can corrupt the opinions of $F$ vertices each round, where $F = o(n)$.
Previous work \cite{nearly_tight_analysis} shows that the consensus time bound for 3-Majority holds even if $F =O( \sqrt{n}/k^{1.5})$ and $k =O( n^{1/3}/\sqrt{\log n})$.

Finally, it would be interesting to analyze 3-Majority or 2-Choices with many opinions on graphs other than the complete graph.
While the problem on general graphs has been well studied, far less is known for the case of $k \geq 3$ opinions.
For example, the behavior on expander graphs with $k \geq 3$ opinions for any initial configuration remains open and warrants further research.

\subsection{Organization}
In \cref{sec:preliminaries}, we present formal definitions of the 3‑Majority and 2‑Choices processes and introduce the Bernstein condition, which is a key element in our proof. 
In \cref{sec:drift analysis}, we prove concentration results for 3‑Majority and 2‑Choices using Freedman's inequality and the Bernstein condition. 
Finally, using the techniques developed in \cref{sec:preliminaries,sec:drift analysis}, we prove our main result, \cref{thm:main theorem}, in \cref{sec:proof}.

\section{Preliminaries} \label{sec:preliminaries}
For $n\in\Nat$, let $[n]=\{1,\dots,n\}$.
Let $\Nat_0 = \{0\}\cup\Nat$ denote the set of non-negative integers.
Unless otherwise specified, $\log$ denotes the natural logarithm.
For $a,b\in\Real$, let $a \land b = \min\{a,b\}$.
For $x\in \mathbb{R}^n$ and $p\in \mathbb{R}$, let $\norm{x}_p=\qty(\sum_{i\in [n]}x_i^p)^{1/p}$.
For $x\in \mathbb{R}^n$, let $\norm{x}_\infty=\max_{i\in [n]}x_i$.

\subsection{Consensus Dynamics} \label{sec:consensus dynamics}
First, we state the formal definition of the 3-Majority and 2-Choices dynamics as follows.
\begin{definition}[3-Majority and 2-Choices] \label{def:3-Majority and 2-Choices}
  Let $n,k\in \Nat$ be such that $1\le k\le n$.
  The \emph{3-Majority} (or \emph{2-Choices}) 
  is a discrete-time Markov chain $(\opinion_t)_{t\in\Nat_0}$ over the state space $[k]^V$ for a finite set $V$ with $|V|=n$, where $(\opinion_t)_{t\in\Nat_0}$ is defined as follows:

  In 3-Majority, for every $t\ge 1$, $\opinion_t\in[k]^V$ is obtained from $\opinion_{t-1}\in[k]^V$ by the following procedure:
  \begin{enumerate}
      \item For each vertex $v\in V$, select uniformly random $w_1,w_2,w_3\in V$, independent and with replacement.
      \item Define $\opinion_{t}(v)\in [k]$ by
      \[
          \opinion_t(v) = 
          \begin{cases}
              \opinion_{t-1}(w_1) & \text{if $\opinion_{t-1}(w_1)=\opinion_{t-1}(w_2)$}, \\
              \opinion_{t-1}(w_3) & \text{otherwise}.
          \end{cases}
      \]
  \end{enumerate}
  
  In 2-Choices, for every $t\ge 1$, $\opinion_t\in[k]^V$ is obtained from $\opinion_{t-1}\in[k]^V$ by the following procedure:
    \begin{enumerate}
        \item For each vertex $v\in V$, select uniformly random $w_1,w_2\in V$, independently and with replacement.
        \item Define $\opinion_{t}(v)\in [k]$ by
        \[
            \opinion_t(v) = 
            \begin{cases}
                \opinion_{t-1}(w_1) & \text{if $\opinion_{t-1}(w_1)=\opinion_{t-1}(w_2)$}, \\
                \opinion_{t-1}(v) & \text{otherwise}.
            \end{cases}
        \]
    \end{enumerate}

  For both dynamics, the \emph{consensus time} $\tau_{\mathrm{cons}}$ is the stopping time defined by
  \[
      \taucons = \inf\cbra*{t\geq 0 \colon \text{for some $i\in[k]$ and all $v\in V$, $\opinion_t(v)=i$}}.
  \]
  \end{definition}

  Throughout this paper, we are interested in the following quantities.
  \begin{definition}[Basic quantities] \label{def:basic quantities}
    Let $(\opinion_t)_{t\in \Nat_0}$ be 3-Majority or 2-Choices.
    We define the following quantities.
\begin{enumerate}
  \renewcommand{\labelenumi}{(\roman{enumi})}
    \item  The \emph{fractional population} is the sequence of random vectors $(\alpha_t)_{t\in\Nat_0}$ where each $\alpha_t\in[0,1]^k$ is defined by
  \[
      \alpha_t(i) = \frac{\abs*{\{v\in V\colon \opinion_t(v)=i\}}}{n}.
  \]
  \item For $t\ge 0$ and $i,j\in[k]$, the \emph{bias} $\delta_t(i,j)$ is defined as $\delta_t(i,j) \defeq \alpha_t(i) - \alpha_t(j)$. If opinions $i,j$ are clear from context, we use the abbreviated notation $\delta_t=\delta_t(i,j)$.
  \item Let $\gamma_t=\norm{\alpha_t}^2_2=\sum_{i\in [k]}\alpha_t(i)^2$ denote the squared $\ell^2$-norm of $\alpha_t$.
\end{enumerate}
  \end{definition}

  We sometimes use $(\calF_t)_{t\in\Nat_0}$ as a natural filtration of a sequence of random variables of interest to state general results (e.g., \cref{sec:drift analysis intro}), but in our context, we think of $\calF_t$ as the history of configurations up to round $t$, i.e., $\calF_t$ is the natural filtration generated by $(\opinion_s)_{s\le t}$.
  We use $\E_{t-1}[\cdot]=\E[\cdot | \mathcal{F}_{t-1}],\Pr_{t-1}[\cdot]=\Pr[\cdot | \mathcal{F}_{t-1}]$, and $\Var_{t-1}[\cdot]=\Var[\cdot | \mathcal{F}_{t-1}]$ to denote the conditional expectation, probability, and variance with respect to the history up to round $t-1$ (respectively).

  It is easy to see that, for 3-Majority, for any $v\in V$, $i\in [k]$ and $t\geq 1$, 
  \begin{align}
      \Pr_{t-1}\sbra*{\opinion_t(v)=i}
      =\alpha_{t-1}(i)^2+\left(1-\alphanorm_{t-1}\right)\alpha_{t-1}(i)
      =\alpha_{t-1}(i)\rbra*{1+\alpha_{t-1}(i)-\alphanorm_{t-1}}.
      \label{eq:updating probability for 3Majority}
  \end{align}
  Similarly, for 2-Choices, 
  For any $v\in V$, $i\in [k]$ and $t\geq 1$, we have
  \begin{align}
      \label{eq:updating probability for 2 choices}
      \Pr_{t-1}\sbra*{\opinion_t(v)=i}
      &=\begin{cases}
          1-\alphanorm_{t-1}+\alpha_{t-1}(i)^2 & (\textrm{if }\opinion_{t-1}(v)=i)\\
          \alpha_{t-1}(i)^2 & (\textrm{if }\opinion_{t-1}(v)\neq i)
      \end{cases}.
  \end{align}

\subsection{Bernstein Condition} \label{sec:bernstein condition}
  A key component of our concentration bounds is the \emph{Bernstein condition}, which is defined as follows.
    \begin{definition}[Bernstein condition and one-sided Bernstein condition]
    \label{def:Bernstein condition}
    Let $\bounded,\variance\ge 0$ be parameters.
    A random variable $X$ satisfies \emph{$(\bounded,\variance)$-Bernstein condition} if,
    for any $\lambda\in \mathbb{R}$ such that $\abs{\lambda} \bounded < 3$, 
    $\E\qty[ \e^{\lambda X} ]  \le \exp\qty( \frac{\lambda^2 \variance / 2}{1- (\abs{\lambda}\bounded)/3})$.
    We say that $X$ satisfies \emph{one-sided} $(\bounded,\variance)$-Bernstein condition if,
    for any $\lambda\geq 0$ such that $\lambda \bounded < 3$, 
    $\E\qty[ \e^{\lambda X} ]  \le \exp\qty( \frac{\lambda^2 \variance / 2}{1- (\lambda\bounded)/3})$.
    \end{definition}
  The above definition implies that $X$ satisfies $(\bounded,\variance)$-Bernstein condition if both $X$ and $-X$ satisfy one-sided $(\bounded,\variance)$-Bernstein condition.

  There are several related concepts concerning conditions on moment generating functions (see, e.g., \cite{HDS19}).
  In our analysis, the following properties derived from the Bernstein condition are crucial. 
  For instance, the Bernstein condition for sums of independent random variables (\cref{item:BC for independent rvs}) is consistently important in our analysis of the synchronous process.
  Additionally, the Bernstein condition for negatively associated random variables (\cref{item:BC for NA rvs}) helps us analyze the concentration of the norm $\alphanorm_t$.
  \begin{lemma}[Properties for Bernstein condition]
        \label{lem:Bernstein condition}
        Let $X,Y$ be random variables. We have the following:
        \begin{enumerate}
            \renewcommand{\labelenumi}{(\roman{enumi})}
            \item \label{item:Bernstein condition for bounded random variables}
            If $\E[X]=0$ and $\abs{X}\leq \bounded$ for some $\bounded$, then $X$ satisfies $\qty(\bounded,\Var\qty[X])$-Bernstein condition.
            \item \label{item:Bernstein condition 1}
            If $X$ satisfies $\qty(\bounded,\variance)$-Bernstein condition, then
            $X$ satisfies $\qty(\bounded',\variance')$-Bernstein condition for any 
            $\bounded' \ge \bounded$ and $\variance' \ge \variance$.
            Similarly, if $X$ satisfies one-sided $\qty(\bounded,\variance)$-Bernstein condition, then
            $X$ satisfies one-sided $\qty(\bounded',\variance')$-Bernstein condition for any 
            $\bounded' \ge \bounded$ and $\variance' \ge \variance$.
            \item \label{item:Bernstein condition 2}
            If $X$ satisfies $\qty(\bounded,\variance)$-Bernstein condition, then $aX$ satisfies $(\abs{a}\bounded,a^2\variance)$-Bernstein condition for any $a \in \Real$.
            If $X$ satisfies one-sided $\qty(\bounded,\variance)$-Bernstein condition, then $aX$ satisfies one-sided $(a \bounded,a^2\variance)$-Bernstein condition for any $a \geq 0$.
            \item \label{item:dominated Bernstein condition}
            If $X$ satisfies one-sided $\qty(\bounded,\variance)$-Bernstein condition and $Y\preceq X$, 
            then $Y$ satisfies one-sided $\qty(\bounded, \variance)$-Bernstein condition, where $\preceq$ denotes the stochastic domination (see \cref{def:Stochastic domination}).
            In particular,
              if $X$ satisfies one-sided $\qty(\bounded,\variance)$-Bernstein condition and $Y\leq  X$,  
              then $Y$ satisfies one-sided $\qty(\bounded, \variance)$-Bernstein condition.
            \item \label{item:BC for independent rvs} 
            If a sequence of $n$ random variables $X_1,\ldots,X_n$ are independent and $X_i$ satisfies $\qty(D,s_i)$-Bernstein condition for $i\in [n]$, 
            then $\sum_{i\in [n]}X_i$ satisfies $\qty(D,\sum_{i\in [n]}s_i)$-Bernstein condition. 
            \item \label{item:BC for NA rvs} 
            If a sequence of $n$ random variables $X_1,\ldots,X_n$ are negatively associated and $X_i$ satisfies one-sided $\qty(D,s_i)$-Bernstein condition for $i\in [n]$, 
            then $\sum_{i\in [n]}X_i$ satisfies one-sided $\qty(D,\sum_{i\in [n]}s_i)$-Bernstein condition. 
        \end{enumerate}
    \end{lemma}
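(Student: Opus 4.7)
The plan is to establish all six items by direct manipulation of the moment generating function (MGF) in \cref{def:Bernstein condition}. Item (i) is the classical Bernstein MGF bound, items (ii)--(v) are routine short calculations, and item (vi) reduces to a factorization inequality for negatively associated families.

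For (i), I would fix $\lambda$ with $|\lambda|\bounded < 3$ and Taylor-expand $\E[\e^{\lambda X}] = 1 + \lambda\E[X] + \sum_{k\ge 2} \lambda^k \E[X^k]/k!$. Since $\E[X]=0$ and $|X| \le \bounded$, each higher moment satisfies $|\E[X^k]| \le \bounded^{k-2}\Var[X]$. Using the elementary inequality $k! \ge 2\cdot 3^{k-2}$ for $k\ge 2$, the tail is dominated by the geometric series $\frac{\lambda^2 \Var[X]}{2}\sum_{j\ge 0}(|\lambda|\bounded/3)^j = \frac{\lambda^2 \Var[X]/2}{1-|\lambda|\bounded/3}$, and $1+u \le \e^u$ yields the claim. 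The identical calculation covers the one-sided statement, so no refinement is required.

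For (ii), the RHS of the Bernstein bound is monotone non-decreasing in both $\bounded$ and $\variance$ on the restricted domain $|\lambda|\bounded' < 3$, so nothing needs to be proved. For (iii), substituting $\lambda \mapsto \lambda a$ in the defining inequality and collecting constants yields the claimed parameters; the one-sided case is handled by requiring $a \ge 0$ so that $\lambda a \ge 0$ whenever $\lambda \ge 0$. For (iv), since $x\mapsto \e^{\lambda x}$ is non-decreasing for $\lambda \ge 0$, stochastic domination $Y \preceq X$ immediately gives $\E[\e^{\lambda Y}] \le \E[\e^{\lambda X}]$, and the pointwise inequality $Y\le X$ is a special case. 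For (v), independence factors the MGF as $\E[\e^{\lambda \sum_i X_i}] = \prod_i \E[\e^{\lambda X_i}]$, and multiplying the individual Bernstein bounds keeps the common $\bounded$-parameter fixed while summing the $s_i$'s inside the exponent.

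Part (vi) is the most delicate step, though still brief. The key ingredient is that for a negatively associated family $(X_1, \ldots, X_n)$ and any coordinatewise non-decreasing functions $f_i$ of disjoint coordinates, one has $\E\!\left[\prod_i f_i(X_i)\right] \le \prod_i \E[f_i(X_i)]$, a standard induction from the pairwise covariance definition of NA. Applying this with $f_i(x) = \e^{\lambda x}$, which is non-decreasing precisely because $\lambda \ge 0$ in the one-sided regime, reproduces the factorization used in (v), and multiplying the one-sided Bernstein bounds finishes the proof. The main obstacle throughout the lemma is really just this step: correctly invoking the factorization lemma for NA variables and verifying that its monotonicity hypothesis aligns with the restriction $\lambda \ge 0$, which is the structural reason the NA claim survives only in the one-sided formulation and not in the two-sided one.
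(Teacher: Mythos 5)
Your proposal is correct and follows essentially the same route as the paper. The only cosmetic difference is in item~(i): the paper applies the pre-packaged inequality $\e^z \le 1 + z + \frac{z^2/2}{1 - \abs{z}/3}$ for $\abs{z} < 3$ (\cref{lem:Bernstein lemma}) pointwise to $\lambda X$ and then takes expectations, whereas you rederive that same inequality from the Taylor series via $k! \ge 2\cdot 3^{k-2}$ and the moment bound $\abs{\E[X^k]} \le \bounded^{k-2}\Var[X]$; these are the same calculation. Items~(ii)--(vi) match the paper step for step, including the correct observation that the non-decreasingness of $x\mapsto\e^{\lambda x}$ is exactly what forces the negative-association result in (vi) (and the domination result in (iv)) into the one-sided regime $\lambda\ge 0$. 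One small imprecision worth tidying: the negative-association factorization $\E[\prod_i f_i(X_i)] \le \prod_i \E[f_i(X_i)]$ does not come from a ``pairwise covariance definition'' of NA --- the definition is stated for disjoint index sets and coordinatewise monotone functions, and the product form for $n>2$ is a separate lemma (the paper cites \cref{lem:prodict of NA}); your argument is still sound, but the phrasing mischaracterizes the provenance.
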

    \begin{proof}[Proof of \ref{item:Bernstein condition for bounded random variables}]
    Consider $\lambda\in \mathbb{R}$ such that $\abs{\lambda} \bounded<3$.
    For such $\lambda$, we have $\abs{\lambda X}\leq \abs{\lambda}\abs{X}\leq \abs{\lambda} \bounded<3$.
    Hence, applying \cref{lem:Bernstein lemma} to $\lambda X$, 
    \begin{align*}
    \E\qty[\e^{\lambda X}]
    &\leq \E\qty[1+\lambda X+\frac{\lambda^2X^2/2}{1-\abs{\lambda X}/3}] &&(\textrm{\cref{lem:Bernstein lemma}})\\
    &\leq 1+\E\qty[\frac{\lambda^2X^2/2}{1-(\abs{\lambda} D)/3}] &&(\textrm{$\because$ $\E[X]=0$})\\
    &\leq \exp\qty(\frac{\lambda^2\E[X^2]/2}{1-(\abs{\lambda} D)/3}) &&(\textrm{$\because$ $1+x\leq \e^x$})\\
    &= \exp\qty(\frac{\lambda^2\Var[X]/2}{1-(\abs{\lambda} D)/3}) &&(\textrm{$\because$ $\Var[X]=\E[X^2]-\E[X]^2=\E[X^2]$})
    \end{align*}
    and we obtain the claim.
  \end{proof}
    \begin{proof}[Proof of \cref{item:Bernstein condition 1}]
    For the first claim, consider $\lambda\in \mathbb{R}$ such that $\abs{\lambda} \bounded'<3$.
    For such $\lambda$, we also have $\abs{\lambda} \bounded \leq \abs{\lambda} \bounded'<3$.
    Since $X$ satisfies $(\bounded,\variance)$-Bernstein condition, 
        \begin{align*}
            \E\sbra*{\e^{\lambda X}}
            \leq \exp\rbra*{\frac{\lambda^2\variance/2}{1-(\abs{\lambda}\bounded )/3}}
            \leq \exp\rbra*{\frac{\lambda^2\variance'/2}{1-(\abs{\lambda}\bounded' )/3}}
        \end{align*}
        holds, and we obtain the first claim.
        Similarly, for the second claim, consider $\lambda \geq 0$ such that $\lambda \bounded'<3$.
        For such $\lambda$, we also have $\lambda \bounded \leq \lambda \bounded'<3$.
        Since $X$ satisfies one-sided $(\bounded,\variance)$-Bernstein condition,
        \begin{align*}
            \E\sbra*{\e^{\lambda X}}
            \leq \exp\rbra*{\frac{\lambda^2\variance/2}{1-(\lambda\bounded )/3}}
            \leq \exp\rbra*{\frac{\lambda^2\variance'/2}{1-(\lambda\bounded' )/3}}
        \end{align*}
        holds, and we obtain the second claim.
    \end{proof}
    \begin{proof}[Proof of \cref{item:Bernstein condition 2}]
    For the first claim, consider $\lambda\in \mathbb{R}$ such that $\abs{\lambda} \qty(\abs{a} \bounded)<3$.
    For such $\lambda$, we also have $\abs{\lambda a} \bounded \leq \abs{\lambda} \abs{a} \bounded<3$.
    Since $X$ satisfies $(\bounded,\variance)$-Bernstein condition, we have
        \begin{align*}
            \E\sbra*{\e^{\lambda (aX)}}
            =\E\sbra*{\e^{(\lambda a)X}}
            \leq \exp\rbra*{\frac{(\lambda a)^2\variance/2}{1-(\abs{\lambda a}\bounded )/3}}
            \leq \exp\rbra*{\frac{\lambda^2 (a^2\variance)/2}{1-(\abs{\lambda} (\abs{a}\bounded) )/3}}
        \end{align*}
    and we obtain the first claim.
    For the second claim, consider $\lambda\geq 0$ such that $\lambda \qty(a \bounded)<3$.
    Since $X$ satisfies one-sided $(\bounded,\variance)$-Bernstein condition, we have
        \begin{align*}
            \E\sbra*{\e^{\lambda (aX)}}
            =\E\sbra*{\e^{(\lambda a)X}}
            \leq \exp\rbra*{\frac{(\lambda a)^2\variance/2}{1-(\lambda a\bounded )/3}}
            = \exp\rbra*{\frac{\lambda^2 (a^2\variance)/2}{1-(\lambda a \bounded )/3}}
        \end{align*}
    and we obtain the second claim.
    \end{proof}
    \begin{proof}[Proof of \cref{item:dominated Bernstein condition}]
    Consider $\lambda \geq 0$ such that $\lambda \bounded <3$.
    Then, $f(x) = \e^{\lambda x}$ is non-decreasing and hence $\E[\e^{\lambda X}]\leq \E[\e^{\lambda Y}]$ holds for random variables such that $X\preceq Y$ (see \cref{lem:Stochastic domination}).
    Hence, 
        \begin{align*}
            \E\sbra*{\e^{\lambda Y}}
            \leq \E\sbra*{\e^{\lambda X}}
            \leq \exp\rbra*{\frac{\lambda^2\variance/2}{1-(\lambda \bounded )/3}}
        \end{align*}
        holds and we obtain the claim.

      In particular, if $Y\leq X$, then $Y\preceq X$ (see \cref{lem:Stochastic domination}), which proves the claim.
    \end{proof}
    \begin{proof}[Proof of \cref{item:BC for independent rvs}]
    Consider $\lambda \in \mathbb{R}$ such that $\abs{\lambda}\bounded <3$.
   Since $X_1,\ldots, X_n$ are independent, we obtain
    \begin{align*}
    \E\qty[\mathrm{e}^{\lambda X}]=\E\qty[\prod_{i\in [n]}\mathrm{e}^{\lambda X_i}]
    =\prod_{i\in [n]}\E\qty[\mathrm{e}^{\lambda X_i}]
    \leq \prod_{i\in [n]}\exp\qty(\frac{\lambda^2s_i/2}{1-(\abs{\lambda}\bounded/3)})
    =\exp\qty(\frac{\lambda^2\sum_{i\in[n]}s_i/2}{1-(\abs{\lambda}\bounded/3)}).
    \end{align*}
  \end{proof}  
  \begin{proof}[Proof of \cref{item:BC for NA rvs}]
    Consider $\lambda \geq 0$ such that $\lambda \bounded <3$.
    Then, $f(x)=\e^{\lambda x}$ is non-decreasing and hence
    $\E\qty[\prod_{i\in [n]}\mathrm{e}^{\lambda X_i}]
    \leq \prod_{i\in [n]}\E\qty[\mathrm{e}^{\lambda X_i}]$
    holds for negatively associated random variables $X_1,\ldots, X_n$ (\cref{lem:prodict of NA}).
    Hence, we obtain
    \begin{align*}
    \E\qty[\mathrm{e}^{\lambda X}]
    =\E\qty[\prod_{i\in [n]}\mathrm{e}^{\lambda X_i}] 
    \leq \prod_{i\in [n]}\E\qty[\mathrm{e}^{\lambda X_i}] 
    \leq \prod_{i\in [n]}\exp\qty(\frac{\lambda^2s_i/2}{1-(\lambda\bounded/3)})
    =\exp\qty(\frac{\lambda^2\sum_{i\in[n]}s_i/2}{1-(\lambda\bounded/3)}).
    \end{align*}
  \end{proof} 


  
 \subsection{Drift Analysis based on Bernstein Condition} \label{sec:drift analysis intro}
  In this paper, we use the drift analysis based on the Bernstein condition.
   Consider a sequence of random variables $(X_t)_{t\in \Nat_0}$ such that (i) $\E_{t-1}\qty[ X_t ] \le X_{t-1} - \drift$ whenever $X_t$ satisfies some “good" condition $\calE$, and (ii) the difference $X_t - X_{t-1}$ conditioned on the $(t-1)$-th configuration satisfies the Bernstein condition.
    One can expect that such $(X_t)_{t\in \Nat_0}$ behaves like $X_t \lesssim X_0 - \drift\cdot t$ while $X_t$ keeps satisfying $\calE$.
    In the following, we prove this intuition using Freedman's inequality combined with martingale techniques.

\begin{lemma}[Additive drift lemma] \label{lem:Freedman stopping time additive}
 Let $(X_t)_{t\in \Nat_0}$ be a sequence of random variables and
 let $(\calF_t)_{t\in\Nat_0}$ be a filtration such that $ X_t $ is $ \calF_t $-measurable for all $t\ge 0$.
 Let $\tau$ be a stopping time with respect to $(\calF_t)_{t\in \Nat_0}$.
 Let $\bounded,\variance \ge 0$ and $\drift\in\Real$ be parameters.
 Suppose the following condition holds for any $t\geq 1$: conditioned on $\mathcal{F}_{t-1}$, 
      \begin{enumerate}[label=$(C\arabic*)$]
        \item $\indicator_{\tau>t-1}\rbra*{\E_{t-1}[X_t]-X_{t-1}-\drift}\leq 0$, \label{item:C1}
        \item $\indicator_{\tau>t-1}\rbra*{X_t-X_{t-1}-\drift}$ satisfies one-sided $\qty(\bounded,\variance)$-Bernstein condition. \label{item:C2}
      \end{enumerate}
    For a parameter $h>0$, define stopping times
    \begin{align*} 
      &\tauxplus\defeq \inf\qty{t\geq 0\colon X_t\geq X_0+h}, \\
      &\tauxminus \defeq \inf\qty{t\geq 0\colon X_t\leq X_0-h}.
    \end{align*}
    Then, we have the following:
      \begin{enumerate}
        \renewcommand{\labelenumi}{(\roman{enumi})}
        \item \label{item:positive drift} Suppose $\drift\geq 0$. Then, for any $h,T>0$ such that $z \defeq h-\drift\cdot T > 0$, we have
        \[
        \Pr\qty[\tauxplus \le \min\{T,\tau\}] 
         \le \exp\qty(- \frac{z^2/2}{\variance T + (z \bounded)/3}).
        \]
        \item \label{item:negative drift} Suppose $\drift< 0$. Then, for any $h,T>0$ such that $ z \defeq (-\drift)\cdot T - h > 0$, we have
        \[
        \Pr\qty[\min\cbra{\tauxminus,\tau} >T] 
         \le \exp\qty(- \frac{z^2/2}{\variance T + (z \bounded)/3}).
        \]
      \end{enumerate}
  \end{lemma}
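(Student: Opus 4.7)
The plan is to run a standard Freedman-style exponential martingale argument, but with the one-sided Bernstein moment generating function bound in place of the usual bounded-differences hypothesis. The role of the stopping time $\tau$ is cleanly absorbed by folding the indicator $\indicator_{\tau>t-1}$ into the definition of the supermartingale, so that once $\tau$ has fired the process trivially freezes.

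First, I will define
\[
S_t \defeq \sum_{u=1}^{t} \indicator_{\tau>u-1}\bigl(X_u - X_{u-1} - \drift\bigr) = X_{t\wedge\tau} - X_0 - \drift\cdot(t\wedge\tau), \qquad S_0 = 0.
\]
By \ref{item:C1}, the increment $S_t - S_{t-1}$ has conditional mean $\le 0$, so $(S_t)$ is a supermartingale; by \ref{item:C2}, the same increment satisfies the one-sided $(\bounded,\variance)$-Bernstein condition conditional on $\calF_{t-1}$. On the event in \ref{item:positive drift}, $\{\tauxplus \le \min\{T,\tau\}\}$, we have $\tauxplus\wedge\tau = \tauxplus$ and $X_{\tauxplus} \ge X_0 + h$, so $S_{\tauxplus} \ge h - \drift\cdot\tauxplus \ge h - \drift T = z$ (since $\drift \ge 0$ and $\tauxplus \le T$). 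On the event in \ref{item:negative drift}, $\{\min\{\tauxminus,\tau\}>T\}$, we have $T\wedge\tau = T$ and $X_T > X_0 - h$, so $S_T > -h - \drift T = (-\drift)T - h = z$ (since $\drift<0$). In both cases, the event to be bounded is contained in $\{\max_{0\le t\le T} S_t \ge z\}$.

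Next, for each $\lambda \in [0, 3/\bounded)$ I will form the exponential process $M_t \defeq \exp\bigl(\lambda S_t - t\,\psi(\lambda)\bigr)$, where $\psi(\lambda) \defeq \frac{\lambda^2 \variance/2}{1 - \lambda\bounded/3} \ge 0$. The Bernstein condition on $S_t - S_{t-1}$ yields $\E_{t-1}[e^{\lambda(S_t - S_{t-1})}] \le e^{\psi(\lambda)}$, hence $(M_t)$ is a non-negative supermartingale with $M_0 = 1$. Doob's maximal inequality for non-negative supermartingales (equivalently, optional stopping at $T \wedge \inf\{t : M_t \ge a\}$) gives $\Pr[\max_{0\le t\le T} M_t \ge a] \le 1/a$ for every $a>0$. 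Any $t \le T$ with $S_t \ge z$ forces $M_t \ge \exp(\lambda z - T\psi(\lambda))$ (using $\psi(\lambda)\ge 0$ and $t \le T$), so setting $a = \exp(\lambda z - T\psi(\lambda))$ we conclude
\[
\Pr\bigl[\max_{0\le t\le T} S_t \ge z\bigr] \le \exp\bigl(-\lambda z + T\psi(\lambda)\bigr).
\]

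Finally, the standard Bernstein optimization $\lambda = z/(T\variance + z\bounded/3)$ satisfies $\lambda\bounded/3 < 1$ and gives $-\lambda z + T\psi(\lambda) = -\frac{z^2/2}{T\variance + z\bounded/3}$, matching the stated tail bound in both \ref{item:positive drift} and \ref{item:negative drift}. The one delicate point is ensuring that the hypotheses \ref{item:C1}--\ref{item:C2}, which hold only on $\{\tau>t-1\}$, actually translate into a global supermartingale property and a global MGF bound for $(S_t)$; this is precisely why I put $\indicator_{\tau>t-1}$ into the definition of $S_t$ from the start, so that on $\{\tau\le t-1\}$ the increment is trivially $0$ and both \ref{item:C1} and \ref{item:C2} become vacuous without any further bookkeeping.
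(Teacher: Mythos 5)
Your proposal is correct and takes essentially the same approach as the paper. You form the stopped, drift-corrected process $S_t = X_{t\wedge\tau} - X_0 - \drift(t\wedge\tau)$, which is exactly the paper's $Z_t - Z_0$ with $Z_t = (X - \drift\,\cdot)_{t\wedge\tau}$; you verify the same supermartingale and one-sided Bernstein properties for its increments, reduce both events \ref{item:positive drift} and \ref{item:negative drift} to $\{\max_{0\le t\le T} S_t \ge z\}$ as the paper does (the paper phrases this via the auxiliary stopping time $\tau_X$, but the event is identical), and then close via a Freedman-type maximal inequality with the Bernstein optimization of $\lambda$. The only cosmetic difference is that the paper cites a pre-proved corollary (derived from a theorem of Fan–Grama–Liu) for the maximal inequality, whereas you re-derive it inline via the exponential supermartingale $M_t=\exp(\lambda S_t - t\psi(\lambda))$ and Doob's inequality; that is the standard proof of the corollary, so the content matches.
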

    \begin{remark}
      Since \cref{item:C1} implies $\indicator_{\tau>t-1}\rbra*{X_t-X_{t-1}-R}\leq \indicator_{\tau>t-1}\rbra*{X_t-\E_{t-1}[X_t]}$, we can use the following \cref{item:C2'} instead of \cref{item:C2}:
      \begin{enumerate}[label=$(C\arabic*')$]
        \setcounter{enumi}{1}
        \item $\indicator_{\tau>t-1}\rbra*{X_t-\E_{t-1}[X_t]}$ satisfies one-sided $\qty(\bounded,\variance)$-Bernstein condition. \label{item:C2'}
      \end{enumerate}
    \end{remark}

Very intuitively, \cref{lem:Freedman stopping time additive} implies tha following:
If $\E_{t-1}[X_t]\leq X_{t-1}+R$ for positive $R$, then the probability that $X_t$ exceeds $X_0+h$ within fewer than $h/R$ steps is exponentially small (\cref{item:positive drift}).
If $\E_{t-1}[X_t]\leq X_{t-1}-\bar{R}$ for positive $\bar{R}$, then the probability that $X_t$ has not reached $X_0-h$ after more than $h/\bar{R}$ steps is exponentially small (\cref{item:negative drift}).

The key component of proof of \cref{lem:Freedman stopping time additive} is Freedman's inequality.
Recall that a sequence of random variables $(X_t)_{t\ge 0}$ is a \emph{supermartingale} (resp.\ \emph{submartingale}) if $\E_{t-1}[X_t]\leq X_{t-1}$ (resp.\ $\E_{t-1}[X_t]\geq X_{t-1}$) holds for all $t\ge 1$.
  In this paper, we use the following general version.
  \begin{theorem}[Theorem 2.6 of \cite{FGL15}]
    \label{thm:general Freedman}
        Let $(X_t)_{t\in \Nat_0}$ be a real-valued supermartingale associated with the natural filtration $(\mathcal{F}_t)_{t\in \Nat_0}$.
        Assume that $V_{t-1}$, $t\in [T]$ are positive and $\mathcal{F}_{t-1}$-measurable random variables.
        Suppose 
        $
        \E_{t-1}\sbra*{\exp\rbra*{\lambda (X_t-X_{t-1})}}\leq \exp\rbra*{f(\lambda)V_{t-1}}
        $
        for all $t\in [T]$ and for a positive function $f(\lambda)$ for some $\lambda\in (0,\infty)$.
        Then, for all $h, W>0$,
        \begin{align*}
            \Pr\sbra*{\tinyexists t \le T,\,X_t-X_0\geq h \textrm{ and }\sum_{i=1}^tV_{i-1}\leq W} \leq \exp\rbra*{-\lambda h+f(\lambda)W }.
        \end{align*}
  \end{theorem}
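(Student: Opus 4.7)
The plan is to reduce the lemma to a single application of Freedman's inequality (\cref{thm:general Freedman}) via the drift-corrected, stopped auxiliary process
\[
\hat X_t := X_{t\wedge \tau} - X_0 - \drift \cdot (t\wedge\tau),
\]
whose one-step increment equals $\hat X_t - \hat X_{t-1} = \indicator_{\tau>t-1}(X_t - X_{t-1} - \drift)$. Since $\indicator_{\tau>t-1}$ is $\calF_{t-1}$-measurable, hypothesis (C1) immediately gives $\E_{t-1}[\hat X_t - \hat X_{t-1}] \le 0$, so $\hat X_t$ is a supermartingale; hypothesis (C2) gives
\[
\E_{t-1}\sbra*{\e^{\lambda(\hat X_t - \hat X_{t-1})}} \le \exp\rbra*{f(\lambda)\, \variance}, \qquad f(\lambda) := \frac{\lambda^2/2}{1-\lambda\bounded/3},
\]
for every $0 \le \lambda < 3/\bounded$. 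This is exactly the MGF hypothesis of \cref{thm:general Freedman} with constant $V_{t-1}\equiv \variance$, so $W := \variance T$ is a deterministic upper bound on $\sum_{i=1}^t V_{i-1}$ for every $t \le T$. Note also that $\hat X_0 = 0$, so $\hat X_t - \hat X_0 = \hat X_t$.

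The next step is to embed each target event into an upper-crossing event for $\hat X_t$. For part \ref{item:positive drift}, $\drift\ge 0$, and on $\{\tauxplus \le \min\{T,\tau\}\}$ we have $X_{\tauxplus} \ge X_0 + h$ together with $\tauxplus \le \tau$, which lets us evaluate
\[
\hat X_{\tauxplus} = X_{\tauxplus} - X_0 - \drift\,\tauxplus \ge h - \drift T = z,
\]
so $\{\tauxplus \le \min\{T,\tau\}\} \subseteq \{\tinyexists t\le T,\ \hat X_t \ge z\}$. For part \ref{item:negative drift}, $\drift<0$, and on $\{\min\{\tauxminus,\tau\}>T\}$ we have $X_t > X_0-h$ for all $t\le T$ together with $\tau>T$; evaluating at the deterministic time $t=T$ yields $\hat X_T > -h - \drift T = (-\drift)T - h = z$, so this event also embeds in $\{\tinyexists t\le T,\ \hat X_t \ge z\}$.

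With both target events embedded in the same upper-deviation event for the supermartingale $\hat X_t$, applying \cref{thm:general Freedman} with threshold $z$ and $W = \variance T$ yields, for every $0\le \lambda<3/\bounded$,
\[
\Pr\sbra*{\tinyexists t\le T,\ \hat X_t - \hat X_0 \ge z} \le \exp\rbra*{-\lambda z + \frac{\lambda^2 \variance T / 2}{1-\lambda\bounded/3}}.
\]
The classical Bernstein optimizer $\lambda^\star := z/(\variance T + z\bounded/3)$ satisfies $\lambda^\star\bounded<3$ and makes the exponent equal to $-z^2/(2(\variance T + z\bounded/3))$; this is a standard one-line calculation that I will omit from the final write-up.

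The main obstacle is not any hard inequality but a piece of careful bookkeeping: the hypotheses (C1) and (C2) are stated precisely for the indicator-truncated increment, and I must observe that this truncated increment is exactly the one-step increment of the stopped process $\hat X_t$, so no separate ``propagation'' argument beyond $\calF_{t-1}$-measurability of $\indicator_{\tau>t-1}$ is needed. A secondary subtlety is the asymmetric role of $\tau$ in the two parts: in \ref{item:positive drift} one evaluates $\hat X$ at the random hitting time $\tauxplus$ and uses $\tauxplus \le \tau$ to identify $\hat X_{\tauxplus}$ with $X_{\tauxplus} - X_0 - \drift\,\tauxplus$, whereas in \ref{item:negative drift} one evaluates at the deterministic time $T$ and uses $\tau > T$ for the same purpose. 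Finally, the remark permitting (C2$'$) as a replacement for (C2) follows from (C1), which forces $\indicator_{\tau>t-1}(X_t - X_{t-1} - \drift) \le \indicator_{\tau>t-1}(X_t - \E_{t-1}[X_t])$ pointwise, combined with the fact that the one-sided Bernstein condition is preserved under pointwise upper bounds by the stochastic-domination clause of \cref{lem:Bernstein condition}.
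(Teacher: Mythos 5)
The statement shown is \cref{thm:general Freedman}, a concentration inequality cited from \cite{FGL15} that the paper does not prove; your proposal is instead a proof of \cref{lem:Freedman stopping time additive}, which \emph{invokes} \cref{thm:general Freedman} (through \cref{cor:Freedman}) rather than establishing it. Read as a proof of \cref{lem:Freedman stopping time additive}, your argument is correct and essentially identical to the paper's: both pass to the drift-corrected stopped process (your $\hat X_t$; the paper's $Z_t = Y_{t\wedge\tau}$ with $Y_t = X_t - \drift t$), verify that (C1) and (C2) translate into a supermartingale satisfying a one-sided Bernstein MGF bound, apply the Freedman corollary, and embed $\{\tauxplus \le \min\{T,\tau\}\}$ and $\{\min\{\tauxminus,\tau\}>T\}$ into the same upper-crossing event at level $z$. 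The only cosmetic difference is that the paper funnels both embeddings through the intermediate stopping time $\taux = \inf\{t\ge 0 : X_t \ge X_0 + \drift t + z\}$, whereas you compute the two inclusions directly; the underlying arithmetic is the same.
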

  If a supermartingale $(X_t)_{t\in \Nat_0}$ satisfies one-sided Bernstein condition, then we can obtain the following concentration inequality from \cref{thm:general Freedman}.
    \begin{corollary}[Freedman-type inequality under one-sided Bernstein condition] 
      \label{cor:Freedman}
        Let $(X_t)_{t\in \Nat_0}$ be a supermartingale associated with the natural filtration $(\calF_t)_{t\in \Nat_0}$.
        Suppose that, for every $t\ge 1$, the difference $X_t - X_{t-1}$ conditioned on $\calF_{t-1}$ satisfies one-sided $(\bounded,\variance)$-Bernstein condition.
        Then, for any $h > 0$, we have
      \begin{align*}
         \Pr\sbra*{{}^{\exists} t\le T,\,X_t-X_0\geq h} \leq \exp\rbra*{-\frac{h^2/2}{T\variance + (h\bounded )/3}}.
      \end{align*}
    \end{corollary}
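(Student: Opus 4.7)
The plan is to invoke \cref{thm:general Freedman} with a constant choice of the auxiliary sequence $V_{t-1}$, and then optimize the resulting exponential bound over $\lambda$. First, observe that the one-sided $(\bounded,\variance)$-Bernstein condition on $X_t-X_{t-1}$ given $\calF_{t-1}$ says that for every $\lambda\in[0,3/\bounded)$,
\[
\E_{t-1}\sbra*{\e^{\lambda(X_t-X_{t-1})}} \le \exp\rbra*{\frac{\lambda^2 \variance/2}{1-\lambda\bounded/3}}.
\]
Setting $V_{t-1}\defeq \variance$ (a deterministic constant, hence trivially positive and $\calF_{t-1}$-measurable) and $f(\lambda)\defeq \frac{\lambda^2/2}{1-\lambda\bounded/3}$, this reads $\E_{t-1}[\e^{\lambda(X_t-X_{t-1})}] \le \exp(f(\lambda)V_{t-1})$, matching the hypothesis of \cref{thm:general Freedman} for every $\lambda\in(0,3/\bounded)$.

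Next, I apply \cref{thm:general Freedman} with $W\defeq \variance T$. Since $V_{i-1}=\variance$ for every $i$, the auxiliary event $\sum_{i=1}^t V_{i-1}\le W$ holds deterministically for every $t\le T$, so the conclusion of the theorem simplifies to
\[
\Pr\sbra*{\tinyexists t\le T,\ X_t-X_0\ge h} \le \exp\rbra*{-\lambda h + f(\lambda)\,\variance T}
\]
for every admissible $\lambda$.

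Finally, I optimize. The natural candidate is $\lambda^{*}\defeq h/(\variance T+h\bounded/3)$, which lies in $(0,3/\bounded)$ because $\lambda^{*}<h/(h\bounded/3)=3/\bounded$. A direct computation using $1-\lambda^{*}\bounded/3 = \variance T/(\variance T+h\bounded/3)$ yields $f(\lambda^{*})\,\variance T = \frac{h^2/2}{\variance T+h\bounded/3}$ and $\lambda^{*}h = \frac{h^2}{\variance T+h\bounded/3}$, so the exponent becomes $-\frac{h^2/2}{\variance T+h\bounded/3}$, exactly the claimed bound. I do not expect any substantive obstacle: the whole argument is the classical Bernstein optimization once \cref{thm:general Freedman} is in hand, the only points to verify being the admissibility of $\lambda^{*}$ and the fact that the supermartingale hypothesis is consistent with the Bernstein condition (which is automatic, since one-sided Bernstein forces $\E_{t-1}[X_t-X_{t-1}]\le 0$ by differentiating the moment generating function bound at $\lambda=0^{+}$).
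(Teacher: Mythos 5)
Your proposal is correct and mirrors the paper's own proof exactly: both instantiate \cref{thm:general Freedman} with the constant choice $V_{t-1}=\variance$, $W=T\variance$, and $f(\lambda)=\frac{\lambda^2/2}{1-\lambda\bounded/3}$, then plug in $\lambda^{*}=h/(T\variance+h\bounded/3)$, which the paper also uses. The only additions in your write-up (checking $\lambda^{*}<3/\bounded$, noting that the one-sided Bernstein condition already forces the supermartingale drift inequality) are correct but not strictly necessary, since the supermartingale hypothesis is assumed directly.
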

    \begin{proof}
      We apply \cref{thm:general Freedman} for $\lambda=\frac{h}{T\variance+(h\bounded )/3} > 0$, 
      $f(\lambda)=\frac{\lambda^2/2}{1-(\lambda\bounded /3)}$, $W=T \variance$, and $V_t = s$ (for all $t$).
      Due to the one-sided Bernstein condition of $Y_t - Y_{t-1}$, we have $\E_{t-1}\qty[ \exp\qty( \lambda (Y_t - Y_{t-1}) ) ] \le \exp\qty( f(\lambda) V_{t-1} )$.
      Note that $\lambda\bounded =\frac{h \bounded}{T\variance+(h \bounded)/3}<3$.
      Therefore, from \cref{thm:general Freedman}, we obtain
      \begin{align*}
        \Pr\sbra*{{}^{\exists}t\le T,\,Y_t-Y_0\geq h}
        &=\Pr\sbra*{{}^{\exists}t\le T,\,Y_t-Y_0\geq h \text{ and }\sum_{i=1}^t V_{i-1}\leq \variance T}\\
        &\leq \exp\rbra*{-\lambda h+\frac{\lambda^2/2}{1-(\lambda\bounded/3)}\cdot T\variance }\\
        &=\exp\rbra*{-\frac{h^2/2}{T\variance+(h\bounded)/3}}.
      \end{align*}
    \end{proof}
  \begin{proof}[Proof of \cref{lem:Freedman stopping time additive}]
    For the parameter $z>0$ (either $ z=h-\drift\cdot T $ or $ z = (-\drift)\cdot T - h $), consider the following stopping time:
    \begin{align*} 
       \taux\defeq \inf\cbra*{t\geq 0\colon X_t\geq X_0+\drift\cdot t+z}.
    \end{align*}
      Let $Y_t=X_t-\drift\cdot t$ and $Z_t=Y_{t\wedge \tau}$.
      Then, $Z_t-Z_{t-1}$ conditioned on $\mathcal{F}_{t-1}$ satisfies one-sided $\qty(\bounded,\variance)$-Bernstein condition and
      $(Z_t)_{t\in \Nat_0}$ is a supermartingale since
      \begin{align*}
      Z_t-Z_{t-1}
      =\indicator_{\tau>t-1}\rbra*{Y_t-Y_{t-1}}
      =\indicator_{\tau>t-1}\rbra*{X_t-X_{t-1}-\drift}
      \end{align*}
      and
      \begin{align*}
        \E_{t-1}\sbra*{Z_t-Z_{t-1}}
        =\indicator_{\tau>t-1}\rbra*{\E_{t-1}[X_t]-X_{t-1}-\drift}
        \leq 0.
      \end{align*}
      Then, 
      we obtain
      \begin{align*}
        \Pr\qty[\tau_X \le \min\{T,\tau\}] 
        &=\Pr\qty[\tinyexists t\le T\land \tau,\,X_t\geq X_0+\drift\cdot t+z ] \\
        &=\Pr\qty[\tinyexists t \le T \land \tau,\,Y_t\geq Y_0+z] &&(\because Y_t=X_t-Rt)\\
        &=\Pr\qty[\tinyexists t \le T\land \tau,\,Z_t\geq Z_0+z] &&(\because t\leq T\wedge \tau \leq \tau)\\
        &\le \Pr\qty[\tinyexists t \le T,\,Z_t\geq Z_0+z]\\
        &\le \exp\qty(- \frac{z^2/2}{\variance T + (z \bounded)/3}) &&(\because \textrm{\cref{cor:Freedman}}).
      \end{align*}
      For the first claim, it suffices to show that $\Pr[\tauxplus \le \min\{T,\tau\}] 
      \leq \Pr\qty[\tau_X \le \min\{T,\tau\}]$.
      Since $\bounded\geq 0$ and $\drift T\leq h-z$, we have
      \begin{align*}
        \Pr\qty[\tauxplus \le \min\{T,\tau\}] 
        &=\Pr\qty[\tinyexists t \le T,\,X_t\geq X_0+h]\\
        &\leq \Pr\qty[\tinyexists t \le T\land \tau,\,X_t\geq X_0+\drift\cdot t+z]&&(\because h\geq \drift\cdot T+z\geq \drift\cdot t+z)\\
        &=\Pr\qty[\tau_X \le \min\{T,\tau\}].
      \end{align*}
    
    Now, we apply \cref{thm:general Freedman} for $\lambda=\frac{h}{T\variance+(h\bounded )/3} > 0$, 
      $f(\lambda)=\frac{\lambda^2/2}{1-(\lambda\bounded /3)}$, $W=T \variance$, and $V_t = s$ (for all $t$).
      Due to the one-sided Bernstein condition of $Y_t - Y_{t-1}$, we have $\E_{t-1}\qty[ \exp\qty( \lambda (Y_t - Y_{t-1}) ) ] \le \exp\qty( f(\lambda) V_{t-1} )$.
      Note that $\lambda\bounded =\frac{h \bounded}{T\variance+(h \bounded)/3}<3$.
      Therefore, from \cref{thm:general Freedman}, we obtain
      \begin{align*}
        \Pr\sbra*{{}^{\exists}t\le T,\,Y_t-Y_0\geq h}
        &=\Pr\sbra*{{}^{\exists}t\le T,\,Y_t-Y_0\geq h \text{ and }\sum_{i=1}^t V_{i-1}\leq \variance T}\\
        &\leq \exp\rbra*{-\lambda h+\frac{\lambda^2/2}{1-(\lambda\bounded/3)}\cdot T\variance }\\
        &=\exp\rbra*{-\frac{h^2/2}{T\variance+(h\bounded)/3}}.
      \end{align*}
      
      For the second claim, it suffices to show that $\Pr[\min\{\tauxminus,\tau\} >T] 
        \leq \Pr\qty[\tau_X \le \min\{T,\tau\}]$.
      Since $\drift < 0$ (i.e., $-\drift>0$) and $-\drift T\geq h+z$, we have
      \begin{align*}
        \Pr\qty[\min\cbra{\tauxminus,\tau} >T]
        &=\Pr\qty[X_T> X_0-h\textrm{ and }\tauxminus >T\textrm{ and }\tau>T] \\
        &\leq \Pr\qty[X_T> X_0+\drift\cdot T+z\textrm{ and }\tau>T] &&(\because -h\geq \drift\cdot T+z)\\
        &=\Pr\qty[X_{T\wedge \tau}> X_0+\drift\cdot (T\wedge \tau)+z\textrm{ and }\tau>T] \\
        &\leq \Pr\qty[X_{T\wedge \tau}\geq  X_0+\drift\cdot (T\wedge \tau)+z] \\
        &\leq \Pr\qty[\tinyexists t \le T \land \tau,\,X_t\geq X_0+\drift\cdot t+z ]\\
        &=\Pr\qty[\tau_X \le \min\{T,\tau\}].
      \end{align*}
    \end{proof}

Finally, we introduce the following simple lemma, which follows immediately from \cref{lem:Bernstein condition} and is frequently used in our proof.
   \begin{lemma}
      \label{lem:BC for SP}
      Let $(X_t)_{t\in \Nat_0}$ be a sequence of random variables and
      let $(\calF_t)_{t\in\Nat_0}$ be a filtration such that $ X_t $ is $ \calF_t $-measurable for all $t\ge 0$.
      Let $(\bounded_{t})_{t\in \Nat_0}$ and $(\variance_{t})_{t\in \Nat_0}$ are $ \calF_t $-measurable random variables.
      Let $\tau$ be a stopping time with respect to $(\calF_t)_{t\in \Nat_0}$.
      Suppose that $X_t$ conditioned on $\mathcal{F}_{t-1}$ satisfies $(\bounded_{t-1},\variance_{t-1})$-Bernstein condition, $\indicator_{\tau>t-1} \bounded_{t-1}\leq D$, and $\indicator_{\tau>t-1} \variance_{t-1}\leq \variance$ for some non-negative parameters $\bounded$ and $\variance$.
      Then, 
      both $\indicator_{\tau>t-1}X_t$ and $-\indicator_{\tau>t-1}X_t$ conditioned on $\mathcal{F}_{t-1}$ satisfy $(\bounded,\variance)$-Bernstein condition.
    \end{lemma}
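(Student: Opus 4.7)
The plan is to exploit that $\tau$ is a stopping time with respect to $(\calF_t)_{t\in\Nat_0}$, so the event $\{\tau > t-1\}$ is $\calF_{t-1}$-measurable and hence $\indicator_{\tau>t-1}$ is $\calF_{t-1}$-measurable. Once we condition on $\calF_{t-1}$, this indicator collapses to a deterministic $0$ or $1$, as do $\bounded_{t-1}$ and $\variance_{t-1}$. The proof then reduces to a two-case analysis combined with the monotonicity property of the Bernstein condition from \cref{item:Bernstein condition 1}.

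More concretely, fix $t\ge 1$ and condition on $\calF_{t-1}$. On the event $\{\tau\le t-1\}$, the random variable $\indicator_{\tau>t-1} X_t$ equals $0$, so $\E_{t-1}[\e^{\lambda \cdot 0}] = 1$, and the $(\bounded,\variance)$-Bernstein condition holds trivially for every $\bounded,\variance\ge 0$. On the complementary event $\{\tau>t-1\}$, we have $\indicator_{\tau>t-1} X_t = X_t$; moreover, the hypotheses $\indicator_{\tau>t-1}\bounded_{t-1}\le \bounded$ and $\indicator_{\tau>t-1}\variance_{t-1}\le \variance$ force $\bounded_{t-1}\le \bounded$ and $\variance_{t-1}\le \variance$ on this event. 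Since $X_t$ conditioned on $\calF_{t-1}$ satisfies the $(\bounded_{t-1},\variance_{t-1})$-Bernstein condition by hypothesis, the monotonicity in parameters (\cref{item:Bernstein condition 1}) yields the $(\bounded,\variance)$-Bernstein condition, as desired.

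The claim for $-\indicator_{\tau>t-1} X_t$ follows by symmetry: the two-sided Bernstein condition is invariant under the substitution $\lambda\mapsto -\lambda$, since the right-hand side $\exp\bigl((\lambda^2 \variance/2)/(1-|\lambda|\bounded/3)\bigr)$ depends only on $|\lambda|$. Alternatively, one can invoke \cref{item:Bernstein condition 2} with $a=-1$. Since every step is an immediate consequence of the definitions and already-established properties of the Bernstein condition, I do not anticipate a substantive obstacle; the main conceptual point is simply the $\calF_{t-1}$-measurability of $\indicator_{\tau>t-1}$ granted by the stopping-time property, which lets us treat the indicator as a constant under the conditioning.
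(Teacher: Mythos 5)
Your proof is correct and takes essentially the same approach as the paper: both arguments rest on the $\calF_{t-1}$-measurability of $\indicator_{\tau>t-1}$ and the parameter monotonicity of \cref{item:Bernstein condition 1} of \cref{lem:Bernstein condition}. The only cosmetic difference is that you perform the case split on $\{\tau\le t-1\}$ versus $\{\tau>t-1\}$ by hand, whereas the paper packages the same step via \cref{item:Bernstein condition 2} of \cref{lem:Bernstein condition} with scaling factor $a=\indicator_{\tau>t-1}\in\{0,1\}$ (and likewise handles the $-\indicator_{\tau>t-1}X_t$ case via $a=-1$, which you also note as an alternative).
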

    \begin{proof}
      From \cref{item:Bernstein condition 2} of \cref{lem:Bernstein condition}, it suffices to show that $\indicator_{\tau>t-1}X_t$ conditioned on $\mathcal{F}_{t-1}$ satisfies $(\bounded,\variance)$-Bernstein condition.
      First, from \cref{item:Bernstein condition 2} of \cref{lem:Bernstein condition} and our assumption, $\indicator_{\tau>t-1}X_t$ satisfies $(\indicator_{\tau>t-1}\bounded_{t-1},\indicator_{\tau>t-1}^2\variance_{t-1})$-Bernstein condition.
      Then, from $\indicator_{\tau>t-1}\bounded_{t-1}\leq \bounded$, $\indicator_{\tau>t-1}^2 \variance_{t-1}=\indicator_{\tau>t-1} \variance_{t-1}\leq \variance$, and  \cref{item:Bernstein condition 1} of \cref{lem:Bernstein condition}, 
      $\indicator_{\tau>t-1}X_t$ conditioned on $\mathcal{F}_{t-1}$ satisfies $(\bounded,\variance)$-Bernstein condition.
      Thus, we obtain the claim.
    \end{proof}

\section{Drift Analysis for 3-Majority and 2-Choices} \label{sec:drift analysis}
In this section, we check that for several quantities (e.g., $\alpha_t(i)$, $\alphanorm_t$, $\delta_t$) of 3-Majority or 2-Choices, their one-step difference satisfies the Bernstein condition (\cref{sec:bernstein condition for sync processes}).
This enables us to apply our drift analysis (\cref{lem:Freedman stopping time additive}) to these quantities (\cref{sec:drift analysus for sync processes}).
\subsection{Bernstein Condition for 3-Majority and 2-Choices} \label{sec:bernstein condition for sync processes}
First, we present bounds of expectations and variances of quantities of interest for each model.
Some of them are already known in the literature, but we provide a unified proof for all quantities in \cref{sec:proof of basic inequalities}.
\begin{lemma}[Basic inequalities for $\alpha_t, \delta_t$ and $\alphanorm_t$]\label{lem:basic inequality}
  Consider the quantities defined in \cref{def:basic quantities} for 3-Majority or 2-Choices.
  Then, we have the following for any $t\geq 1$:
\begin{enumerate}
            \renewcommand{\labelenumi}{(\roman{enumi})}
\item \label{item:alpha} 
For any opinion $i \in [k]$, we have
  \begin{align*} 
     &\E_{t-1} [\alpha_t(i)] = 
      \alpha_{t-1}(i) \qty( 1 + \alpha_{t-1}(i) - \alphanorm_{t-1} ), \\ 
    &\Var_{t-1}[\alpha_t(i)] \le \begin{cases}
      \frac{\alpha_{t-1}(i)}{n}	& \text{for 3-Majority},\\
      \frac{\alpha_{t-1}(i)(\alpha_{t-1}(i) + \alphanorm_{t-1})}{n}	& \text{for 2-Choices}.
    \end{cases} 
  \end{align*}
\item \label{item:delta} 
For any two distinct opinions $i,j\in [k]$, we have
  \begin{align*} 
     &\E_{t-1} [\delta_t(i,j)] =
      \delta_{t-1}(i,j) \qty( 1 + \alpha_{t-1}(i) + \alpha_{t-1}(j) - \alphanorm_{t-1} ), \\
    &\Var_{t-1}[\delta_t(i,j)] \le \begin{cases}
      \frac{2}{n}(\alpha_{t-1}(i) + \alpha_{t-1}(j))	& \text{for 3-Majority},\\[10pt]
      \frac{1}{n}(\alpha_{t-1}(i)+\alpha_{t-1}(j))(\alpha_{t-1}(i)+\alpha_{t-1}(j)+\alphanorm_{t-1})	& \text{for 2-Choices}.
    \end{cases} 
  \end{align*}
\item \label{item:expectation of alphanorm} 
It holds that
\begin{align*} 
\E_{t-1}[\alphanorm_{t}] \ge \begin{cases}
   \alphanorm_{t-1}+\frac{1-\alphanorm_{t-1}}{n}	& \text{for 3-Majority},\\[10pt]
   \alphanorm_{t-1}+\frac{(1-\sqrt{\alphanorm_{t-1}})(1-\alphanorm_{t-1})\alphanorm_{t-1}}{n}	& \text{for 2-Choices}.
 \end{cases} 
\end{align*}
In particular, $\E_{t-1}[\alphanorm_{t}] \geq \alphanorm_{t-1}$.
\end{enumerate}
\end{lemma}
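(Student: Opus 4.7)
All three parts rest on the explicit formulas \cref{eq:updating probability for 3Majority,eq:updating probability for 2 choices} for $\Pr_{t-1}\sbra{\opinion_t(v)=i}$, together with the structural observation that, conditioned on $\calF_{t-1}$, the updated opinions $\{\opinion_t(v)\}_{v\in V}$ are mutually independent, since each vertex samples its neighbors independently. Consequently every quantity of interest reduces to a sum of $n$ independent Bernoulli-type random variables, and each expectation/variance can be read off (or at least upper-bounded cleanly) from $\alpha_{t-1}$ and $\alphanorm_{t-1}$.

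The plan for (i) and (ii) is as follows. Writing $p_v(i) \defeq \Pr_{t-1}\sbra{\opinion_t(v)=i}$, the expectation $\E_{t-1}[\alpha_t(i)] = n^{-1}\sum_v p_v(i)$ collapses to $\alpha_{t-1}(i)(1+\alpha_{t-1}(i)-\alphanorm_{t-1})$ in both models by direct substitution, and linearity yields the formula for $\delta_t$. For the variance of $\alpha_t(i)$, independence gives $\Var_{t-1}[n\alpha_t(i)] = \sum_v p_v(i)(1-p_v(i))$. In 3-Majority $p_v(i)$ does not depend on $v$, and a short case split on whether $\alpha_{t-1}(i) \le \alphanorm_{t-1}$ (exploiting $\alphanorm_{t-1}\ge \alpha_{t-1}(i)^2$) yields $p(1-p) \le \alpha_{t-1}(i)$. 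In 2-Choices I split vertices by whether $\opinion_{t-1}(v)=i$, getting the two contributions $na_i(1-\gamma+a_i^2)(\gamma-a_i^2)$ and $n(1-a_i)a_i^2(1-a_i^2)$, whose sum is bounded by $na_i(a_i+\gamma)$ via the elementary inequality $\gamma^2\ge 2a_i^2(\gamma-1)-a_i^3$ (immediate since $\gamma\le 1$). For $\Var_{t-1}[\delta_t]$, write $n\delta_t=\sum_v Y_v$ with $Y_v=\indicator_{\opinion_t(v)=i}-\indicator_{\opinion_t(v)=j}\in\{-1,0,1\}$; because the $\pm 1$ events are mutually exclusive, $\E[Y_v^2]=p_v(i)+p_v(j)$. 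For 3-Majority, the crude bound $\Var[Y_v]\le \E[Y_v^2]$ suffices after summing. For 2-Choices it is too lossy, and I would also subtract $(\E[Y_v])^2$, splitting vertices into three groups according to whether $\opinion_{t-1}(v)\in\{i\},\{j\}$, or neither, and reducing the target inequality to a polynomial relation in $a_i,a_j,\gamma$ that follows from $-\gamma^2(a_i+a_j)\le 0$.

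For part (iii), I decompose $\E_{t-1}[\alphanorm_t] = \sum_i\rbra{(\E_{t-1}\alpha_t(i))^2 + \Var_{t-1}[\alpha_t(i)]}$. For 3-Majority, set $p_i \defeq \alpha_{t-1}(i)(1+\alpha_{t-1}(i)-\alphanorm_{t-1})$; then $\sum_i p_i = 1$ and $\Var_{t-1}[\alpha_t(i)] = p_i(1-p_i)/n$, so $\E[\alphanorm_t] = (1-1/n)\sum_i p_i^2 + 1/n$, and it suffices to prove $\sum_i p_i^2 \ge \alphanorm_{t-1}$. Expanding and cancelling, this reduces to $\sum_i \alpha_{t-1}(i)^3 \ge \alphanorm_{t-1}^2$, which follows immediately from Cauchy–Schwarz: $\rbra{\sum_i a_i^2}^2 = \rbra{\sum_i \sqrt{a_i}\cdot a_i^{3/2}}^2 \le \sum_i a_i \cdot \sum_i a_i^3 = \sum_i a_i^3$.

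The main obstacle will be the 2-Choices version of (iii), whose additive drift carries the unusual factor $(1-\sqrt{\alphanorm_{t-1}})$. Unlike in 3-Majority, the event that a vertex retains its previous opinion couples the $(\E\alpha_t(i))^2$ and $\Var[\alpha_t(i)]$ contributions, so the clean telescoping above breaks down. My plan is to expand $\E_{t-1}[\alphanorm_t] - \alphanorm_{t-1}$ as a polynomial in $\{\alpha_{t-1}(i)\}_i$ and $\alphanorm_{t-1}$, group terms by degree, and apply $\sum_i \alpha_{t-1}(i)^3 \ge \alphanorm_{t-1}^2$ together with the interpolation $\alphanorm_{t-1} \le \norm{\alpha_{t-1}}_\infty\cdot\norm{\alpha_{t-1}}_1 = \norm{\alpha_{t-1}}_\infty$ (and higher-order variants) to factor out $(1-\sqrt{\alphanorm_{t-1}})$. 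The appearance of a square root strongly suggests that the tight bound emerges from Cauchy–Schwarz applied to terms involving $\sqrt{\alpha_{t-1}(i)}$, and identifying the correct grouping will be the crux of the argument.
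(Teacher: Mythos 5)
Your outline for parts (i), (ii), and the 3-Majority half of (iii) is essentially the paper's proof: reduce everything to conditional independence of $(\indicator_{\opinion_t(v)=i})_{v\in V}$, read off expectations and variances from \cref{eq:updating probability for 3Majority,eq:updating probability for 2 choices}, and finish with polynomial bounds. The small variants you propose (a case split on $\alpha_{t-1}(i)$ versus $\alphanorm_{t-1}$ in place of the paper's factoring $(1+x)(1-x)=1-x^2\le 1$ for the 3-Majority variance of $\alpha_t(i)$; computing $\Var_{t-1}[Y_v]$ directly for $\delta_t$ in 2-Choices rather than invoking the covariance identity $\Var_{t-1}[\delta_t]=\Var_{t-1}[\alpha_t(i)]+\Var_{t-1}[\alpha_t(j)]+\tfrac{2}{n^2}\sum_v p_v(i)p_v(j)$) are correct and land in the same place, and your reduction of the 2-Choices variance of $\alpha_t(i)$ to $\alphanorm_{t-1}^2\ge 2\alpha_{t-1}(i)^2(\alphanorm_{t-1}-1)-\alpha_{t-1}(i)^3$ checks out.

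The genuine gap is the 2-Choices half of (iii), which you flag as the main obstacle but do not actually resolve. Your worry that retention ``couples the $(\E\alpha_t(i))^2$ and $\Var[\alpha_t(i)]$ contributions, so the clean telescoping above breaks down'' overstates the difficulty, and the plan to expand $\E_{t-1}[\alphanorm_t]-\alphanorm_{t-1}$ as a polynomial and search for a grouping that manufactures a $\sqrt{\cdot}$ is much more work than necessary. The decomposition $\E_{t-1}[\alphanorm_t]=\sum_i\E_{t-1}[\alpha_t(i)]^2+\sum_i\Var_{t-1}[\alpha_t(i)]$ carries over untouched; since $\E_{t-1}[\alpha_t(i)]=\alpha_{t-1}(i)(1+\alpha_{t-1}(i)-\alphanorm_{t-1})$ is the same formula in both models, your Cauchy--Schwarz argument $\sum_i\alpha_{t-1}(i)^3\ge\alphanorm_{t-1}^2$ already gives $\sum_i\E_{t-1}[\alpha_t(i)]^2\ge\alphanorm_{t-1}$ verbatim, and the entire $(1-\sqrt{\alphanorm_{t-1}})$ factor comes from the variance piece alone. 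Concretely, split $\Var_{t-1}[\alpha_t(i)]$ into its ``out-flow'' and ``retention'' summands exactly as you did for part (i), drop the out-flow summand (it is nonnegative), and bound the retention summand $\frac{(1-\alpha_{t-1}(i))\,\alpha_{t-1}(i)^2\,(1-\alpha_{t-1}(i)^2)}{n}$ from below by the uniform factors $1-\alpha_{t-1}(i)\ge 1-\norm{\alpha_{t-1}}_\infty\ge 1-\sqrt{\alphanorm_{t-1}}$ (which is just $\norm{\alpha_{t-1}}_\infty^2\le\alphanorm_{t-1}$) and $1-\alpha_{t-1}(i)^2\ge 1-\alphanorm_{t-1}$; summing $\alpha_{t-1}(i)^2$ over $i$ then yields $\sum_i\Var_{t-1}[\alpha_t(i)]\ge\frac{(1-\sqrt{\alphanorm_{t-1}})(1-\alphanorm_{t-1})\alphanorm_{t-1}}{n}$. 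The square root is the elementary $\ell^\infty$-versus-$\ell^2$ estimate, not a hidden Cauchy--Schwarz on terms involving $\sqrt{\alpha_{t-1}(i)}$ as you conjectured.
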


Now, we introduce the Bernstein condition for the quantities $\alpha_t$, $\delta_t$ and $\alphanorm_t$.
Essentially, we apply the Bernstein condition for the sum of independent random variables (\cref{item:BC for independent rvs} of \cref{lem:Bernstein condition}).
The most technical part of our analysis is the study of $\alphanorm_t$, for which we additionally use the Bernstein condition for negatively associated random variables (see \cref{item:BC for NA rvs} in \cref{lem:Bernstein condition}).
\begin{lemma}[Bernstein condition for $\alpha_t$, $\delta_t$ and $\alphanorm_t$]
\label{lem:Bernstein condition for sync processes}
Consider the quantities defined in \cref{def:basic quantities} for 3-Majority or 2-Choices.
Then, we have the following for any $t\geq 1$:
\begin{enumerate}
            \renewcommand{\labelenumi}{(\roman{enumi})}
    \item \label{item:BC for alpha}
    For any opinion $i\in [k]$, 
    $\alpha_{t}(i)-\E_{t-1}\sbra*{\alpha_t(i)}$ conditioned on round $t-1$ satisfies $\qty(\frac{1}{n},\variance)$-Bernstein condition, where
    \begin{align*} 
       \variance = \begin{cases}
        \frac{\alpha_{t-1}(i)}{n}	& \text{for 3-Majority},\\
        \frac{\alpha_{t-1}(i)(\alpha_{t-1}(i)+\alphanorm_{t-1})}{n} & \text{for 2-Choices}.
       \end{cases}
    \end{align*}
    \item \label{item:BC for delta}
     For any two distinct opinions $i,j\in [k]$,
     $\delta_{t}(i,j)-\E_{t-1}\sbra*{\delta_t(i,j)}$ conditioned on round $t-1$ satisfies $ \qty(\frac{2}{n}, \variance) $-Bernstein condition, 
     where
    \begin{align*} 
       \variance = \begin{cases}
          \frac{2}{n}(\alpha_{t-1}(i)+\alpha_{t-1}(j))	& \text{for 3-Majority},\\[10pt]
          \frac{1}{n}\qty(\alpha_{t-1}(i) + \alpha_{t-1}(j))\qty(\alpha_{t-1}(i)+\alpha_{t-1}(j)+\alphanorm_{t-1}) & \text{for 2-Choices}.
       \end{cases} 
    \end{align*}
    \item \label{item:BC for alphanorm}
     $\alphanorm_{t-1}-\alphanorm_t$ conditioned on round $t-1$ satisfies one-sided $ \qty(\frac{2\sqrt{\alphanorm_{t-1}}}{n},\variance)$-Bernstein condition, where
    \begin{align*}
       \variance = \begin{cases}
          \frac{4}{n}\alphanorm_{t-1}^{1.5}	& \text{for 3-Majority},\\[10pt]
          \frac{8}{n}\alphanorm_{t-1}^2 & \text{for 2-Choices}.
       \end{cases}
    \end{align*} 
\end{enumerate}
\end{lemma}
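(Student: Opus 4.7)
The plan is to write each quantity as a functional of the $n$ conditionally independent per-vertex updates and apply the Bernstein-closure machinery of \cref{lem:Bernstein condition}. Parts (i) and (ii) are linear functionals of these updates, so the basic independent-sum closure (item (v)) suffices. Part (iii) is quadratic in the updates; there I will first linearize via an elementary algebraic inequality and then combine across opinions using negative association (item (vi)), together with a Cauchy--Schwarz computation that absorbs the resulting mean correction.

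For (i), write $\alpha_t(i) - \E_{t-1}[\alpha_t(i)] = \frac{1}{n}\sum_{v\in V}\bigl(\indicator[\opinion_t(v)=i] - p_{i,v}\bigr)$ with $p_{i,v} = \Pr_{t-1}[\opinion_t(v)=i]$. The summands are mutually independent given $\calF_{t-1}$, mean zero, each bounded by $1/n$, and of variance $p_{i,v}(1-p_{i,v})/n^2$; item (i) of \cref{lem:Bernstein condition} gives each a $(\tfrac{1}{n},\,p_{i,v}(1-p_{i,v})/n^2)$-Bernstein condition, and item (v) then yields $(\tfrac{1}{n},\,\Var_{t-1}[\alpha_t(i)])$-Bernstein for the whole sum, with the advertised variance supplied by \cref{lem:basic inequality}(i). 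Part (ii) proceeds identically: $\delta_t(i,j) - \E_{t-1}[\delta_t(i,j)] = \frac{1}{n}\sum_v\bigl(\indicator[\opinion_t(v)=i] - \indicator[\opinion_t(v)=j] - \E_{t-1}[\cdot]\bigr)$ is a sum of independent centred random variables, each of magnitude at most $2/n$ since the raw summand lies in $\{-1,0,1\}$, and the total variance equals $\Var_{t-1}[\delta_t(i,j)]$, which is bounded by \cref{lem:basic inequality}(ii).

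Part (iii) is the main obstacle because $\alphanorm_t = \sum_i \alpha_t(i)^2$ is quadratic in the per-vertex indicators. I will use the pointwise algebraic identity $\alpha_{t-1}(i)^2 - \alpha_t(i)^2 = 2\alpha_{t-1}(i)(\alpha_{t-1}(i) - \alpha_t(i)) - (\alpha_{t-1}(i) - \alpha_t(i))^2$ to obtain the deterministic bound $\alphanorm_{t-1} - \alphanorm_t \le \sum_i Y_i$ with $Y_i := 2\alpha_{t-1}(i)(\alpha_{t-1}(i) - \alpha_t(i))$. Each $Y_i$ is affine in $\alpha_t(i)$, so the argument of (i) shows that $Y_i - \E_{t-1}[Y_i]$ satisfies $\bigl(\tfrac{2\alpha_{t-1}(i)}{n},\,4\alpha_{t-1}(i)^2\Var_{t-1}[\alpha_t(i)]\bigr)$-Bernstein. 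Because each $Y_i$ depends only on the count $N_t(i) = n\alpha_t(i)$ in a monotone (non-increasing) fashion, and the counts $(N_t(i))_{i\in[k]}$ are negatively associated (they arise from $n$ independent categorical draws, one per vertex), the family $(Y_i - \E_{t-1}[Y_i])_{i\in[k]}$ is also NA. After enlarging each per-term bound to the uniform value $\tfrac{2\sqrt{\alphanorm_{t-1}}}{n}$ via $\alpha_{t-1}(i) \le \sqrt{\alphanorm_{t-1}}$ and item (ii), the NA closure in item (vi) produces a one-sided $\bigl(\tfrac{2\sqrt{\alphanorm_{t-1}}}{n},\,\sum_i 4\alpha_{t-1}(i)^2\Var_{t-1}[\alpha_t(i)]\bigr)$-Bernstein condition for $\sum_i(Y_i - \E_{t-1}[Y_i])$. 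Elementary inequalities $\|\alpha_{t-1}\|_3^3 \le \alphanorm_{t-1}^{3/2}$ and $\|\alpha_{t-1}\|_4^4 \le \alphanorm_{t-1}^2$ (both immediate from $\max_i \alpha_{t-1}(i) \le \sqrt{\alphanorm_{t-1}}$) then reduce the variance sum to $4\alphanorm_{t-1}^{3/2}/n$ for 3-Majority and $8\alphanorm_{t-1}^2/n$ for 2-Choices, using the variance bounds from \cref{lem:basic inequality}(i). The last subtlety is the mean correction: a direct computation gives $\E_{t-1}[\sum_i Y_i] = 2(\alphanorm_{t-1}^2 - \|\alpha_{t-1}\|_3^3) \le 0$, where the inequality is Cauchy--Schwarz applied to the factorisation $\alpha_{t-1}(i) = \alpha_{t-1}(i)^{1/2}\cdot\alpha_{t-1}(i)^{3/2}$. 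Hence for any $\lambda \ge 0$ we have $\e^{\lambda\E_{t-1}[\sum_i Y_i]} \le 1$, and combining the pointwise upper bound, the MGF factorisation $\E_{t-1}[\e^{\lambda\sum_i Y_i}] = \e^{\lambda\E_{t-1}[\sum_i Y_i]}\cdot\E_{t-1}[\e^{\lambda\sum_i(Y_i - \E_{t-1}[Y_i])}]$, and the NA Bernstein bound on the centred sum delivers the required one-sided Bernstein condition for $\alphanorm_{t-1} - \alphanorm_t$ with the stated parameters.
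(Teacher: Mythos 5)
Your proposal is correct and follows essentially the same route as the paper: for (i) and (ii) you decompose into independent per-vertex contributions and use the independent-sum closure, and for (iii) you linearize via $x^2 - y^2 \le 2x(x-y)$, apply the NA closure to the centred terms (which coincide with the paper's $Y_t(i) = 2\alpha_{t-1}(i)(\E_{t-1}[\alpha_t(i)]-\alpha_t(i))$), and invoke the Cauchy--Schwarz inequality $\alphanorm_{t-1}^2 \le \|\alpha_{t-1}\|_3^3$. The only cosmetic difference is in how the mean correction is handled: the paper folds Cauchy--Schwarz into the pointwise chain to produce mean-zero $Y_t(i)$ directly, while you retain the raw $Y_i$, factor out $\e^{\lambda\E_{t-1}[\sum_i Y_i]}\le 1$ at the MGF level, and conclude by the same domination step.
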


\begin{proof}[Proof of \cref{item:BC for alpha}]
    We show that the random variable $\alpha_{t}(i)-\E_{t-1}\qty[\alpha_{t}(i)]$ conditioned on round $t-1$ satisfies $\qty(\frac{1}{n},\Var_{t-1}\qty[\alpha_{t}(i)])$-Bernstein condition. 
    Once this is established, the claim follows from \cref{item:alpha} of \cref{lem:basic inequality} and \cref{item:Bernstein condition 1} of \cref{lem:Bernstein condition}.
    
      From definition, $n\alpha_{t}(i)=\sum_{v\in V}\indicator_{\opinion_t(v)=i}$.
      Hence, $\alpha_{t}(i)-\E_{t-1}\qty[\alpha_{t}(i)]=\sum_{v\in V}X_t(v)$ where $X_t(v)\defeq \frac{\indicator_{\opinion_t(v)=i}-\E_{t-1}\qty[\indicator_{\opinion_t(v)=i}]}{n}$.
      Since $\abs{X_t(v)}\leq 1/n$ for all $v\in V$, $X_t(v)$ conditioned on round $t-1$ satisfies $\qty(\frac{1}{n},\Var_{t-1}[X_t(v)])$-Bernstein condition (\cref{item:Bernstein condition for bounded random variables} of \cref{lem:Bernstein condition}).
      Furthermore, since $(X_t(v))_{v\in V}$ conditioned on round $t-1$ are $n$ mean-zero independent random variables, 
      $\alpha_{t}(i)-\E_{t-1}\qty[\alpha_{t}(i)]=\sum_{v\in V}X_t(v)$ satisfies $\qty(\frac{1}{n},\sum_{v\in V}\Var_{t-1}[X_v])$-Bernstein condition from \cref{item:BC for independent rvs} of \cref{lem:Bernstein condition}.
      Since 
      \[
      \sum_{v\in V}\Var_{t-1}[X_t(v)]=\Var_{t-1}\qty[\sum_{v\in V}X_t(v)]=\Var_{t-1}\qty[\alpha_{t}(i)-\E_{t-1}\qty[\alpha_{t}(i)]]=\Var_{t-1}\qty[\alpha_{t}(i)],
      \]
      we obtain the claim.
  \end{proof}

  \begin{proof}[Proof of \cref{item:BC for delta}]
  We show that $\delta_{t}-\E_{t-1}\qty[\delta_{t}]$ conditioned on round $t-1$ satisfies $\qty(\frac{1}{n},\Var_{t-1}\qty[\delta_t])$-Bernstein condition. 
    Once this is established, the claim follows from \cref{item:alpha} of \cref{lem:basic inequality} and \cref{item:Bernstein condition 1} of \cref{lem:Bernstein condition}.

      By definition, $n\delta_{t}=n\qty(\alpha_t(i)-\alpha_t(j))=\sum_{v\in V}\qty(\indicator_{\opinion_t(v)=i}-\indicator_{\opinion_t(v)=j})$.
      Let
      \begin{align*}
        X_t(v)\defeq \frac{1}{n}\qty(\indicator_{\opinion_t(v)=i}-\indicator_{\opinion_t(v)=j}-\E_{t-1}\qty[\indicator_{\opinion_t(v)=i}-\indicator_{\opinion_t(v)=j}]).
      \end{align*}
      Then, we have $\delta_{t}-\E_{t-1}\qty[\delta_{t}]=\sum_{v\in V}X_t(v)$.
      Since $\abs{X_t(v)}\leq \frac{2}{n}$ for all $v\in V$, $X_t(v)$ conditioned on round $t-1$ satisfies $\qty(\frac{2}{n},\Var_{t-1}[X_t(v)])$-Bernstein condition (\cref{item:Bernstein condition for bounded random variables} of \cref{lem:Bernstein condition}).
      Furthermore, since $(X_t(v))_{v\in V}$ conditioned on round $t-1$ are $n$ mean-zero independent random variables, 
      $\delta_{t}-\E_{t-1}\qty[\delta_{t}]=\sum_{v\in V}X_t(v)$ satisfies $\qty(\frac{1}{n},\sum_{v\in V}\Var_{t-1}[X_t(v)])$-Bernstein condition from \cref{item:BC for independent rvs} of \cref{lem:Bernstein condition}.
      We obtain the claim since 
      \[
      \sum_{v\in V}\Var_{t-1}[X_t(v)]=\Var_{t-1}\qty[\sum_{v\in V}X_t(v)]=\Var_{t-1}\qty[\delta_{t}-\E_{t-1}\qty[\delta_{t}]]=\Var_{t-1}\qty[\delta_{t}].
      \]
  \end{proof}

  \begin{proof}[Proof of \cref{item:BC for alphanorm}]
    From the Cauchy-Schwartz inequality, we have
    $\alphanorm_{t-1}^2=\left(\sum_{i\in [k]}\alpha_{t-1}(i)^2\right)^2\leq \sum_{i\in [k]}(\alpha_{t-1}(i)^{1/2})^2(\alpha_{t-1}(i)^{3/2})^2=\|\alpha_{t-1}\|_3^3$.
    Hence, 
    \begin{align}
        \alphanorm_{t-1}
        &\leq \alphanorm_{t-1}+\|\alpha_{t-1}\|_3^3-\alphanorm_{t-1}^2 & & (\text{$\because$ $\norm{\alpha_{t-1}}_3^3\geq \alphanorm_{t-1}^2$}) \nonumber \\
        &=\sum_{i\in [k]}\alpha_{t-1}(i)^2\qty(1+\alpha_{t-1}(i)+\alphanorm_{t-1}) \nonumber \\
        &=\sum_{i\in [k]}\alpha_{t-1}(i)\E_{t-1}[\alpha_t(i)] & & (\text{$\because$ \cref{item:alpha} of \cref{lem:basic inequality}})
        \label{eq:alphanorm_key_square 1}
    \end{align}
    holds.
    Then, we have
    \begin{align}
      \alphanorm_{t-1}-\alphanorm_t
        &=\sum_{i\in [k]}\rbra*{\alpha_{t-1}(i)^2-\alpha_{t}(i)^2} \nonumber\\
        &\leq \sum_{i\in [k]}2\alpha_{t-1}(i)\rbra*{\alpha_{t-1}(i)-\alpha_{t}(i)} & & \text{$\because$ $\tinyforall x,y\in\Real, x^2-y^2\le 2x(x-y)$}\nonumber\\
        &\leq \sum_{i\in [k]}2\alpha_{t-1}(i)\rbra*{\E_{t-1}\sbra*{\alpha_{t}(i)}-\alpha_{t}(i)} & & \text{$\because$ \cref{eq:alphanorm_key_square 1}}\nonumber\\
        &=\sum_{i\in [k]}Y_t(i), \nonumber 
    \end{align}
    where \[
    Y_t(i)\defeq 2\alpha_{t-1}(i)\rbra*{\E_{t-1}\sbra*{\alpha_{t}(i)}-\alpha_{t}(i)}=\sum_{v\in V}\frac{2\alpha_{t-1}(i)}{n}\qty(\E_{t-1}[\indicator_{\opinion_t(v)=i}]-\indicator_{\opinion_t(v)=i}).
    \]
    $Y_t(i)$ conditioned on round $t-1$ satisfies $\qty(\frac{2\alpha_{t-1}(i)}{n},4\alpha_{t-1}(i)^2\Var_{t-1}[\alpha_t(i)])$-Bernstein condition from \cref{item:Bernstein condition 2} of \cref{lem:Bernstein condition} and \cref{item:BC for alpha} of \cref{lem:Bernstein condition for sync processes}.
    Furthermore, from $\alpha_{t-1}(i)^2\leq \alphanorm_{t-1}$, \cref{item:Bernstein condition 1} of \cref{lem:Bernstein condition} implies that 
    $Y_t(i)$ conditioned on round $t-1$ satisfies $\qty(\frac{2\sqrt{\alphanorm_{t-1}}}{n},4\alpha_{t-1}(i)^2\Var_{t-1}[\alpha_t(i)])$-Bernstein condition.
    
    From \cref{lem:zero one lemma}, the random variables $(\indicator_{\opinion_t(v)=i})_{i\in [k]}$ are negatively associated for each $v\in V$.
    From \cref{lem:concatination of NA}, $\qty((\indicator_{\opinion_t(v)=i})_{i\in [k]})_{v\in V}$, a sequence of $kn$ random variables, are also negatively associated.
    Since $Y_t(i)=h_i\qty((\indicator_{\opinion_t(v)=i})_{v\in V})$, i.e., non-increasing functions of disjoint subsets of negatively associated random variables $\qty((\indicator_{\opinion_t(v)=i})_{i\in [k]})_{v\in V}$, $(Y_t(i))_{i\in [k]}$ are negatively associated (\cref{lem:concatination of NA}).    
    Thus, from \cref{item:BC for NA rvs} of \cref{lem:Bernstein condition}, $\sum_{i\in [k]}Y_t(i)$ conditioned on round $t-1$ satisfies 
    one-sided $\qty(\frac{2\sqrt{\alphanorm_{t-1}}}{n}, 4\sum_{i\in [k]}\alpha_{t-1}(i)^2\Var_{t-1}[\alpha_t(i)])$-Bernstein condition.
    From \cref{item:dominated Bernstein condition} of \cref{lem:Bernstein condition}, $\alphanorm_{t-1}-\alphanorm_t\leq \sum_{i\in [k]}Y_t(i)$ conditioned round $t-1$ also satisfies 
    one-sided $\qty(\frac{2\sqrt{\alphanorm_{t-1}}}{n},4\sum_{i\in [k]}\alpha_{t-1}(i)^2\Var_{t-1}[\alpha_t(i)])$-Bernstein condition.
    
    For specific bounds of $\Var_{t-1}\sbra*{\alpha_{t}(i)}$, we can apply \cref{item:alpha} of \cref{lem:Bernstein condition}: 
    $
      \sum_{i\in [k]}\Var_{t-1}\sbra*{\alpha_{t}(i)}\leq \frac{\|\alpha_{t-1}\|_3^3}{n}\leq \frac{\alphanorm_{t-1}^{1.5}}{n}
    $
    for 3-Majority and 
    $\Var_{t-1}\sbra*{\alpha_{t}(i)}\leq \frac{\|\alpha_{t-1}\|_4^4 + \|\alpha_{t-1}\|_3^3 \alphanorm_{t-1}}{n}
    \leq \frac{2\alphanorm_{t-1}^2}{n}$ for 2-Choices.
    Applying \cref{item:Bernstein condition 1} of \cref{lem:basic inequality}, we obtain the claim.
  \end{proof}

  Finally, we introduce the following lemma that provides a key bound in 2-Choices.
\begin{lemma}[Bernstein condition for $\alpha_t$: A special case of 2-Choices]
\label{lem:concentration of alpha difference for 2 choices}
    Consider the 2-Choices.
    Then, for any opinion $i\in [k]$ and $t\ge 1$, 
    $\alpha_t(i) - \alpha_{t-1}(i)$ conditioned on round $t-1$ satisfies one-sided $\qty(\frac{1}{n},\frac{2\alpha_{t-1}(i)^2}{n})$-Bernstein condition if 
    $\alpha_{t-1}(i)\leq \alphanorm_{t-1}$.
  \end{lemma}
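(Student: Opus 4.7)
The plan is to exploit the monotonicity structure of 2-Choices: a vertex currently holding opinion $i$ can only stop holding it, while a vertex currently holding a different opinion can only newly adopt $i$. Accordingly, I decompose
\[
\alpha_t(i) - \alpha_{t-1}(i) = A + B,
\]
where $A := -\frac{1}{n}\sum_{v:\opinion_{t-1}(v)=i}\indicator_{\opinion_t(v)\neq i} \le 0$ and $B := \frac{1}{n}\sum_{v:\opinion_{t-1}(v)\neq i}\indicator_{\opinion_t(v)=i} \ge 0$. Because different vertices update independently, $A$ and $B$ are conditionally independent given $\calF_{t-1}$, so the MGF of their sum factorizes. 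By \cref{eq:updating probability for 2 choices}, $-nA$ is Binomial with parameters $M:=n\alpha_{t-1}(i)$ and $q^- := \alphanorm_{t-1}-\alpha_{t-1}(i)^2$, while $nB$ is Binomial with parameters $N := n(1-\alpha_{t-1}(i))$ and $q^+ := \alpha_{t-1}(i)^2$. Setting $a := Mq^-$ and $b := Nq^+ \le n\alpha_{t-1}(i)^2$, the standard Binomial MGF bound $(1+p(\e^\mu-1))^n \le \e^{np(\e^\mu-1)}$ gives, for every $\lambda \ge 0$,
\[
\log \E_{t-1}\sbra*{\e^{\lambda(\alpha_t(i)-\alpha_{t-1}(i))}} \le a(\e^{-\lambda/n}-1) + b(\e^{\lambda/n}-1).
\]

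The crucial step uses the hypothesis $\alpha_{t-1}(i) \le \alphanorm_{t-1}$, which rewrites as $a-b = n\alpha_{t-1}(i)(\alphanorm_{t-1}-\alpha_{t-1}(i)) \ge 0$. I then apply the algebraic identity
\[
a(\e^{-x}-1) + b(\e^x-1) = 2b(\cosh x - 1) + (a-b)(\e^{-x}-1)
\]
and observe $(a-b)(\e^{-x}-1)\le 0$ for $x\ge 0$, leaving the bound $2b(\cosh(\lambda/n)-1)$. Using $\cosh x - 1 \le \e^x - 1 - x$ for $x \ge 0$ (which follows from $\sinh x \ge x$) together with the Bernstein-type bound $\e^x - 1 - x \le \frac{x^2/2}{1-x/3}$ for $0 \le x < 3$, I conclude
\[
\log \E_{t-1}\sbra*{\e^{\lambda(\alpha_t(i)-\alpha_{t-1}(i))}} \le \frac{b\lambda^2/n^2}{1-\lambda/(3n)} \le \frac{\lambda^2 \alpha_{t-1}(i)^2/n}{1-\lambda/(3n)} = \frac{\lambda^2(2\alpha_{t-1}(i)^2/n)/2}{1-\lambda\cdot(1/n)/3},
\]
which is precisely the one-sided $(1/n,\,2\alpha_{t-1}(i)^2/n)$-Bernstein condition.

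The main obstacle is that the obvious route — centering $\alpha_t(i)$ around $\E_{t-1}[\alpha_t(i)]$ and invoking \cref{item:BC for alpha} of \cref{lem:Bernstein condition for sync processes} — only yields variance $\frac{\alpha_{t-1}(i)(\alpha_{t-1}(i)+\alphanorm_{t-1})}{n}$, which under our hypothesis can be as large as $\frac{2\alpha_{t-1}(i)\alphanorm_{t-1}}{n}$, far larger than the target $\frac{2\alpha_{t-1}(i)^2}{n}$ when $\alphanorm_{t-1} \gg \alpha_{t-1}(i)$. The essential insight enabling the sharp bound is that at first order in $\lambda$ the positive contribution $b\lambda/n$ from the MGF of $B$ is cancelled by the negative contribution $-a\lambda/n$ from that of $A$ (this is exactly what $a \ge b$ encodes); the algebraic identity above isolates this cancellation and reduces the bound to an even, $\cosh$-like, remainder whose coefficient $b \le n\alpha_{t-1}(i)^2$ is independent of $\alphanorm_{t-1}$.
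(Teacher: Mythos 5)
Your proof is correct, but it takes a genuinely different route from the paper's. The paper also starts from the in/out decomposition with $n\cdot\alphain_t(i) \sim \Bin(n(1-\alpha_{t-1}(i)),\alpha_{t-1}(i)^2)$ and $n\cdot\alphaout_t(i)\sim\Bin(n\alpha_{t-1}(i),\alphanorm_{t-1}-\alpha_{t-1}(i)^2)$, but then replaces $\alphaout_t(i)$ by a stochastically dominated surrogate $Y\sim\Bin(n\alpha_{t-1}(i),\alpha_{t-1}(i)-\alpha_{t-1}(i)^2)$ chosen so that $\E[X]=\E[Y]$. This lets it express $\frac{1}{n}(X-Y)$ as a sum of independent mean-zero increments bounded by $1/n$, after which \cref{lem:Bernstein condition} (parts (i), (iv), (v)) delivers the claim with $s=(\Var[X]+\Var[Y])/n^2\le 2\alpha_{t-1}(i)^2/n$. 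You instead work directly with the product MGF $a(\e^{-\lambda/n}-1)+b(\e^{\lambda/n}-1)$, isolate the first-order cancellation via the identity $a(\e^{-x}-1)+b(\e^{x}-1)=2b(\cosh x-1)+(a-b)(\e^{-x}-1)$, drop the non-positive term using $a-b=n\alpha_{t-1}(i)(\alphanorm_{t-1}-\alpha_{t-1}(i))\ge 0$, and then apply $\cosh x-1\le \e^x-1-x\le \frac{x^2/2}{1-x/3}$. Both proofs hinge on the same structural fact — the out-rate dominates the in-rate under the hypothesis $\alpha_{t-1}(i)\le\alphanorm_{t-1}$, which kills the $\alphanorm_{t-1}$-dependence in the variance bound — but your argument surfaces the cancellation explicitly at the level of the cumulant generating function, while the paper's achieves it by matching means through a coupled binomial and then reusing its modular Bernstein-condition toolkit. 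Your version is slightly more self-contained and elementary; the paper's fits more cleanly into the lemma infrastructure it has already built.
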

  \begin{proof}
    Conditioned on the $(t-1)$-th round, the difference $\alpha_t(i) - \alpha_{t-1}(i)$ can be written as
    \[
      \alpha_t(i) - \alpha_{t-1}(i) = \alphain_t(i) - \alphaout_t(i),
    \]
    where $\alphaout_t(i) = \frac{1}{n}\cdot |\cbra{v \in V \colon \opinion_t(v) \neq i \text{ and } \opinion_{t-1}(v) = i}|$, and
     $\alphain_t(i) = \frac{1}{n}\cdot |\cbra{v \in V \colon \opinion_t(v) = i \text{ and }\opinion_{t-1}(v)\ne i}|$.
    By the update rule of 2-Choices, conditioned on the $(t-1)$-th round, the distributions of $n\cdot \alphain_t(i)$ and $n\cdot \alphaout_t(i)$ are
    \begin{align*}
      &n\cdot \alphain_t(i) \sim \Bin(n(1-\alpha_{t-1}(i)),\alpha_{t-1}(i)^2),\\
      &n\cdot \alphaout_t(i) \sim \Bin(n\alpha_{t-1}(i), \alphanorm_{t-1} - \alpha_{t-1}(i)^2),
    \end{align*}
    where $\Bin(m,p)$ denotes the binomial distribution with parameters $m\in\Nat$ and $p\in[0,1]$.
    Note that these random variables are independent conditioned on round $t-1$.
    Let $X$ and $Y$ be binomial random variables, where 
    \begin{align*}
        &X\sim\Bin(n(1-\alpha_{t-1}(i)), \alpha_{t-1}(i)^2),\\
        &Y\sim\Bin(n\alpha_{t-1}(i),\alpha_{t-1}(i)-\alpha_{t-1}(i)^2).
    \end{align*}
    Note that we can write 
    $X=\sum_{j=1}^{n(1-\alpha_{t-1}(i))}X_j$ and 
    $Y=\sum_{j=1}^{n\alpha_{t-1}(i)}Y_j$ for independent Bernoulli random variables $(X_j)_{j\in [n(1-\alpha_{t-1}(i))]}$ and $(Y_j)_{j\in [n\alpha_{t-1}(i)]}$.
    From the assumption of $\alpha_{t-1}(i)\leq \alphanorm_{t-1}$ and \cref{lem:binomial dominance}, we have $Y\preceq n\cdot \alphaout_t(i)$.
    Hence, 
    \begin{align*}
        \alpha_t(i) - \alpha_{t-1}(i) 
        &=\frac{1}{n}\qty( n\cdot \alphain_t(i)-n\cdot \alphaout_t(i))\\
        &\preceq \frac{1}{n}(X-Y) && (\because Y\preceq n\cdot \alphaout_t(i))\\
        &=\frac{1}{n}\qty(X-\E[X]+\E[Y]-Y) && (\because \E[X]=\E[Y])\\
        &=\sum_{j=1}^{n(1-\alpha_{t-1}(i))}\frac{X_j-\E[X_j]}{n}+\sum_{j=1}^{n\alpha_{t-1}(i)}\frac{\E[Y_j]-Y_j}{n}.
    \end{align*}
    From \cref{item:dominated Bernstein condition,item:Bernstein condition for bounded random variables,item:BC for independent rvs} of \cref{lem:Bernstein condition}, $\alpha_t(i) - \alpha_{t-1}(i)$ conditioned on round $t-1$ satisfies one-sided $\qty(\frac{1}{n},s)$-Bernstein condition for 
    \begin{align*}
    s&=\sum_{j=1}^{n(1-\alpha_{t-1}(i))}\Var\qty[\frac{X_j-\E[X_j]}{n}]+\sum_{j=1}^{n\alpha_{t-1}(i)}\Var\qty[\frac{\E[Y_j]-Y_j}{n}]\\
    &=\frac{\Var\qty[X-\E[X]]}{n^2}+\frac{\Var\qty[\E[Y]-Y]}{n^2}\\
    &=\frac{\Var\qty[X]+\Var\qty[Y]}{n^2}.
    \end{align*}
    Note that  both $\frac{X_j-\E[X_j]}{n}$ and $\frac{\E[Y_j]-Y_j}{n}$ are mean-zero and bounded by $1/n$. 
    Thus, from \cref{item:Bernstein condition 1} of \cref{lem:Bernstein condition} with $\frac{\Var\qty[X]+\Var\qty[Y]}{n^2}\leq \frac{2\alpha_{t-1}(i)^2}{n}$, we obtain the claim.
  \end{proof}

\subsection{Drift Analysis for Basic Quantities} \label{sec:drift analysus for sync processes}
Now, we introduce the drift analysis for the quantities $\alpha_t$, $\delta_t$ and $\alphanorm_t$ (\cref{lem:drift analysis for basic}).
We present an overview of the drift terms employed in \cref{lem:drift analysis for basic} in \cref{table:drift}.
To begin, we summarize the stopping times that we focus on. 

\begin{definition}[Stopping times for basic quantities] \label{def:stopping times}
  Consider the quantities defined in \cref{def:basic quantities} for 3-Majority or 2-Choices.
  Fix two distinct opinions $i,j\in [k]$.
\begin{enumerate}
  \renewcommand{\labelenumi}{(\roman{enumi})}
  \item  For constants $\ciup,\cidown>0$, define
    \begin{align*}
        &\tauiup = \inf\cbra*{ t \ge 0 \colon \alpha_t(i) \ge (1+\ciup)\alpha_0(i)},\\
        &\tauidown = \inf\cbra*{ t \ge 0 \colon \alpha_t(i) \le (1-\cidown)\alpha_0(i)}.
    \end{align*}

  \item For constants $ \cdeltaup,\cdeltadown >0$ and a parameter $ \xdelta = \xdelta(n) \in [1/n, 1]$, define
    \begin{align*}
        &\taudeltaup = \inf\cbra*{ t \ge 0 \colon \delta_t(i,j) \ge (1+\cdeltaup) \delta_0(i,j) },\\
        &\taudeltadown = \inf\cbra*{ t \ge 0 \colon \delta_t(i,j) \le (1-\cdeltadown) \delta_0(i,j) }, \\
        &\taudeltaplus= \inf\qty{t\geq 0: \abs{\delta_t(i,j)}\geq \xdelta}.
    \end{align*}

  \item For constants $\cnormup,\cnormdown>0$ and a parameter $\xnorm = \xnorm(n)\in[0,1]$, define
    \begin{align*}
        &\taunormup = \inf\{t\geq 0: \alphanorm_t\geq (1+\cnormup)\alphanorm_0\}, \\
        &\taunormdown = \inf\cbra*{t\geq 0: \alphanorm_t\leq (1-\cnormdown)\alphanorm_0}, \\
        &\taunormplus \defeq \inf\qty{t\geq 0: \alphanorm_t\geq \xnorm}. 
    \end{align*}

  \item For a constant $0\le \cweak<1/2$, we say that an opinion $i\in[k]$ is \emph{weak at round $t$} if $\alpha_t(i) \le (1-\cweak) \alphanorm_t$.
    We define
    \begin{align*} 
        &\tauiweak = \inf\cbra*{ t \ge 0 \colon \alpha_t(i) \le (1-\cweak) \alphanorm_t }.
    \end{align*}
  \item  For a constant $\cactive>0$ satisfying $\cnormdown<\cactive<\cweak$, we say that an opinion $ i \in [k]$ is \emph{active at round $ t $} if $ \alpha_t(i) \ge (1-\cactive)\cdot \alphanorm_0$.
  We define
  \begin{align*} 
    \tauiactive = \inf\qty{ t \ge 0 \colon \alpha_t(i) \ge (1-\cactive)\cdot \alphanorm_0 } .
  \end{align*}
\end{enumerate}

  The constants $\ciup,\cidown,\cdeltaup,\cdeltadown,\cnormup,\cnormdown,\cweak,\cactive$ are universal constants, e.g., 
  we can set $\calphaup=\calphadown=\cweak=1/10$, $\cdeltaup=\cdeltadown=\cactive=1/20$, and $\cnormup=\cnormdown=1/30$ for both 3-Majority and 2-Choices.
\end{definition}

\begin{table}[t]
  \centering
  \begin{tabular}{|c|c|}
  \hline
  Drift & Condition of $t$  \\ \hline \hline
  $\mathbb{E}_{t-1}[\alpha_t(i)-\alpha_{t-1}(i)]\leq C\alpha_0(i)^2$ & $t-1<\tauiup$  \\ \hline 
  $\mathbb{E}_{t-1}[\alpha_t(i)-\alpha_{t-1}(i)]\geq - C\alpha_0(i)^2$ & $t-1<\min\{\tauiweak,\tauiup\}$   \\ \hline
  $\mathbb{E}_{t-1}[\alpha_t(i)-\alpha_{t-1}(i)]\leq 0$ &$t-1<\min\{\tauiactive,\taunormdown\}$   \\ \hline \hline
  $\mathbb{E}_{t-1}[\delta_t(i,j)-\delta_{t-1}(i,j)]\geq 0$ & $t-1<\min\{\taujweak,\taudeltadown\}$   \\ \hline
  $\mathbb{E}_{t-1}[\delta_t(i,j)-\delta_{t-1}(i,j)]\geq C\alpha_{0}(i)\delta_{0}(i,j)$ & $t-1<\min\{\taujweak,\taudeltadown,\tauidown\}$   \\ \hline \hline
  $\mathbb{E}_{t-1}[\alphanorm_t-\alphanorm_{t-1}]\geq 0$ & $\forall t$  \\ \hline 
  \end{tabular}
  \caption{Summary of the drift terms of $\alpha_t, \delta_t$, and $\alphanorm_t$ used in \cref{lem:drift analysis for basic} (for both 3-Majority and 2-Choices).
  For each item, $C>0$ is a carefully chosen constant.}
  \label{table:drift}
  \end{table}

\begin{lemma}[Drift analysis for basic quantities]
  \label{lem:drift analysis for basic}
  Consider stopping times defined in \cref{def:stopping times}.
  Fix two distinct opinions $i,j\in [k]$.
  Then, we have the following:
  \begin{enumerate}
    \renewcommand{\labelenumi}{(\roman{enumi})}
\item \label{item:tauiup}
For any constant $\varepsilon\in (0,1)$, 
let $\constrefs{lem:drift analysis for basic}{item:tauiup}\defeq \frac{ (1 - \varepsilon) \ciup }{(1+\ciup)^2}$.
Then, we have
\begin{align*}
\Pr\qty[ \tauiup \le \frac{\constrefs{lem:drift analysis for basic}{item:tauiup}}{ \alpha_0(i)} ] \le 
    \begin{cases}
        \exp\qty(-\Omega\qty( n\alpha_0(i)^2 ))	& \text{for 3-Majority},\\[10pt]
        \exp\qty( -\Omega\qty( n\alpha_0(i) ) ) & \text{for 2-Choices}.
    \end{cases}
\end{align*}
\item \label{item:tauidown}
For any constant $\varepsilon \in (0,1)$, let
$\constrefs{lem:drift analysis for basic}{item:tauidown}\defeq \frac{(1-\cweak)(1-\varepsilon)\calphadown}{\cweak (1+\ciup)^2}$.
Then, we have
\begin{align*}
    \Pr\qty[ \tauidown \le \min\cbra*{\tauiweak,\tauiup,\frac{\constrefs{lem:drift analysis for basic}{item:tauidown}}{\alpha_0(i)}} ] \le 
    \begin{cases}
            \exp\qty(-\Omega\qty( n\alpha_0(i)^2 ))	& \text{for 3-Majority},\\[10pt]
            \exp\qty( -\Omega\qty( n\alpha_0(i) ) ) & \text{for 2-Choices}.
    \end{cases}
\end{align*}
\item \label{item:tauiactive}
For any $T>0$, we have
\begin{align*}
    \Pr\qty[\tauiactive\leq \min\{T,\taunormdown\}]
    \leq 
    \begin{cases}
      \exp\qty(-\Omega\qty(\frac{n\alphanorm_0}{T})) & \text{for 3-Majority}, \\[10pt]
      \exp\qty(-\Omega\qty(\frac{n\alphanorm_0}{T\alphanorm_0+1})) & \text{for 2-Choices}.
    \end{cases}
\end{align*}
\item \label{item:taudeltadown}
        For any $T> 0$, we have
    \begin{align*}
         \Pr\qty[ \taudeltadown \le \min\{\taujweak,\tauiup,T\} ] \le
         \begin{cases}
           \exp\qty( -\Omega\qty( \frac{n\delta_0(i,j)^2}{\alpha_0(i)T+\delta_0(i,j)} ) ) & \text{for 3-Majority}, \\[10pt]
           \exp\qty( -\Omega\qty(\frac{n \delta_0(i,j)^2}{\alpha_0(i)^2 T+\delta_0(i,j)}) )	&  \text{for 2-Choices}.
         \end{cases}
    \end{align*}
    \item \label{item:taudeltaup}
    For any constant $\varepsilon\in (0,1)$,
    let $\constrefs{lem:drift analysis for basic}{item:taudeltaup}\defeq \frac{(1-\cweak)(1+\varepsilon)\cdeltaup}{(1-2\cweak)(1-\calphadown)(1-\cdeltadown)}$.
    Then, we have
    \begin{align*} 
         \Pr\qty[ \min\cbra*{ \taudeltaup, \taujweak, \taudeltadown,  \tauiup, \tauidown } > \frac{\constrefs{lem:drift analysis for basic}{item:taudeltaup}}{\alpha_0(i)}]
         \leq 
         \begin{cases}
          \exp\qty(-\Omega(n\delta_0(i,j)^2)) & \text{for 3-Majority}, \\[10pt]
          \exp\qty(-\Omega\qty(\frac{n\delta_0(i,j)^2}{\alpha_0(i)}))	& \text{for 2-Choices}.
         \end{cases}
    \end{align*}
    \item \label{item:taunormdown}
    For any $T>0$, we have
    \begin{align*}
        \Pr\qty[\taunormdown\leq \min\cbra*{T,\taunormup}]\le
          \begin{cases}
          \exp\qty(-\Omega\qty(\frac{n\sqrt{\alphanorm_0}}{T})) & \text{for 3-Majority}, \\[10pt]
          \exp\qty(-\Omega\qty(\frac{n}{T+\alphanorm_0^{-1/2}})) & \text{for 2-Choices}.
        \end{cases}
    \end{align*}
\end{enumerate}
\end{lemma}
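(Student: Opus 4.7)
The plan is to treat all six items as instances of a single template built on the Additive Drift Lemma (\cref{lem:Freedman stopping time additive}). For each item I would identify the relevant quantity $X_t \in \{\alpha_t(i), \delta_t(i,j), \alphanorm_t\}$, let $\tau$ be the minimum of the ``good-event'' stopping times appearing in the conclusion, read off the conditional drift $\E_{t-1}[X_t - X_{t-1}]$ on $\{\tau > t-1\}$ from \cref{lem:basic inequality}, and verify the (one-sided) Bernstein condition on $X_t - X_{t-1}$ restricted to that event using \cref{lem:Bernstein condition for sync processes} (invoking \cref{lem:concentration of alpha difference for 2 choices} for the sharper 2-Choices tail when $X_t$ moves upward). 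Then I would apply \cref{lem:Freedman stopping time additive} (via \cref{lem:BC for SP} to put everything under a uniform $(D,s)$ on the event), choosing $h$ to match the threshold inside $\tau$ and $T$ chosen so that the slack $z$ is $\Theta$(threshold), which is exactly what dictates the constants $\constrefs{lem:drift analysis for basic}{item:tauiup}$, $\constrefs{lem:drift analysis for basic}{item:tauidown}$, $\constrefs{lem:drift analysis for basic}{item:taudeltaup}$.

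For (i) and (ii), $X_t = \alpha_t(i)$. On $\{\tauiup > t-1\}$ the identity $\E_{t-1}[\alpha_t(i)] = \alpha_{t-1}(i)(1+\alpha_{t-1}(i)-\alphanorm_{t-1})$ gives $\E_{t-1}[\alpha_t(i)-\alpha_{t-1}(i)] \le (1+\ciup)^2 \alpha_0(i)^2$; with $h = \ciup \alpha_0(i)$ and $T = \constrefs{lem:drift analysis for basic}{item:tauiup}/\alpha_0(i)$ the positive-drift branch fires with $z = \varepsilon \ciup \alpha_0(i)$. For (ii), on the additional event $\{\tauiweak > t-1\}$ one has $\alphanorm_{t-1}-\alpha_{t-1}(i) \le \tfrac{\cweak}{1-\cweak}\alpha_{t-1}(i)$, turning the same identity into a matching \emph{lower} drift bound; applying the Freedman inequality to $-\alpha_t(i)$ then yields the analogous tail. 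In both cases $D = 1/n$ and the variance is $O(\alpha_0(i)/n)$ for 3-Majority (resp.\ $O(\alpha_0(i)^2/n)$ for 2-Choices, using \cref{lem:concentration of alpha difference for 2 choices} in the upward direction), which after $sT = O(1/n)$ (resp.\ $O(\alpha_0(i)/n)$) produces the advertised exponents $n\alpha_0(i)^2$ and $n\alpha_0(i)$.

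For the bias-based (iv) and (v), the key preparatory observation is that the good event controls every unknown scale: assuming WLOG $\delta_0(i,j)>0$, on $\{\taudeltadown > t-1\}$ we have $\alpha_i > \alpha_j$, and combined with $\{\tauiup > t-1\}$ and $\{\taujweak > t-1\}$ we deduce $\alpha_j \le \alpha_i \le (1+\ciup)\alpha_0(i)$ and $\alphanorm_{t-1} \le \alpha_j/(1-\cweak) = O(\alpha_0(i))$. This collapses the \cref{item:BC for delta} variance to $s = O(\alpha_0(i)/n)$ for 3-Majority and $O(\alpha_0(i)^2/n)$ for 2-Choices, matching the target exponents in (iv). For (v) the same control yields
\[
\alpha_i+\alpha_j-\alphanorm \ge \frac{1-2\cweak}{1-\cweak}\alpha_i \ge \frac{(1-2\cweak)(1-\calphadown)}{1-\cweak}\alpha_0(i),
\]
so $\E_{t-1}[\delta_t - \delta_{t-1}] \ge \Omega(\alpha_0(i)\,\delta_0)$, and the negative-drift branch of \cref{lem:Freedman stopping time additive} applied to $-\delta_t$ with $T = \constrefs{lem:drift analysis for basic}{item:taudeltaup}/\alpha_0(i)$ and $h = \cdeltaup \delta_0$ produces slack $z = \Theta(\delta_0)$ and the stated tail.

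Items (iii) and (vi) reduce to zero-drift (supermartingale) tails. For (iii), the constraint $\cnormdown < \cactive$ guarantees that on $\{\tauiactive > t-1\} \cap \{\taunormdown > t-1\}$ we have $\alpha_{t-1}(i) < (1-\cactive)\alphanorm_0 \le (1-\cnormdown)\alphanorm_0 < \alphanorm_{t-1}$, so \cref{item:alpha} gives $\E_{t-1}[\alpha_t(i)] \le \alpha_{t-1}(i)$; pairing $h \le \alphanorm_0$ with $s = O(\alphanorm_0/n)$ (resp.\ $O(\alphanorm_0^2/n)$ for 2-Choices via the $\alpha_{t-1}(i)+\alphanorm_{t-1} = O(\alphanorm_0)$ bound) gives the stated exponent. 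For (vi), \cref{item:expectation of alphanorm} directly yields that $\alphanorm_t$ is a submartingale, and on $\{\taunormup > t-1\}$ the one-sided Bernstein parameters from \cref{item:BC for alphanorm} reduce to $D = O(\sqrt{\alphanorm_0}/n)$ and $s = O(\alphanorm_0^{3/2}/n)$ (resp.\ $O(\alphanorm_0^2/n)$). The main obstacle across the whole lemma is not probabilistic but combinatorial: choosing each constant $\varepsilon, \cweak, \calphadown, \cdeltadown, \ldots$ and each stopping-time restriction just carefully enough to (a) convert \cref{lem:basic inequality} into a sign-definite drift, and (b) collapse the generic variance bounds of \cref{lem:Bernstein condition for sync processes} to the $\alpha_0(i)$- or $\alphanorm_0$-sized scales demanded by the target exponents; once that bookkeeping is done, \cref{lem:Freedman stopping time additive} is applied verbatim in every case.
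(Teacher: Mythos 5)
Your template — identify $X_t$, pick $\tau$ as the minimum of the good-event stopping times, extract the drift from \cref{lem:basic inequality}, verify the Bernstein condition via \cref{lem:Bernstein condition for sync processes} and \cref{lem:BC for SP}, and feed everything into \cref{lem:Freedman stopping time additive} — is exactly the paper's argument, and the drift computations and parameter choices ($h$, $T$, $z$) you sketch for items (i), (ii), (iv), (v), and (vi) all line up with the paper's.

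There is, however, a concrete gap in your treatment of the 2-Choices variance bounds in items (i) and especially (iii). For item (iii) you claim the Bernstein variance is $O(\alphanorm_0^2/n)$ ``via the $\alpha_{t-1}(i)+\alphanorm_{t-1}=O(\alphanorm_0)$ bound.'' But on the event $\{\tauiactive>t-1\}\cap\{\taunormdown>t-1\}$ you control $\alpha_{t-1}(i)<(1-\cactive)\alphanorm_0$ from above and $\alphanorm_{t-1}>(1-\cnormdown)\alphanorm_0$ only from \emph{below}; $\alphanorm_{t-1}$ is not bounded above by $O(\alphanorm_0)$ there, so the generic variance $\alpha_{t-1}(i)(\alpha_{t-1}(i)+\alphanorm_{t-1})/n$ can be as large as $O(\alphanorm_0/n)$, which would only yield the weaker exponent $n\alphanorm_0/(T+1)$ rather than the stated $n\alphanorm_0/(T\alphanorm_0+1)$. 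The same issue is lurking in item (i) for 2-Choices: on $\{\tauiup>t-1\}$ you have $\alpha_{t-1}(i)\le(1+\ciup)\alpha_0(i)$, but nothing restricts $\alphanorm_{t-1}$, so the generic Bernstein parameter is not $O(\alpha_0(i)^2/n)$.

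The fix (which the paper makes explicit) is a case split. When $\alpha_{t-1}(i)\ge\alphanorm_{t-1}$, the generic variance is $\le 2\alpha_{t-1}(i)^2/n$, which is controlled by the respective upper bound on $\alpha_{t-1}(i)$. When $\alpha_{t-1}(i)<\alphanorm_{t-1}$, one cannot use the generic Bernstein parameter at all; instead \cref{lem:concentration of alpha difference for 2 choices} gives a one-sided $(1/n, 2\alpha_{t-1}(i)^2/n)$-Bernstein condition directly on $\alpha_t(i)-\alpha_{t-1}(i)$, which is then dominated (via \cref{item:dominated Bernstein condition}) into what \cref{item:C2} needs. You gesture at the special lemma in item (i) but never state the dichotomy, and in item (iii) you replace it with an incorrect upper bound on $\alphanorm_{t-1}$; in both places the case split is load-bearing, not cosmetic. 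Also, your parenthetical claim that \cref{lem:concentration of alpha difference for 2 choices} is used in item (ii) is misplaced: there the weak condition $\alphanorm_{t-1}\le\alpha_{t-1}(i)/(1-\cweak)$ already supplies the missing upper bound, and since $X_t=-\alpha_t(i)$ you would need the downward one-sided control, which that lemma does not provide.
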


Before showing \cref{lem:drift analysis for basic}, we list the following inequalities that hold in special cases.
The proof is a straightforward calculation, and we put the proof in \cref{sec:special inequalities}.
\begin{lemma}[Inequalities for non-weak opinions]
    \label{lem:two strong opinions}
    Fix two distinct opinions $i,j\in [k]$.
    Consider the quantities $\alpha_t(i),\delta_t,\alphanorm_t$ (\cref{def:basic quantities}) and the
    stopping times $\tauiweak,\taujweak$ defined in \cref{def:stopping times}.
    Then, we have the following for $t-1<\min\{\tauiweak,\taujweak\}$:
    \begin{enumerate}
            \renewcommand{\labelenumi}{(\roman{enumi})}
        \item \label{item:lower bound for multipricative term} 
        $\alpha_{t-1}(i)+\alpha_{t-1}(j)-\alphanorm_{t-1}\geq \frac{1-2\cweak}{1-\cweak}\max\qty{\alpha_{t-1}(i),\alpha_{t-1}(j)}.$
        
        \item \label{item:lower bound for variance}
        For a positive constant $\constref{lem:two strong opinions}\defeq 1-\frac{1}{\sqrt{2(1-\cweak)}}>0$,
        \begin{align*} 
            \Var_{t-1}\qty[\delta_t]
            \geq \begin{cases}
                \constref{lem:two strong opinions}^3 \cdot \frac{\alpha_{t-1}(i)+\alpha_{t-1}(j)}{n}	& \text{for 3-Majority},\\[10pt]
               \constref{lem:two strong opinions}^2 \cdot \frac{\alpha_{t-1}(i)^2+\alpha_{t-1}(j)^2}{n}	& \text{for 2-Choices}.
            \end{cases}
       \end{align*}
    \end{enumerate}
\end{lemma}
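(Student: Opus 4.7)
The plan for \cref{item:lower bound for multipricative term} is a direct algebraic manipulation. Without loss of generality assume $\alpha_{t-1}(i)\geq\alpha_{t-1}(j)$, so the maximum on the right-hand side equals $\alpha_{t-1}(i)$. Since $j$ is non-weak at round $t-1$ (i.e., $t-1<\taujweak$), we have $\alpha_{t-1}(j)\geq(1-\cweak)\alphanorm_{t-1}$, equivalently $\alphanorm_{t-1}\leq\alpha_{t-1}(j)/(1-\cweak)$. Substituting this into $\alpha_{t-1}(i)+\alpha_{t-1}(j)-\alphanorm_{t-1}$ and then using $\alpha_{t-1}(j)\leq\alpha_{t-1}(i)$ in the resulting term gives
\begin{equation*}
\alpha_{t-1}(i)+\alpha_{t-1}(j)-\alphanorm_{t-1}\;\geq\;\alpha_{t-1}(i)-\frac{\cweak}{1-\cweak}\,\alpha_{t-1}(j)\;\geq\;\frac{1-2\cweak}{1-\cweak}\,\alpha_{t-1}(i),
\end{equation*}
which is positive since $\cweak<1/2$ by our choice of constants in \cref{def:stopping times}.

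For \cref{item:lower bound for variance}, the plan is to write the variance as a sum of independent per-vertex contributions and lower-bound the per-vertex variance. For 3-Majority, the summands $\indicator_{\opinion_t(v)=i}-\indicator_{\opinion_t(v)=j}$ are i.i.d.\ conditionally on round $t-1$ and take values $+1,-1$ or $0$ with probabilities $p_i$, $p_j$, and $1-p_i-p_j$, where $p_\ell=\alpha_{t-1}(\ell)(1+\alpha_{t-1}(\ell)-\alphanorm_{t-1})$ by \cref{eq:updating probability for 3Majority}. A short computation gives $n\Var_{t-1}[\delta_t]=p_i+p_j-(p_i-p_j)^2$. For 2-Choices the summands are not identically distributed, so I would group vertices by whether $\opinion_{t-1}(v)\in\{i,j\}$ or not, using \cref{eq:updating probability for 2 choices} to read off the per-group variance contributions, then sum.

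The structural observation that produces the constant $\constref{lem:two strong opinions}$ is that two distinct strong opinions force $\alphanorm_{t-1}\leq 1/(2(1-\cweak)^2)$: this follows from $\alpha_{t-1}(i)^2+\alpha_{t-1}(j)^2\leq\alphanorm_{t-1}$ together with each $\alpha_{t-1}(\ell)\geq(1-\cweak)\alphanorm_{t-1}$. In particular $\max\{\alpha_{t-1}(i),\alpha_{t-1}(j)\}\leq 1/(2(1-\cweak))$, and the quantity $1/\sqrt{2(1-\cweak)}$ appearing in $\constref{lem:two strong opinions}$ is precisely the square root of this upper bound. Combining this with $|\delta_{t-1}|\leq\alpha_{t-1}(i)+\alpha_{t-1}(j)-2(1-\cweak)\alphanorm_{t-1}$ (a direct consequence of both opinions being non-weak, in the spirit of \cref{item:lower bound for multipricative term}) and the factorization $p_i-p_j=\delta_{t-1}(1+\alpha_{t-1}(i)+\alpha_{t-1}(j)-\alphanorm_{t-1})$, the claimed lower bound $\constref{lem:two strong opinions}^3(\alpha_{t-1}(i)+\alpha_{t-1}(j))/n$ for 3-Majority (respectively $\constref{lem:two strong opinions}^2(\alpha_{t-1}(i)^2+\alpha_{t-1}(j)^2)/n$ for 2-Choices) should emerge after careful bookkeeping; the extra power of $\alpha_{t-1}(\ell)$ in the 2-Choices bound reflects the $\alpha^2$ factors in its transition probabilities.

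The main obstacle is extracting the specific constant $\constref{lem:two strong opinions}$ cleanly. I expect the argument to proceed via the elementary bound $n\Var_{t-1}[\delta_t]\geq(p_i+p_j)(1-|p_i-p_j|)$ followed by an upper estimate on $|p_i-p_j|$ using the saturation bound on $\max\{\alpha_{t-1}(i),\alpha_{t-1}(j)\}$. Handling the interaction between the small-$\alphanorm_{t-1}$ regime (where $|p_i-p_j|$ is automatically small and the variance is essentially $(\alpha_{t-1}(i)+\alpha_{t-1}(j))(1-\alphanorm_{t-1})$) and the near-saturation regime (where the stronger bound on $|\delta_{t-1}|$ from both opinions being non-weak is essential) without resorting to a case split is the delicate algebraic step where the precise exponent of $\constref{lem:two strong opinions}$ must be tracked.
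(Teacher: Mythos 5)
Your proof of item (i) is correct and coincides with the paper's argument: the non-weak assumption on $j$ gives $\alphanorm_{t-1}\leq\alpha_{t-1}(j)/(1-\cweak)$ and then the chain of inequalities is the same.

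For item (ii), however, there are genuine gaps. First, your derivation of the crucial saturation bound is off: from $\alpha_{t-1}(i)^2+\alpha_{t-1}(j)^2\leq\alphanorm_{t-1}$ together with $\alpha_{t-1}(\ell)\geq(1-\cweak)\alphanorm_{t-1}$ you obtain $2(1-\cweak)^2\alphanorm_{t-1}^2\leq\alphanorm_{t-1}$, i.e.\ $\alphanorm_{t-1}\leq 1/(2(1-\cweak)^2)$, which is \emph{weaker} than what is needed, and your subsequent assertion that $\max\{\alpha_{t-1}(i),\alpha_{t-1}(j)\}\leq 1/(2(1-\cweak))$ does not follow from it. The paper instead goes through $\alphanorm_{t-1}\leq\min\{\alpha_{t-1}(i),\alpha_{t-1}(j)\}/(1-\cweak)\leq\tfrac{1/2}{1-\cweak}$ (the minimum of two disjoint fractions is at most $1/2$), and then $\|\alpha_{t-1}\|_\infty^2\leq\alphanorm_{t-1}$ yields $\|\alpha_{t-1}\|_\infty\leq 1/\sqrt{2(1-\cweak)}$ — that is where the constant comes from. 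Your weaker route would land on a different constant, so the claimed value of $\constref{lem:two strong opinions}$ would not come out.

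Second, and more fundamentally, you flag the final bookkeeping as unresolved and propose to control $|p_i-p_j|$ via a bound on $|\delta_{t-1}|$. This is a workable but unnecessarily complicated route. The paper sidesteps it: from $n\Var_{t-1}[\delta_t]=p_i(1-p_i)+p_j(1-p_j)+2p_ip_j$ (your own per-vertex formula rearranged), the covariance contribution $2p_ip_j\geq 0$ can simply be discarded, giving $\Var_{t-1}[\delta_t]\geq\Var_{t-1}[\alpha_t(i)]+\Var_{t-1}[\alpha_t(j)]$. Each $\Var_{t-1}[\alpha_t(\ell)]=f_\ell(1-f_\ell)/n$ is then lower-bounded by observing $f_\ell(1-f_\ell)\geq\alpha_{t-1}(\ell)(1-\|\alpha_{t-1}\|_\infty)(1-\|\alpha_{t-1}\|_\infty-\|\alpha_{t-1}\|_\infty^2+\|\alpha_{t-1}\|_\infty^3)=\alpha_{t-1}(\ell)(1-\|\alpha_{t-1}\|_\infty)^3(1+\|\alpha_{t-1}\|_\infty)$ (using that $x\mapsto 1-x-x^2+x^3$ is decreasing on $[0,1]$), and then plugging in the saturation bound. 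This requires no control of the bias at all. The 2-Choices case is handled analogously starting from \cref{eq:var alpha 2}. You should replace the attempt to estimate $|p_i-p_j|$ with this covariance-dropping step, and fix the saturation bound.
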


\begin{proof}[Proof of \cref{item:tauiup} of \cref{lem:drift analysis for basic}]
    Let $X_t=\alpha_t(i)$, $\tau=\tauiup$, and $\drift = 
        (1+\ciup)^2\alpha_0(i)^2$.
    Suppose $\tau>t-1$.
    For both 2-Choices and 3-Majority, we have
    \begin{align*}
    \E_{t-1}\sbra{\alpha_t(i)}
    =\alpha_{t-1}(i)\qty(1+\alpha_{t-1}(i)-\alphanorm_{t-1})
    \leq \alpha_{t-1}(i)+\alpha_{t-1}(i)^2
    \leq \alpha_{t-1}(i)+R.
    \end{align*}
    Therefore, we have
    \begin{align*}
    \indicator_{\tau>t-1}\qty(\E_{t-1}\qty[X_t]-X_{t-1}-R)
    =\indicator_{\tau>t-1}\qty(\E_{t-1}\qty[\alpha_t(i)]-\alpha_{t-1}(i)-R)
    \leq 0.
    \end{align*}
    That is, we checked the condition \ref{item:C1} of \cref{lem:Freedman stopping time additive}.

    Now we check that 3-Majority and 2-Choices satisfy \ref{item:C2} or \ref{item:C2'} for $\bounded = \frac{1}{n}$ and
        \begin{align} 
       s = \begin{cases}
        O\qty( \frac{\alpha_0(i)}{n} )	& \text{ for 3-Majority},\\[10pt]
        O\qty( \frac{\alpha_0(i)^2}{n} ) & \text{ for 2-Choices}.
       \end{cases} \label{eq:variance tauiup}
    \end{align}

    \paragraph*{3-Majority.}
    We have $\alpha_{t-1}(i) \le (1+\ciup)\alpha_0(i)$ if $ t-1 < \tau $.
    Hence, from \cref{lem:BC for SP} and \cref{item:BC for alpha} of \cref{lem:Bernstein condition for sync processes}, 
    $\indicator_{\tau>t-1}\qty(\alpha_t(i)-\E_{t-1}\qty[ \alpha_{t-1}(i)])$ conditioned on round $t-1$ satisfies $\qty(\frac{1}{n},\variance)$-Bernstein condition (this verifies the condition \ref{item:C2'}).

    \paragraph*{2-Choices.}
    We deal with two cases: $\alpha_{t-1}(i)\geq \alphanorm_{t-1}$ or not.
    First, suppose $\alpha_{t-1}(i)\geq \alphanorm_{t-1}$.
    We have $\alphanorm_{t-1}\leq \alpha_{t-1}(i)$ and $\alpha_{t-1}(i) \le (1+\ciup)\alpha_0(i)$ if $ t-1 < \tau $.
    From \cref{lem:BC for SP} and \cref{item:BC for alpha} of \cref{lem:Bernstein condition for sync processes},
    $\indicator_{\tau>t-1}\qty(\alpha_t(i)-\E_{t-1}\qty[ \alpha_{t-1}(i)])$ conditioned on round $t-1$ satisfies  $\qty(\frac{1}{n},\variance)$-Bernstein condition.
    Hence, from \cref{item:dominated Bernstein condition} of \cref{lem:Bernstein condition}, the random variable
    $\indicator_{\tau>t-1}\qty(\alpha_t(i)-\alpha_{t-1}(i)-R)\leq \indicator_{\tau>t-1}\qty(\alpha_t(i)-\E_{t-1}\qty[ \alpha_{t-1}(i)])$ satisfies one-sided $\qty(\frac{1}{n},\variance)$-Bernstein condition.
    

    Second, consider the other case where $\alpha_{t-1}(i)\leq \alphanorm_{t-1}$.
    From \cref{lem:concentration of alpha difference for 2 choices} and \cref{item:Bernstein condition 2} of \cref{lem:Bernstein condition},
    the random variable $\indicator_{\tau>t-1}\qty(\alpha_t(i)-\alpha_{t-1}(i))$ satisfies one-sided $\qty(\frac{\indicator_{\tau>t-1}}{n},\frac{\indicator_{\tau>t-1}\alpha_{t-1}(i)^2}{n})$-Bernstein condition.
    Hence, $\indicator_{\tau>t-1}\qty(\alpha_t(i)-\alpha_{t-1}(i)-R)\leq \indicator_{\tau>t-1}\qty(\alpha_t(i)-\alpha_{t-1}(i))$ satisfies one-sided $\qty(\frac{1}{n}, \variance)$-Bernstein condition from \cref{item:dominated Bernstein condition} of \cref{lem:Bernstein condition}.
    
    Hence, in any case, $\indicator_{\tau>t-1}\qty(\alpha_t(i)-\alpha_{t-1}(i)-R)$ satisfies one-sided $\qty(\frac{1}{n}, \variance)$-Bernstein condition (this verifies the condition \ref{item:C2}).

    Applying \cref{item:positive drift} of \cref{lem:Freedman stopping time additive} for 
    $\bounded=\frac{1}{n}$, $h=\calphaup\alpha_0(i)$, and $T = \frac{\constrefs{lem:drift analysis for basic}{item:tauiup}}{ \alpha_0(i)}$ (then, $ z = h-\drift\cdot T = \varepsilon\calphaup\alpha_0(i) $), we have
    \begin{align*}
    \Pr\qty[\tau\leq T]
    &=\Pr\qty[\tauxplus\leq \min\{T,\tau\}]
    \le \exp\qty( -\Omega\qty(\frac{\alpha_0(i)^2}{ \variance T + \alpha_0(i)/n }) ),
    \end{align*}
    where $ \tauxplus = \inf\qty{ t\ge 0\colon X_t\ge X_0 + h } = \inf\{ t\ge 0\colon \alpha_t(i)\ge \alpha_0(i) + \calphaup\alpha_0(i) \}= \tauiup$.
    Substituting \cref{eq:variance tauiup}, we obtain the claim.
\end{proof}

\begin{proof}[Proof of \cref{item:tauidown} of \cref{lem:drift analysis for basic}]
    Let $X_t=-\alpha_t(i)$, $\tau = \min\{\tauiweak,\tauiup\}$, and
      \begin{align*}
       \drift = 
       \frac{\cweak (1+\ciup)^2}{1-\cweak}\cdot\alpha_0(i)^2. 
      \end{align*}
    Suppose $\tau>t-1$.
    For both models, we have
      \begin{align*} 
         \E_{t-1}\qty[ \alpha_t(i) ] &= \alpha_{t-1}(i) \qty( 1 + \alpha_{t-1}(i) - \alphanorm_{t-1} ) \\
         &\ge \alpha_{t-1}(i) \qty( 1 - \frac{\cweak}{1-\cweak}\cdot \alpha_{t-1}(i) ) & & \because\text{$t-1 < \tauiweak$; thus $\alphanorm_{t-1} \le \frac{\alpha_{t-1}(i)}{1-\cweak}$} \\
         &\ge \alpha_{t-1}(i) - \frac{\cweak (1+\ciup)^2}{1-\cweak}\cdot\alpha_0(i)^2. & & \because\text{$t-1 <\tauiup$; thus $\alpha_{t-1}(i) \le (1+\ciup)\alpha_0(i)$}
    \end{align*}
    From above, it is easy to check the condition \ref{item:C1} of \cref{lem:Freedman stopping time additive} as follows:
    \begin{align*}
        \indicator_{\tau>t-1}\qty(\E_{t-1}\qty[ X_t ]-X_{t-1}-R)
        &=\indicator_{\tau>t-1}\qty(\alpha_{t-1}(i)-R-\E_{t-1}\qty[ \alpha_t(i) ])
        \leq 0.
    \end{align*}

    Now we check that 3-Majority and 2-Choices satisfy \ref{item:C2} or \ref{item:C2'} for $\bounded = \frac{1}{n}$ and $\variance$ defined in~\cref{eq:variance tauiup}.

    \paragraph*{3-Majority.}
    We have $\alpha_{t-1}(i) \le (1+\ciup)\alpha_0(i)$ if $ t-1 < \tau $.
    Hence, from \cref{lem:BC for SP} and \cref{item:BC for alpha} of \cref{lem:Bernstein condition for sync processes}, 
    $\indicator_{\tau>t-1}\qty(\E_{t-1}\qty[ \alpha_{t-1}(i)]-\alpha_t(i))$ conditioned on round $t-1$ satisfies $ \qty( \frac{1}{n}, \variance ) $-Bernstein condition (this verifies the condition \ref{item:C2'}).
    

    \paragraph*{2-Choices.}
    We have $\alpha_{t-1}(i) \le (1+\ciup)\alpha_0(i)$ and $\alphanorm_{t-1}\leq \frac{\alpha_{t-1}(i)}{1-\cweak}\leq \frac{1+\calphaup}{1-\cweak}\alpha_{0}(i)$ if $ t-1 < \tau $.
Hence, from \cref{lem:BC for SP} and \cref{item:BC for alpha} of \cref{lem:Bernstein condition for sync processes}, $\indicator_{\tau>t-1}\qty(\E_{t-1}\qty[ \alpha_{t-1}(i)]-\alpha_t(i))$ satisfies $\qty(\frac{1}{n},\variance)$-Bernstein condition (this verifies the condition \ref{item:C2'}).

    %
    %

    Applying \cref{item:positive drift} of \cref{lem:Freedman stopping time additive} for $\bounded = \frac 1 n$, $h=\calphadown \alpha_0(i)$, and $T=\frac{\constrefs{lem:drift analysis for basic}{item:tauidown}}{\alpha_0(i)}$ (then, $z=h-\drift\cdot T=\varepsilon \calphadown \alpha_0(i)$),
    we obtain
    \begin{align*}
        \Pr\qty[\tauxplus\leq \min\{T,\tau\}]
        \leq\exp\qty( -\Omega\qty(\frac{\alpha_0(i)^2}{ \variance T + (\alpha_0(i)/n) }) ).
    \end{align*}
    Note that $\tauxplus=\inf\{t\geq 0: X_t\geq X_0+h\}=\inf\{t\geq 0: -\alpha_t(i)\geq -\alpha_0(i)+\calphadown \alpha_0(i)\}=\tauidown$.
    Substituting \cref{eq:variance tauiup}, we obtain the claim.

  \end{proof}

\begin{proof}[Proof of \cref{item:tauiactive} of \cref{lem:drift analysis for basic}]
    Let $X_t=\alpha_t(i)$,
    $\tau=\min\cbra{\tauincreasei,\taunormdown}$, and $\drift=0$.
    Suppose $\tau>t-1$.
    For both models, we have
    \begin{align*}
    \E_{t-1}\qty[\alpha_t(i)]
    =\alpha_{t-1}(i)\qty(1+\alpha_{t-1}(i)-\alphanorm_{t-1})
    \leq \alpha_{t-1}(i)\qty(1+(1-\cactive)\alphanorm_{0}-(1-\cnormdown)\alphanorm_{0})
    \leq \alpha_{t-1}(i).
    \end{align*}
    From above, it is easy to check the condition \ref{item:C1} of \cref{lem:Freedman stopping time additive} as follows:
    \begin{align*}
        \indicator_{\tau>t-1}\qty(\E_{t-1}\qty[ X_t ]-X_{t-1}-R)
        &=\indicator_{\tau>t-1}\qty(\E_{t-1}\qty[ \alpha_t(i) ]-\alpha_{t-1}(i))
        \leq 0.
    \end{align*}

    Now we check that 3-Majority and 2-Choices satisfy \ref{item:C2} or \ref{item:C2'} for $\bounded = \frac{1}{n}$ and
        \begin{align} 
       s = \begin{cases}
        O\qty( \frac{\alphanorm_0}{n} )	& \text{ for 3-Majority},\\[10pt]
        O\qty( \frac{\alphanorm_0^2}{n} ) & \text{ for 2-Choices}.
       \end{cases} \label{def:variance alphaactive}
    \end{align}

    \paragraph*{3-Majority.}
    We have $\alpha_{t-1}(i)\leq (1-\cactive)\alphanorm_0$ if $ t-1 < \tau $.
    Hence, from \cref{lem:BC for SP} and \cref{item:BC for alpha} of \cref{lem:Bernstein condition for sync processes}, 
    $\indicator_{\tau>t-1}\qty(\alpha_t(i)-\E_{t-1}\qty[ \alpha_{t-1}(i)])$ conditioned on round $t-1$ satisfies $\qty(\frac{1}{n},\variance)$-Bernstein condition (this verifies the condition \ref{item:C2'}).
    

    \paragraph*{2-Choices.}
    We deal with two cases: $\alpha_{t-1}(i)\geq \alphanorm_{t-1}$ or not.
    First, suppose $\alpha_{t-1}(i)\geq \alphanorm_{t-1}$.
    We have $\alpha_{t-1}(i)(\alpha_{t-1}(i)+\alphanorm_{t-1})\leq 2\alpha_{t-1}(i)^2\leq 2(1-\cactive)^2\alphanorm_0^2$ if $ t-1 < \tau $.
    From \cref{lem:BC for SP} and \cref{item:BC for alpha} of \cref{lem:Bernstein condition for sync processes}, 
    $\indicator_{\tau>t-1}\qty(\alpha_t(i)-\E_{t-1}\qty[ \alpha_{t-1}(i)])$ conditioned on round $t-1$ satisfies $\qty(\frac{1}{n},\variance)$-Bernstein condition.
    Furthermore, from \cref{item:dominated Bernstein condition} of \cref{lem:Bernstein condition}, $\indicator_{\tau>t-1}\qty(\alpha_t(i)-\alpha_{t-1}(i)-R)\leq \indicator_{\tau>t-1}\qty(\alpha_t(i)-\E_{t-1}\qty[ \alpha_{t-1}(i)])$ satisfies one-sided $\qty(\frac{1}{n},\variance)$-Bernstein condition.


    Second, consider the other case where $\alpha_{t-1}(i)\leq \alphanorm_{t-1}$.
    In this case, from \cref{lem:concentration of alpha difference for 2 choices} and \cref{item:Bernstein condition 2} of \cref{lem:Bernstein condition}, the random variable $\indicator_{\tau>t-1}\qty(\alpha_t(i)-\alpha_{t-1}(i))$ satisfies one-sided $\qty(\frac{\indicator_{\tau>t-1}}{n},\frac{\indicator_{\tau>t-1}\alpha_{t-1}(i)^2}{n})$-Bernstein condition.
    Hence, $\indicator_{\tau>t-1}\qty(\alpha_t(i)-\alpha_{t-1}(i)-R)\leq \indicator_{\tau>t-1}\qty(\alpha_t(i)-\alpha_{t-1}(i))$ satisfies one-sided $\qty(\frac{1}{n}, \variance)$-Bernstein condition from \cref{item:dominated Bernstein condition} of \cref{lem:Bernstein condition}.
    Note that $\alpha_{t-1}(i)\leq (1-\cactive)\alphanorm_0$ for $\tau>t-1$.

    Hence, in any case, $\indicator_{\tau>t-1}\qty(\alpha_t(i)-\alpha_{t-1}(i)-R)$ satisfies one-sided $\qty(\frac{1}{n}, \variance)$-Bernstein condition (This verifies the condition \ref{item:C2}).

    Applying \cref{item:positive drift} of \cref{lem:Freedman stopping time additive} for $\bounded=\frac{1}{n}$, $h=z=(1-\cactive)\alphanorm_0-\alpha_0(i)\geq \varepsilon (1-\cactive)\alphanorm_0$, we have
    \begin{align*}
    \Pr\qty[\tauxplus\leq \min\{T,\tau\}]
    \leq \exp\qty(-\Omega\qty(\frac{\alphanorm_0^2}{\variance T+\alphanorm_0/n})).
    \end{align*}
    Note that $\tauxplus=\inf\{t\geq 0:X_t\geq X_0+h\}=\inf\{t\geq 0:\alpha_t(i)\geq \alpha_0(i)+h\}=\tauincreasei$.
    Substituting \cref{def:variance alphaactive}, we obtain the claim.
\end{proof}

    \begin{proof}[Proof of \cref{item:taudeltadown} of \cref{lem:drift analysis for basic}]
    Let $X_t = -\delta_t$, $\tau = \min\{\taujweak, \taudeltadown, \tauiup\}$, and $ \drift = 0 $.
    
    Suppose $t-1 <  \min\{ \taujweak, \taudeltadown\}$.
    For both models, from 
    \cref{item:delta} of \cref{lem:basic inequality} and 
    \cref{item:lower bound for multipricative term} of \cref{lem:two strong opinions}, we have
    \begin{align} 
        \E_{t-1}\qty[ \delta_t ] &= \delta_{t-1}  + \delta_{t-1}\qty(\alpha_{t-1}(i) + \alpha_{t-1}(j) - \alphanorm_{t-1} ) 
        && (\text{$\because$ \cref{item:delta} of \cref{lem:basic inequality}})\nonumber\\
        &\geq \delta_{t-1} + \frac{1-2\cweak}{1-\cweak}\alpha_{t-1}(i)\delta_{t-1}.  
        && (\text{$\because$ $\delta_{t-1}\ge 0$ and \cref{item:lower bound for multipricative term} of \cref{lem:two strong opinions}}) \label{eq:Edelta sync}\\
        &\geq \delta_{t-1}. && (\text{$\because$ $\delta_{t-1}\ge 0$}) \nonumber
    \end{align}
    Hence, for both models, we have
      \begin{align*}
        \indicator_{\tau>t-1}\qty(\E_{t-1}\qty[ X_t ]-X_{t-1}-R)
        &=\indicator_{\tau>t-1}\qty(\delta_{t-1}-\E_{t-1}\qty[\delta_t])
        \leq 0.
    \end{align*}
  This verifies the condition \ref{item:C1} of \cref{lem:Freedman stopping time additive}.
    Now, we check that 3-Majority and 2-Choices satisfy the condition \ref{item:C2'} of \cref{lem:Freedman stopping time additive} 
      for $ \bounded = \frac{2}{n} $ and
    \begin{align}  \label{eq:variance deltaup}
       \variance = \begin{cases}
          O\qty( \frac{\alpha_{0}(i)}{n} )	& \text{for 3-Majority},\\[10pt]
          O\qty( \frac{\alpha_{0}(i)^2}{n} ) & \text{for 2-Choices}.
       \end{cases} 
    \end{align}

    \paragraph*{3-Majority.}
    We have $\alpha_{t-1}(j) \le \alpha_{t-1}(i) \le (1+\calphaup)\alpha_0(i)$ if $ t-1 < \tau $.
    Hence, from \cref{lem:BC for SP} and \cref{item:BC for delta} of \cref{lem:Bernstein condition for sync processes}, 
    $\indicator_{\tau>t-1}\qty(\E_{t-1}\qty[ \delta_{t-1}]-\delta_t)$ conditioned on round $t-1$ satisfies $\qty(\frac{2}{n},\variance)$-Bernstein condition (this verifies the condition \ref{item:C2'}).
    

    \paragraph*{2-Choices.}
    If $ t-1 < \tau $, we have $\alpha_{t-1}(j) \le \alpha_{t-1}(i) \le (1+\calphaup)\alpha_0(i)$ and $ \alphanorm_{t-1} \leq \frac{\alpha_{t-1}(j)}{1-\cweak} \leq \frac{(1+\calphaup)\alpha_{0}(i)}{1-\cweak} $.
    Hence, from \cref{lem:BC for SP} and \cref{item:BC for delta} of \cref{lem:Bernstein condition for sync processes}, 
    $\indicator_{\tau>t-1}\qty(\E_{t-1}\qty[ \delta_{t-1}]-\delta_t)$ conditioned on round $t-1$ satisfies $\qty(\frac{2}{n},\variance)$-Bernstein condition (this verifies the condition \ref{item:C2'}).


   Applying \cref{item:positive drift} of \cref{lem:Freedman stopping time additive} with $\bounded = \frac 2 n $, $h=z= \cdeltadown \delta_0$, we obtain
    \begin{align*}
        \Pr\qty[\tauxplus\leq \min\{T,\tau\}]
        \leq \exp\qty(-\Omega\qty(\frac{\delta_0^2}{ \variance T+\delta_0/n})).
    \end{align*}
    Note that $\tauxplus=\inf\{t\geq 0: X_t\geq X_0+h\}=\inf\{t\geq 0: -\delta_t\geq -\delta_0+\cdeltadown \delta_0\}=\taudeltadown$.
    Substituting concrete value \cref{eq:variance deltaup}, we obtain the claim.

  \end{proof}

\begin{proof}[\cref{item:taudeltaup} of \cref{lem:drift analysis for basic}]
Let $X_t = -\delta_t$, $\tau = \min\cbra*{ \taujweak, \taudeltadown, \tauiup, \tauidown}$, and
    \begin{align*}  
       \drift = 
        -\frac{(1-2\cweak)(1-\calphadown)(1-\cdeltadown)}{1-\cweak}\cdot \alpha_{0}(i)\delta_{0}.
    \end{align*}
Suppose $t-1 <  \min\cbra*{ \taujweak, \taudeltadown, \tauidown}$.
Then, from \cref{eq:Edelta sync}, we have
    \begin{align*}
        \E_{t-1}\qty[ \delta_t ] 
        &\ge \delta_{t-1} + \frac{(1-2\cweak)(1-\calphadown)(1-\cdeltadown)}{1-\cweak}\alpha_{0}(i)\delta_{0}
        &&\text{$\because$ \cref{eq:Edelta sync} and $t-1 <  \min\cbra*{  \taudeltadown, \tauidown}$}\\
        &=\delta_{t-1}-R.
    \end{align*}
    From above, it is easy to check the condition \ref{item:C1} of \cref{lem:Freedman stopping time additive} as follows:
    \begin{align*}
        \indicator_{\tau>t-1}\qty(\E_{t-1}\qty[ X_t ]-X_{t-1}-R)
        =\indicator_{\tau>t-1}\qty(\delta_{t-1}-\E_{t-1}\qty[ \delta_t]-R)
        \leq 0.
    \end{align*}

  Furthermore, from the same argument in the proof of \cref{item:taudeltaup}, 
  both models satisfy the condition \ref{item:C2'} of \cref{lem:Freedman stopping time additive} for 
  $ \bounded = \frac{2}{n} $ and $ \variance $ defined in~\cref{eq:variance deltaup}.

        Applying \cref{item:negative drift} of \cref{lem:Freedman stopping time additive} with 
    $\bounded = \frac{2}{n}$, $h=\cdeltaup \delta_0$, and $T=\frac{\constrefs{lem:drift analysis for basic}{item:taudeltaup}}{\alpha_0(i)}$ 
    (then, $z=(-\drift)\cdot T-h=\varepsilon \cdeltaup \delta_0$), 
    we obtain
    \begin{align*}
        \Pr\qty[\min\{\tauxminus,\tau\}>T]
        \leq \exp\qty(-\Omega\qty(\frac{\delta_0(i)^2}{\variance T+\delta_0/n})).
    \end{align*}
    Note that $\tauxminus=\inf\{t\geq 0: X_t\leq X_0-h\}=\inf\{t\geq 0: -\delta_t\leq -\delta_0-\cdeltaup \delta_0\}=\taudeltaup$.
    Substituting \cref{eq:variance deltaup}, we obtain the claim.
\end{proof}

\begin{proof}[Proof of \cref{item:taunormdown} of \cref{lem:drift analysis for basic}]
  Let $\tau = \taunormup$, $X_t=-\alphanorm_{t\wedge \tau}$, and $\drift = 0$.
  For both models, from \cref{item:expectation of alphanorm} of \cref{lem:basic inequality}, 
    \begin{align*}
        \indicator_{\tau>t-1}\qty(\E_{t-1}\qty[ X_t ]-X_{t-1}-R)
        &=\indicator_{\tau>t-1}\qty(\alphanorm_{t-1}-\E_{t-1}\qty[\alphanorm_{t}])
        \leq 0.
    \end{align*}
    Furthermore, from \cref{item:BC for alphanorm} of \cref{lem:Bernstein condition for sync processes}
    and \cref{item:Bernstein condition 1} of \cref{lem:Bernstein condition}, 
    the random variable
    \begin{align*}
        \indicator_{\tau>t-1}\qty( X_t - X_{t-1}-R)
        &=\indicator_{\tau>t-1}\qty(\alphanorm_{t-1}-\alphanorm_{t})
    \end{align*}
    satisfies one-sided $\qty(O\qty(\frac{\sqrt{\alphanorm_0}}{n}),\variance)$-Bernstein condition, where
    \begin{align*}
       \variance = \begin{cases}
          \frac{4(1+\cnormup)^{1.5}\alphanorm_{0}^{1.5}}{n}	& \text{for 3-Majority},\\[10pt]
          \frac{8(1+\cnormup)^{2}\alphanorm_{0}^2}{n} & \text{for 2-Choices}.
       \end{cases}
    \end{align*} 
    Here, we used $ \alphanorm_{t-1} \le (1+\cnormup)\alphanorm_0 $ for $ t-1 < \tau $.

    Applying \cref{lem:Freedman stopping time additive} with $\bounded = O\qty(\frac{\sqrt{\alphanorm_0}}{n})$, and $h=\varepsilon= \cnormdown \alphanorm_0$, 
    \begin{align*}
        \Pr\qty[\tauxplus\leq \min\{T,\tau\}]
        \leq \exp\qty(-\Omega\qty(\frac{\alphanorm_0^2}{\variance T+\alphanorm_0^{1.5}/n}))
    \end{align*}
    holds. Since $\tauxplus=\inf\{t\geq 0: X_t\geq X_0+h\}=\inf\{t\geq 0: -\alphanorm_t\geq -\alphanorm_0+\cnormdown \alphanorm_0\}=\taunormdown$, we obtain the claim.
\end{proof}

Finally, we introduce the following key lemma obtained from \cref{item:taunormdown} of \cref{lem:drift analysis for basic}. 
\begin{lemma}[Bounded decrease of $\alphanorm_t$]
\label{lem:taunormdown is large}
  Consider the stopping times $\taunormup,\taunormdown$ defined in \cref{def:stopping times}.
  Then, for any $T>0$, we have
    \begin{align*}
        \Pr\qty[\taunormdown\leq T]\le
          \begin{cases}
          T\cdot \exp\qty(-\Omega\qty(\frac{n\sqrt{\alphanorm_0}}{T})) & \text{for 3-Majority}, \\[10pt]
          T\cdot \exp\qty(-\Omega\qty(\frac{n}{T+\alphanorm_0^{-1/2}})) & \text{for 2-Choices}.
        \end{cases}
    \end{align*}
  Specifically, we have the following for a sufficiently large constant $C>0$:
  Suppose that $\alphanorm_0\geq \frac{C\log n}{\sqrt{n}}$ for 3-Majority and $\alphanorm_0\geq \frac{(C\log n)^2}{n}$ for 2-Choices. Then, $\Pr\qty[\taunormdown\leq \frac{C\log n}{\alphanorm_0}]\leq O(n^{-10})$.
\end{lemma}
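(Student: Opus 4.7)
The plan is to bootstrap the already-proven \cref{item:taunormdown} of \cref{lem:drift analysis for basic}, which only bounds $\taunormdown$ under the extra stopping at $\taunormup$, via a \emph{restart argument}. Whenever $\alphanorm_t$ grows by factor $(1+\cnormup)$ before dropping, we restart the analysis from that larger value; because $\alphanorm_t\in[0,1]$ is a submartingale (\cref{item:expectation of alphanorm} of \cref{lem:basic inequality}), only finitely many restarts fit into the horizon, and monotonicity in the initial norm ensures that each one incurs a decay at least as fast as the first.

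Concretely, I would introduce the restart times $\rho_0\defeq 0$ and $\rho_{k+1}\defeq \inf\qty{t>\rho_k:\alphanorm_t\ge (1+\cnormup)\alphanorm_{\rho_k}}$. By construction $\alphanorm_{\rho_k}\ge (1+\cnormup)^k\alphanorm_0\ge \alphanorm_0$, and since $\rho_{k+1}-\rho_k\ge 1$, at most $T$ restart times fall in $[0,T]$. The key observation is that $\{\taunormdown\le T\}$ implies the existence of a unique index $k^*\le T$ with $\taunormdown\in[\rho_{k^*},\rho_{k^*+1})$: during this window $\alphanorm_t<(1+\cnormup)\alphanorm_{\rho_{k^*}}$ by definition of $\rho_{k^*+1}$, while $\alphanorm_{\taunormdown}\le (1-\cnormdown)\alphanorm_0\le (1-\cnormdown)\alphanorm_{\rho_{k^*}}$ using $\alphanorm_{\rho_{k^*}}\ge \alphanorm_0$.

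Conditioning on $\calF_{\rho_{k^*}}$ and using the strong Markov property, this event is contained in the event controlled by \cref{item:taunormdown} of \cref{lem:drift analysis for basic} applied to the process restarted at $\rho_{k^*}$ with initial norm $\alphanorm_{\rho_{k^*}}$ in place of $\alphanorm_0$; monotonizing via $\alphanorm_{\rho_{k^*}}\ge \alphanorm_0$, the conditional probability is at most $\exp(-\Omega(n\sqrt{\alphanorm_0}/T))$ for 3-Majority and $\exp(-\Omega(n/(T+\alphanorm_0^{-1/2})))$ for 2-Choices. A union bound over the at most $T$ possible indices $k^*$ then delivers the general bound asserted in the lemma.

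For the ``specifically'' clause I would plug $T=C\log n/\alphanorm_0$ into the derived bound. For 3-Majority, the exponent becomes $\Omega(n\alphanorm_0^{3/2}/\log n)$, which using $\alphanorm_0\ge C\log n/\sqrt{n}$ is $\Omega(\sqrt{C\log n}\cdot n^{1/4})$; for 2-Choices, using $\alphanorm_0\ge (C\log n)^2/n$ gives $T+\alphanorm_0^{-1/2}=O(n/(C\log n))$, making the exponent $\Omega(C\log n)$. In both cases, taking $C$ sufficiently large makes the exponential dominate the $T$ prefactor (which is itself polynomial in $n$), yielding the desired $O(n^{-10})$ probability. The only real subtlety is the clean invocation of the strong Markov property at each random restart time $\rho_k$; but since the previous lemma is a statement about the Markov chain $(\opinion_t)_{t\in\Nat_0}$ starting from an arbitrary initial configuration with initial norm $\alphanorm_{\rho_k}$, this is standard and causes no genuine difficulty.
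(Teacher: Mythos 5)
Your proof is correct, and it rests on the same key ingredient as the paper's—\cref{item:taunormdown} of \cref{lem:drift analysis for basic} applied to a restarted process, followed by a union bound over at most $T+1$ restart events—but the decomposition of the failure event is different. You restart at the random stopping times $\rho_k$ at which $\alphanorm_t$ first multiplies by $(1+\cnormup)$, observe that $\{\taunormdown\le T\}$ selects a unique window $[\rho_{k^*},\rho_{k^*+1})$ with $k^*\le T$, and invoke the strong Markov property at $\rho_{k^*}$, with the condition $\alphanorm_{\rho_{k^*}}\ge\alphanorm_0$ being automatic from the construction. The paper instead considers every \emph{deterministic} round $s\in\{0,\dots,T\}$, defines $\calE^{(s)}=\{\alphanorm_s\ge\alphanorm_0\}\cap\{\sigma^{\downarrow}_s\le\min\{T,\sigma^{\uparrow}_s\}\}$ for restarted stopping times relative to $\alphanorm_s$, and exhibits a witness $s=\argmax_{0\le t\le\taunormdown}\alphanorm_t$; here $\alphanorm_s\ge\alphanorm_0$ is not automatic, so it is baked into $\calE^{(s)}$, and only the ordinary Markov property is needed. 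Your partition is disjoint and conceptually cleaner; the paper's avoids random conditioning times and is slightly easier to formalize. One small imprecision in your opening paragraph: the boundedness/submartingale property of $\alphanorm_t$ is not what caps the number of restarts in the horizon—that is just $\rho_{k+1}\ge\rho_k+1$, as you correctly use in the concrete argument—although boundedness would in fact give the stronger bound $k\le\log(1/\alphanorm_0)/\log(1+\cnormup)$, which you do not need. Your ``specifically'' computations agree with the paper's.
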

\begin{proof}[Proof of \cref{lem:taunormdown is large}]
  For simplicity, we prove the claim for 3-Majority.
  The same proof works for 2-Choices.
    For each $0\le s \le T$,
    let $\sigma^{\downarrow}_s = \inf\{ t\ge s \colon \alphanorm_t \le (1-\cnormdown) \alphanorm_s \}$,
    $\sigma^{\uparrow}_s = \inf\{t \ge s \colon \alphanorm_t \ge 2\alphanorm_s\}$,
    and
    let $\calE^{(s)}$ be the event that $\alphanorm_s \ge \alphanorm_0$ and $\sigma^{\downarrow}_s \le \min\{T, \sigma^{\uparrow}_s\}$.
    Note that $\taunormdown = \sigma^{\downarrow}_0$ and $\taunormup = \sigma^{\uparrow}_0$ (for $\cnormup=1$).

    The key observation is that the partial process $(\alpha_t)_{t\ge s}$ is again a 3-Majority process and $\sigma_t^{\uparrow},\sigma_t^{\downarrow}$ can be seen as the stopping times of \cref{def:stopping times} for the partial process.
    Moreover, the event $\calE^{(s)}$ depends only on the partial process $(\alpha_t)_{t\ge s}$.
    Therefore, from \cref{item:taunormdown} of \cref{lem:drift analysis for basic}, we have
    \begin{align*}
        \Pr_{(\alpha_t)_{t\ge s}}\qty[ \calE^{(s)} ] &\le \Pr_{(\alpha_t)_{t\ge s}}\qty[ \sigma_s^{\downarrow} \le \min\{T, \sigma^{\uparrow}_s\} \middle| \alphanorm_s \ge \alphanorm_0 ] \\
        &\le \exp\qty( - \Omega\qty( \frac{n\sqrt{\alphanorm_0}}{T})).
    \end{align*}

    If $\taunormdown \le T$ occurs, then $\calE^{(s)}$ occurs for some $0\le s \le T$.
    For example, if $s\le \taunormdown$ is the round such that $\alphanorm_s = \max_{0\le t \le \taunormdown} \alphanorm_t$, then $\calE^{(s)}$ holds.
    Therefore, we have
    \begin{align*}
        \Pr\qty[ \taunormdown \le T ] &\le \Pr\qty[ \bigvee_{0\le s \le T} \calE^{(s)} ] \\
        &\le \sum_{0\le s \le T} \Pr\qty[ \calE^{(s)}] \\
        &\le T\exp\qty( - \Omega\qty( \frac{n\sqrt{\alphanorm_0}}{T} )).
    \end{align*}    

    Now, we prove the ``specifically'' part.
    For 3-Majority, substituting $T=\frac{C\log n}{\alphanorm_0}$ yields the claim since
    \begin{align*}
    \frac{n\sqrt{\alphanorm_0}}{T}
    =\frac{n\alphanorm_0^{1.5}}{C\log n}
    \geq  n^{1/4}\sqrt{C \log n}.
    \end{align*}
    For 2-Choices, substituting  $T=\frac{C\log n}{\alphanorm_0}$ yields the claim since
    \begin{align*}
    \frac{n}{T+\alphanorm_0^{-1/2}}
    = \frac{n}{\frac{C\log n}{\alphanorm_0}+\frac{1}{\sqrt{\alphanorm_0}}}
    \geq \frac{n}{\frac{n}{C\log n}+\frac{\sqrt{n}}{C\log n}}
    \geq \frac{C\log n}{2}.
    \end{align*}
\end{proof}

\section{Proof for 3-Majority and 2-Choices} \label{sec:proof}
We now have the tools developed in \cref{sec:drift analysis}, and thus present the proof of the main results for 3‑Majority and 2‑Choices. 
The main component of our proof is \cref{thm:consensus time large alphanorm}. 
We outline its proof (with \cref{thm:plurality} as a byproduct) in \cref{fig:overview}, which is followed by \cref{sec:weak opinion vanishes,sec:multiplicative drift of bias,sec:additive drift of bias,sec:bias amplification}.

In \cref{sec:weak opinion vanishes}, we show that any weak opinion vanishes (\cref{lem:weakvanish}). 
In \cref{sec:multiplicative drift of bias}, we demonstrate that the bias between two non‑weak opinions exhibits multiplicative drift (\cref{lem:deltaupweak}) and use this to prove that a sufficiently large initial bias leads to the emergence of weak opinions (\cref{lem:initial bias weak}). 
In \cref{sec:additive drift of bias}, we show that the squared bias between two non‑weak opinions exhibits an additive drift (\cref{lem:additive drift}).
In \cref{sec:bias amplification}, we show that the bias between two non‑weak opinions grows sufficiently large (\cref{lem:gap amplification}).
Finally, after proving the norm growth (\cref{lem:taunormplus}) in \cref{sec:growth of norm}, we conclude the proof of \cref{thm:main theorem} in \cref{sec:proof of main theorem}.

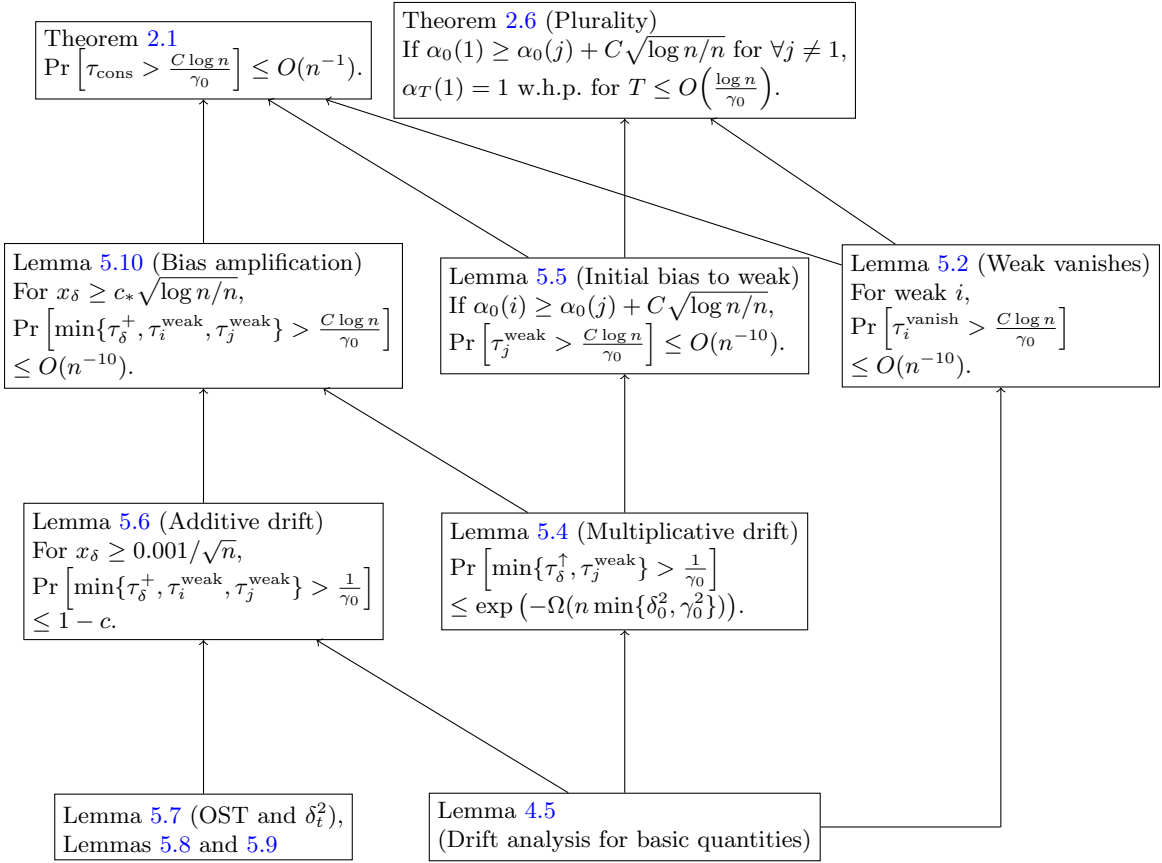
\begin{figure}[t]
\centering
    \begin{tikzpicture}
    \node[align=left,draw,rectangle] at (0,0) (main) 
    {\Cref{thm:consensus time large alphanorm} \\$\Pr\qty[\taucons>\frac{C\log n}{\alphanorm_0}]\leq O(n^{-1})$.};

    \node[align=left,draw,rectangle] at (5.6,0) (plurality) 
    {\cref{thm:plurality} (Plurality)\\
    If $\alpha_0(1)\geq \alpha_0(j)+C\sqrt{\log n/n}$ for $\forall j\neq 1$, \\
    $\alpha_T(1)=1$ w.h.p.~for $T\leq O\qty(\frac{\log n}{\alphanorm_0})$.};

    \node[align=left,draw,rectangle] at (0,-3.4) (gapamp) 
    {\cref{lem:gap amplification} (Bias amplification)\\
    For $\xdelta\geq c_*\sqrt{\log n/n}$, \\
    $\Pr\qty[\min\{\taudeltaplus,\tauiweak,\taujweak\}>\frac{C\log n}{\alphanorm_0}]$\\
    $\leq O(n^{-10})$.};

    \node[align=left,draw,rectangle] at (5.6,-3.4) (biasweak) 
    {\Cref{lem:initial bias weak} (Initial bias to weak)\\
    If $\alpha_0(i)\geq \alpha_0(j)+C\sqrt{\log n/n}$, \\
    $\Pr\qty[\taujweak>\frac{C\log n}{\alphanorm_0}]\leq O(n^{-10})$.};

    \node[align=left,draw,rectangle] at (10.6,-3.4) (vanish) 
    {\Cref{lem:weakvanish} (Weak vanishes)\\
    For weak $i$, \\
    $\Pr\qty[\tauivanish>\frac{C\log n}{\alphanorm_0}]$\\
    $\leq O(n^{-10})$.};
    
    \node[align=left,draw,rectangle] at (0,-6.8) (additive drift) 
    {\Cref{lem:additive drift} (Additive drift)\\
    For $\xdelta\geq 0.001/\sqrt{n}$, \\
    $\Pr\qty[\min\{\taudeltaplus,\tauiweak,\taujweak\}>\frac{1}{\alphanorm_0}]$\\
    $\leq 1-c$.};

    \node[align=left,draw,rectangle] at (5.6,-6.8) (deltaupweak) 
    {\Cref{lem:deltaupweak} (Multiplicative drift)\\
    $\Pr\qty[\min\{\taudeltaup, \taujweak\}>\frac{1}{\alphanorm_0}]$\\$
    \leq \exp\qty(-\Omega(n\min\{\delta_0^2,\alphanorm_0^2\}))$.};


    \node[align=left,draw,rectangle] at (0,-10.2) (OST) 
    {\Cref{lem:taudeltaplus OST} (OST and $\delta_t^2$), \\
    \Cref{lem:taudeltaplus general,lem:taudeltaplus one step sync}};


    \node[align=left,draw,rectangle] at (5.6,-10.2) (drift) 
    {\Cref{lem:drift analysis for basic} \\ (Drift analysis for basic quantities)};

    \draw[->] (gapamp) -- (main);
    \draw[->] (biasweak) -- (main);
    \draw[->] (biasweak) -- (plurality);
    \draw[->] (vanish) -- (main);
    \draw[->] (vanish) -- (plurality);
    \draw[->] (deltaupweak) -- (gapamp);
    \draw[->] (deltaupweak) -- (biasweak);
    \draw[->] (additive drift) -- (gapamp);
    \draw[->] (OST) -- (additive drift);
    \draw[->] (drift) -- (additive drift);
    \draw[->] (drift) -- (deltaupweak);
    \draw[-] (drift) -- (10.6,-10.2);
    \draw[->] (10.6,-10.2) -- (vanish);
\end{tikzpicture}
    \caption{Proof outline for 3-Majority in the case where $\alphanorm_0\geq C(\log n)/\sqrt{n}$. 
    Here, $C>0$ denotes a sufficiently large constant, $c\in (0,1)$ denotes a sufficiently small constant, and $c_*>0$ denotes an arbitrary constant.
    Throughout this proof outline, we use \cref{lem:taunormdown is large} to ensure $\alphanorm_t\geq C(1-\cnormdown)(\log n)/\sqrt{n}$ in a sufficiently long period. 
    The proof for 2-Choices follows a similar outline.}
    \label{fig:overview}
\end{figure}

\subsection{Weak Opinion Vanishes} \label{sec:weak opinion vanishes}
The first key tool is to show that any weak opinion vanishes within $ O((\log n)/\alphanorm_0) $ rounds for both 3-Majority and 2-Choices.
The key idea is to show that any weak opinion exhibits a multiplicative decreasing drift until the time corresponding to one of $\tauiactive, \taunormdown$ and $\tauivanish$ arrives.
We conclude the proof by demonstrating both $\tauiactive$ and $\taunormdown$ are sufficiently large, as established in \cref{item:tauiactive} of \cref{lem:drift analysis for basic} and \cref{lem:taunormdown is large}.
%
\begin{definition}[Vanishing time] \label{def:tauivanish}
  For an opinion $ i $, define $ \tauivanish $ as the first time when $ i $ vanishes, i.e.,
  \begin{align*} 
     \tauivanish = \inf\qty{ t \ge 0 \colon \alpha_t(i) = 0 }.
  \end{align*}
\end{definition}

\begin{lemma}[Weak opinion vanishes]\label{lem:weakvanish}
  Consider the stopping time $ \tauivanish $ defined in \cref{def:tauivanish}.
  Let $ i $ be an arbitrary weak opinion. 
  Suppose that, in 3-Majority, $ \alphanorm_0 \ge \frac{C\log n}{\sqrt{n}} $ and in 2-Choices, $ \alphanorm_0 \ge \frac{C(\log n)^2}{n} $, where $C>0$ is a sufficiently large constant. 
  Then, we have
  \begin{align*} 
     \Pr\qty[ \tauivanish \le \frac{C\log n}{\alphanorm_0} ] = 1-O(n^{-10}).
  \end{align*}
\end{lemma}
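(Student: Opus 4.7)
The plan is to combine a simple multiplicative-drift supermartingale argument with the two deviation bounds already established, namely \cref{item:tauiactive} of \cref{lem:drift analysis for basic} and \cref{lem:taunormdown is large}. The key observation is that, as long as the opinion $i$ stays strictly below the active threshold and $\alphanorm_t$ has not dropped, the conditional expectation of $\alpha_t(i)$ contracts multiplicatively by a factor $1-\Omega(\alphanorm_0)$, and this drives $\alpha_t(i)$ to zero within $O((\log n)/\alphanorm_0)$ rounds.

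First, I would introduce the stopping time $\tau \defeq \min\{\tauivanish,\tauiactive,\taunormdown\}$. On $\{t-1<\tau\}$ we have $\alpha_{t-1}(i)\le(1-\cactive)\alphanorm_0$ and $\alphanorm_{t-1}\ge(1-\cnormdown)\alphanorm_0$, and because the constants in \cref{def:stopping times} satisfy $\cnormdown<\cactive$,
\[
  \E_{t-1}[\alpha_t(i)] = \alpha_{t-1}(i)\bigl(1+\alpha_{t-1}(i)-\alphanorm_{t-1}\bigr) \le r\cdot\alpha_{t-1}(i),\qquad r\defeq 1-(\cactive-\cnormdown)\alphanorm_0.
\]
Since $r\in(0,1)$, the stopped process $Y_t\defeq r^{-(t\wedge\tau)}\alpha_{t\wedge\tau}(i)$ is a non-negative supermartingale with $Y_0=\alpha_0(i)\le 1$. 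Note that the weakness of $i$ at round $0$ gives $\alpha_0(i)\le(1-\cweak)\alphanorm_0<(1-\cactive)\alphanorm_0$, so indeed $\tauiactive>0$. On $\{\tau>T\}$ we have $\alpha_T(i)\ge 1/n$, hence $Y_T\ge r^{-T}/n$; combining with $\E[Y_T]\le Y_0\le 1$ yields
\[
  \Pr[\tau>T] \le n r^T \le n\exp\bigl(-(\cactive-\cnormdown)\alphanorm_0 T\bigr).
\]
Setting $T=C(\log n)/\alphanorm_0$ for sufficiently large $C$ bounds this by $O(n^{-10})$.

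Next, I would convert this bound on $\tau$ into one on $\tauivanish$ alone by ruling out the other two stopping times before round $T$. The event $\{\taunormdown\le T\}$ has probability $O(n^{-10})$ by the ``specifically'' part of \cref{lem:taunormdown is large}, under the stated hypothesis on $\alphanorm_0$. For the event $\{\tauiactive\le\min\{T,\taunormdown\}\}$, \cref{item:tauiactive} of \cref{lem:drift analysis for basic} gives $\exp(-\Omega(n\alphanorm_0/T))=\exp(-\Omega(n\alphanorm_0^2/\log n))$ for 3-Majority, which is $O(n^{-10})$ once $\alphanorm_0\ge C(\log n)/\sqrt{n}$; for 2-Choices the bound reads $\exp(-\Omega(n\alphanorm_0/(T\alphanorm_0+1)))=\exp(-\Omega(n\alphanorm_0/\log n))$, which is $O(n^{-10})$ once $\alphanorm_0\ge C(\log n)^2/n$. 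A union bound over $\{\tauiactive\le T\}$, $\{\taunormdown\le T\}$ and $\{\tau>T\}$ then yields $\Pr[\tauivanish>T]\le O(n^{-10})$, which is the claim.

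The main work is in setting up $\tau$ so that the multiplicative contraction holds uniformly on $\{t-1<\tau\}$; once the three stopping times are combined correctly, everything reduces to an elementary supermartingale tail bound plus the two quantitative estimates proved earlier. Notably, this proof does not need the Freedman-type multi-step concentration developed in \cref{sec:drift analysis intro}: a Markov bound on a geometric supermartingale suffices because the drift is already multiplicative and we only require a $1-O(n^{-10})$ conclusion.
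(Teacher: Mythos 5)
Your proposal is correct and takes essentially the same approach as the paper: the same stopping time $\tau=\min\{\tauivanish,\tauiactive,\taunormdown\}$, the same geometric supermartingale $r^{-(t\wedge\tau)}\alpha_{t\wedge\tau}(i)$, and the same combination of the tail bound on $\tau$ with \cref{lem:taunormdown is large} and \cref{item:tauiactive} of \cref{lem:drift analysis for basic}. The only cosmetic difference is how you phrase the final union bound (you should technically split $\{\tauiactive\le T\}$ into $\{\tauiactive\le\min\{T,\taunormdown\}\}$ and $\{\taunormdown\le T\}$ as the paper does, but the decomposition is implicit in your argument).
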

\begin{proof}
Let $ \tau = \min\{\tauiactive,\taunormdown,\tauivanish\} $ and 
  $r = 1 - (\cactive - \cnormdown)\alphanorm_0$.
  Note that $1>\cactive > \cnormdown>0$ from \cref{def:stopping times}.
  Suppose $t-1 < \tau$.
  For both models, we have
  \begin{align*} 
    \E_{t-1}\sbra*{ \alpha_t(i) } &= \alpha_{t-1}(i)(1+\alpha_{t-1}(i) - \alphanorm_{t-1}) \\
    &\le \alpha_{t-1}(i)(1+(1-\cactive)\alphanorm_0 - (1-\cnormdown)\alphanorm_0) \\
    &= r\alpha_{t-1}(i).
  \end{align*}

  Let $ X_t = r^{-t}\cdot \alpha_{t}(i) $
  and $ Y_t = X_{t\land \tau}$.
  Then, $(Y_t)$ is a supermartingale for both models since
  \begin{align*} 
     \E_{t-1}\qty[ Y_t - Y_{t-1} ] &=\indicator_{t-1<\tau}\cdot \E_{t-1}\qty[ X_t - X_{t-1} ] =\indicator_{t-1<\tau}r^{-t}\qty(\E_{t-1}\qty[ \alpha_t(i) ] - r\alpha_{t-1}(i)) \le 0.
  \end{align*}
  Furthermore, for any $T\ge 0$,
  \begin{align*}
    \E\sbra*{Y_T}
    &\geq \E\sbra*{Y_T\mid \tau>T}\Pr\sbra*{\tau>T}\\
    &=\E\sbra*{X_T\mid \tau>T}\Pr\sbra*{\tau>T} & & \because\text{$X_T=Y_T$ if $\tau>T$} \\
    &=r^{-T}\E\sbra*{\alpha_T(i)\mid \tau>T}\Pr\sbra*{\tau>T}\\
    &\geq r^{-T}n^{-1}\Pr\sbra*{\tau>T}. & & \because\text{$\alpha_T(i) \ge 1/n$ if $T>\taudisappeari$}
  \end{align*}
  and, since $ (Y_t) $ is a supermartingale, we obtain
  \begin{align}
    \Pr\sbra*{\tau>T}
    &\leq nr^T\E\sbra*{Y_T} \nonumber \\
    &\leq nr^T\E\sbra*{Y_0} \nonumber \\
    &\leq 
      n\exp\rbra*{-(\cactive-\cnormdown)\alphanorm_0T}. \label{eq:weak multi}
    \end{align}

  From \cref{lem:taunormdown is large} and \cref{item:tauiactive} of \cref{lem:drift analysis for basic} for $ T=\frac{C\log n}{\alphanorm_0}$ for a sufficiently large constant $C$, we have
  \begin{align} 
    \Pr[\taunormdown \le T] &\le 
     \begin{cases}
      T\exp\qty(-\Omega\qty(\frac{n\alphanorm_0^{1.5}}{\log n}))	& \text{for 3-Majority},\\[10pt]
      T\exp\qty(-\Omega\qty( \frac{n\alphanorm_0}{\log n} )) & \text{for 2-Choices}.
     \end{cases} \nonumber\\
     &\le n^{-10} \label{eq:taunormdown bound}
  \end{align}
  and
  \begin{align*}
    \Pr\qty[\tauiactive\leq \min\{T,\taunormdown\}]
      &\leq 
      \begin{cases}
          \exp\qty(-\Omega\qty(\frac{n\alphanorm_0^2}{\log n})) & \text{for 3-Majority}, \\[10pt]
          \exp\qty(-\Omega\qty(\frac{n\alphanorm_0}{\log n})) & \text{for 2-Choices}.\\
      \end{cases} \\
      &\le n^{-10}.
  \end{align*}
  Here, we used the assumption of $ \alphanorm_0 $.
  Thus, we have
  \begin{align} 
     \Pr\qty[ \tauiactive \le T ] &= \Pr\qty[ \tauiactive\le T \text{ and }\tauiactive \le \taunormdown ] + \Pr\qty[ \tauiactive\le T \text{ and }\tauiactive > \taunormdown ]  \nonumber\\
     &\le \Pr\qty[ \tauiactive \le \min\{T,\taunormdown\} ] + \Pr\qty[ \taunormdown \le T ] \nonumber\\
     &\le O(n^{-10}). \label{eq:tauiactive bound}
  \end{align}

  Therefore, 
  we have
  \begin{align*} 
     1-O(n^{-10}) &\le \Pr\qty[ \min\{\tauivanish,\taunormdown,\tauiactive\}\le T ] & & (\because\text{\cref{eq:weak multi}}) \\
     &=\Pr\qty[\tauivanish\le T\text{ or }\taunormdown\le T \text{ or }\tauiactive\le T] \\
     &\le \Pr\qty[\tauivanish\le T] + \Pr\qty[\taunormdown\le T] + \Pr\qty[\tauiactive\le T] \\
     &\le \Pr\qty[\tauivanish\le T] + O(n^{-10}). & & (\because\text{\cref{eq:tauiactive bound,eq:taunormdown bound}})
  \end{align*}
  That is, $ \Pr\qty[\tauivanish \le T] \ge 1-O(n^{-10}) $.
\end{proof}

\subsection{Multiplicative Drift of Bias}\label{sec:multiplicative drift of bias}
In this section, we first demonstrate that the bias between two non-weak opinions increases with a multiplicative factor (see \cref{lem:deltaupweak}). 
This result is derived by combining \cref{item:tauiup,item:tauidown,item:taudeltadown,item:taudeltaup} from \cref{lem:drift analysis for basic} with appropriately chosen constants. 
Subsequently, we show that a sufficiently large initial bias leads to the emergence of a weak opinion (see \cref{lem:initial bias weak}).

First, we introduce a quantity specific to 2-Choices that measures the bias between two opinions.
\begin{definition}[Scaled bias for 2-Choices] \label{def:tauetaup}
  Consider 2-Choices and let $ i,j $ be distinct opinions.
  Let
  \begin{align*} 
    \deltaratio_t(i,j) = \frac{\delta_t(i,j)}{\sqrt{\max\qty{\alpha_t(i),\alpha_t(j)}}}.
  \end{align*}
  For a constant $\cetaup>0$ and a parameter $ \xeta=\xeta(n)\in[1/n,1] $, let
  \begin{align*} 
    &\tauetaup = \inf\qty{ t\ge 0 \colon \deltaratio_t(i,j) \ge (1+\cetaup)\deltaratio_0 }, \\
    &\tauetaplus = \inf\qty{t\geq 0: \abs{\eta_t(i,j)}\geq \xeta}.
  \end{align*}  
  The constant $\cetaup$ is a universal constant and we will set $\cetaup=1/1000$.
\end{definition}
\begin{lemma}[Multiplicative drift of bias]\label{lem:deltaupweak}
  Let $ i,j $ be distinct opinions that are not weak at round $ 0 $ and $ \alpha_0(i) \ge \alpha_0(j) $.
  Consider the stopping times $\taujweak,\taudeltaup, \tauetaup$ defined in \cref{def:tauetaup,def:stopping times}
  and let $\cweak=1/10, \cdeltaup=1/20, \cetaup=1/1000$.
  Then, we have the following:
\begin{enumerate}
  \renewcommand{\labelenumi}{(\roman{enumi})}
  \item For 3-Majority,
  \begin{align*} 
     \Pr\qty[ \min\qty{ \taudeltaup,\taujweak } >  \frac{1}{\alphanorm_0} ] \le \exp\qty(-\Omega\qty(n\alphanorm_0^2)) + \exp\qty(-\Omega\qty(n\delta_0(i,j)^2)). 
  \end{align*}
  \item For 2-Choices,
  \begin{align*} 
    \Pr\qty[ \min\qty{ \tauetaup,\taujweak } >  \frac{1}{\alphanorm_0} ] \le  \exp\qty(-\Omega\qty(n\alphanorm_0)) + \exp\qty(-\Omega\qty( n\eta_0(i,j)^2 )).
  \end{align*}
\end{enumerate}
\end{lemma}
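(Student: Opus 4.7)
The plan is to reduce the claim to items~\ref{item:tauiup}, \ref{item:tauidown}, \ref{item:taudeltadown}, and~\ref{item:taudeltaup} of \cref{lem:drift analysis for basic} via a union bound that compensates for the stopping times absent from the ``$\min$'' in item~\ref{item:taudeltaup}. Set $T_0 = \constrefs{lem:drift analysis for basic}{item:taudeltaup}/\alpha_0(i)$. Since $i$ is non-weak, $\alpha_0(i) \ge (1-\cweak)\alphanorm_0$, and the constant in the numerator is well under $1-\cweak$ for our choice of $\cweak,\cdeltaup,\calphadown,\cdeltadown$, so $T_0 \le 1/\alphanorm_0$; hence $\{\min\{\taudeltaup,\taujweak\} > 1/\alphanorm_0\} \subseteq \{\min\{\taudeltaup,\taujweak\} > T_0\}$ and similarly for the 2-Choices event. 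I may assume $\delta_0 > 0$; otherwise $\taudeltaup = \tauetaup = 0$ and the bound is trivial.

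For 3-Majority, I establish the inclusion
\[
\{\min\{\taudeltaup,\taujweak\} > T_0\} \subseteq E_A \cup E_B \cup E_C \cup E_D,
\]
where $E_A = \{\min\{\taudeltaup,\taujweak,\taudeltadown,\tauiup,\tauidown\} > T_0\}$, $E_B = \{\taudeltadown \le \min\{\taujweak,\tauiup,T_0\}\}$, $E_C = \{\tauiup \le T_0\}$, and $E_D = \{\tauidown \le \min\{\tauiweak,\tauiup,T_0\}\}$. The delicate subcase in proving this is when $\tauidown \le T_0$ and $\tauiweak \le \tauidown$: since $\taujweak > T_0 \ge \tauiweak$, at time $\tauiweak$ we have $\alpha_{\tauiweak}(i) \le (1-\cweak)\alphanorm_{\tauiweak} < \alpha_{\tauiweak}(j)$, so $\delta_{\tauiweak} < 0 \le (1-\cdeltadown)\delta_0$, forcing $\taudeltadown \le \tauiweak \le T_0$ and placing the sample path in $E_B$. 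Items~\ref{item:taudeltaup}, \ref{item:taudeltadown}, \ref{item:tauiup}, and~\ref{item:tauidown} bound $E_A,E_B,E_C,E_D$, respectively; using $\alpha_0(i)T_0 = O(1)$ (to absorb the $\alpha_0(i)T_0$ term appearing in item~\ref{item:taudeltadown}) and $\alpha_0(i) \ge (1-\cweak)\alphanorm_0$, each contribution collapses to either $\exp(-\Omega(n\delta_0^2))$ or $\exp(-\Omega(n\alphanorm_0^2))$, proving the 3-Majority bound.

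For 2-Choices the same decomposition, with the 2-Choices entries of \cref{lem:drift analysis for basic} substituted in, yields $\Pr[\min\{\taudeltaup,\taujweak\} > T_0] \le \exp(-\Omega(n\eta_0(i,j)^2)) + \exp(-\Omega(n\alphanorm_0))$, since the extra factor of $\alpha_0(i)$ in the 2-Choices $\delta_0^2$-exponent converts it into $\delta_0^2/\alpha_0(i) = \eta_0(i,j)^2$. It remains to promote $\taudeltaup$ to $\tauetaup$. On $\{\tauiup > T_0\} \cap \{\taujup > T_0\}$, one has $\max\{\alpha_t(i),\alpha_t(j)\} \le (1+\calphaup)\alpha_0(i)$ for all $t \le T_0$, so if $\taudeltaup \le T_0$, then at $t = \taudeltaup$,
\[
\eta_t \ge \frac{(1+\cdeltaup)\delta_0}{\sqrt{(1+\calphaup)\alpha_0(i)}} = \frac{1+\cdeltaup}{\sqrt{1+\calphaup}}\,\eta_0 \ge (1+\cetaup)\eta_0,
\]
where the last inequality is the direct verification $(1+\cdeltaup)^2 \ge (1+\calphaup)(1+\cetaup)^2$ for $\cdeltaup=1/20, \calphaup=1/10, \cetaup=1/1000$. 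Contrapositively, $\{\tauetaup > T_0\} \subseteq \{\taudeltaup > T_0\} \cup \{\tauiup \le T_0\} \cup \{\taujup \le T_0\}$, and item~\ref{item:tauiup} applied to $j$ gives $\Pr[\taujup \le T_0] \le \exp(-\Omega(n\alpha_0(j))) \le \exp(-\Omega(n\alphanorm_0))$ (valid since $\alpha_0(j) \ge (1-\cweak)\alphanorm_0$ and $T_0 \le \constrefs{lem:drift analysis for basic}{item:tauiup}/\alpha_0(j)$), closing the 2-Choices bound.

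The main obstacle I anticipate is verifying the set-inclusion for 3-Majority, in particular the subcase $\tauiweak \le \tauidown$, which hinges on the observation that weakness of $i$ without weakness of $j$ forces $\delta$ past zero and hence past $(1-\cdeltadown)\delta_0$ (which is why the $\delta_0 > 0$ assumption is essential). A minor technical point is checking the numerical inequality $(1+\cdeltaup)^2 \ge (1+\calphaup)(1+\cetaup)^2$ that couples $\taudeltaup$ to $\tauetaup$ for 2-Choices; once this and the covering argument are in place, the remaining steps are direct applications of \cref{lem:drift analysis for basic}.
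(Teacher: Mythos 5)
Your proof is correct and follows essentially the same strategy as the paper: set $T_0\asymp \constrefs{lem:drift analysis for basic}{item:taudeltaup}/\alpha_0(i)\le 1/\alphanorm_0$, decompose by which of the stopping times in item~\ref{item:taudeltaup}'s minimum fires first, bound each case by items~\ref{item:tauiup}, \ref{item:tauidown}, \ref{item:taudeltadown}, \ref{item:taudeltaup} of \cref{lem:drift analysis for basic}, and for 2-Choices couple $\tauetaup$ to $\taudeltaup$ through the numerical inequality $\frac{1+\cdeltaup}{\sqrt{1+\calphaup}}\ge 1+\cetaup$. The paper organizes this as a case analysis on which value $\tau_0=\min\{\taujweak,\taudeltadown,\tauiup,\tauidown,T\}$ takes, while you phrase it as the explicit set inclusion $\{\min\{\taudeltaup,\taujweak\}>T_0\}\subseteq E_A\cup E_B\cup E_C\cup E_D$; these are logically equivalent. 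Your handling of the delicate subcase $\tauiweak<\tauidown$ — arguing that $i$ weak but $j$ strong forces $\delta<0<(1-\cdeltadown)\delta_0$ and hence $\taudeltadown\le\tauiweak$ — is exactly the paper's key observation, stated contrapositively. Two small remarks: (1) you implicitly rely on the numerical ordering $\constrefs{lem:drift analysis for basic}{item:taudeltaup}<\min\{\constrefs{lem:drift analysis for basic}{item:tauiup},\constrefs{lem:drift analysis for basic}{item:tauidown}\}$ (the paper explicitly computes $\tfrac{11}{152}<0.073<\tfrac{9}{121}<\tfrac{81}{121}$), which should be stated; (2) your extra union bound over $\{\taujup\le T_0\}$ in the 2-Choices step is harmless but unnecessary — since $\delta_{\taudeltaup}\ge(1+\cdeltaup)\delta_0>0$, one has $\max\{\alpha_{\taudeltaup}(i),\alpha_{\taudeltaup}(j)\}=\alpha_{\taudeltaup}(i)$, so only $\tauiup>T_0$ is needed, as in the paper.
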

\begin{proof}
  Let
  \begin{align*} 
     P &= \begin{cases}
      \exp\qty(-\Omega\qty(n\alpha_0(i)^2)) & \text{for 3-Majority},\\
      \exp\qty(-\Omega\qty(n\alpha_0(i)))   & \text{for 2-Choices},
     \end{cases}\\
     Q &= \begin{cases}
      \exp\qty( -\Omega(n\delta_0^2) )	& \text{for 3-Majority},\\
      \exp\qty( -\Omega\qty( n\eta_0^2)) & \text{for 2-Choices}.
     \end{cases}
  \end{align*}
  Set $\calphaup=\calphadown=\cweak=\varepsilon=1/10$, and $\cdeltaup=\cdeltadown=1/20$.
  Then, constants appearing in \cref{lem:drift analysis for basic} become
  \begin{align*}
    \constrefs{lem:drift analysis for basic}{item:tauiup} &= \frac{ (1 - \varepsilon) \ciup }{(1+\ciup)^2}=\frac{9}{121}>0.073,\\
    \constrefs{lem:drift analysis for basic}{item:tauidown} &= \frac{(1-\cweak)(1-\varepsilon)\calphadown}{\cweak (1+\ciup)^2}=\frac{81}{121}>0.073,\\
    \constrefs{lem:drift analysis for basic}{item:taudeltaup} &= \frac{(1-\cweak)(1+\varepsilon)\cdeltaup}{(1-2\cweak)(1-\calphadown)(1-\cdeltadown)}=\frac{11}{152}<0.073. 
  \end{align*}
  In other words, letting $T=\frac{0.073}{\alpha_0(i)}$, we have 
  \begin{align*} 
      \frac{\constrefs{lem:drift analysis for basic}{item:taudeltaup}}{\alpha_0(i)} < T < \frac{\min\qty{\constrefs{lem:drift analysis for basic}{item:tauiup}, \constrefs{lem:drift analysis for basic}{item:tauidown}}}{\alpha_0(i)}.
  \end{align*}
  Note that we have $1/\alphanorm_0>T=0.073/\alpha_0(i)$ since $i$ is not weak at round $0$.
  
  First, we present a partial proof that is common for both 3-Majority and 2-Choices.
  Consider the stopping time
  $ \tau_0 = \min\qty{ \taujweak,\taudeltadown,\tauiup,\tauidown, T } $.
  Note that $ \tau_0 $ takes one of the values of $ \taujweak,\taudeltadown,\tauiup,\tauidown,T$.
  The cases are divided based on which value it takes.
  \begin{enumerate}
  \item Suppose $ \tau_0 = \tauiup$, which implies $ \tauiup \le T $. From \cref{item:tauiup} of \cref{lem:drift analysis for basic}, this occurs with probability $P$. \label{item:multi tauiup}
  \item Suppose $ \tau_0 = \taudeltadown$, which implies $ \taudeltadown \le \min\qty{ \taujweak, \tauiup, T }$. This occurs with probability $ Q $ from \cref{item:taudeltadown} of \cref{lem:drift analysis for basic}. \label{item:multi taudeltadown}
  \item Suppose $ \tau_0 = \tauidown$, which implies $ \tauidown\le \min\qty{ \taudeltadown, \taujweak, T }$.
    Observe that, for any $ 0\le t < \min\qty{\taudeltadown,\taujweak} $, we have $ \alpha_t(i)\ge \alpha_t(j) $; thus, $ i $ cannot become weak during these rounds. Therefore, $ \tauidown\le \min\qty{ \taudeltadown, \taujweak, T } \le \min\qty{ \taudeltadown, \tauiweak, T }  $.  This occurs with probability $ P $ from \cref{item:tauidown} of \cref{lem:drift analysis for basic}. 
    \label{item:multi tauidown}
  \end{enumerate}
  From above, we have
  \begin{align} 
     \Pr\qty[ \tau_0 = \min\qty{\taujweak,T} ] \ge 1-2P-Q.  \label{eq:tau0 value}
  \end{align}
  Note that, since the opinion $ i $ is not weak at round $ 0 $, we can substitute $ \alpha_0(i)=\Omega(\alphanorm_0) $ to $ P $.

  \begin{proof}[Proof for 3-Majority.]
  From \cref{eq:tau0 value}, we have
  \begin{align*}
  &\Pr\qty[\min\qty{\taudeltaup,\taujweak}>\frac{1}{\alphanorm_0}]\\
  &\leq \Pr\qty[\min\qty{\taudeltaup,\taujweak}>T] &&(\text{$\because$ $\alpha_0(i)\geq (1-\cweak)\alphanorm_0$})\\
  &= \Pr\qty[\min\qty{\taudeltaup,\taujweak,T}>T \textrm{ and } \tau_0=\min\qty{\taujweak,T}]\\
  &\; + \Pr\qty[\min\qty{\taudeltaup,\taujweak,T}>T \textrm{ and } \tau_0\neq \min\qty{\taujweak,T}]\\
  &\leq \Pr\qty[\min\qty{\taudeltaup,\tau_0}>T] + 2P+Q &&(\text{$\because$ \cref{eq:tau0 value}})\\
  &\leq 2P+2Q. &&(\text{$\because$ \cref{item:taudeltaup} of \cref{lem:drift analysis for basic}})
  \end{align*}
  This proves the claim for 3-Majority.
  \end{proof}

  \begin{proof}[Proof for 2-Choices.]
  The proof is similar to the proof for 3-Majority.
  The key difference is to consider $ \tauetaup $ in place of $ \taudeltaup $.
  From \cref{eq:tau0 value}, we have
  \begin{align*}
  &\Pr\qty[\min\qty{\tauetaup,\taujweak}>\frac{1}{\alphanorm_0}]\\
  &\leq \Pr\qty[\min\qty{\tauetaup,\taujweak}>T] &&(\text{$\because$ $\alpha_0(i)\geq (1-\cweak)\alphanorm_0$})\\
  &= \Pr\qty[\min\qty{\tauetaup,\taujweak,T}>T \textrm{ and } \tau_0=\min\qty{\taujweak,T}]\\
  &\; + \Pr\qty[\min\qty{\tauetaup,\taujweak,T}>T \textrm{ and } \tau_0\neq \min\qty{\taujweak,T}]\\
  &\leq \Pr\qty[\min\qty{\tauetaup,\tau_0}>T] + 2P+Q. &&(\text{$\because$ \cref{eq:tau0 value}})\\
  \end{align*}

  Now, recall $\calphaup=1/10,\cdeltaup=1/20$, and $\cetaup=1/1000$.
  Then, we have $\frac{1+\cdeltaup}{\sqrt{1+\calphaup}}=\frac{21\sqrt{110}}{220}>1+\cetaup$.
  Noting $ \tau_0 = \min\qty{ \taujweak,\taudeltadown,\tauiup,\tauidown, T } $, 
  we have
    \begin{align*}
        &\Pr\qty[\min\qty{\tauetaup,\tau_0}>T]\\
        &=\Pr\qty[\min\qty{\tauetaup,\tau_0}>T\textrm{ and }
        \tinyforall t\leq T, \frac{\delta_t}{\sqrt{\alpha_t(i)}}< (1+\cetaup)\frac{\delta_0}{\sqrt{\alpha_0(i)}}]&&(\because \tauetaup>T)\\
        &=\Pr\qty[\min\qty{\tauetaup,\tau_0}>T\textrm{ and }
        \tinyforall t\leq T, \delta_t< \underbrace{(1+\cetaup)\sqrt{1+\calphaup}}_{\le 1+\cdeltaup}\delta_0] &&(\because \tauiup>T)\\
        &\leq \Pr\qty[\min\qty{\tauetaup,\tau_0}>T\textrm{ and }\tinyforall t\leq T, \delta_t<(1+\cdeltaup)\delta_0] \\
        &\leq \Pr\qty[\min\qty{\taudeltaup,\tau_0}>T] \\
        &\le Q. & & (\because \text{\cref{item:taudeltaup} of \cref{lem:drift analysis for basic}})
    \end{align*}
    Combining the above, we obtain the claim for 2-Choices.
    \end{proof}
\end{proof}

\begin{lemma}[Initial bias leads a weak opinion]
    \label{lem:initial bias weak}
    Let $ i,j $ be distinct opinions that are not weak at round $ 0 $.
    Consider the stopping time $\taujweak $ defined in \cref{def:stopping times} and let $\cweak=1/10$.
    Then, we have the following:
    We have the following:
\begin{enumerate}
  \renewcommand{\labelenumi}{(\roman{enumi})}
        \item 
        Consider 3-Majority. 
        Suppose $\alpha_0(i)- \alpha_0(j)\geq C\sqrt{\frac{\log n}{n}}$ and 
        $\alphanorm_0\geq \frac{C\log n}{\sqrt{n}}$ for a sufficiently large constant $C>0$.
        Then,
        \[
        \Pr\qty[\taujweak>\frac{C\log n}{\alphanorm_0}]\leq O(n^{-10}).
        \]
        \item 
        Consider 2-Choices. 
        Suppose $\alpha_0(i)- \alpha_0(j)\geq C\sqrt{\frac{\alpha_0(i)\log n}{n}}$ and 
        $\alphanorm_0\geq \frac{(C\log n)^2}{n}$ for a sufficiently large constant $C>0.$
        Then,
        \[
        \Pr\qty[\taujweak>\frac{C\log n}{\alphanorm_0}]\leq O(n^{-10}).
        \]
\end{enumerate}
\end{lemma}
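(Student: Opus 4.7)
The plan is to iterate \cref{lem:deltaupweak} across $K = O(\log n)$ consecutive phases, each of length $O(1/\alphanorm_0)$, so that each phase either makes $j$ weak (and we are done) or multiplies the bias by a constant factor. I first condition on the event $\calE$ that $\alphanorm_t \ge (1-\cnormdown)\alphanorm_0$ for every $t \le C\log n/\alphanorm_0$, which holds with probability $1-O(n^{-10})$ by \cref{lem:taunormdown is large}. Throughout, whenever $j$ has not yet become weak, $\alpha_t(i) \ge \alpha_t(j) > (1-\cweak)\alphanorm_t$ (the first inequality uses $\delta_t \ge \delta_0 \ge 0$, which is maintained by the induction below), so both $i,j$ are non-weak and we may apply \cref{lem:deltaupweak} treating round $t$ as the ``initial round''.

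For 3-Majority, define $t_k = k\cdot\lceil 1/((1-\cnormdown)\alphanorm_0)\rceil$ for $k=0,\ldots,K$. I claim, by induction on $k$ and a union bound, that on $\calE$, with probability $1-O(k \cdot n^{-11})$, either $\taujweak \le t_k$ or $\delta_{t_k} \ge (1+\cdeltaup)^k \delta_0$. The inductive step applies \cref{lem:deltaupweak} to the partial process starting at round $t_k$; the per-phase failure probability $\exp(-\Omega(n\alphanorm_{t_k}^2)) + \exp(-\Omega(n\delta_{t_k}^2))$ is at most $n^{-11}$ because $\alphanorm_{t_k} \ge \Omega(\log n/\sqrt{n})$ on $\calE$ and $\delta_{t_k} \ge \delta_0 \ge C\sqrt{\log n/n}$, taking $C$ sufficiently large.

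Now I pick $K = O(\log n)$ so that $(1+\cdeltaup)^K \delta_0 \ge 1 - 2(1-\cweak)(1-\cnormdown)\alphanorm_0$; the logarithmic size of $K$ comes from $\delta_0 \ge C\sqrt{\log n/n}$. If $\taujweak > t_K$, then $\delta_{t_K}$ exceeds this threshold, and combining with the trivial bound $\alpha_{t_K}(i)+\alpha_{t_K}(j)\le 1$ (so $\alpha_{t_K}(j) \le (1-\delta_{t_K})/2$) gives $\alpha_{t_K}(j) \le (1-\cweak)(1-\cnormdown)\alphanorm_0 \le (1-\cweak)\alphanorm_{t_K}$, meaning $j$ is in fact weak at $t_K$, a contradiction. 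Hence $\taujweak \le t_K = O(\log n/\alphanorm_0)$, as desired. For 2-Choices the same scheme works with $\eta_t = \delta_t/\sqrt{\max\{\alpha_t(i),\alpha_t(j)\}}$ and $\cetaup$ in place of $\delta_t$ and $\cdeltaup$, using the second half of \cref{lem:deltaupweak}: the hypothesis $\alpha_0(i)-\alpha_0(j) \ge C\sqrt{\alpha_0(i)\log n/n}$ rewrites as $\eta_0 \ge C\sqrt{\log n/n}$, and after $O(\log n)$ amplifications the induced $\delta_{t_K} = \eta_{t_K}\sqrt{\alpha_{t_K}(i)} \ge \eta_{t_K}\sqrt{(1-\cweak)(1-\cnormdown)\alphanorm_0}$ crosses the same ``force $j$ weak'' threshold because $\alphanorm_0 \ge (C\log n)^2/n$ makes the threshold-to-amplification ratio polylogarithmic.

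The main obstacle is making the cascade of re-applications of \cref{lem:deltaupweak} coherent: (i) the phase length in the $k$-th application is really $1/\alphanorm_{t_k}$, and conditioning on $\calE$ is precisely what keeps this uniformly bounded by $O(1/\alphanorm_0)$ so that $K$ phases fit into $O(\log n/\alphanorm_0)$ total rounds; (ii) one must check that the hypotheses of \cref{lem:deltaupweak}, namely non-weakness of both $i,j$ and $\alpha_{t_k}(i)\ge\alpha_{t_k}(j)$, genuinely persist across every phase boundary (they do, as noted above); and (iii) the constant $C$ in the hypothesis must be tuned so that the per-phase failure $n^{-11}$ summed over $K=O(\log n)$ phases, plus the failure of $\calE$, yields the claimed overall $O(n^{-10})$ bound.
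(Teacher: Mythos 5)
Your overall approach — condition on $\alphanorm_t \ge (1-\cnormdown)\alphanorm_0$ via \cref{lem:taunormdown is large}, then iterate \cref{lem:deltaupweak} over $O(\log n)$ phases to amplify the bias, and observe that an unbounded bias forces $\taujweak$ — is exactly the paper's argument, and your observation that $\alpha_{t}(i)\ge\alpha_{t}(j) > (1-\cweak)\alphanorm_{t}$ keeps both opinions non-weak at the restart times is the right preservation check.

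However, your inductive step contains a genuine gap. You define \emph{fixed} phase boundaries $t_k = k\cdot\lceil 1/((1-\cnormdown)\alphanorm_0)\rceil$ and claim ``either $\taujweak \le t_k$ or $\delta_{t_k}\ge(1+\cdeltaup)^k\delta_0$.'' But \cref{lem:deltaupweak} applied at round $t_k$ only controls the stopping time $\min\{\taudeltaup,\taujweak\}$: it guarantees that at the \emph{random} round $s:=\taudeltaup^{(k)}\in[t_k,t_{k+1}]$ we have $\delta_s\ge(1+\cdeltaup)\delta_{t_k}$, not that this lower bound persists through the fixed round $t_{k+1}$. Between $s$ and $t_{k+1}$, the bias can decay — \cref{lem:drift analysis for basic}\ref{item:taudeltadown} only prevents a drop below $(1-\cdeltadown)$ of the phase-start value, and with $\cdeltaup=\cdeltadown=1/20$ one has $(1+\cdeltaup)(1-\cdeltadown)<1$, so stacking the worst-case up–then–down across phases actually shrinks the product. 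Hence $\delta_{t_{k+1}}\ge(1+\cdeltaup)^{k+1}\delta_0$ does not follow, and the induction fails to close. The fix, which is what the paper (tersely) does, is to re-anchor each phase at the random stopping time $\taudeltaup^{(k)}$ itself: since $\taudeltaup^{(k)}$ is a stopping time, the strong Markov property lets you re-apply \cref{lem:deltaupweak} at that round with $\delta_{\taudeltaup^{(k)}}\ge(1+\cdeltaup)^{k+1}\delta_0$ as the new initial bias, and the event $\calE$ bounds each phase length by $1/((1-\cnormdown)\alphanorm_0)$, so $K=O(\log n)$ phases still fit into $O(\log n/\alphanorm_0)$ rounds. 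The same remark applies verbatim to your 2-Choices argument with $\eta_t$ in place of $\delta_t$.
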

\begin{proof}[Proof for 3-Majority]
  First, from \cref{lem:taunormdown is large}, we may assume that $ \alphanorm_t \ge (1-\cnormdown)\alphanorm_0 $ for all $ 0\le t \le \frac{C\log n}{\alphanorm_0} $ with a probability larger than $1-O(n^{-11})$. 
  From \cref{lem:deltaupweak}, for some $ T_1:=O(1/\alphanorm_0) $, we have $ \delta_{T_1} \ge (1+\cdeltaup)\cdot \delta_{0} $ or $ \taujweak \le T_1 $ with probability $ 1-O(n^{-11}) $.
  By repeating this argument for $ \log_{1+\cdeltaup} n $ times, it must hold that $ \taujweak \le O(\log n/\alphanorm_0) $ with probability $ 1-O(n^{-11}/\log n) $.
\end{proof}
\begin{proof}[Proof for 2-Choices]
  First, from \cref{lem:taunormdown is large}, we may assume that $ \alphanorm_t \ge (1-\cnormdown)\alphanorm_0 $ for all $ 0\le t \le \frac{C\log n}{\alphanorm_0} $ with a probability larger than $1-O(n^{-11})$. 
  From \cref{lem:deltaupweak}, for some $ T_1:=O(1/\alphanorm_0) $, we have $ \eta_{T_1} \ge (1+\cetaup)\cdot \eta_{0} $ or $ \taujweak \le T_1 $ with probability $ 1-O(n^{-11}) $.
  By repeating this argument for $ \log_{1+\cetaup} n $ times, it must hold that $ \taujweak \le O(\log n/\alphanorm_0) $ with probability $ 1-O(n^{-11}/\log n) $.
\end{proof}

\subsection{Additive Drift of Bias}\label{sec:additive drift of bias}
In this section, we show that the bias between two non-weak opinions increases additively even when it is small (\cref{lem:additive drift}).
Fundamentally, our approach hinges on the observation that the square of the bias exhibits an additive drift.

\begin{lemma}[Additive drift of bias] \label{lem:additive drift}
  Let $ i,j $ be distinct opinions that are not weak at round $ 0 $.
  Consider the stopping times $\tauiweak,\taujweak, \taudeltaplus, \tauetaplus $ defined in \cref{def:stopping times,def:tauetaup} and let $\cweak=1/10$.
  Then, we have the following:
\begin{enumerate}
  \renewcommand{\labelenumi}{(\roman{enumi})}
  \item For 3-Majority, 
  let $\xdelta=\frac{1}{1000\sqrt{n}}$ and 
  suppose $\alphanorm_0=\Omega\qty(\sqrt{\frac{\log n}{n}})$.
  Then, there is a positive constant $c \in (0,1)$ such that
  \begin{align*} 
    \Pr\qty[ \min\qty{ \taudeltaplus,\tauiweak,\taujweak } > \frac{1}{\alphanorm_0}] \le 1-c.
  \end{align*}
  \item For 2-Choices, 
  let 
  $\xeta=\frac{1}{2000\sqrt{\e n}}$ and 
  suppose $\alphanorm_0\geq \frac{C(\log n)^2}{n}$ for a sufficiently large constant $C>0$.
  Then, there is a  positive constant $c \in (0,1)$ such that
  \begin{align*} 
    \Pr\qty[ \min\qty{ \tauetaplus,\tauiweak,\taujweak} > \frac{1}{\alphanorm_0} ] \le 1-c.
  \end{align*}
  \end{enumerate}
\end{lemma}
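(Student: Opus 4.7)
The plan is to prove the lemma by analyzing the second moment of the bias, exploiting the fact that $\delta_t^2$ exhibits an additive drift while both opinions remain non-weak. Throughout, let $\tau \defeq \min\{\taudeltaplus,\tauiweak,\taujweak\}$ and $T \defeq 1/\alphanorm_0$. First, I would establish the submartingale property of $\delta_t^2$ on the event $\{t-1<\tau\}$. By \cref{item:delta} of \cref{lem:basic inequality}, the multiplicative factor $r_t \defeq 1+\alpha_{t-1}(i)+\alpha_{t-1}(j)-\alphanorm_{t-1}$ satisfies $r_t\ge 1$ whenever both opinions are non-weak (since $\alpha_{t-1}(i)+\alpha_{t-1}(j)$ exceeds $\alphanorm_{t-1}$ for $\cweak<1/2$), and hence $\E_{t-1}[\delta_t]^2 \ge \delta_{t-1}^2$, so
\[
    \E_{t-1}[\delta_t^2] \;=\; \E_{t-1}[\delta_t]^2+\Var_{t-1}[\delta_t] \;\ge\; \delta_{t-1}^2+\Var_{t-1}[\delta_t].
\]

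Second, I would apply optional stopping to the stopped process $\tilde Y_t \defeq \delta_{t\wedge\tau}^2$ to obtain $\E[\tilde Y_T]\ge \delta_0^2+\sum_{t=1}^{T}\E[\indicator_{\tau>t-1}\Var_{t-1}[\delta_t]]$. Using \cref{item:lower bound for variance} of \cref{lem:two strong opinions}, on $\{t-1<\tau\}$ both opinions are non-weak, so $\alpha_{t-1}(i),\alpha_{t-1}(j)\ge (1-\cweak)\alphanorm_{t-1}$ and therefore $\Var_{t-1}[\delta_t]=\Omega(\alphanorm_{t-1}/n)$ for 3-Majority (respectively $\Omega(\alphanorm_{t-1}^2/n)$ for 2-Choices). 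Conditioning on the high-probability event from \cref{lem:taunormdown is large} that $\alphanorm_t\ge(1-\cnormdown)\alphanorm_0$ for all $t\le T$ (which fails only with probability $O(n^{-10})$), this gives $\E[\tilde Y_T]\ge c_1 T\alphanorm_0 \Pr[\tau>T]/n=\Omega(\Pr[\tau>T]/n)$ for 3-Majority, and an analogous lower bound for 2-Choices after translating to the scaled bias $\eta_t$ via \cref{def:tauetaup}.

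Third, I would upper bound $\E[\tilde Y_T]$. On $\{\tau>T\}$ we trivially have $\tilde Y_T<\xdelta^2=10^{-6}/n$. On the stopping event $\{\tau\le T\}$ the key observation is that $|\delta_{\tau-1}|<\xdelta$ (since $\taudeltaplus\ge\tau$), so $\delta_\tau^2\le 2\xdelta^2+2(\delta_\tau-\delta_{\tau-1})^2$. Applying the $(2/n,\variance)$-Bernstein condition for the one-step jump from \cref{item:BC for delta} of \cref{lem:Bernstein condition for sync processes} with $\variance=O(\alphanorm_{\tau-1}/n)$, I would deduce $\E[(\delta_\tau-\delta_{\tau-1})^2\mid\tau\le T]=O(\alphanorm_0/n)$ and hence $\E[\tilde Y_T\mid\tau\le T]=O(1/n)$. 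Combining the two bounds forces $\Pr[\tau\le T]\ge c$ for an absolute constant $c>0$, because the lower-bound constant $c_1$ coming from $\constref{lem:two strong opinions}^3$ is numerically well above the overshoot constant $\xdelta^2\cdot n=10^{-6}$.

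The main obstacle I anticipate is the overshoot control in the third step: the crude bound $\delta_\tau^2\le 1$ only yields $\Pr[\tau\le T]=\Omega(1/n)$, far from constant. Extracting a constant-probability lower bound requires a sharp bound on the one-step jump of $\delta_t$ at the stopping instant, ruling out the scenario in which $\tau$ is triggered by an atypically large deviation; this hinges delicately on the sub-Bernstein tail of $\delta_t-\delta_{t-1}$ and on choosing the threshold constant $0.001$ in $\xdelta$ small enough to leave room between the lower and upper bounds. For 2-Choices, the same second-moment strategy applies to $\eta_t^2$ using the stronger variance bound $\Omega(\alphanorm_0^2/n)$, and the threshold $\xeta=1/(2000\sqrt{\e n})$ is precisely calibrated so that the drift $\Omega(\alphanorm_0^2\cdot T/n)=\Omega(\alphanorm_0/n)$ accumulated over $T=1/\alphanorm_0$ rounds still dominates the analogous overshoot term.
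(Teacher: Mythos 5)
Your high-level strategy is the same as the paper's: track the second moment of the bias, use the additive drift $\E_{t-1}[\delta_t^2]\ge\delta_{t-1}^2+\Var_{t-1}[\delta_t]$ with the variance lower bound from \cref{lem:two strong opinions}, apply optional stopping, and then control the overshoot at the stopping time. The paper organizes this as \cref{lem:taudeltaplus OST} (OST giving $\E[\tau]\le\E[\delta_\tau^2]/\varianceref{lem:taudeltaplus OST}$), \cref{lem:taudeltaplus general} (bounding $\E[\delta_\tau^2]$) and \cref{lem:taudeltaplus one step sync} (a CLT argument), followed by Markov. So your skeleton is the right one, but there are two concrete gaps.

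First, your overshoot control is not just unproven — as formulated it is vacuous. You bound $\delta_\tau^2\le 2\xdelta^2+2(\delta_\tau-\delta_{\tau-1})^2$ and then wish to assert $\E[(\delta_\tau-\delta_{\tau-1})^2\mid\tau\le T]=O(\alphanorm_0/n)$. The only elementary route is $\E[(\delta_\tau-\delta_{\tau-1})^2\indicator_{\tau\le T}]=\sum_{t\le T}\E[\indicator_{\tau=t}(\delta_t-\delta_{t-1})^2]\le\sum_{t\le T}\E[\indicator_{\tau>t-1}\E_{t-1}[(\delta_t-\delta_{t-1})^2]]$, and $\E_{t-1}[(\delta_t-\delta_{t-1})^2]=\Var_{t-1}[\delta_t]+(\text{drift})^2$. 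But then your overshoot upper bound is at least $\sum_{t\le T}\E[\indicator_{\tau>t-1}\Var_{t-1}[\delta_t]]$, which is \emph{exactly} your submartingale lower bound on $\E[\delta_{T\wedge\tau}^2]$. After the factor of $2$ from $(a+b)^2\le 2a^2+2b^2$, the upper bound is twice the lower bound plus $O(T\xdelta^2)$, and the comparison collapses to $0\le(\text{positive})$. To get a non-vacuous conclusion one must show the overshoot is \emph{strictly smaller} than the accumulated drift — the paper achieves this with a different truncation: $\E[\delta_\tau^2]=\E[\delta_\tau^2\indicator_{\delta_\tau^2\le L}]+\E[\delta_\tau^2\indicator_{\delta_\tau^2>L}]$ with $L=16\xdelta^2$, then proves the per-step tail contribution $\indicator_{\tau>t-1}\E_{t-1}[\delta_t^2\indicator_{\delta_t^2>L}]\le\varianceref{lem:taudeltaplus OST}/2$ via a Bernstein-tail integral (\cref{lem:taudeltaplus general}). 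The $1/2$ is the margin; it comes from the hypothesis $\xdelta^2/\varianceref{lem:taudeltaplus OST}\ge\constref{lem:taudeltaplus general}$, not from a bound on the conditional jump.

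Second, you omit the case where the one-step variance is comparable to or larger than $\xdelta^2$ — e.g.\ for 3-Majority when $\alpha_0(i)=\Theta(1)$, so $\varianceref{lem:taudeltaplus OST}=\Theta(1/n)=\Theta(\xdelta^2)$. The tail-truncation argument requires $\xdelta^2/\varianceref{lem:taudeltaplus OST}$ to be a large constant and breaks otherwise. The paper handles the complementary regime with a separate lemma (\cref{lem:taudeltaplus one step sync}): when $\xdelta^2/\varianceref{lem:taudeltaplus OST}=O(1)$ the central limit theorem shows a \emph{single} step already crosses $\xdelta$ with constant probability, so $\Pr[\tau>1]\le 1-c$. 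Without that case split your argument does not cover the full range of $\alpha_0(i),\alpha_0(j)$ allowed by the lemma.

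Two smaller remarks. The paper controls the range of $\alpha_t(i),\alpha_t(j)$ during the window with the stopping times $\tauiup,\taujup,\tauidown,\taujdown$ (via $\tau_0$ and \cref{lem:drift analysis for basic}), which is what \cref{lem:taudeltaplus OST,lem:taudeltaplus general} need; conditioning only on $\taunormdown$ being large as you propose gives you $\alphanorm_t\gtrsim\alphanorm_0$ but not the matching upper control on $\alpha_t(i),\alpha_t(j)$ that the variance bounds in those lemmas require. And for 2-Choices, the paper additionally has to pass from $\delta_t$ to the scaled bias $\eta_t$ and to show $\zeta_1\le 4\e\zeta_0$ with high probability before invoking the CLT one-step lemma; your proposal waves at this translation but does not set it up.
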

Now, we introduce the following key lemmas \cref{lem:taudeltaplus OST,lem:taudeltaplus general,lem:taudeltaplus one step sync}.
The first one, \cref{lem:taudeltaplus OST}, can be deduced from a natural consequence of the optimal stopping theorem.
\begin{lemma}[Optimal stopping theorem and $\delta_t^2$]
    \label{lem:taudeltaplus OST}
    Let $ i,j $ be distinct opinions.
    Consider the stopping times $ \tauiweak,\taujweak,\tauidown,\taujdown $ defined in \cref{def:stopping times}.
    Let $\tau\defeq \min\{\tauiweak,\taujweak,\tauidown,\taujdown\}$.
    Let $\constref{lem:two strong opinions}= 1-\frac{1}{\sqrt{2(1-\cweak)}}$ be a positive constant defined in \cref{lem:two strong opinions}.
    Then, we have
    \[
    \E[\tau]\leq \frac{\E\qty[\delta_\tau(i,j)^2]}{\varianceref{lem:taudeltaplus OST}},
    \]
    where
    \begin{align*}
    \varianceref{lem:taudeltaplus OST}=
    \begin{cases*}
         \constref{lem:two strong opinions}^3(1-\calphadown)\frac{\max\{\alpha_0(i),\alpha_0(j)\}}{n}& \text{for 3-Majority},\\
         \constref{lem:two strong opinions}^2(1-\calphadown)^2\frac{\max\{\alpha_0(i),\alpha_0(j)\}^2}{n}& \text{for 2-Choices}
    \end{cases*}.
    \end{align*}
\end{lemma}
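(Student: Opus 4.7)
The plan is to establish an additive drift for $\delta_t^2$ whenever $t-1<\tau$, namely $\E_{t-1}[\delta_t^2]\geq \delta_{t-1}^2+\varianceref{lem:taudeltaplus OST}$, and then apply the optional stopping theorem (OST) to the submartingale $M_t:=\delta_{t\wedge\tau}^2 - \varianceref{lem:taudeltaplus OST}\cdot(t\wedge \tau)$. Since $\delta_t^2\in[0,1]$, all integrability issues will be trivial.

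For the drift, I would decompose $\E_{t-1}[\delta_t^2] = \E_{t-1}[\delta_t]^2 + \Var_{t-1}[\delta_t]$ and lower-bound each term. For the mean term, \cref{item:delta} of \cref{lem:basic inequality} gives $\E_{t-1}[\delta_t]=\delta_{t-1}(1+\alpha_{t-1}(i)+\alpha_{t-1}(j)-\alphanorm_{t-1})$, and \cref{item:lower bound for multipricative term} of \cref{lem:two strong opinions} (applicable since $t-1<\min\{\tauiweak,\taujweak\}$ and $\cweak<1/2$) shows that the bracketed factor is nonnegative and thus at least $1$ in absolute value; squaring yields $\E_{t-1}[\delta_t]^2\geq \delta_{t-1}^2$. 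For the variance term, \cref{item:lower bound for variance} of \cref{lem:two strong opinions} gives lower bounds in terms of $\alpha_{t-1}(i)+\alpha_{t-1}(j)$ for 3-Majority and $\alpha_{t-1}(i)^2+\alpha_{t-1}(j)^2$ for 2-Choices. Since $t-1<\min\{\tauidown,\taujdown\}$ implies $\alpha_{t-1}(\ell)\geq (1-\calphadown)\alpha_0(\ell)$ for both $\ell\in\{i,j\}$, I can lower-bound these sums by $(1-\calphadown)\max\{\alpha_0(i),\alpha_0(j)\}$ and $(1-\calphadown)^2\max\{\alpha_0(i),\alpha_0(j)\}^2$ respectively, exactly matching the two cases in the claimed value of $\varianceref{lem:taudeltaplus OST}$.

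To conclude, the drift bound implies $(M_t)_{t\geq 0}$ is a submartingale: on $\{t-1\ge \tau\}$ both increments vanish, while on $\{t-1<\tau\}$ we have $t\wedge\tau=t$ and the drift estimate applies directly. Applying OST at the bounded time $T$ yields $\E[\delta_{T\wedge \tau}^2]\geq \delta_0^2 + \varianceref{lem:taudeltaplus OST}\cdot \E[T\wedge\tau]$; combined with $\delta_{T\wedge\tau}^2\leq 1$ this gives $\E[T\wedge\tau]\leq 1/\varianceref{lem:taudeltaplus OST}$ uniformly in $T$. Monotone convergence then shows $\tau<\infty$ almost surely and $\E[\tau]<\infty$, and bounded convergence lets me pass $T\to\infty$ in the OST inequality to obtain $\E[\delta_\tau^2]\geq \delta_0^2 + \varianceref{lem:taudeltaplus OST}\cdot \E[\tau]\geq \varianceref{lem:taudeltaplus OST}\cdot\E[\tau]$, which rearranges to the desired bound. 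There is no real obstacle beyond assembling the two items of \cref{lem:two strong opinions} with the moment formulas in \cref{lem:basic inequality}; the OST/limit step is routine thanks to the uniform boundedness of $\delta_t^2$.
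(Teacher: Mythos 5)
Your proposal is correct and follows essentially the same route as the paper: establish the additive drift $\E_{t-1}[\delta_t^2]\ge\delta_{t-1}^2+\varianceref{lem:taudeltaplus OST}$ from \cref{item:delta} of \cref{lem:basic inequality} and both items of \cref{lem:two strong opinions}, then turn $\varianceref{lem:taudeltaplus OST}\cdot(t\wedge\tau)-\delta_{t\wedge\tau}^2$ (the negative of your $M_t$) into a supermartingale and apply the optional stopping theorem. The only deviation is cosmetic: you truncate at a deterministic $T$ and pass to the limit, which in fact supplies the $\E[\tau]<\infty$ integrability hypothesis that the paper's direct invocation of \cref{thm:OST} leaves implicit.
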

\begin{proof}[Proof of \cref{lem:taudeltaplus OST}]
    Suppose $\tau>t-1$.
    First, we show that $\Var_{t-1}\geq \varianceref{lem:taudeltaplus OST}$ for 3-Majority and 2-Choices.
    Indeed, for 3-Majority, 
    \begin{align*}
        \Var_{t-1}\qty[\delta_t]
        &\geq \constref{lem:two strong opinions}^3\frac{\alpha_{t-1}(i)+\alpha_{t-1}(j)}{n} &&(\because \textrm{\cref{item:lower bound for variance} of \cref{lem:two strong opinions}}) \nonumber\\
        &\geq \constref{lem:two strong opinions}^3(1-\calphadown)\frac{\alpha_{0}(i)+\alpha_{0}(j)}{n} &&(\because \tauidown,\taujdown>t-1) \nonumber\\
        &\geq \varianceref{lem:taudeltaplus OST}
    \end{align*}
    holds and for 2-Choices, 
    \begin{align*}
        \Var_{t-1}\qty[\delta_t]
        &\geq \constref{lem:two strong opinions}^2\frac{\alpha_{t-1}(i)^2+\alpha_{t-1}(j)^2}{n} &&(\because \textrm{\cref{item:lower bound for variance} of \cref{lem:two strong opinions}}) \nonumber\\
        &\geq \constref{lem:two strong opinions}^2(1-\calphadown)^2\frac{\alpha_{0}(i)^2+\alpha_{0}(j)^2}{n} &&(\because \tauidown,\taujdown>t-1) \nonumber\\
        &\geq \varianceref{lem:taudeltaplus OST}
    \end{align*}
    holds.
    Hence, for both models, we have
    \begin{align*}
        \E_{t-1}[\delta_t^2]
        &=\E_{t-1}[\delta_t]^2+\Var_{t-1}[\delta_t]\\
        &=\delta_{t-1}^2\qty(1+\alpha_{t-1}(i)+\alpha_{t-1}(j)-\alphanorm_{t-1})^2+\Var_{t-1}[\delta_t]\\
        &\geq \delta_{t-1}^2\qty(1+\frac{1-2\cweak}{1-\cweak}\max\{\alpha_{t-1}(i),\alpha_{t-1}(j)\})^2+\varianceref{lem:taudeltaplus OST} &&(\text{$\because$ \cref{item:lower bound for multipricative term} of \cref{lem:two strong opinions}})\\
        &\geq \delta_{t-1}^2+\varianceref{lem:taudeltaplus OST}.
    \end{align*}
    Let $X_t=\varianceref{lem:taudeltaplus OST} \cdot t-\delta_t^2$
    and $Y_t=X_{t\wedge \tau}$.
    Then,
    \begin{align*}
    \E_{t-1}\qty[Y_t-Y_{t-1}]
    =\indicator_{\tau>t-1}\E_{t-1}\qty[X_t-X_{t-1}]
    =\indicator_{\tau>t-1}\qty(\varianceref{lem:taudeltaplus OST}\cdot t-\E_{t-1}[\delta_t^2]-\varianceref{lem:taudeltaplus OST}\cdot (t-1)+\delta_{t-1}^2)
    \leq 0,
    \end{align*}
    i.e., $(Y_t)_{t\in \Nat_0}$ is a supermartingale.
    Hence, applying the optimal stopping theorem (\cref{thm:OST}),
    $\E[Y_\tau]\leq \E[Y_0]=-\delta_0^2\leq 0$.
    Furthermore, $\E[Y_\tau]=\E[X_\tau]=\varianceref{lem:taudeltaplus OST}\E[\tau]-\E[\delta_\tau^2]$ holds from definition.
    Thus, 
    \begin{align*}
        \varianceref{lem:taudeltaplus OST}\E[\tau]-\E[\delta_\tau^2]=\E[Y_\tau]\leq \E[Y_0]\leq 0
    \end{align*}
    holds and we obtain the claim.
\end{proof}
According to \cref{lem:taudeltaplus OST}, a crucial step in obtaining an upper bound on $\E[\tau]$ is to bound $\E[\delta_\tau^2]$.
We establish this bound in a special case via \cref{lem:taudeltaplus general}, while \cref{lem:taudeltaplus one step sync} covers the remaining cases in the proof of \cref{lem:additive drift}. 
In particular, bounding the jump in bias at the stopping time in the synchronous process is one of the most complicated parts of this paper. 
We have deferred the proofs of \cref{lem:taudeltaplus general,lem:taudeltaplus one step sync} to \cref{sec:deferred gap grows}.
\begin{lemma}[Bound on the bias at a stopping time]
    \label{lem:taudeltaplus general}
    Let $ i,j $ be distinct opinions.
    Consider the stopping times 
    defined in \cref{def:stopping times} and let $\tau=\min\{\taudeltaplus,\tauiweak,\taujweak,\tauiup,\taujup,\tauidown,\taujdown\}$.
    Let $\varianceref{lem:taudeltaplus OST}$ be a parameter defined in \cref{lem:taudeltaplus OST} and
    let $\constref{lem:two strong opinions}= 1-\frac{1}{\sqrt{2(1-\cweak)}}$ be a positive constant defined in \cref{lem:two strong opinions}.
    Let $\Cdeltavarplus$ be a positive constant defined by
    \begin{align*}
        \Cdeltavarplus=
        \begin{cases*}
            \frac{2(1+\calphaup)}{\constref{lem:two strong opinions}^3(1-\calphadown)}& \text{for 3-Majority}\\
            \frac{2(1+\calphaup)^2(3-2\cweak)}{\constref{lem:two strong opinions}^2(1-\calphadown)^2(1-\cweak)}& \text{for 2-Choices}.
        \end{cases*}
    \end{align*}
    Let $\constref{lem:taudeltaplus general}\geq \Cdeltavarplus/2>0$ be a sufficiently large constant such that $x\exp(-\frac{2}{\Cdeltavarplus}x)\leq \frac{1}{100}$ holds for all $x\geq \constref{lem:taudeltaplus general}$.
    Suppose that $\xdelta\geq \frac{2\log n}{n}$ and $\frac{\xdelta^2}{\varianceref{lem:taudeltaplus OST}}\geq \constref{lem:taudeltaplus general}$ hold.
    Then, 
    \begin{align*}
    \E[\delta_\tau(i,j)^2]\leq 16\xdelta^2+\frac{\varianceref{lem:taudeltaplus OST}}{2}\E[\tau].
    \end{align*}
\end{lemma}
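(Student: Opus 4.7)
The plan is to split $\E[\delta_\tau(i,j)^2]$ into a ``moderate'' contribution (where $|\delta_\tau|\leq 4\xdelta$, bounded trivially by $16\xdelta^2$) and a ``large jump'' tail, which will be controlled via a Bernstein-type concentration inequality. The $\frac{\varianceref{lem:taudeltaplus OST}}{2}\E[\tau]$ term in the conclusion exists precisely to absorb the latter.

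First I would reduce the problem to bounding only the large-jump contribution. Assuming $|\delta_0|<\xdelta$ (so that $\tau\geq 1$; the other case is straightforward since $\delta_0^2\leq 16\xdelta^2$ in any application), observe that on the event $\{|\delta_\tau|>4\xdelta\}$, the only stopping criterion in $\tau=\min\{\taudeltaplus,\tauiweak,\taujweak,\tauiup,\taujup,\tauidown,\taujdown\}$ that can fire is $\taudeltaplus$: every other criterion leaves $|\delta_\tau|<\xdelta$. Hence $|\delta_{\tau-1}|<\xdelta$ (since $\tau-1<\taudeltaplus$), and the one-step jump $J_\tau\defeq\delta_\tau-\delta_{\tau-1}$ satisfies $|J_\tau|>3\xdelta$. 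A short calculation, $|\delta_\tau|\leq|\delta_{\tau-1}|+|J_\tau|<\tfrac{4}{3}|J_\tau|$, then yields $\delta_\tau^2\leq 2J_\tau^2$ on this event, reducing the task to proving $\E[J_\tau^2\indicator_{|J_\tau|>3\xdelta}]\leq \frac{\varianceref{lem:taudeltaplus OST}}{4}\E[\tau]$.

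Next I would convert this stopping-time expectation into a sum over $t$ via $\{\tau=t\}\subseteq\{\tau>t-1\}$ and the $\calF_{t-1}$-measurability of $\indicator_{\tau>t-1}$:
\[
  \E[J_\tau^2 \indicator_{|J_\tau|>3\xdelta}]\leq \sum_{t\geq 1}\E\qty[\indicator_{\tau>t-1}\cdot \E_{t-1}[J_t^2 \indicator_{|J_t|>3\xdelta}]].
\]
On $\{\tau>t-1\}$ the constraints defining all the stopping times hold at $t-1$, so by \cref{lem:basic inequality,lem:two strong opinions} the ratio $\Var_{t-1}[\delta_t]/\varianceref{lem:taudeltaplus OST}$ is at most $\Cdeltavarplus$, and $|\E_{t-1}[\delta_t]-\delta_{t-1}|=|\delta_{t-1}||\alpha_{t-1}(i)+\alpha_{t-1}(j)-\alphanorm_{t-1}|<\xdelta$. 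Combining this with \cref{item:BC for delta} of \cref{lem:Bernstein condition for sync processes} and the Chernoff-type bound that follows from the $(2/n,\Cdeltavarplus\varianceref{lem:taudeltaplus OST})$-Bernstein condition, I would deduce $\Pr[|J_t|>3\xdelta\mid\calF_{t-1}]\leq 2\exp(-2\xdelta^2/(\Cdeltavarplus\varianceref{lem:taudeltaplus OST}))$ in the regime of interest; the hypothesis $\xdelta\geq 2\log n/n$ is invoked to ensure the linear term in Bernstein's inequality is dominated by the quadratic one. Integrating the tail of $J_t^2$ then yields $\E_{t-1}[J_t^2\indicator_{|J_t|>3\xdelta}]=O(\xdelta^2\exp(-2x/\Cdeltavarplus))$, where $x=\xdelta^2/\varianceref{lem:taudeltaplus OST}\geq\constref{lem:taudeltaplus general}$.

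Summing over $t$ and using $\sum_t\Pr[\tau>t-1]=\E[\tau]$ gives $\E[J_\tau^2\indicator_{|J_\tau|>3\xdelta}]\leq O(x\exp(-2x/\Cdeltavarplus))\cdot\varianceref{lem:taudeltaplus OST}\E[\tau]$, and the defining hypothesis $x\exp(-2x/\Cdeltavarplus)\leq 1/100$ (with $x\geq\constref{lem:taudeltaplus general}$) then makes this at most $\varianceref{lem:taudeltaplus OST}/4\cdot\E[\tau]$, completing the proof. The main obstacle I expect is technical bookkeeping: carefully verifying the Bernstein tail bound so that the exponent $2x/\Cdeltavarplus$ comes out with the right constant for both 3-Majority and 2-Choices (whose variance formulas in \cref{lem:basic inequality} differ), and integrating $\E_{t-1}[J_t^2\indicator_{|J_t|>3\xdelta}]$ over the Gaussian-like tail (with threshold contribution $9\xdelta^2\Pr[|J_t|>3\xdelta]$ plus the $\int_{3\xdelta}^{\infty} 2h\Pr[|J_t|>h]\,\d h$ piece) so that the resulting constant is compatible with the desired $\varianceref{lem:taudeltaplus OST}/4$.
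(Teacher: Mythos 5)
Your proposal mirrors the paper's proof in structure: split $\E[\delta_\tau^2]$ at the threshold $|\delta_\tau|=4\xdelta$, bound the moderate contribution by $16\xdelta^2$, rewrite the large-jump contribution as $\sum_{t\ge 1}\E\bigl[\indicator_{\tau>t-1}\,\E_{t-1}[\cdot]\bigr]$, control each conditional term via the variance bound $\Var_{t-1}[\delta_t]\le\Cdeltavarplus\varianceref{lem:taudeltaplus OST}$ together with a Bernstein tail, and then invoke the two hypotheses to absorb the sum into $\frac{\varianceref{lem:taudeltaplus OST}}{2}\E[\tau]$. The differences are cosmetic: you center the per-step increment at $\delta_{t-1}$ (via $J_t=\delta_t-\delta_{t-1}$) where the paper centers at $\E_{t-1}[\delta_t]$ and uses $|\E_{t-1}[\delta_t]|\le 2|\delta_{t-1}|\le 2\xdelta$, and you integrate the tail as $9\xdelta^2\,\Pr+\int 2h\,\Pr\,dh$ where the paper uses $\int_0^1\Pr_{t-1}[\delta_t^2>y\vee L]\,dy$; both produce the same deviation threshold $2\xdelta$. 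The one place where you need to be more careful than you indicate — and it is exactly the bookkeeping you flagged at the end — is that the Bernstein tail has exponent $\frac{2\xdelta^2}{\Cdeltavarplus\varianceref{lem:taudeltaplus OST}+\Theta(\xdelta/n)}$; you cannot simply drop the linear term to obtain $\exp(-2x/\Cdeltavarplus)$. The paper splits it explicitly via $\exp\bigl(-\tfrac{a}{b+c}\bigr)\le\exp\bigl(-\tfrac{a}{2b}\bigr)+\exp\bigl(-\tfrac{a}{2c}\bigr)$, then uses $\xdelta\ge 2\log n/n$ (so the linear tail is $n^{-\Omega(1)}$) together with $\xdelta^2/\varianceref{lem:taudeltaplus OST}\le\mathrm{poly}(n)$ to make the linear-tail contribution vanish, while the quadratic tail is absorbed by the hypothesis on $\constref{lem:taudeltaplus general}$. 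Once that split is carried out, your sketch lands exactly on the paper's argument.
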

\begin{lemma}
    \label{lem:taudeltaplus one step sync}
    Let $ i,j $ be distinct opinions.
    Consider the stopping times $\taudeltaplus,\tauiweak,\taujweak$ defined in \cref{def:stopping times}.
    Let $\tau=\min\{\taudeltaplus,\tauiweak,\taujweak\}$
    and $\varianceref{lem:taudeltaplus OST}$ be a positive parameter defined in \cref{lem:taudeltaplus OST}.
    Suppose $\frac{\xdelta^2}{\varianceref{lem:taudeltaplus OST}}\leq C$ for some positive constant $C>0$.
    Then, $\Pr\qty[\tau>1]\leq 1-c$ holds for some positive constant $c\in (0,1)$.
\end{lemma}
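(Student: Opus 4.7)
The plan is to prove $\Pr[\tau \le 1] \ge c$ by showing that, with constant probability, $|\delta_1| \ge \xdelta$, which forces $\taudeltaplus \le 1$ and hence $\tau \le 1$. If the initial configuration already satisfies $\tau = 0$ the claim is trivial, so assume $\tau \ge 1$, whence $|\delta_0| < \xdelta$ and both $i,j$ are non-weak at round $0$. Writing $\mu_1 := \E_0[\delta_1]$ and $\sigma_1^2 := \Var_0[\delta_1]$, \cref{item:delta} of \cref{lem:basic inequality} gives $|\mu_1| \le 2|\delta_0| < 2\xdelta$, while \cref{item:lower bound for variance} of \cref{lem:two strong opinions} combined with the hypothesis yields $\sigma_1 \ge \sqrt{\varianceref{lem:taudeltaplus OST}} \ge \xdelta/\sqrt{C}$.

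The key observation is that $n\delta_1 = \sum_{v \in V} \xi_v$, where $\xi_v := \indicator_{\opinion_1(v) = i} - \indicator_{\opinion_1(v) = j} \in \{-1, 0, 1\}$ are independent conditioned on round $0$ (iid for 3-Majority; independent but non-identically distributed for 2-Choices, where the distribution of $\xi_v$ depends on $\opinion_0(v)$). Applying the Berry--Esseen theorem to the standardized sum $(\delta_1-\mu_1)/\sigma_1$, and using $|\xi_v - \E_0[\xi_v]| \le 2$ to bound each third absolute moment by $2\Var_0[\xi_v]$, the Kolmogorov-distance error is $O(1/(n\sigma_1))$. The hypotheses of the ambient \cref{lem:additive drift} (namely $\alphanorm_0 = \Omega(\sqrt{\log n/n})$ for 3-Majority and its analog for 2-Choices), together with non-weakness of $i,j$, force $n\alpha_0(i) \to \infty$ and hence $n\sigma_1 \to \infty$, so this error is $o(1)$.

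The Gaussian approximation then renders $(\delta_1 - \mu_1)/\sigma_1 \approx N(0,1)$. Assume WLOG $\mu_1 \ge 0$ (the case $\mu_1 < 0$ is symmetric). Since $\mu_1 + \xdelta \le 3\xdelta \le 3\sqrt{C}\,\sigma_1$, we obtain
\[
\Pr_0\qty[\delta_1 \le -\xdelta] = \Pr_0\qty[\frac{\delta_1 - \mu_1}{\sigma_1} \le -\frac{\mu_1 + \xdelta}{\sigma_1}] \ge \Phi(-3\sqrt{C}) - o(1),
\]
which is a positive constant depending only on $C$. Hence $\Pr_0[|\delta_1| \ge \xdelta] \ge c$ for some $c = c(C) > 0$ once $n$ is sufficiently large, proving the lemma.

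The main technical obstacle is the Berry--Esseen bookkeeping in the 2-Choices setting, where the $\xi_v$'s are merely independent rather than iid, with three possible conditional distributions indexed by $\opinion_0(v) \in \{i, j, \text{other}\}$; the non-iid version of Berry--Esseen applies cleanly since the third-moment sum remains bounded by a constant multiple of the total variance, again thanks to $|\xi_v - \E_0[\xi_v]| \le 2$. A secondary check is verifying $n\sigma_1 \to \infty$ under the specific hypotheses of \cref{lem:additive drift}, which follows immediately from the lower bound on $\alpha_0(i)$ implied by non-weakness and the lower bound on $\alphanorm_0$.
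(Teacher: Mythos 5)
Your proof is correct and takes essentially the same approach as the paper: both argue via a Gaussian approximation that $|\delta_1|$ exceeds $\xdelta$ with constant probability, exploiting $\Var_0[\delta_1]\ge\varianceref{lem:taudeltaplus OST}\ge\xdelta^2/C$ so that the threshold lies within $O(1)$ standard deviations of $\E_0[\delta_1]$. Where the paper simply cites the central limit theorem, you invoke Berry--Esseen and explicitly verify that $n\sigma_1\to\infty$ under the ambient hypotheses of \cref{lem:additive drift}, which makes the $o(1)$ error term rigorous rather than merely asserted; this is a genuine, if small, tightening of the argument.
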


\begin{proof}[Proof of \cref{lem:additive drift}]
  We set the parameter values to $\calphaup,\calphadown,\cweak,\epsilon=1/10$.
  Then, the constant factors $\constrefs{lem:drift analysis for basic}{item:tauiup}$ and $\constrefs{lem:drift analysis for basic}{item:tauidown}$ appearing in \cref{lem:drift analysis for basic} become 
  $\constrefs{lem:drift analysis for basic}{item:tauiup}=\frac{ (1 - \varepsilon) \ciup }{(1+\ciup)^2}=\frac{9}{121}>\frac{1}{20}$ and 
  $\constrefs{lem:drift analysis for basic}{item:tauidown}=\frac{(1-\cweak)(1-\varepsilon)\calphadown}{\cweak (1+\ciup)^2}=\frac{81}{121}>\frac{1}{20}$.
  Hence, for $T\defeq \frac{1}{20\max\{\alpha_0(i),\alpha_0(j)\}}$, both $T<\frac{\constrefs{lem:drift analysis for basic}{item:tauiup}}{\alpha_0(i)}$ and $T<\frac{\constrefs{lem:drift analysis for basic}{item:tauidown}}{\alpha_0(j)}$ hold.
  
  Now, we consider the stopping time $\tau_0=\min\{\tauiweak,\taujweak,\tauiup,\taujup,\tauidown,\taujdown,T\}$.
  Then, we have the following:
  \begin{itemize}
    \item Suppose $ \tau_0 = \min\{\tauiup,\taujup\}$, which implies $ \tauiup \le T$ or $ \taujup \le T $. 
    From \cref{item:tauiup} of \cref{lem:drift analysis for basic}, this occurs with probability
          \begin{align*}
          \begin{cases}
              \exp(-\Omega(n\alpha_0(i)^2))+\exp(-\Omega(n\alpha_0(j)^2))\leq \exp(-\Omega(n\alphanorm_0^2))=n^{-\Omega(1)} & \text{for 3-Majority},\\
              \exp(-\Omega(n\alpha_0(i)))+\exp(-\Omega(n\alpha_0(j)))\leq \exp(-\Omega(n\alphanorm_0))=n^{-\Omega(1)} & \text{for 2-Choices}.
          \end{cases}
      \end{align*}
    \item Suppose $ \tau_0 = \min\{\tauidown,\taujdown\}$, which implies $ \tauidown\le \min\qty{\tauiup, \tauiweak, T }$ or $ \taujdown\le \min\qty{\taujup, \taujweak, T }$.
      This occurs with probability
      \begin{align*}
          \begin{cases}
              \exp(-\Omega(n\alpha_0(i)^2))+\exp(-\Omega(n\alpha_0(j)^2))\leq \exp(-\Omega(n\alphanorm_0^2))=n^{-\Omega(1)} & \text{for 3-Majority},\\
              \exp(-\Omega(n\alpha_0(i)))+\exp(-\Omega(n\alpha_0(j)))\leq \exp(-\Omega(n\alphanorm_0))=n^{-\Omega(1)} & \text{for 2-Choices}.
          \end{cases}
      \end{align*}
    \end{itemize}
    Consequently, the following holds for both models:
    \begin{align}
      \Pr\qty[\tau_0=\min\{\tauiweak,\taujweak,T\}]\geq 1- n^{-\Omega(1)}.
      \label{eq:add tauijupdown}
    \end{align}
    In the following, write $\cdeltaplus=1/1000$ for simplicity.

  \paragraph*{Proof for 3-Majority.}
%
  Let $\tau=\min\{\taudeltaplus,\tauiweak,\taujweak,\tauiup,\tauidown,\taujup,\taujdown\}$.
  In the following, we apply \cref{lem:taudeltaplus OST,lem:taudeltaplus general} for the case where 
  $\frac{\xdelta^2}{\varianceref{lem:taudeltaplus OST}}\geq \constref{lem:taudeltaplus general}$
  and \cref{lem:taudeltaplus one step sync} for the other case.
  Note that $\xdelta=\frac{\cdeltaplus}{\sqrt{n}}\geq \frac{2\log n}{n}$ holds for a sufficiently large $n$.
  For the first case, applying \cref{lem:taudeltaplus OST,lem:taudeltaplus general}, we have
  \begin{align*}
    \E[\tau]\leq \frac{\E[\delta_\tau^2]}{\varianceref{lem:taudeltaplus OST}}\leq \frac{16(\cdeltaplus)^2}{\constref{lem:two strong opinions}^3(1-\calphadown)}\cdot \frac{1}{\max\{\alpha_0(i),\alpha_0(j)\}}+\frac{\E[\tau]}{2},
  \end{align*}
  i.e., $\E[\tau]\leq \frac{32(\cdeltaplus)^2}{\constref{lem:two strong opinions}^3(1-\calphadown)}\cdot \frac{1}{\max\{\alpha_0(i),\alpha_0(j)\}}$. 
  Here, $\constref{lem:two strong opinions}>0$ is a positive constant defined in \cref{lem:two strong opinions}.
  Hence, from $\frac{64(\cdeltaplus)^2}{\constref{lem:two strong opinions}^3(1-\calphadown)}=\frac{27+12\sqrt{5}}{12500} <\frac{1}{20}$ and Markov inequality,
    \begin{align}
      \Pr\qty[\tau> T]
      \leq 
      \Pr\qty[\frac{64(\cdeltaplus)^2}{\constref{lem:two strong opinions}^3(1-\calphadown)\max\{\alpha_0(i),\alpha_0(j)\}}]
      \leq \frac{1}{2} \label{eq:taudeltaplus case1 3maj}
    \end{align}
    holds (recall $T=\frac{1}{20\max\{\alpha_0(i),\alpha_0(j)\}}$). 
    
    Recall $\tau_0=\min\{\tauiweak,\taujweak,\tauiup,\taujup,\tauidown,\taujdown,T\}$.
    From $\frac{1}{\alphanorm_0}\geq T$, \cref{eq:add tauijupdown,eq:taudeltaplus case1 3maj}
    \begin{align*}
      &\Pr\qty[\min\{\taudeltaplus,\tauiweak,\taujweak\}>\frac{1}{\alphanorm_0}] \\
      &\leq \Pr\qty[\min\{\taudeltaplus,\tauiweak,\taujweak\}>T]\\
      &=\Pr\qty[\min\{\taudeltaplus,\tauiweak,\taujweak,T\}>T\textrm{ and }\tau_0=\min\{\tauiweak,\taujweak,T\}]+n^{-\Omega(1)}\\
      &\leq \Pr\qty[\min\{\taudeltaplus,\tauiweak,\taujweak,\tauiup,\tauidown,\taujup,\taujdown,T\}>T]+n^{-\Omega(1)}\\
      &\leq \frac{1}{2}+n^{-\Omega(1)}
    \end{align*}
    holds for this case.

Next, consider the other case where $\frac{\xdelta^2}{\varianceref{lem:taudeltaplus OST}}\leq \constref{lem:taudeltaplus general}$.
  Then, from \cref{lem:taudeltaplus one step sync}, we have
  \begin{align*}
    \Pr\qty[\min\{\taudeltaplus,\tauiweak,\taujweak\}>\frac{1}{\alphanorm_0}]
    \leq \Pr\qty[\min\{\taudeltaplus,\tauiweak,\taujweak\}>1]\leq 1-c
  \end{align*}
  for some positive constant $c\in (0,1)$.
  Thus, we obtain the claim.

\paragraph*{Proof for 2-Choices.}
  Recall $\cdeltaplus=1/1000$.
  Let $\xdelta=\cdeltaplus\sqrt{\frac{\max\{\alpha_0(i),\alpha_0(j)\}}{n}}$.
  Then, we have $\xeta=\frac{\xdelta}{2\sqrt{\e \max\{\alpha_0(i),\alpha_0(j)\}}}$.
  Let $\tau=\min\{\taudeltaplus,\tauiweak,\taujweak,\tauiup,\tauidown,\taujup,\taujdown\}$.
  First, we assume $\frac{\xdelta^2}{\varianceref{lem:taudeltaplus OST}}\geq \constref{lem:taudeltaplus general}$.
  Suppose $\tau>t-1$.
  From $\alphanorm_0\geq \frac{C(\log n)^2}{n}$ for a sufficiently large constant $C>0$, we have
  \begin{align*}
    \xdelta\geq \cdeltaplus \sqrt{\frac{(1-\cweak)\alphanorm_0}{n}}
    \geq \cdeltaplus \sqrt{(1-\cweak)C}\frac{\log n}{n}
    \geq \frac{2\log n}{n}.
  \end{align*}
    Hence, applying \cref{lem:taudeltaplus OST,lem:taudeltaplus general}, we have
    \begin{align*}
      \E[\tau]\leq \frac{\E[\delta_\tau^2]}{\varianceref{lem:taudeltaplus OST}}\leq \frac{16(\cdeltaplus)^2}{\constref{lem:two strong opinions}^2(1-\calphadown)^2}\cdot \frac{1}{\max\{\alpha_0(i),\alpha_0(j)\}}+\frac{\E[\tau]}{2},
    \end{align*}
    i.e., $\E[\tau]\leq \frac{32(\cdeltaplus)^2}{\constref{lem:two strong opinions}^2(1-\calphadown)^2}\cdot \frac{1}{\max\{\alpha_0(i),\alpha_0(j)\}}$. 
    Here, $\constref{lem:two strong opinions}>0$ is a positive constant defined in \cref{lem:two strong opinions}.
    Hence, from $\frac{64(\cdeltaplus)^2}{\constref{lem:two strong opinions}^2(1-\calphadown)^2}=\frac{7+3\sqrt{5}}{11250} <\frac{1}{20}$ and Markov inequality,
    \begin{align}
      \Pr\qty[\tau>T]\leq 
      \Pr\qty[\frac{64(\cdeltaplus)^2}{\constref{lem:two strong opinions}^2(1-\calphadown)^2\max\{\alpha_0(i),\alpha_0(j)\}}]
      \leq \frac{1}{2} \label{eq:taudeltaplus case1 2choices}
    \end{align}
    holds (recall $T=\frac{1}{20\max\{\alpha_0(i),\alpha_0(j)\}}$).   
   
    Let $\tau^*=\min\qty{\tauetaplus,\tauiweak,\taujweak,\tauiup,\tauidown,\taujup,\taujdown,T}$ and write $\zeta_t=\max\{\alpha_t(i),\alpha_t(j)\}$ for convenience.
    Then, since $\sqrt{(1+\calphaup)\zeta_0}\xeta \leq 2\sqrt{\e \zeta_0}\xeta =\xdelta$,
    we have
      \begin{align}
          \Pr\qty[\tau^*>T]
          &=\Pr\qty[\tau^*>T\textrm{ and }
          \tinyforall t\leq T, \frac{\abs{\delta_t}}{\sqrt{\zeta_t}}< \xeta ]&&(\because \tauetaup>T) \nonumber\\
          &=\Pr\qty[\tau^*>T\textrm{ and }
          \tinyforall t\leq T, \abs{\delta_t}< \sqrt{(1+\calphaup)\zeta_0}\xeta] &&(\because \tauiup>T) \nonumber\\
          &\leq \Pr\qty[\tau^*>T\textrm{ and }\tinyforall t\leq T, \abs{\delta_t}<\xdelta] \nonumber\\
          &\leq \Pr\qty[\tau>T] \nonumber\\
          &\le 1/2. & & \because \text{\cref{eq:taudeltaplus case1 2choices}} \label{eq:tauetaplus case1 2choice}
      \end{align}
    Recall $\tau_0=\min\{\tauiweak,\taujweak,\tauiup,\taujup,\tauidown,\taujdown,T\}$.
    From $\frac{1}{\alphanorm_0}\geq T$, \cref{eq:add tauijupdown,eq:tauetaplus case1 2choice}, we obtain 
    \begin{align*}
      &\Pr\qty[\min\{\tauetaplus,\tauiweak,\taujweak\}>\frac{1}{\alphanorm_0}]\\
      &\leq \Pr\qty[\min\{\tauetaplus,\tauiweak,\taujweak\}>T]\\
      &=\Pr\qty[\min\{\tauetaplus,\tauiweak,\taujweak,T\}>T\textrm{ and }\tau_0=\min\{\tauiweak,\taujweak,T\}]+n^{-\Omega(1)}\\
      &\leq \Pr\qty[\min\{\tauetaplus,\tauiweak,\taujweak,\tauiup,\tauidown,\taujup,\taujdown,T\}>T]+n^{-\Omega(1)}\\
      &\leq \frac{1}{2}+n^{-\Omega(1)}
    \end{align*}
    holds for this case.
  
    Second, consider the other case where $\frac{\xdelta^2}{\varianceref{lem:taudeltaplus OST}}\leq \constref{lem:taudeltaplus general}$.
  From \cref{thm:Chernoff}, $\E[\alpha_1(i)]\leq 2\alpha_0(i)$, and $\alphanorm_0=\Omega(\log n/n)$, we have $\Pr\qty[\alpha_1(i)\geq 4\e\alpha_0(i)]\leq 2^{-4\e n\alpha_0(i)}\leq 2^{-4\e n\alphanorm_0/(1-\cweak)}=o(1)$.
  Then, from \cref{lem:taudeltaplus one step sync}, we have
  \begin{align*}
    &\Pr\qty[\min\{\tauetaplus,\tauiweak,\taujweak\}>\frac{1}{\alphanorm_0}]\\
    &\leq \Pr\qty[\min\{\tauetaplus,\tauiweak,\taujweak\}>1]\\
    &=\Pr\qty[\min\{\tauetaplus,\tauiweak,\taujweak\}>1\textrm{ and }
    \frac{\abs{\delta_1}}{\sqrt{\zeta_1}}< \xeta \textrm{ and } \zeta_1\leq 4\e\zeta_0]+o(1) &&(\because \tauetaplus>1)\nonumber\\
    &=\Pr\qty[\min\{\tauetaplus,\tauiweak,\taujweak\}>1\textrm{ and }
    \abs{\delta_1}< \sqrt{4\e\zeta_0}\xeta]+o(1) \nonumber\\
    &\leq \Pr\qty[\min\{\tauetaplus,\tauiweak,\taujweak\}>1\textrm{ and }\tinyforall t\leq T, \abs{\delta_t}<\xdelta] +o(1)\nonumber\\
    &\leq \Pr\qty[\min\{\taudeltaplus,\tauiweak,\taujweak\}>1]+o(1)\nonumber \\
    &\leq 1-c
  \end{align*}
  for some positive constant $c\in (0,1)$.
  Thus, we obtain the claim.
\end{proof}
  
\subsection{Bias Amplification} \label{sec:bias amplification}
Combining the additive and multiplicative drift components of the bias (see \cref{lem:deltaupweak,lem:additive drift}), 
we prove that for any two non‑weak opinions, the bias increases to \(\Omega(\sqrt{\log n / n})\) within \(O(\log n / \gamma_0)\) rounds.

\begin{lemma}[Bias amplification] \label{lem:gap amplification}
  Let $ i,j $ be distinct opinions. 
  Consider the stopping times $\tauiweak,\taujweak,\taudeltaplus,\tauetaplus$ defined in \cref{def:stopping times,def:tauetaup} and let $\cweak=1/10$.
  Suppose that
  \begin{align*}
    \alphanorm_0 \ge \begin{cases}
      c_0 \sqrt{\frac{\log n}{n}}	& \text{for 3-Majority},\\
      \frac{c_0(\log n)^2}{n} & \text{for 2-Choices}
    \end{cases}
  \end{align*}
  for a sufficiently large constant $c_0>0$.
  Then, for any constant $ c_{*}>0 $, there exists a large constant $ C>0 $ such that the following holds
  for $ \xdelta = \xeta = c_{*}\sqrt{\frac{\log n}{n}}$.
  \begin{enumerate}
            \renewcommand{\labelenumi}{(\roman{enumi})}
  \item For 3-Majority, we have
  \begin{align*} 
    \Pr\qty[ \min\qty{ \taudeltaplus, \tauiweak,\taujweak } > \frac{C\log n}{\alphanorm_0} ] \le O(n^{-10}).
  \end{align*}
  \item For 2-Choices, we have
  \begin{align*}
    \Pr\qty[ \min\qty{ \tauetaplus, \tauiweak,\taujweak } > \frac{C\log n}{\alphanorm_0} ] \le O(n^{-10}).
  \end{align*}
  \end{enumerate}
\end{lemma}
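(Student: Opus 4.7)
The plan is to combine the additive and multiplicative drift of the bias, using \cref{lem:additive drift} and \cref{lem:deltaupweak}, with the stability of $\alphanorm_t$ given by \cref{lem:taunormdown is large}. The argument decomposes into two stages separated by an intermediate threshold $x^{(0)} := \frac{1}{1000\sqrt{n}}$ for 3-Majority (and the analogous $\frac{1}{2000\sqrt{\e n}}$ for 2-Choices): an additive-drift stage that amplifies the bias from its initial value up to at least $x^{(0)}$, followed by a multiplicative-drift stage that amplifies it further to the target $c_*\sqrt{\log n/n}$. First I would condition on the event that $\alphanorm_t \ge (1-\cnormdown)\alphanorm_0$ for all $t \le C\log n/\alphanorm_0$, which by \cref{lem:taunormdown is large} holds with probability $1-O(n^{-10})$ under the given hypothesis on $\alphanorm_0$; this permits every subsequent invocation of \cref{lem:additive drift} or \cref{lem:deltaupweak} to treat any intermediate round as a fresh ``round $0$''.

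For Stage~1 (additive), I would iterate \cref{lem:additive drift} over $N_1 = \Theta(\log n)$ consecutive windows of length $1/\alphanorm_0$. At each window boundary at which none of $\tauiweak, \taujweak, \tau^{+,(0)} := \inf\{t \colon |\delta_t| \ge x^{(0)}\}$ has yet triggered, \cref{lem:additive drift} applied to the current state gives a triggering probability of at least some absolute constant $c \in (0,1)$ within that window. By the Markov property and chaining of windows, the probability that no window triggers is at most $(1-c)^{N_1} \le n^{-10}$, so within $O(\log n/\alphanorm_0)$ rounds we either see a weak opinion (done) or achieve $|\delta| \ge x^{(0)}$.

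For Stage~2 (multiplicative), assume $\delta > 0$ without loss of generality. I would iterate \cref{lem:deltaupweak} over $N_2 = \Theta(\log n)$ windows of length $1/\alphanorm_0$; each window either multiplies $|\delta|$ (or $|\eta|$ for 2-Choices) by $1+\cdeltaup$, triggers $\tauiweak$ or $\taujweak$, or fails with probability $\exp(-\Omega(n\alphanorm_0^2)) + \exp(-\Omega(n\delta^2))$. Under $\alphanorm_0 \ge c_0\sqrt{\log n/n}$ with $c_0$ large enough, the first term is $n^{-\Omega(c_0^2)} \le n^{-11}$ uniformly; the second term stays bounded away from $1$ since $|\delta| \ge x^{(0)}$ gives $n\delta^2 \ge 10^{-6}$. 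Only $K = O(\log\log n)$ multiplicative successes suffice to reach $c_*\sqrt{\log n/n}$, and the per-window success probability only improves as the bias grows, so a Chernoff-type count of successful windows among the $N_2$ trials yields the desired $1-O(n^{-10})$ bound. The 2-Choices case is analogous with $\eta_t$ and $\xeta$ in place of $\delta_t$ and $\xdelta$.

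The main obstacle, I expect, is the early portion of Stage~2: when $|\delta|$ is only $\Theta(1/\sqrt{n})$, individual windows succeed with only a tiny constant probability, and a failed window may decrease the bias, so successes do not straightforwardly compound. The cleanest way to handle this is to track the potential $\delta_t^2$, which by \cref{lem:basic inequality} and \cref{lem:two strong opinions} admits the combined drift $\E_{t-1}[\delta_t^2] \ge (1+\Omega(\alphanorm_0))\delta_{t-1}^2 + \Omega(\alphanorm_0/n)$ for 3-Majority (and the $\alpha_{t-1}(i)^2$-analogue for 2-Choices), and then apply the Freedman-type inequality from \cref{cor:Freedman} to the resulting submartingale to obtain high-probability concentration across the $O(\log n/\alphanorm_0)$-round horizon.
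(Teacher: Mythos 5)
Your two-stage decomposition (additive drift up to threshold $x^{(0)} \approx 1/\sqrt{n}$, then multiplicative drift up to the target $c_*\sqrt{\log n / n}$) mirrors the structure the paper actually uses, and Stage~1 is fine: iterating \cref{lem:additive drift} over $\Theta(\log n)$ windows, each succeeding with an absolute constant probability, does drive $|\delta|$ above $x^{(0)}$ with probability $1-n^{-10}$. You also correctly identify the real obstacle, namely that at the threshold $n\delta^2 = \Theta(1)$ each multiplicative window succeeds only with a tiny constant probability and a failed window can send $\delta$ back to zero.

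However, your proposed resolution does not close the gap. A ``Chernoff-type count of successful windows'' does not apply because the process is path-dependent: a single failed window can erase all accumulated multiplicative gains, so counting successes among $N_2$ trials does not control the terminal value of $|\delta|$. And the Freedman-on-$\delta_t^2$ idea does not give the high-probability bound either. The combined drift $\E_{t-1}[\delta_t^2] \ge (1+\Omega(\alphanorm_0))\delta_{t-1}^2 + \Omega(\alphanorm_0/n)$ is correct, and one can renormalize to $Y_t = r^{-2t}\big(\delta_t^2 - b\sum_{s<t} r^{2s}\big)$ to obtain a submartingale; but applying \cref{cor:Freedman} to it, the dominant term in the denominator of the exponent comes from the bounded-difference parameter $D$ of the one-step increment (which is $\Theta(1/n)$ because a synchronous step can move $\delta_t$ by a constant fraction of $1/n$ per vertex, and the cross-term $2r\delta_{t-1}(\delta_t - r\delta_{t-1})$ is of that order), and $hD/3$ ends up comparable to $h^2$. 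Working this out, one gets $\Pr[\text{failure}] \le \exp(-\Theta(1))$ --- a constant --- for any horizon $T=O(\log n / \alphanorm_0)$, not $O(n^{-10})$. Pushing the horizon to $T=\Theta((\log n)^2/\alphanorm_0)$ would make the exponent $\Theta(\log n)$, but that overshoots the target time by a $\log n$ factor. This is essentially the same reason \cref{lem:additive drift} itself only delivers a constant success probability.

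The paper instead closes the gap with the phase-counting argument of \cref{lem:nazo lemma}, which is a genuinely different mechanism. It regards the process as a sequence of phases, each of which either reaches the base level $x_0$ (additive, constant probability) or multiplies $\varphi$ by $(1+\cphiup)$ (multiplicative, probability $1-\exp(-\Omega(\varphi^2))$). The crucial observation is that although a single failure at level $\ell$ resets the chain, the per-phase failure probability $p^\ell$ decays geometrically in $\ell$, so the probability that a fresh run reaches $K = O(\log\log n)$ consecutive good phases is $\Omega\big(\prod_{\ell\ge 1}(1-p^\ell)\big) = \Omega(1)$. Thus only $O(\log(1/\varepsilon))$ restarts are needed, and the expected length of a failed run is $O(1)$ phases; together with \cref{lem:concentration exponentially decay}, this gives a total of $O(\log n)$ phases of length $O(1/\alphanorm_0)$ each, with failure probability $n^{-10}$. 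No sum-of-squares martingale over the full horizon is needed, and the ``restart'' structure is exactly what handles the resetting failures that your Chernoff count overlooks.
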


To this end, we use the drift analysis result due to \cite{Doerr11} that addresses both additive and multiplicative drift simultaneously.
In order to apply their results in our setting, we use the following general version.
\begin{lemma} \label{lem:nazo lemma}
Let $ (Z_t)_{t\ge 0} $ be a Markov chain over a state space $ \Omega $ associated with natural filtration $\calF=(\calF_t)_{t\ge 0}$ and let $\tau$ be any stopping time with respect to $\calF$.
Let $ \varphi \colon \Omega \to \Real_{\ge 0} $ be a function.
For a parameter $ x\in\Real_{\ge 0} $, let $ \tauphiplus(x) = \inf\qty{t\ge 0 \colon \varphi(Z_t) \ge x} $.
Let $ T,x_0,\cphiup>0 $ be parameters and suppose that the following holds:
\begin{enumerate}
            \renewcommand{\labelenumi}{(\roman{enumi})}
\item There exists $ C_1>0 $ such that for any $ z \in \Omega $, \[ \Pr\qty[\min\qty{\tauphiplus(x_0), \tau}\le T\condition Z_0=z] \ge C_1. \] 
\item Define $ \tauphiup = \inf\qty{ t\ge 0 \colon \varphi(Z_t) \ge (1+\cphiup) \cdot \varphi(Z_0) } $.
Then, there exists $ C_2>0$ such that for any $ z \in \Omega $, \[ \Pr\qty[ \min\qty{\tauphiup, \tau} \le T \condition Z_0 = z ] \ge 1-\exp\qty( - C_2 \varphi(z)^2 ).\] 
\end{enumerate}
Then, there exists $ C = C(C_1,C_2,\cphiup,x_0) >0 $ such that,
for any $ x^* > x_0$, any $ z \in \Omega $ and any $ \varepsilon>0 $, we have
\begin{align*} 
  \Pr\qty[ \min\qty{\tauphiplus(x^*), \tau} \le C\cdot T\cdot \qty(\log(1/\varepsilon) +  \log(x^*/x_0)) \condition Z_0 = z] \ge 1-\varepsilon.
\end{align*}
\end{lemma}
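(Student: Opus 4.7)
The plan is to bootstrap condition (i), which only guarantees reaching the base threshold $x_0$, using condition (ii), which provides multiplicative amplification; the two are glued together by the strong Markov property applied at a carefully chosen sequence of stopping times. Let $K = \lceil \log_{1+\cphiup}(x^*/x_0) \rceil$ be the number of amplifications required to grow $\varphi$ from $x_0$ to $x^*$.

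The first step is pure amplification. Starting from any state $z_0$ with $\varphi(z_0) \ge x_0$, I would define the stopping times $T_0 = 0$ and $T_{k+1} := \inf\{t \ge T_k : \varphi(Z_t) \ge (1+\cphiup)\varphi(Z_{T_k})\}$. Applying the strong Markov property at each $T_k$ together with condition (ii) gives $T_{k+1} - T_k \le T$ (or $\tau$ fires) with conditional probability at least $1 - \exp(-C_2\varphi(Z_{T_k})^2)$. On the good events up to time $T_k$, we have $\varphi(Z_{T_k}) \ge (1+\cphiup)^k x_0$, so a union bound over $k = 0, \dots, K-1$ gives a total failure probability at most $\sum_{k \ge 0}\exp(-C_2 (1+\cphiup)^{2k} x_0^2) \le 2\exp(-C_2 x_0^2) =: q_0$, using that the exponents form a geometric sequence. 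In particular, starting from any $z_0$ with $\varphi(z_0) \ge x_0$, we reach $\varphi \ge x^*$ (or $\tau$) within $KT$ steps with probability at least $1 - q_0$.

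The second step wraps this amplification inside an outer loop that handles initial recovery via (i). A super-attempt first runs $m_0 = O(1/C_1)$ blocks of length $T$, iterating (i) under the Markov property, which drives $\varphi$ above $x_0$ (or fires $\tau$) with probability at least $1 - (1-C_1)^{m_0}$; it then runs the amplification of step one for another $KT$ time. Each super-attempt succeeds with a positive constant probability $p^* \ge (1-(1-C_1)^{m_0})(1-q_0)$, and iterating $N = O(\log(1/\varepsilon)/p^*)$ super-attempts under the Markov property brings the total failure probability below $\varepsilon$ in time $O(NKT)$.

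The main obstacle is that this direct argument yields a multiplicative bound $O(KT\log(1/\varepsilon))$, whereas the claimed bound is additive, $CT(\log(x^*/x_0) + \log(1/\varepsilon))$. The key observation for closing this gap is that only the first $j^\star = O(\log\log(1/\varepsilon))$ amplification levels are truly risky: once $\varphi$ exceeds $(1+\cphiup)^{j^\star} x_0$, the per-level failure probabilities in (ii) already sum to $\varepsilon/2$ by the same geometric estimate. I would therefore split the analysis into a risky prefix, handled by an outer recovery-and-restart loop aimed only at reaching level $j^\star$ and costing $O(T\log(1/\varepsilon))$ time because $j^\star$ is small, and a safe suffix, handled by a single stopping-time amplification from level $j^\star$ up to level $K$ costing $O(KT)$ time and failing with probability at most $\varepsilon/2$. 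Executing this decomposition cleanly, and in particular accounting for what happens whenever an amplification inside the risky prefix fails (so that $\varphi$ may have dropped back below $x_0$), is where the technical care is concentrated.
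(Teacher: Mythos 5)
Your overall plan — use (i) to reach $x_0$, then (ii) to amplify, gluing via the strong Markov property — is the right skeleton, and you correctly identified that a naive restart gives a multiplicative $O(KT\log(1/\varepsilon))$ bound. The issue is that your proposed fix via $j^\star$ does not close the gap. Your risky prefix consists of $\Theta(\log(1/\varepsilon))$ restart attempts, and each attempt has worst-case duration $\Theta(T\cdot j^\star)=\Theta(T\log\log(1/\varepsilon))$ since it may climb all the way to level $j^\star$ before failing. That yields $\Theta(T\log(1/\varepsilon)\log\log(1/\varepsilon))$ for the prefix alone, which still exceeds the claimed $O(T\log(1/\varepsilon)+T\log(x^*/x_0))$ for small $\varepsilon$. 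The $j^\star$ cut only truncates the \emph{deterministic} worst case of each attempt; it does not reduce the wasted time to $O(T)$ per attempt.

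The missing ingredient — and the one the paper uses — is that the \emph{random} length of a failed attempt has exponentially decaying tail, regardless of how far the attempt could in principle climb. A failed attempt fails at some level $j$ with probability proportional to $\exp(-C_2(1+\cphiup)^{2j}x_0^2)$ (doubly exponential in $j$), so conditioned on failure, the level reached has bounded expectation and geometric tail; hence the total number of phases wasted across all failed attempts is $O(\log(1/\varepsilon))$ with probability $1-\varepsilon^{\Omega(1)}$ by a concentration inequality for sums of exponentially-tailed random variables. Once you argue this, the $j^\star$ decomposition is unnecessary: define phases so that each phase lasts at most $T$ rounds and is ``good'' if it either reaches $x_0$ / multiplies $\varphi$ by $(1+\cphiup)$ / triggers $\tau$; let $S^{(0)},S^{(1)},\dots$ be the lengths of successive maximal good runs; the number of runs until one exceeds $K$ is $O(\log(1/\varepsilon))$ because each run exceeds $K$ with probability $\Omega(1)$, and the total length of the preceding failed runs is $O(\log(1/\varepsilon))$ by the exponential-tail argument, giving $O(\log(1/\varepsilon)+K)$ phases in total. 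This amortized accounting (counting total phases rather than multiplying a per-attempt worst case by the number of attempts) is what buys the additive bound, and is exactly where your sketch needs to be repaired.
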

Readers are encouraged to think of $\Omega$ as the set of all configurations $[k]^V$,
$Z_t \in [k]^V$ is the configuration at the $t$-th round, $\tau=\min\qty{\tauiweak,\taujweak,\taunormdown}$ (thus we can set $T=O(1/\alphanorm_0)$ for both models),
and $\varphi$ is a function that maps a configuration to bias between two specific opinions:
Specifically, for 3-Majority we consider $\varphi(Z_t)=\sqrt{n}\cdot \abs{\delta_t}$, and for 2-Choices we consider $\varphi(Z_t)=\sqrt{n}\cdot \abs{\eta_t}$ (the factor $ \sqrt{n} $ is because we can set $ x_0 $ as a constant).

Intuitively speaking, the first condition of \cref{lem:nazo lemma} refers to the additive drift of $\varphi(Z_t)$, which means that $\varphi(Z_t)$ becomes at least $x_0$ with probability $C_1=\Omega(1)$ even if we start with $\varphi(Z_0)=0$.
The second condition asserts the multiplicative drift of $\varphi(Z_t)$ since it means that $\varphi(Z_t)$ becomes at least $(1+\cphiup)\cdot \varphi(Z_0)$ within $T$ rounds with probability $1-\exp(\Omega(\varphi(Z_0)^2))$.
The proof of \cref{lem:nazo lemma} is presented in \cref{sec:proof of nazo lemma}

\begin{proof}[Proof of \cref{lem:gap amplification}.]
  For simplicity, we prove for 3-Majority.
  The proof for 2-Choices can be obtained by the same way.
  We apply \cref{lem:nazo lemma}, where each $ Z_t \in [k]^V $ is an opinion configuration,
  $ \tau = \min\{\tauiweak,\taujweak,\taunormdown\} $, $ \varphi(Z_t) = \sqrt{n}\cdot \abs{\delta_t} $.
  From \cref{lem:deltaupweak,lem:additive drift}, for a sufficiently large constant $ C_0>0 $ and for $ T_0 = \frac{C_0}{\max\{\alpha_0(i),\alpha_0(j)\}} $, we have
  \begin{equation} \label{eq:inequalities of nazolemma}
    \begin{aligned}
    &\Pr\qty[ \min\qty{ \taudeltaplus,\tau } > T_0 ] \le 1-\Omega(1),\\
    &\Pr\qty[ \min\qty{ \taudeltaup, \tau } > T_0 ] \le \exp\qty( -\Omega(n\alphanorm_0^2) ) + \exp\qty( -\Omega(n\delta_0^2) ) \le \exp\qty( -\Omega(\varphi(Z_0)^2) ).
    \end{aligned}
  \end{equation}
  Here, note that $ \alphanorm_0 = \omega(\sqrt{\log n/n}) $ and thus the term $ \exp\qty( -\Omega(n\alphanorm_0^2) ) = n^{-\omega(1)}$ is negligible.
  By definition of $ \tau $, for any $ t < \tau $, we have $ \max\qty{\alpha_t(i),\alpha_t(j)} \ge (1-\cweak)\alphanorm_0 $.
  Therefore, \cref{eq:inequalities of nazolemma} holds even if we replace $ T_0 $ by $ T\defeq \frac{C_0}{(1-\cweak)\alphanorm_0} $.

  Let $ c >0$ be an arbitrary large constant (as considered in \cref{lem:gap amplification}).
  From \cref{lem:nazo lemma} for $ \varepsilon = n^{-10} $, $ x_0 = \cdeltaplus $ (the constant of \cref{lem:additive drift}) and $ x^*=c\sqrt{\log n} $, with probability $ 1-O(n^{-10}) $,
  within $ O(T\cdot \log n) $ rounds, we have either $ t= \tau = \min\{\tauiweak,\taujweak,\taunormdown\}$ or $ \varphi(Z_t) \ge c\sqrt{\log n} $, i.e., $ \abs{\delta_t} \ge c\sqrt{\log n / n} $.
  From \cref{lem:taunormdown is large}, the event $ t=\taunormdown $ does not occur during the consecutive $ O(T\cdot \log n) $ rounds with probability $ 1-O(n^{-10}) $.
  Therefore, we obtain the claim.
\end{proof}

\subsection{Growth of the Norm} \label{sec:growth of norm}
This section gives the proof of the following lemma, which is a generalized version of \cref{thm:growth of alphanorm}.
\begin{lemma}[Growth of $ \alphanorm_t $] \label{lem:taunormplus}
  Consider the stopping time $ \taunormplus $ defined in \cref{def:stopping times}.
  Let $C>\mathrm{e}^{-1}$ and $\epsilon\in(0,1)$ be arbitrary positive constants and suppose $\frac{C^2\lg^2n}{n}\leq \xnorm\leq 1-\varepsilon$.
  Then,
  \begin{align*} 
     \Pr\qty[ \taunormplus \ge T ] \le 
     \begin{cases}
      \frac{64\e^2}{\epsilon}\cdot  \frac{\xnorm n}{T}	& \text{for 3-Majority},\\[10pt]
      \frac{192\e^2}{\epsilon^2}\cdot  \frac{\xnorm n^2}{T}	& \text{for 2-Choices}.
     \end{cases}
  \end{align*}
\end{lemma}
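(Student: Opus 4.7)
The plan is to apply Freedman's inequality (\cref{lem:Freedman stopping time additive}~(ii)) to the supermartingale $X_t=-\alphanorm_{t\wedge\tau}$ with $\tau=\taunormplus$, and then convert the resulting exponential tail bound into the Markov-type bound in the statement via $\E[\tau]=\int_0^\infty \Pr[\tau>T]\,\d T$.

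First I would verify the drift hypothesis. By \cref{lem:basic inequality}~(iii), whenever $\alphanorm_{t-1}<\xnorm\leq 1-\varepsilon$, the conditional drift $\E_{t-1}[\alphanorm_t]-\alphanorm_{t-1}$ is at least $R_{\min}=\varepsilon/n$ for 3-Majority (since $1-\alphanorm_{t-1}\geq \varepsilon$) and $R_{\min}=\varepsilon^2/(2n^2)$ for 2-Choices (using $1-\sqrt{1-\varepsilon}\geq \varepsilon/2$ together with the trivial $\alphanorm_{t-1}\geq 1/n$). This makes $X_t$ a supermartingale with drift at most $-R_{\min}$, so condition (C1) of \cref{lem:Freedman stopping time additive} holds with $R=-R_{\min}$. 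For (C2), I would appeal to the one-sided Bernstein bound of \cref{lem:Bernstein condition for sync processes}~(iii) for $\alphanorm_{t-1}-\alphanorm_t$, which, after restricting to the event $\{t-1<\tau\}$ where $\alphanorm_{t-1}<\xnorm$, yields parameters $D=O(\sqrt{\xnorm}/n)$ and $s=O(\xnorm^{1.5}/n)$ for 3-Majority (resp.\ $s=O(\xnorm^{2}/n)$ for 2-Choices).

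Freedman's inequality with $h=\xnorm-\alphanorm_0$ then gives, for every $T$ with $z:=R_{\min}T-h>0$,
\[
\Pr[\taunormplus>T]\leq \exp\!\Bigl(-\frac{z^2/2}{sT+zD/3}\Bigr).
\]
Choosing $T^{*}=\Theta(\xnorm/R_{\min})$ makes the exponent $\Omega(1)$, so $\Pr[\taunormplus>T^{*}]\leq 1-\Omega(1)$; iterating via the strong Markov property (or simply integrating the exponential tail) yields $\E[\taunormplus]=O(\xnorm/R_{\min})$. Markov's inequality $\Pr[\taunormplus\geq T]\leq \E[\taunormplus]/T$ then produces the claimed $O(\xnorm n/(\varepsilon T))$ for 3-Majority and $O(\xnorm n^2/(\varepsilon^2 T))$ for 2-Choices.

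The hard part will be verifying (C2) carefully: \cref{lem:Bernstein condition for sync processes}~(iii) controls the raw variable $\alphanorm_{t-1}-\alphanorm_t$, whereas (C2) asks for the shifted variable $R_{\min}+\alphanorm_{t-1}-\alphanorm_t$ (alternatively, one would use (C2$'$) for the centered variable $\E_{t-1}[\alphanorm_t]-\alphanorm_t$). In the assumed regime $\xnorm\geq C^2(\log n)^2/n$ we have $R_{\min}$ much smaller than both $D$ and $\sqrt{s}$, so this shift can be absorbed into a constant-factor blow-up of the Bernstein parameters via \cref{lem:Bernstein condition}~(i); careful propagation of these constants through the Freedman bound and the subsequent integration is what produces the explicit factors $64\mathrm{e}^2/\varepsilon$ and $192\mathrm{e}^2/\varepsilon^2$ in the stated bound, with the $\mathrm{e}^2$ arising naturally from the geometric-series accounting used to convert the one-round-success probability into an expected-hitting-time bound.
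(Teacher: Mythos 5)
Your route is genuinely different from the paper's, but it contains a quantitative gap that makes it unable to recover the stated bound.

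The paper does not invoke Freedman's inequality here at all. It uses the optional stopping theorem (\cref{lem:taunormplus OST}) on the supermartingale $Y_t = \driftnorm (t\wedge\tau) - \alphanorm_{t\wedge\tau}$ to get the exact identity $\E[\taunormplus] \le \E[\alphanorm_{\taunormplus}]/\driftnorm$, and then the entire difficulty is pushed into bounding the \emph{overshoot} $\E[\alphanorm_{\taunormplus}]$, which a priori could be close to $1$ even when $\xnorm$ is small because $\alphanorm_t$ can jump by $\Theta(1)$ in a single synchronous round. \cref{lem:taunormplus expectation of norm} controls this via a case decomposition on the per-opinion counts $A_{t-1}(i),A_t(i)$ and a Chernoff bound showing that a jump by more than a $4\e$ factor is improbable; that is where the $\e^2$ in the final constant comes from. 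This overshoot control is the technically hard part and your proposal does not address it.

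The concrete gap in your argument is the claim that choosing $T^*=\Theta(\xnorm/R_{\min})$ makes the exponent in Freedman's bound $\Omega(1)$. Plugging in $z\approx\xnorm$, $\variance\approx\xnorm^{1.5}/n$, $\bounded\approx\sqrt{\xnorm}/n$ (3-Majority), and $T^*\approx\xnorm n/\varepsilon$ gives $\variance T^*\approx\xnorm^{2.5}/\varepsilon$, hence an exponent of order $\xnorm^2/(\xnorm^{2.5}/\varepsilon)=\varepsilon/\sqrt{\xnorm}$. This is $\Omega(1)$ only when $\xnorm=O(\varepsilon^2)$; for $\xnorm$ near $1-\varepsilon$ it is a small constant and $e^{-\Theta(\varepsilon)}$ is close to $1$. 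To force the exponent up to a constant you must take $T^*=\Theta(n\xnorm^{1.5}/\varepsilon^2)$, which after the restart/iteration step yields $\E[\taunormplus]=O(n\xnorm^{1.5}/\varepsilon^2)$, strictly weaker than the stated $O(n\xnorm/\varepsilon)$ by a factor $\sqrt{\xnorm}/\varepsilon$ in the regime $\xnorm\gg\varepsilon^2$. In other words, in this regime the noise per unit drift is too large for Freedman to be tight, while the OST-plus-overshoot argument does not pay that price. The secondary issue you flag with condition \ref{item:C2} is also real and not fixable the way you suggest: adding the positive constant $R_{\min}$ to a one-sided $(\bounded,\variance)$-Bernstein random variable never yields a $(\bounded',\variance')$-Bernstein random variable for any blow-up of $\bounded,\variance$ (the linear term $\lambda R_{\min}$ in the moment generating function cannot be dominated by $\lambda^2 \variance'/2$ for small $\lambda$), and the same obstruction blocks the move to the centered variable $\E_{t-1}[\alphanorm_t]-\alphanorm_t$, since it differs from the variable controlled in \cref{lem:Bernstein condition for sync processes}\ref{item:BC for alphanorm} by the same nonnegative shift. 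These obstructions are precisely why the paper uses $R=0$ when it does apply Freedman to $\alphanorm_t$ (in \cref{item:taunormdown} of \cref{lem:drift analysis for basic}) and uses optional stopping, not Freedman, here.
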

The proof of \cref{lem:taunormplus} is obtained by applying the following two lemmas.
First, we present the following lemma, which is a natural consequence of the optimal stopping theorem.
\begin{lemma}[Optimal stopping theorem and $\alphanorm_t$]
    \label{lem:taunormplus OST}
    Consider 3-Majority or 2-Choices.
    Suppose $\xnorm\leq 1-\varepsilon$ for a positive constant $\varepsilon\in (0,1)$.
Let $\driftnorm$ be a positive parameter defined by
\begin{align*}
    \driftnorm=
    \begin{cases*}
        \frac{\varepsilon}{n} & \text{for 3-Majority},\\
        \frac{\varepsilon^2}{3n^2} & \text{for 2-Choices}.
    \end{cases*}
\end{align*}
Then, 
\begin{align*}
    \E\qty[\taunormplus]
    \leq \frac{\E\qty[\alphanorm_{\taunormplus}]}{\driftnorm}.
\end{align*}
\end{lemma}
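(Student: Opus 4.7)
The plan is to apply the optional stopping theorem to a supermartingale obtained by subtracting a linear drift from $\alphanorm_{t \wedge \taunormplus}$.

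First, I would establish a uniform lower bound on the one-step expected increment $\E_{t-1}[\alphanorm_t] - \alphanorm_{t-1} \geq \driftnorm$ valid on the event $\{t \leq \taunormplus\}$ (equivalently, whenever $\alphanorm_{t-1} < \xnorm \leq 1-\varepsilon$), using \cref{lem:basic inequality}\ref{item:expectation of alphanorm}. For 3-Majority this is immediate: $1-\alphanorm_{t-1} \geq \varepsilon$, so $(1-\alphanorm_{t-1})/n \geq \varepsilon/n = \driftnorm$. For 2-Choices the factorization $(1-\sqrt{\alphanorm_{t-1}})(1-\alphanorm_{t-1})\alphanorm_{t-1}/n$ needs three separate bounds: I would use $1-\alphanorm_{t-1} \geq \varepsilon$, then observe that $1-\sqrt{1-\varepsilon} \geq \varepsilon/2$ (since $(1-\sqrt{1-\varepsilon})(1+\sqrt{1-\varepsilon}) = \varepsilon$) to get $1-\sqrt{\alphanorm_{t-1}} \geq \varepsilon/2$, and finally invoke the Cauchy--Schwarz consequence $\alphanorm_{t-1} \geq 1/k \geq 1/n$ mentioned in \cref{sec:proof_outline}. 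Multiplying yields a lower bound of $\varepsilon^2/(2n^2) \geq \varepsilon^2/(3n^2) = \driftnorm$.

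Next, I would define $Y_t \defeq \driftnorm \cdot (t \wedge \taunormplus) - \alphanorm_{t \wedge \taunormplus}$. The step-1 bound shows that $\E_{t-1}[Y_t - Y_{t-1}] = \indicator_{\tau^+_\alphanorm > t-1}\big(\driftnorm - (\E_{t-1}[\alphanorm_t] - \alphanorm_{t-1})\big) \leq 0$, so $(Y_t)_{t \in \Nat_0}$ is a supermartingale with respect to the natural filtration. For any fixed $T \in \Nat$, the stopping time $T \wedge \taunormplus$ is bounded, so the optional stopping theorem (\cref{thm:OST}) gives
\[
  \driftnorm \cdot \E[T \wedge \taunormplus] - \E[\alphanorm_{T \wedge \taunormplus}] = \E[Y_{T \wedge \taunormplus}] \leq \E[Y_0] = -\alphanorm_0 \leq 0,
\]
hence $\driftnorm \cdot \E[T \wedge \taunormplus] \leq \E[\alphanorm_{T \wedge \taunormplus}] \leq 1$.

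Finally, I would pass to the limit $T \to \infty$. Monotone convergence gives $\E[T \wedge \taunormplus] \uparrow \E[\taunormplus]$, so the uniform bound above already yields $\E[\taunormplus] \leq 1/\driftnorm < \infty$; in particular $\taunormplus$ is almost surely finite, and since $\alphanorm_{T \wedge \taunormplus} \in [0,1]$, bounded convergence gives $\E[\alphanorm_{T \wedge \taunormplus}] \to \E[\alphanorm_{\taunormplus}]$. The proof concludes with $\driftnorm \cdot \E[\taunormplus] \leq \E[\alphanorm_{\taunormplus}]$. The only mildly subtle ingredient is the drift lower bound for 2-Choices, where all three factors $(1-\sqrt{\alphanorm_{t-1}})$, $(1-\alphanorm_{t-1})$, and $\alphanorm_{t-1}$ must be controlled simultaneously; once that is in hand, the optional stopping argument is routine.
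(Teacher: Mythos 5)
Your proof is correct and follows the same approach as the paper: construct the supermartingale $Y_t = \driftnorm\cdot(t\wedge\taunormplus) - \alphanorm_{t\wedge\taunormplus}$ from the drift bound in \cref{item:expectation of alphanorm} of \cref{lem:basic inequality} and conclude via the optional stopping theorem. Your treatment of the limit is in fact more careful than the paper's, which invokes \cref{thm:OST} directly at $\taunormplus$ without first checking $\E[\taunormplus]<\infty$; your truncation at $T\wedge\taunormplus$ followed by monotone and bounded convergence closes that gap (and your bound $1-\sqrt{1-\varepsilon}\ge\varepsilon/2$ via conjugation is a slightly cleaner route to the same drift constant).
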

\begin{proof}[Proof of \cref{lem:taunormplus OST}]
    Let $\tau=\taunormplus$, $X_t=\driftnorm t-\alphanorm_t$ and $Y_t=X_{t\wedge \tau}$.
    Then, from the definition of $\driftnorm$ and \cref{item:expectation of alphanorm} of \cref{lem:basic inequality},
    \begin{align*}
    \E_{t-1}\qty[Y_t-Y_{t-1}]
    &=\indicator_{\tau>t-1}\qty(\driftnorm t-\E_{t-1}[\alphanorm_t]-\driftnorm (t-1)+\alphanorm_{t-1})
    =\indicator_{\tau>t-1}\qty(\alphanorm_{t-1}+\driftnorm-\E_{t-1}[\alphanorm_t])
    \leq 0,
    \end{align*}
    i.e., $(Y_t)_{t\in \Nat_0}$ is a supermartingale.
    Note that, for 2-Choices, 
    $\E_{t-1}\qty[\alphanorm_t]-\alphanorm_{t-1}\geq \frac{(1-\sqrt{1-\epsilon})\epsilon}{n^2}\geq \frac{\epsilon^2}{3n}$ from $1-\sqrt{1-\varepsilon}\geq 1-\exp(-\varepsilon/2) \geq 1-\frac{1}{1+\varepsilon/2}\geq \frac{\varepsilon}{3}$.
    
    From the optimal stopping theorem (\cref{thm:OST}), we have $\E[Y_\tau]\leq \E[Y_0]=-\alphanorm_0\leq 0$ 
    Furthermore, $\E[Y_\tau]=\E[X_\tau]=\driftnorm \E[\tau]-\E[\alphanorm_\tau]$ holds.
    Thus, 
    \begin{align*}
        \E[\tau]&=\frac{\E[Y_\tau]+\E[\alphanorm_\tau]}{\driftnorm}
        \leq \frac{\E[\alphanorm_\tau]}{\driftnorm}
    \end{align*}
    holds and we obtain the claim.
\end{proof}
The key tool for synchronous processes for \cref{lem:taunormplus} is the following lemma, which provides an appropriate upper bound for $\E[\alphanorm_{\taunormplus}]$.
We put the proof in \cref{sec:deferred norm grows}.
\begin{lemma}[Bound on the norm at a stopping time]
    \label{lem:taunormplus expectation of norm}
    Consider the stopping time $\taunormplus$ defined in \cref{def:stopping times}.
    For any positive parameter $C$, 
    \begin{align*}
        \E\qty[\alphanorm_{\taunormplus}]\leq 
        16\e^2\qty(\xnorm+\frac{C^2\lg^2n}{n})+2n^{-4\e C+1}\E\qty[\taunormplus].
    \end{align*}
\end{lemma}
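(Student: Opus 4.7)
The proof strategy splits $\E[\alphanorm_{\taunormplus}]$ according to a high-probability event that controls the one-step growth of every opinion. For each $t\ge 1$, I would introduce the good event
\[
\calE_t = \cbra*{{}^{\forall}\, i \in [k],\ \alpha_t(i) \le \max\cbra*{4\e\cdot \alpha_{t-1}(i),\ 4\e C \lg n / n}}.
\]
The first step is a Chernoff estimate for each fixed opinion $i$. Conditioned on $\calF_{t-1}$, $n\alpha_t(i)=\sum_{v\in V}\indicator_{\opinion_t(v)=i}$ is a sum of $n$ independent Bernoulli random variables with mean $\mu_i \le 2n\alpha_{t-1}(i)$: this follows from \cref{eq:updating probability for 3Majority} for 3-Majority, and for 2-Choices by splitting the vertices into those currently holding opinion $i$ (each of which retains it as a Bernoulli trial) and those not (each of which switches to $i$ as a Bernoulli trial) and using \cref{eq:updating probability for 2 choices}. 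Choosing $\lambda_i = \max\{4\e n \alpha_{t-1}(i),\ 4\e C \lg n\} \ge 2\e\mu_i$, the Chernoff tail in the form $\Pr[X\ge \lambda]\le (\e\mu/\lambda)^\lambda \le 2^{-\lambda}$ gives
\[
\Pr_{t-1}\sbra*{n\alpha_t(i) \ge \lambda_i} \le 2^{-4\e C\lg n} = n^{-4\e C}.
\]
Since $\alpha_{t-1}(i)=0$ forces $\alpha_t(i)=0$ in both models, the event is trivial for every inactive opinion; there are at most $n$ active opinions, so a union bound yields $\Pr_{t-1}[\calE_t^c] \le 2n^{-4\e C+1}$.

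On $\calE_t$ itself I would bound $\alphanorm_t$ using $(\max\{a,b\})^2\le a^2+b^2$ and again exploiting that $\alpha_{t-1}(i)=0\Rightarrow \alpha_t(i)=0$:
\begin{align*}
\alphanorm_t
&\le \sum_{i\in[k]}\qty[(4\e\alpha_{t-1}(i))^2 + (4\e C\lg n/n)^2\indicator_{\alpha_{t-1}(i)>0}]\\
&\le 16\e^2\alphanorm_{t-1} + n\cdot 16\e^2 C^2\lg^2 n/n^2 = 16\e^2\qty(\alphanorm_{t-1}+C^2\lg^2 n/n).
\end{align*}
At $t=\taunormplus$, the stopping time's definition forces $\alphanorm_{\taunormplus-1}<\xnorm$, so on $\calE_{\taunormplus}$ we already have $\alphanorm_{\taunormplus}\le 16\e^2(\xnorm+C^2\lg^2 n/n)$.

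Finally I would split the expectation along $\calE_{\taunormplus}$ and its complement. For the complement, using the trivial bound $\alphanorm_{\taunormplus}\le 1$ and the tower property together with $\Pr_{t-1}[\calE_t^c]\le 2n^{-4\e C+1}$,
\[
\E\sbra*{\alphanorm_{\taunormplus}\indicator_{\bigcup_{t\le\taunormplus}\calE_t^c}}
\le \sum_{t=1}^{\infty}\Pr\sbra*{\calE_t^c,\ \taunormplus\ge t}
\le 2n^{-4\e C+1}\sum_{t=1}^{\infty}\Pr[\taunormplus\ge t]
= 2n^{-4\e C+1}\E[\taunormplus].
\]
Adding the two contributions gives the claimed bound. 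The only part requiring care is the Chernoff computation for 2-Choices, where the one-step increment of $n\alpha_t(i)$ decomposes into two independent binomials and one must verify that the combined upper mean bound $2n\alpha_{t-1}(i)$ still allows the clean choice of $\lambda_i$ that makes the multiplicative factor land exactly at $16\e^2$; everything else is a routine good-event/bad-event split.
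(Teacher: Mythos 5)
Your proof is correct, and it takes a genuinely different decomposition from the paper's. The paper's argument expands $\E[\alphanorm_{\taunormplus}]=\frac{1}{n^2}\sum_{i\in[k]}\sum_{t\ge 1}\E[A_t(i)^2\indicator_{\taunormplus=t}]$ with $A_t(i)=n\alpha_t(i)$ and then splits each opinion--round term into four cases according to whether $A_{t-1}(i)$ is below or above $C\lg n$ and whether the jump $A_t(i)\ge 4\e A_{t-1}(i)$ (resp.\ $A_t(i)\ge 4\e C\lg n$) occurs, applying the Chernoff tail in the two jumping cases and crude deterministic bounds in the other two. You instead package the per-opinion Chernoff bounds into a single per-round good event $\calE_t$ and perform one good/bad split at $t=\taunormplus$: on the good event the deterministic bound $\alphanorm_{\taunormplus}\le 16\e^2(\xnorm+C^2\lg^2 n/n)$ follows from $\alphanorm_{\taunormplus-1}<\xnorm$ and the at-most-$n$-supported-opinions count, while the bad-event contribution is handled by the trivial bound $\alphanorm_{\taunormplus}\le 1$ together with a round-by-round union bound and the tower property, producing the $\E[\taunormplus]$ factor. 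Both routes hinge on the same Chernoff estimate and the same counting observation, and both implicitly assume $\taunormplus\ge 1$; your version replaces the four-way case split with a clean good-event/bad-event dichotomy, which I find more transparent. One small remark: a union bound over at most $n$ active opinions gives $\Pr_{t-1}[\calE_t^c]\le n^{-4\e C+1}$, so the factor of $2$ you insert is unnecessary and your argument in fact proves the slightly stronger bound $n^{-4\e C+1}\E[\taunormplus]$; the paper's factor of $2$ arises because it has two separate Chernoff cases, each contributing $n^{-4\e C+1}\E[\taunormplus]$.
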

\begin{proof}[Proof of \cref{lem:taunormplus}]
    In the following, write $\tau=\taunormplus$ for convenience.

    \paragraph{3-Majority.}
    From \cref{lem:taunormplus OST,lem:taunormplus expectation of norm}, the following holds for a sufficiently large $n$:
    \begin{align*}
        \E[\tau]
        &\leq \frac{n}{\epsilon}\cdot \qty(16\e^2\qty(\xnorm+\frac{C^2\lg^2n}{n})+2n^{-4\e C+1}\E[\tau])
        \leq \frac{32\e^2}{\epsilon}\xnorm n +\frac{1}{2}\E[\tau]
    \end{align*}
    Hence, $\E[\tau]\leq \frac{64\e^2}{\epsilon}\xnorm n$ and we obtain the claim from the Markov inequality.

    \paragraph{2-Choices.}
    From \cref{lem:taunormplus OST,lem:taunormplus expectation of norm}, the following holds for a sufficiently large $n$:
    \begin{align*}
        \E[\tau]
        &\leq \frac{3n^2}{\epsilon^2}\cdot \qty(16\e^2\qty(\xnorm+\frac{C^2\lg^2n}{n})+2n^{-4\e C+1}\E[\tau])
        \leq \frac{96\e^2}{\epsilon^2}\xnorm n^2 +\frac{1}{2}\E[\tau].
    \end{align*}
    Hence, $\E[\tau]\leq \frac{192\e^2}{\epsilon^2}\xnorm n^2$ and we obtain the claim from the Markov inequality.
  \end{proof}


\subsection{Putting All Together} \label{sec:proof of main theorem}
We are now ready to prove the main theorem.

\begin{proof}[Proof of \cref{thm:consensus time large alphanorm}.]
  Consider 3-Majority (the proof for 2-Choices is similar).
  Suppose that the initial configuration satisfies $ \alphanorm_0 \ge  \frac{C\log n}{\sqrt{n}}$, where $C>0$ is a sufficiently large constant.
  Fix any two distinct opinions $ i,j $.
  From \cref{lem:taunormdown is large}, we may assume that $\alphanorm_t =\Omega(\alphanorm_0)$ during the process.
  From \cref{lem:gap amplification,lem:initial bias weak}, either $i$ or $j$ becomes weak within $O((\log n)/\alphanorm_0)$ rounds with probability $1-O(n^{-10})$.
  Moreover, from \cref{lem:weakvanish} with probability $ 1-O(n^{-10}) $, every weak opinion vanishes within $ O((\log n)/\alphanorm_0) $ rounds.
  Therefore, by the union bound over $ i\ne j $, with probability $ 1-O(n^{-8}) $,
    for any pair of distinct opinions, either $ i $ or $ j $ vanishes within $ O((\log n)/\alphanorm_0) $ rounds.
  This completes the proof.
\end{proof}

\begin{proof}[Proof of \cref{thm:growth of alphanorm}.]
  Consider 3-Majority (the proof for 2-Choices is similar).
  Let $C>0$ be a sufficiently large constant.
  From \cref{lem:taunormplus} for $\xnorm = C\log n / \sqrt{n}$, we have $\alphanorm_t \ge C\log n/\sqrt{n}$ with probability $1/2$ within $t=O(\sqrt{n}\log n)$ rounds.
  Repeating this argument for $O(\log n)$ times,
  with high probability, we have $\alphanorm_T \ge C\log n / \sqrt{n}$ for some $T=O(\sqrt{n}(\log n)^{2})$.  
\end{proof}

\begin{proof}[Proof of \cref{thm:plurality}]
  Consider 3-Majority (the proof for 2-Choices is similar).
    Consider an arbitrary opinion $j\neq 1$.
    From \cref{lem:initial bias weak}, $\Pr[\taujweak>T_1]\leq O(n^{-10})$ holds for some $T_1\leq \frac{C_1\log n}{\alphanorm_0}$.
    Furthermore, from \cref{lem:weakvanish}, $\Pr[\alpha_{T_1+T_2}(j)>0]\leq O(n^{-10})$ holds for some $T_2\leq \frac{C_2\log n}{\alphanorm_0}$.
    Note that $\Pr[\alphanorm_{T_1}\leq  (1-\cnormdown)\alphanorm_0]\leq O(n^{-10})$ from \cref{lem:taunormdown is large}.
    Hence, from the union bound, $\Pr[\bigvee_{j\neq 1}\qty{\alpha_{T_1+T_2}(j)>0}]\leq O(n^{-1})$ holds and we obtain the claim.
\end{proof}

\begin{proof}[Proof of \cref{thm:lowerbound}]
  Suppose $\alpha_0(i)=1/k$ for all $i\in [k]$.
  Then, from \cref{item:tauiup} of \cref{lem:drift analysis for basic}, we obtain
  \begin{align*}
    \Pr\qty[\taucons \leq  \constrefs{lem:drift analysis for basic}{item:tauiup} k]
    &\leq \Pr\qty[\exists i\in [k], \tauiup \leq  \frac{\constrefs{lem:drift analysis for basic}{item:tauiup}}{\alpha_0(i)} ]\\
    &\leq k \cdot
    \begin{cases*}
      \exp(-\Omega(n\alpha_0(i)^2)) &\text{for 3-Majority} \\
      \exp(-\Omega(n\alpha_0(i))) &\text{for 2-Choices} \\
    \end{cases*}\\
    &\leq n^{-1}
  \end{align*}
  holds for a sufficiently small constant $c>0$.
\end{proof}

\begin{proof}[Proof of \cref{thm:main theorem}]
  The consensus time bound follows from \cref{thm:consensus time large alphanorm,thm:growth of alphanorm}, and the plurality consensus follows from \cref{thm:plurality}.
  The lower bound follows from \cref{thm:lowerbound}:
  If $k\le c\sqrt{\frac{n}{\log n}}$, \cref{thm:lowerbound} ensures that the consensus time is $\Omega(k)$ with high probability.
  Otherwise, we can consider the balanced configuration with $c\sqrt{n/\log n}$ opinions as the initial configuration, which requires $\Omega\qty(\sqrt{\frac{n}{\log n}})$ to reach consensus with high probability from \cref{thm:lowerbound}.
\end{proof}

\section*{Acknowledgements}

The authors are grateful to Colin Cooper, Frederik Mallmann-Trenn, and Tomasz Radzik for helpful discussions during our visit to King's College London.
Nobutaka Shimizu is supported by JSPS KAKENHI Grant Number 23K16837.
Takeharu Shiraga is supported by JSPS KAKENHI Grant Number 23K16840.

\printbibliography
\appendix

\section{Tools}
\begin{theorem}[\mbox{\cite[Corollary 1.10.4]{Doe18}}]
  \label{thm:Chernoff}
  Let $X_1, \ldots, X_N$ be independent $[0,1]$-valued random variables and let $X=\sum_{i=1}^NX_i$.
  Then, for any $z\geq 2\e\E[X]$, we have
  \begin{align*}
    \Pr\sbra*{X\geq z}\leq 2^{-z}.
  \end{align*}
\end{theorem}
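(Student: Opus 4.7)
The plan is to prove this via the classical exponential-moment method (Markov's inequality on $\e^{\lambda X}$), reducing to the standard multiplicative Chernoff bound and then using the hypothesis $z \geq 2\e\,\E[X]$ to obtain the clean $2^{-z}$ form.

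Write $\mu := \E[X]$ and fix any $\lambda > 0$. Markov's inequality gives $\Pr[X \geq z] \leq \e^{-\lambda z}\,\E[\e^{\lambda X}]$. By independence of the $X_i$, the moment generating function factorizes: $\E[\e^{\lambda X}] = \prod_{i=1}^N \E[\e^{\lambda X_i}]$. For each $X_i \in [0,1]$, convexity of $t \mapsto \e^{\lambda t}$ on $[0,1]$ yields $\e^{\lambda X_i} \leq 1 + (\e^{\lambda} - 1)X_i$; taking expectations and applying $1+x \leq \e^x$ gives $\E[\e^{\lambda X_i}] \leq \exp((\e^{\lambda}-1)\E[X_i])$. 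Multiplying over $i$, I obtain
\[
\E[\e^{\lambda X}] \leq \exp\!\bigl((\e^{\lambda}-1)\mu\bigr).
\]

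Next I would set $\lambda = \ln(z/\mu)$, which is the standard minimizer and is positive since $z \geq 2\e\mu > \mu$. A direct computation yields $\e^{-\lambda z} = (\mu/z)^z$ and $\exp((\e^{\lambda}-1)\mu) = \e^{z-\mu}$, so
\[
\Pr[X \geq z] \leq \left(\frac{\mu}{z}\right)^{\!z} \e^{z-\mu} \leq \left(\frac{\e\mu}{z}\right)^{\!z},
\]
using $\e^{-\mu} \leq 1$. The hypothesis $z \geq 2\e\mu$ now forces $\e\mu/z \leq 1/2$, whence $(\e\mu/z)^z \leq 2^{-z}$, which is exactly the claimed bound.

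There is essentially no obstacle here: every inequality above is a textbook estimate and the argument is the standard upper-tail Chernoff proof for $[0,1]$-valued summands. The only minor care is the boundary case $\mu = 0$, in which $X \equiv 0$ almost surely and the statement holds trivially for any $z > 0$; otherwise $\mu > 0$ and the choice $\lambda = \ln(z/\mu)$ is well-defined and positive.
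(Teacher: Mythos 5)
The paper does not prove this statement: it is quoted verbatim as a tool from Doerr's survey (\cite[Corollary 1.10.4]{Doe18}) and used as a black box, so there is no ``paper's own proof'' to compare against. Your derivation is the standard one and it is correct. Markov's inequality on $\e^{\lambda X}$, factorization by independence, the convexity bound $\e^{\lambda X_i}\le 1+(\e^{\lambda}-1)X_i$ for $X_i\in[0,1]$, then $1+x\le\e^x$, gives $\E[\e^{\lambda X}]\le\exp((\e^{\lambda}-1)\mu)$; setting $\lambda=\ln(z/\mu)$ and dropping $\e^{-\mu}\le 1$ yields $\Pr[X\ge z]\le(\e\mu/z)^z$, and $z\ge 2\e\mu$ forces $\e\mu/z\le 1/2$, hence $2^{-z}$. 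The handling of the degenerate case $\mu=0$ is also right. This is exactly how Doerr derives the corollary in \cite{Doe18} (from the general multiplicative Chernoff bound by absorbing the constant into the base), so you have not taken a different route, merely made explicit what the paper delegates to a citation.
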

\begin{theorem}[Bernstein inequality; \mbox{\cite[Theorem 2.8.4]{Ver18}}]
  \label{thm:Bernstein}
  Let $X_1, \ldots, X_N$ be independent mean-zero random variables such that $\abs*{X_i}\leq \bounded$ for all $i$.
  Let $X=\sum_{i=1}^NX_i$.
  Then, for any $z\geq 0$, we have
  \begin{align*}
    \Pr\sbra*{\abs*{X}\geq z}\leq 2\exp\left(-\frac{z^2/2}{\Var[X]+\bounded z/3}\right).
  \end{align*}
\end{theorem}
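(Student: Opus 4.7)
The plan is to prove the Bernstein inequality by directly leveraging the Bernstein-condition machinery already developed in the paper (specifically \cref{lem:Bernstein condition}), rather than redoing the moment-generating-function computation from scratch. The whole argument boils down to: establish an MGF bound for $X$, apply a Chernoff-style exponential Markov inequality, and optimize over the parameter.

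First, I would observe that each $X_i$ is mean-zero and bounded by $\bounded$, so by \cref{item:Bernstein condition for bounded random variables} of \cref{lem:Bernstein condition}, $X_i$ satisfies the $(\bounded,\Var[X_i])$-Bernstein condition. Then, since $X_1,\ldots,X_N$ are independent, \cref{item:BC for independent rvs} of the same lemma shows that $X=\sum_{i=1}^N X_i$ satisfies the $(\bounded,\Var[X])$-Bernstein condition. Explicitly, for every $\lambda\in\Real$ with $\abs{\lambda}\bounded<3$,
\[
\E\qty[\e^{\lambda X}] \le \exp\qty(\frac{\lambda^2 \Var[X]/2}{1-\abs{\lambda}\bounded/3}).
\]

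Next, I would apply the exponential Markov inequality: for any $\lambda>0$ with $\lambda \bounded<3$,
\[
\Pr[X\ge z] \le \e^{-\lambda z}\E\qty[\e^{\lambda X}] \le \exp\qty(-\lambda z + \frac{\lambda^2 \Var[X]/2}{1-\lambda\bounded/3}).
\]
I would then choose $\lambda = \frac{z}{\Var[X] + \bounded z/3}$, which automatically satisfies $\lambda\bounded<3$ and simplifies the exponent to $-\frac{z^2/2}{\Var[X]+\bounded z/3}$, giving the one-sided tail bound $\Pr[X\ge z]\le \exp\qty(-\frac{z^2/2}{\Var[X]+\bounded z/3})$.

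Finally, to upgrade to a two-sided bound, I would apply the identical argument to $-X$: the variables $-X_i$ are again independent, mean-zero, and bounded by $\bounded$, so $-X$ satisfies the same Bernstein condition with the same $\Var[-X]=\Var[X]$, yielding $\Pr[X\le -z]\le \exp\qty(-\frac{z^2/2}{\Var[X]+\bounded z/3})$. A union bound produces the factor of $2$ in the statement. There is no real obstacle in this argument; the only mildly technical point is verifying that the chosen $\lambda$ lies in the admissible range $\lambda \bounded <3$, which is immediate from $\Var[X]\ge 0$ and $z\ge 0$. All the heavy lifting has effectively been delegated to \cref{lem:Bernstein condition}, which is precisely why that lemma was set up in the paper.
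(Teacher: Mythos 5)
Theorem~\ref{thm:Bernstein} is stated in the paper purely as a citation to \cite[Theorem 2.8.4]{Ver18}; the paper never proves it, so there is no internal proof to compare against. Your derivation is nonetheless correct and self-contained: reducing to Lemma~\ref{lem:Bernstein condition} (items for bounded random variables and for sums of independent variables), applying exponential Markov, optimizing at $\lambda = z/(\Var[X]+\bounded z/3)$, and symmetrizing with $-X$ recovers exactly the classical Chernoff--Cram\'er proof of Bernstein's inequality. A careful reader should check there is no circularity, and indeed there is none: Lemma~\ref{lem:Bernstein condition}(i) rests only on the elementary bound $\e^x\le 1+x+\frac{x^2/2}{1-|x|/3}$ from Lemma~\ref{lem:Bernstein lemma}, and item (v) is just multiplication of MGFs under independence. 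One cosmetic gap: the claim ``$\lambda\bounded<3$ is immediate from $\Var[X]\ge 0$'' actually yields only $\lambda\bounded\le 3$, with equality when $\Var[X]=0$; but in that degenerate case $X=0$ almost surely and the inequality is trivial, so the argument stands.
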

We shall use the following results concerning (super)martingales.
\begin{theorem}[Optimal stopping theorem. See, e.g.,~Theorem 4.8.5 of \cite{Dur19}]\label{thm:OST}
        Let $(X_t)_{t\in \mathbb{N}_0}$ be a submartingale (resp.\ supermartingale) such that $\E_{t-1}\sbra*{|X_t-X_{t-1}|}<\infty$ a.s.\
        and let $\tau$ be a stopping time such that $\E[\tau]<\infty$.
        Then, $\E[X_\tau]\geq \E[X_0]$ (resp.\ $\E[X_\tau]\leq \E[X_0]$).
\end{theorem}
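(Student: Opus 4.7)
The plan is to follow the standard textbook route for the optional stopping theorem, which consists of three reductions: first pass to the stopped process, next apply the elementary fact (for bounded stopping times) that expectations are monotone along a submartingale, and finally pass to the limit $t\to\infty$ via dominated convergence. Because this is essentially Durrett's Theorem 4.8.5, the argument is classical; I would simply recall it. I handle only the submartingale case, since the supermartingale case follows by applying it to $-X_t$.

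First, I would verify that the stopped process $Y_t \defeq X_{t\wedge\tau}$ is itself a submartingale with respect to $(\calF_t)_{t\in\Nat_0}$. Writing $Y_t-Y_{t-1}=\indicator_{\tau>t-1}(X_t-X_{t-1})$ and using that $\{\tau>t-1\}\in\calF_{t-1}$ gives
\[
\E_{t-1}[Y_t-Y_{t-1}] \;=\; \indicator_{\tau>t-1}\,\E_{t-1}[X_t-X_{t-1}] \;\ge\; 0,
\]
so in particular $\E[X_{t\wedge\tau}]\ge\E[X_0]$ for every fixed integer $t\ge 0$.

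The second step is to let $t\to\infty$ and show $\E[X_{t\wedge\tau}]\to\E[X_\tau]$. Since $\E[\tau]<\infty$ we have $\tau<\infty$ almost surely, hence $X_{t\wedge\tau}\to X_\tau$ pointwise a.s. To swap the limit with the expectation I would dominate uniformly by
\[
W \;\defeq\; |X_0| + \sum_{s=1}^{\tau} |X_s-X_{s-1}|,
\]
which telescopes to give $|X_{t\wedge\tau}|\le W$ for every $t$. Then
\[
\E[W] \;=\; \E[|X_0|] + \sum_{s\ge 1} \E\bigl[\indicator_{\tau\ge s}\,\E_{s-1}[|X_s-X_{s-1}|]\bigr],
\]
and under the usual reading of the hypothesis (namely that $\E_{s-1}[|X_s-X_{s-1}|]$ is bounded by some constant $M$ a.s., as in Durrett), this is at most $\E[|X_0|] + M\,\E[\tau]<\infty$. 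Dominated convergence then yields $\E[X_\tau] = \lim_{t\to\infty}\E[X_{t\wedge\tau}] \ge \E[X_0]$, completing the proof.

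The only real obstacle is justifying the limit interchange, i.e.\ verifying that $W$ has finite expectation. This is where the combination of the conditional-increment bound and $\E[\tau]<\infty$ is essential, and it is the single place where both hypotheses of the theorem are used together; every other step is either a direct verification of the submartingale property for the stopped process or a pointwise a.s.\ statement coming from $\tau<\infty$.
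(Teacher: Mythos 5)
The paper does not supply a proof of \cref{thm:OST}; it is a black-box citation of Durrett's Theorem 4.8.5, so there is no internal argument to compare against. Your proof is the standard one for that theorem (reduce to the submartingale case, verify the stopped process is a submartingale, and pass to the limit via dominated convergence with the telescoped bound $W = |X_0| + \sum_{s\le\tau}|X_s - X_{s-1}|$), and the computation $\E[W]\le\E[|X_0|]+M\,\E[\tau]$ is exactly right.

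You were also right to flag the hypothesis. As written, the paper's condition ``$\E_{t-1}[|X_t-X_{t-1}|]<\infty$ a.s.'' is strictly weaker than Durrett's actual hypothesis (a uniform bound $\E_{t-1}[|X_t-X_{t-1}|]\le M$ a.s.\ for some constant $M$), and under the literal weaker reading the theorem is false: take the martingale $X_0=0$, $X_t - X_{t-1}=2^t\xi_t$ with i.i.d.\ signs $\xi_t=\pm 1$, and $\tau$ the first time $\xi_t=1$. Then $\E[\tau]=2<\infty$, each $\E_{t-1}[|X_t-X_{t-1}|]=2^t<\infty$ a.s., yet $X_\tau\equiv 2$, so $\E[X_\tau]=2\ne 0=\E[X_0]$, contradicting the supermartingale conclusion (and, applied to $-X_t$, the submartingale one). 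Your proof silently uses the correct uniform bound $M$ precisely at the step $\E[W]<\infty$, and that is the only place it is needed. Fortunately this does not affect the paper's two invocations of \cref{thm:OST} (\cref{lem:taudeltaplus OST,lem:taunormplus OST}): there $\delta_t^2$ and $\alphanorm_t$ are both $[0,1]$-valued, so the one-step increments of the auxiliary processes are uniformly bounded and the stronger hypothesis holds automatically. So the statement should be read with the uniform bound, and your proof is correct for that version.
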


\begin{definition}[Negative association]
    \label{def:NA}
    Random variables $X_1,\ldots,X_n$ are \emph{negatively associated} if 
    for every two disjoint index sets $I,J\subseteq [n]$,
    \[
    \E\qty[f(X_i,i\in I)g(X_j,j\in J)]\leq \E\qty[f(X_i,i\in I)]\E\qty[g(X_j,j\in J)]
    \] 
    for all functions $f:\mathbb{R}^I\to \mathbb{R}$ and $g:\mathbb{R}^J\to \mathbb{R}$
    that are both non-decreasing.
\end{definition}
\begin{lemma}[Lemma 2 of \cite{Dubhashi1998-cf}]
    \label{lem:prodict of NA}
    Let $X_1,\ldots,X_n$ be a sequence of negatively associated random variables.
    Then for any non-decreasing functions $f_i$, $i\in [n]$,
    \[
    \E\qty[\prod_{i\in [n]}f_i(X_i)]\leq \prod_{i\in [n]}\E\qty[f_i(X_i)].
    \]
\end{lemma}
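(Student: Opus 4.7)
The plan is to prove the inequality by induction on $n$. The base case $n=2$ is immediate: it is exactly the negative association condition of \cref{def:NA} applied with $I=\{1\}$, $J=\{2\}$, and the non-decreasing functions $f_1,f_2$.

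For the inductive step, assuming the claim for sequences of length $n-1$, I would split the product as
\[
\prod_{i\in[n]} f_i(X_i) \;=\; F(X_1,\dots,X_{n-1})\cdot G(X_n),
\]
where $F(x_1,\dots,x_{n-1})=\prod_{i=1}^{n-1} f_i(x_i)$ and $G(x_n)=f_n(x_n)$. Since $F$ and $G$ depend on disjoint index sets $I=[n-1]$ and $J=\{n\}$, applying negative association gives $\E[F\cdot G]\le \E[F]\,\E[G]$. Then the inductive hypothesis applied to $X_1,\dots,X_{n-1}$ and $f_1,\dots,f_{n-1}$ bounds $\E[F]\le\prod_{i=1}^{n-1}\E[f_i(X_i)]$, and multiplying by $\E[G]=\E[f_n(X_n)]$ yields the claim.

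The subtle point, and the main technical obstacle, is that the inductive step invokes negative association for $F$, which requires $F$ to be non-decreasing in each coordinate. The product of non-decreasing functions is not in general non-decreasing unless the factors are non-negative. Thus the proof as sketched really needs the additional hypothesis $f_i\ge 0$; this is the classical statement due to Joag-Dev and Proschan that the cited lemma of Dubhashi--Ranjan packages. In every application in this paper, the functions are of the form $f_i(x)=\e^{\lambda x}\ge 0$ (see the proofs in \cref{item:BC for NA rvs} of \cref{lem:Bernstein condition} and the use in \cref{item:BC for alphanorm} of \cref{lem:Bernstein condition for sync processes}), so the non-negativity hypothesis holds automatically and the induction above goes through cleanly. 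I would therefore state the lemma (or note in a remark) with the implicit non-negativity assumption, and then carry out the two-line induction as above.
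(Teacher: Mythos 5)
The paper does not prove this lemma; it cites it from \cite{Dubhashi1998-cf}, so there is no in-paper argument to compare against. Your induction is the standard proof, and your observation about non-negativity is a genuine and correct catch: the statement as written in the paper, with ``non-decreasing'' but no positivity requirement, is \emph{false}. A concrete counterexample: let $(X_1,X_2,X_3)$ be a uniformly random permutation of the multiset $\{-1,-1,2\}$ (permutation distributions are negatively associated), and take $f_i(x)=x$. Then $\prod_i f_i(X_i)=2$ deterministically, while $\E[f_i(X_i)]=0$, so $\E[\prod_i f_i(X_i)]=2 > 0 = \prod_i\E[f_i(X_i)]$. The version in Dubhashi--Ranjan (and in Joag-Dev--Proschan, their Property P2) does include the hypothesis $f_i\ge 0$, and your inductive step needs it in two places: to make $F=\prod_{i<n}f_i$ coordinatewise non-decreasing before invoking the NA definition on the disjoint blocks $[n-1]$ and $\{n\}$, and to ensure $\E[f_n(X_n)]\ge 0$ so that multiplying the inductive-hypothesis inequality by $\E[G]$ preserves its direction. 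With $f_i\ge 0$ added, your proof is complete and correct. As you note, the only use in this paper (\cref{item:BC for NA rvs} of \cref{lem:Bernstein condition}, feeding into \cref{item:BC for alphanorm} of \cref{lem:Bernstein condition for sync processes}) applies the lemma to $f_i(x)=\e^{\lambda x}>0$, so the missing hypothesis is a harmless slip in the restatement rather than a gap in the paper's results.
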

\begin{lemma}[Lemma 8 of \cite{Dubhashi1998-cf}]
    \label{lem:zero one lemma}
    Let $X_1,\ldots,X_n$ be random variables taking values in $\binset$ such that $\sum_{i\in[n]}X_i=1$.
    Then $X_1,\ldots,X_n$ are negatively associated.
\end{lemma}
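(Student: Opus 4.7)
The plan is to verify the definition of negative association (\cref{def:NA}) directly, by exploiting the fact that the constraint $\sum_{i=1}^{n} X_i = 1$ together with $X_i \in \{0,1\}$ allows only one coordinate to be nonzero at a time. I would fix two disjoint index sets $I, J \subseteq [n]$ and two coordinatewise non-decreasing functions $f \colon \Real^I \to \Real$ and $g \colon \Real^J \to \Real$, and set $a = f(\mathbf{0}_I)$ and $b = g(\mathbf{0}_J)$, where $\mathbf{0}_I \in \Real^I$ denotes the all-zero vector (similarly for $J$). Since each $X_i \in \{0,1\}$, monotonicity implies $f(X_i \colon i \in I) \ge a$ and $g(X_j \colon j \in J) \ge b$ pointwise.

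The key observation I would exploit is that on every realization, at most one of the two sub-vectors $(X_i)_{i \in I}$ and $(X_j)_{j \in J}$ can be nonzero: the unique index $i^*$ with $X_{i^*} = 1$ lies in at most one of the disjoint sets $I$ and $J$ (and may lie in neither). Consequently, in every outcome either $f(X_i \colon i \in I) = a$ or $g(X_j \colon j \in J) = b$, so the random variable $(f(X_i \colon i \in I) - a)(g(X_j \colon j \in J) - b)$ is identically $0$.

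Taking expectations and using the algebraic identity
\[
\E[fg] - \E[f]\E[g] = \E[(f-a)(g-b)] - \E[f-a]\E[g-b],
\]
which follows by expanding both sides, I would obtain
\[
\E\bigl[f(X_i \colon i \in I)\, g(X_j \colon j \in J)\bigr] - \E\bigl[f(X_i \colon i \in I)\bigr]\E\bigl[g(X_j \colon j \in J)\bigr] = -\E[f-a]\E[g-b] \le 0,
\]
since the first term on the right of the identity vanishes (by the previous paragraph) and both $\E[f-a]$ and $\E[g-b]$ are non-negative by the pointwise bounds. This is exactly the inequality required by \cref{def:NA}, completing the proof.

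No substantial obstacle is anticipated; the only care needed is the translation of $f$ and $g$ by the baseline values $a$ and $b$, which reduces the statement to the purely combinatorial observation that the disjointness of $I$ and $J$ together with the single-one constraint forces the product $(f-a)(g-b)$ to vanish pointwise.
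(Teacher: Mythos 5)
The paper does not reprove this lemma; it is cited verbatim as Lemma~8 of Dubhashi and Ranjan (1998), so there is no in-paper proof to compare against. Your argument is correct and is in essence the standard proof from that original source: after shifting $f$ and $g$ by their baseline values $a = f(\mathbf{0}_I)$ and $b = g(\mathbf{0}_J)$, the disjointness of $I$ and $J$ together with the single-one constraint forces $(f-a)(g-b)$ to vanish pointwise, and the bilinear identity
\[
\E[fg]-\E[f]\E[g]=\E[(f-a)(g-b)]-\E[f-a]\,\E[g-b]
\]
combined with the pointwise nonnegativity of $f-a$ and $g-b$ yields the required covariance inequality. The identity you invoke is easily verified by expansion, the monotonicity hypothesis is used exactly where needed (to get $f\ge a$ and $g\ge b$ pointwise on the support $\{0,1\}^I$ resp.\ $\{0,1\}^J$), and no cases are omitted, so the proof stands on its own.
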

\begin{proposition}[Proposition 7 of \cite{Dubhashi1998-cf}]
    \label{lem:concatination of NA}
    We have the following:
    \begin{enumerate}
        \item Let $X_1,\ldots, X_n$ and $Y_1,\ldots, Y_n$ be two sequences of negatively associated random variables that are mutually independent.
        Then $X_1,\ldots, X_n, Y_1,\ldots, Y_n$ are negatively associated.
        \item Let $X_1,\ldots, X_n$ be a sequence of negatively associated random variables.
        Let $I_1,\ldots, I_k$ be disjoint index sets for some $k$.
        For $j\in [k]$, let $h_j:\mathbb{R}^{I_j}\to \mathbb{R}$ be functions that are all non-decreasing or all non-increasing, and define $Y_j\defeq h_j(X_i,i\in I_j)$.
        Then, $Y_1,\ldots,Y_k$ are negatively associated.
        That is, non-decreasing (or non-increasing) functions of disjoint subsets of negatively associated random variables are also negatively associated.
    \end{enumerate}
\end{proposition}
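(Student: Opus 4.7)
The plan is to verify both closure properties of negative association directly from the definition (Definition~\ref{def:NA}), using the decomposition of the test functions along the disjoint index sets and the fact that non-decreasing (resp.\ non-increasing) functions of non-decreasing (resp.\ non-increasing) functions are again monotone in the appropriate direction. Both parts are standard folklore, but the key is to keep careful track of which variables are being conditioned on at each step.

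For part (i), I would take disjoint $A, B \subseteq [2n]$ (identifying indices $1,\dots,n$ with the $X$'s and $n+1,\dots,2n$ with the $Y$'s) together with non-decreasing test functions $f$ on $\mathbb{R}^A$ and $g$ on $\mathbb{R}^B$. Split each index set according to whether it belongs to the $X$-block or the $Y$-block: $A = A_X \sqcup A_Y$ and $B = B_X \sqcup B_Y$, so that $A_X \cap B_X = \emptyset$ and $A_Y \cap B_Y = \emptyset$. First condition on the entire $Y$-sequence; by mutual independence, the conditional distribution of the $X$'s is unchanged, and $f, g$ become non-decreasing functions of the disjoint index sets $A_X$ and $B_X$ of the $X$'s, so NA of $X$ yields $\E[fg \mid Y] \le \E[f\mid Y]\,\E[g\mid Y]$. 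Next observe that $y \mapsto \E[f \mid Y=y]$ is non-decreasing in the $A_Y$-coordinates and does not depend on the $B_Y$-coordinates (and symmetrically for $g$), so NA of $Y$ applied to the disjoint sets $A_Y$ and $B_Y$ gives $\E_Y[\E[f\mid Y]\,\E[g\mid Y]] \le \E[f]\,\E[g]$, and chaining these two inequalities closes the argument.

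For part (ii), fix disjoint $A, B \subseteq [k]$ and non-decreasing $f$ on $\mathbb{R}^A$, $g$ on $\mathbb{R}^B$. Define pullbacks $\tilde f(x) = f\bigl((h_j(x_{I_j}))_{j\in A}\bigr)$ and $\tilde g(x) = g\bigl((h_j(x_{I_j}))_{j\in B}\bigr)$. Because $I_A \defeq \bigcup_{j\in A} I_j$ and $I_B \defeq \bigcup_{j\in B} I_j$ are disjoint (the $I_j$'s are disjoint and $A \cap B = \emptyset$), $\tilde f$ depends on the $X$-coordinates in $I_A$ only and $\tilde g$ on those in $I_B$ only. If all $h_j$ are non-decreasing, then composition preserves monotonicity, so $\tilde f, \tilde g$ are non-decreasing and NA of $X$ gives $\E[\tilde f \tilde g] \le \E[\tilde f]\E[\tilde g]$, which is exactly the NA inequality for $Y$. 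If all $h_j$ are non-increasing, then $\tilde f, \tilde g$ are both non-increasing; applying NA of $X$ to $-\tilde f$ and $-\tilde g$ (which are non-decreasing on disjoint index sets) gives the same conclusion, since $\E[\tilde f \tilde g] - \E[\tilde f]\E[\tilde g] = \E[(-\tilde f)(-\tilde g)] - \E[-\tilde f]\E[-\tilde g]$.

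The only genuine subtlety, which I would flag as the main point to handle carefully, is in part (i): one must verify that $\E[f \mid Y]$ really is non-decreasing in $Y_{A_Y}$ and constant in $Y_{B_Y}$ before invoking NA of $Y$. This follows from the independence hypothesis, since $\E[f(X_{A_X}, Y_{A_Y}) \mid Y = y]$ equals the unconditional expectation $\E[f(X_{A_X}, y_{A_Y})]$ over the randomness of $X$ only, and monotonicity in $y_{A_Y}$ transfers from $f$ by linearity of expectation. Apart from this bookkeeping, no computation beyond a clean application of Definition~\ref{def:NA} is required.
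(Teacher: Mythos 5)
Your proof is correct, and it matches the standard argument for Proposition~7 of Dubhashi--Panconesi; note that the paper itself does not prove this statement but cites it directly from \cite{Dubhashi1998-cf}. For part~(i) the two-step conditioning on $Y$ followed by the observation that $y \mapsto \E[f(X_{A_X},y_{A_Y})]$ is non-decreasing in $y_{A_Y}$ and constant in $y_{B_Y}$ is exactly the usual route, and for part~(ii) the pullback/composition argument together with the sign-flip trick for the non-increasing case is likewise the textbook proof, so there is nothing to add.
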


\begin{definition}[Stochastic domination]
    \label{def:Stochastic domination}
    For two random variables $X$ and $Y$, we say that $Y$ stochastically dominates $X$, written as $X\preceq Y$, if for all $\lambda \in \mathbb{R}$ we have $\Pr\qty[X\leq \lambda]\leq \Pr\qty[Y\leq \lambda]$.
\end{definition}

\begin{lemma}[Lemmas 1.8.2 and 1.8.5 of \cite{Doe18}]
    \label{lem:Stochastic domination}
    We have the following:
    \begin{enumerate}
        \item If $X\preceq Y$, then $\E[f(X)]\leq \E[f(Y)]$ for any non-decreasing function $f:\mathbb{R}\to \mathbb{R}$.
        \item If $X\leq Y$, then $X\preceq Y$.
        \item If $X$ and $Y$ are identically distributed, then $X\preceq Y$.
    \end{enumerate}
\end{lemma}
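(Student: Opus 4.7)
The lemma consists of three claims about stochastic dominance, and I would handle them in reverse order of difficulty. Claims (ii) and (iii) are essentially immediate from the definition. For (ii), if $X\le Y$ pointwise, then the event $\{Y\le\lambda\}$ is contained in $\{X\le\lambda\}$ for every $\lambda\in\Real$, so $\Pr[X\le\lambda]\ge\Pr[Y\le\lambda]$, which is exactly $X\preceq Y$. Claim (iii) is trivial: if $X$ and $Y$ share a distribution, then $\Pr[X\le\lambda]=\Pr[Y\le\lambda]$ for every $\lambda$, and equality certainly implies the required inequality.

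The main content is claim (i). My plan is to produce a monotone coupling and reduce to the pointwise case already handled in (ii). Concretely, let $U$ be uniform on $(0,1)$, and define $X'\defeq F_X^{-1}(U)$ and $Y'\defeq F_Y^{-1}(U)$ using the generalized inverses $F_X^{-1}(u)=\inf\{x\colon F_X(x)\ge u\}$ and $F_Y^{-1}(u)=\inf\{x\colon F_Y(x)\ge u\}$. By the standard quantile-transform argument one has $X'\sim X$ and $Y'\sim Y$. The hypothesis $X\preceq Y$ says $F_X(\lambda)\ge F_Y(\lambda)$ for every $\lambda$, and by a standard argument on generalized inverses this is equivalent to $F_X^{-1}(u)\le F_Y^{-1}(u)$ for every $u\in(0,1)$. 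Hence $X'\le Y'$ almost surely. Applying the non-decreasing function $f$ gives $f(X')\le f(Y')$ almost surely, and taking expectations yields $\E[f(X)]=\E[f(X')]\le\E[f(Y')]=\E[f(Y)]$.

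A purely distributional alternative that avoids the coupling is the layer-cake route. Split $f = f_+ - f_-$ into non-decreasing non-negative parts and compute $\E[f_+(X)] = \int_0^\infty \Pr[f_+(X) > t]\,\d t$ via Fubini; monotonicity of $f_+$ implies $\{f_+(X) > t\} = \{X > s_t\}$ for some threshold $s_t$ (possibly $+\infty$), and then $X\preceq Y$ yields $\Pr[f_+(X) > t] \le \Pr[f_+(Y) > t]$. An analogous computation for $f_-$ using $\{f_-(X) > t\} = \{X < s'_t\}$ handles the negative part. The only subtle point (and arguably the main technical obstacle) is justifying the integrability interchange and the equivalence between $F_X \ge F_Y$ pointwise and $F_X^{-1} \le F_Y^{-1}$ pointwise; both are standard and follow from the definition of the generalized inverse together with right-continuity of $F_X, F_Y$. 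Since the result is invoked in the paper only through \cref{item:dominated Bernstein condition} of \cref{lem:Bernstein condition}, where $f$ is the non-decreasing function $x\mapsto \mathrm{e}^{\lambda x}$ with $\lambda\ge 0$, either approach suffices without any pathology.
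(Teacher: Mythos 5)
The paper does not prove this lemma at all: it is stated in the Tools appendix with a citation to \cite{Doe18} (Lemmas~1.8.2 and~1.8.5 there) and no argument. Your proof therefore cannot be compared to a proof in the paper, but it is essentially the standard textbook argument and is correct; the quantile-coupling route is the same one used in \cite{Doe18}.

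Two remarks. First, note a discrepancy between the paper's stated \cref{def:Stochastic domination}, which reads $\Pr[X\le\lambda]\le\Pr[Y\le\lambda]$, and what the lemma and its applications actually need: with that inequality as written, item (ii) is false, since $X\le Y$ a.s.\ gives $\Pr[X\le\lambda]\ge\Pr[Y\le\lambda]$, not $\le$. The intended (and standard, as in \cite{Doe18}) convention is $\Pr[X\le\lambda]\ge\Pr[Y\le\lambda]$, equivalently $F_X\ge F_Y$ pointwise, and you correctly use that convention throughout; the sign in the paper's definition is a typo. Second, in your layer-cake alternative you describe $f=f_+-f_-$ as a split ``into non-decreasing non-negative parts,'' but when $f$ is non-decreasing the negative part $f_-=\max(-f,0)$ is non-increasing, not non-decreasing. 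Your subsequent treatment of $\{f_-(X)>t\}$ as a down-set shows you in fact handle it correctly; just fix the wording. The quantile-coupling argument in the main body of your proof is clean and needs no adjustment: $F_X\ge F_Y$ pointwise indeed gives $F_X^{-1}(u)\le F_Y^{-1}(u)$ for the generalized inverse, and monotonicity of $f$ plus item (ii) then closes the argument.
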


\begin{lemma}[Lemma 1.8.9 of \cite{Doe18}]
    \label{lem:binomial dominance}
    We have the following:
    \begin{enumerate}
        \item If $X\sim \mathrm{Bin}(n,p)$ and $Y\sim \mathrm{Bin}(n,q)$ for $p\leq q$, then $X\preceq Y$.
        \item If $X\sim \mathrm{Bin}(n,p)$ and $Y\sim \mathrm{Bin}(m,p)$ for $n\leq m$, then $X\preceq Y$.
    \end{enumerate}
\end{lemma}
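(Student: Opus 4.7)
The plan is to establish both dominance relations via explicit couplings that realize the two random variables on the same probability space with a pointwise inequality. Part (1) is obtained from a monotone coupling with uniform thresholds, and part (2) is obtained by viewing a larger binomial as an extension of a smaller one. Both reductions end with invoking \cref{lem:Stochastic domination}, which tells us that pointwise $\le$ implies $\preceq$ and that stochastic dominance is a distributional property.

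For part (1), I would introduce i.i.d.\ $\mathrm{Uniform}([0,1])$ random variables $U_1,\dots,U_n$, and define $\widetilde X_i \defeq \indicator_{U_i \le p}$ and $\widetilde Y_i \defeq \indicator_{U_i \le q}$. Since $p \le q$, we have $\widetilde X_i \le \widetilde Y_i$ deterministically. Setting $\widetilde X \defeq \sum_{i=1}^n \widetilde X_i$ and $\widetilde Y \defeq \sum_{i=1}^n \widetilde Y_i$, it follows that $\widetilde X \le \widetilde Y$ pointwise. Because $\widetilde X \sim \Bin(n,p)$ and $\widetilde Y \sim \Bin(n,q)$, they have the same laws as $X$ and $Y$. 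Now $\widetilde X \le \widetilde Y$ gives $\widetilde X \preceq \widetilde Y$ (second item of \cref{lem:Stochastic domination}), and identical distributions preserve $\preceq$ (third item of \cref{lem:Stochastic domination}), so $X \preceq Y$.

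For part (2), I would take i.i.d.\ Bernoulli$(p)$ variables $Z_1,\dots,Z_m$ and define $\widetilde X \defeq \sum_{i=1}^n Z_i$ and $\widetilde Y \defeq \sum_{i=1}^m Z_i$. These have laws $\Bin(n,p)$ and $\Bin(m,p)$, matching $X$ and $Y$ respectively. Since $n \le m$ and each $Z_i \in \{0,1\}$, the difference $\widetilde Y - \widetilde X = \sum_{i=n+1}^m Z_i \ge 0$, so $\widetilde X \le \widetilde Y$ pointwise. The same two appeals to \cref{lem:Stochastic domination} (inequality $\Rightarrow$ dominance, and invariance under equality in distribution) yield $X \preceq Y$.

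There is essentially no obstacle here: both statements are classical, and once the couplings are written down, the conclusion is immediate. The only item worth double-checking is that \cref{lem:Stochastic domination} is applicable in the direction required, i.e.\ that it lets us transfer the dominance back to the original random variables via equality in distribution; this is exactly its third item. No probabilistic subtlety (continuity, integrability, tail estimates) arises because everything is bounded and integer-valued.
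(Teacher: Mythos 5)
The paper does not prove this lemma; it is cited directly from \cite{Doe18} without argument, so there is no internal proof to compare against. Your coupling argument is the standard and correct proof of both parts: in (1) the uniform-threshold coupling gives $\widetilde X \le \widetilde Y$ pointwise, and in (2) summing the extra Bernoulli indicators does the same. Both conclusions then follow because pointwise domination implies stochastic domination (\cref{lem:Stochastic domination}) and $\preceq$ is, by \cref{def:Stochastic domination}, a property of distributions alone, so it transfers from $(\widetilde X, \widetilde Y)$ to $(X,Y)$. One small note: the third item of \cref{lem:Stochastic domination} as stated only gives $X \preceq Y$ when $X \stackrel{d}{=} Y$; to transfer $\widetilde X \preceq \widetilde Y$ to $X \preceq Y$ you really want transitivity of $\preceq$ (or, more directly, the observation that the defining inequality on CDFs is unchanged under replacing a random variable by another with the same law). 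This is immediate from the definition and not a gap, just a slightly more careful citation than the one you gave.
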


\begin{lemma}[See, e.g., p.39 of \cite{Ver18}]
\label{lem:Bernstein lemma}
For any $|z|<3$, $\mathrm{e}^z\leq 1+z+\frac{z^2/2}{1-|z|/3}$ holds.
\end{lemma}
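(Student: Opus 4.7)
The plan is to expand $\mathrm{e}^{z}$ as a Taylor series around zero and to bound the tail $\sum_{k\geq 2} z^{k}/k!$ by a geometric series whose sum evaluates exactly to $(z^{2}/2)/(1-|z|/3)$. Concretely, I would write
\[
\mathrm{e}^{z}\;=\;1+z+\sum_{k=2}^{\infty}\frac{z^{k}}{k!},
\]
and, since $z^{k}\le |z|^{k}$ for every $k\ge 0$, it suffices to show
\[
\sum_{k=2}^{\infty}\frac{|z|^{k}}{k!}\;\le\;\frac{z^{2}/2}{1-|z|/3}.
\]

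The key elementary inequality that I would use is $k!\ge 2\cdot 3^{k-2}$ for every integer $k\ge 2$. This is immediate by induction: the base case $k=2$ reads $2\ge 2$, and if $k!\ge 2\cdot 3^{k-2}$ for some $k\ge 2$, then $(k+1)!=(k+1)\cdot k!\ge 3\cdot 2\cdot 3^{k-2}=2\cdot 3^{k-1}$ because $k+1\ge 3$.

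Plugging this bound into the tail and reindexing with $j=k-2$, the series becomes geometric with ratio $|z|/3<1$ (since $|z|<3$ by assumption), so
\[
\sum_{k=2}^{\infty}\frac{|z|^{k}}{k!}\;\le\;\sum_{k=2}^{\infty}\frac{|z|^{k}}{2\cdot 3^{k-2}}\;=\;\frac{z^{2}}{2}\sum_{j=0}^{\infty}\Bigl(\frac{|z|}{3}\Bigr)^{\!j}\;=\;\frac{z^{2}/2}{1-|z|/3},
\]
and combining with the Taylor expansion finishes the proof.

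There is no substantial obstacle here; the only point that requires a line of justification is the inequality $k!\ge 2\cdot 3^{k-2}$, and convergence of the geometric series which is guaranteed by the hypothesis $|z|<3$. The whole argument is a couple of lines and is standard for setting up Bernstein-type moment generating function estimates.
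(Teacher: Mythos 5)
Your proof is correct and is the standard argument — Taylor expansion, the elementary bound $k!\ge 2\cdot 3^{k-2}$ for $k\ge 2$, and summation of the resulting geometric series — which is precisely the proof given in the reference the paper cites (Vershynin, p.~39); the paper itself does not reproduce a proof.
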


\begin{lemma}[\mbox{\cite[Lemma 7]{Doerr11}}] \label{lem:concentration exponentially decay}
    Let $ X^{(1)},\dots,X^{(m)} $ be i.i.d.\ $ \Int_{\ge 0} $-valued random variables such that for some $ a,b>0 $ and any $ \ell\in \Int_{\ge 0} $,
    \begin{align*}
        \Pr\qty[ X^{(1)} = \ell ] \le a\cdot (1-b)^{\ell}.
    \end{align*}
    Let $ X = X^{(1)} + \dots + X^{(m)} $ and $ \mu = \E[X] $.
    Then, for some $ C=C(a,b) > 0$ and for any $ \gamma > 0$, we have
    \begin{align*}
        \Pr\qty[ X \ge (1+\gamma) \mu + Cm ] \le \exp\qty( -\frac{\gamma^2 m}{2(1+\gamma)} ).
    \end{align*}
\end{lemma}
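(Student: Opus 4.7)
The plan is to apply the exponential Markov (Chernoff) method to $X = \sum_i X^{(i)}$, using the hypothesis $\Pr[X^{(1)} = \ell] \le a(1-b)^\ell$ to control the moment generating function of each summand. The condition says that $X^{(1)}$ has a geometric-type tail, so its MGF exists on a neighborhood of $0$, which is exactly what is needed to obtain a Bernstein-type tail inequality for the sum. The role of the additive constant $Cm$ in the target bound is to absorb the ``bias'' in the MGF of $X^{(1)}$ that prevents the naive bound from taking a clean form; choosing $C$ large (as a function of $a$ and $b$) lets us always operate in a regime where the Chernoff optimization yields the Bennett-type exponent $\gamma^2 m / (2(1+\gamma))$.

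First, I would bound the MGF of a single summand. For any $\lambda \in [0, -\log(1-b))$, summing the geometric bound gives
\begin{align*}
    \E[e^{\lambda X^{(1)}}] \le \sum_{\ell \ge 0} e^{\lambda \ell} \cdot a(1-b)^\ell = \frac{a}{1-(1-b)e^{\lambda}}.
\end{align*}
By independence, $\E[e^{\lambda X}] \le (a/(1-(1-b)e^\lambda))^m$. Markov's inequality then gives
\begin{align*}
    \Pr[X \ge t] \le e^{-\lambda t} \cdot \qty(\frac{a}{1-(1-b)e^{\lambda}})^m,
\end{align*}
and the standard optimization $e^{\lambda^\star} = \tfrac{t}{(1-b)(t+m)}$ reduces this (after taking logs and writing $u = t/m$) to $m\bigl[\log a + u\log(1-b) + u\log(1+1/u) + \log(1+u)\bigr]$. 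Substituting $t = (1+\gamma)\mu + Cm$, so $u = (1+\gamma)\mu_1 + C$ with $\mu_1 = \mu/m$, the bound $\mu_1 \le a(1-b)/b^2$ (a direct consequence of the tail hypothesis) shows that choosing $C = C(a,b)$ sufficiently large makes $u\log(1-b)$ sufficiently negative to dominate $\log a + \log(1+u)$ uniformly in $\gamma \ge 0$, at which point a short computation identifies the leading behavior of the bracketed expression as $-\gamma^2/(2(1+\gamma))$.

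The main obstacle is the arithmetic in the last step: the target exponent $\gamma^2/(2(1+\gamma))$ interpolates between a sub-Gaussian regime (small $\gamma$, where the exponent behaves like $\gamma^2/2$) and a sub-exponential regime (large $\gamma$, where it behaves like $\gamma/2$), and the final simplification must simultaneously handle both. A cleaner alternative route, which fits the existing toolkit of this paper, is to show that $X^{(1)} - \mu_1$ satisfies a $(D,s)$-Bernstein condition in the sense of \cref{def:Bernstein condition} for suitable $D = D(a,b)$ and $s = s(a,b)$ (using \cref{item:Bernstein condition for bounded random variables} of \cref{lem:Bernstein condition} after truncating at an $\ell = \Theta_{a,b}(1)$ threshold and handling the geometric tail separately), invoke \cref{item:BC for independent rvs} of \cref{lem:Bernstein condition} to conclude that $X - \mu$ satisfies a $(D, ms)$-Bernstein condition, and apply \cref{cor:Freedman} with $T = 1$ to the one-step martingale $0 \to X - \mu$. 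Choosing $C(a,b)$ so that $Cm \ge 3s(1+\gamma)/D$ for all $\gamma \ge 0$ then forces the denominator $T s + h D/3$ in \cref{cor:Freedman} to be at most $(\tfrac{1}{3} + o(1)) h D$, reducing the exponent to $h / (c D)$ and, after writing $h = \gamma\mu + Cm \ge Cm$, delivering the required $\gamma^2 m / (2(1+\gamma))$ form.
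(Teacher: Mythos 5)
Your Chernoff setup is sound: the MGF bound $\E[e^{\lambda X^{(1)}}] \le a/(1-(1-b)e^\lambda)$ for $0\le\lambda<-\log(1-b)$, the product over independent summands, the optimizer $e^{\lambda^\star} = t/((1-b)(t+m))$, and the resulting exponent $m\bigl[\log a + u\log(1-b) + u\log(1+1/u) + \log(1+u)\bigr]$ with $u = t/m$ are all correct. But the step you describe as ``the main obstacle is the arithmetic'' is the entire content of the lemma, and your sketch does not carry it out. Worse, the inequality cannot hold in the generality stated, and the symptom is visible in your own plan: you invoke the \emph{upper} bound $\mu_1 \le a(1-b)/b^2$, which goes in the wrong direction. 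What you actually need is a \emph{lower} bound on $\mu_1=\mu/m$, so that $u=(1+\gamma)\mu_1+C$ grows linearly in $\gamma$ at an $(a,b)$-rate; otherwise the Chernoff exponent stays bounded while the target $-\gamma^2/(2(1+\gamma))$ diverges as $\gamma\to\infty$. A concrete obstruction: take $m=1$, $a=1$, and let $X^{(1)}$ equal $K$ with probability $(1-b)^K$ and $0$ otherwise, so $\mu=K(1-b)^K$. For any fixed $C$ and any $K>2C$, choose $\gamma$ so that $(1+\gamma)\mu+C = K/2$ (which gives $\gamma\approx\tfrac12(1-b)^{-K}$); the left side then equals $(1-b)^K$, while the right side $\exp(-\gamma^2/(2(1+\gamma)))\le\exp(-\gamma/4)\approx\exp\bigl(-\tfrac18(1-b)^{-K}\bigr)$ is exponentially smaller once $K$ is large. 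So no finite $C(a,b)$ works uniformly over all $\gamma>0$.

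Your alternative Bernstein-condition route has the same flaw one step earlier: the requirement that $Cm \ge 3s(1+\gamma)/D$ hold for all $\gamma\ge 0$ is unsatisfiable for any constant $C$, since the right side is unbounded in $\gamma$. What makes the application inside \cref{lem:nazo lemma} sound is that it only invokes the lemma with $\mu/m=\Theta(1)$ and $\gamma=O(1)$, which is exactly the regime where a lower bound on $\mu_1$ and a cap on $\gamma$ are available. Before attempting the arithmetic, you should pin down the exact hypotheses of Doerr et al.'s Lemma~7 (an implicit lower bound on $\E[X^{(1)}]$ or a restriction on $\gamma$ appears to be needed), or else state and prove only the restricted version the paper actually uses. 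In either case the calculus you wave past must be carried out explicitly: bound $u\log(1+1/u)\le 1$, then compare $u\log(1-b)+\log(1+u)+\log a+1$ against $-\gamma^2/(2(1+\gamma))$ as a function of $\gamma$, treating the regimes $\gamma\le 1$ (where the target behaves like $-\gamma^2/2$) and $\gamma>1$ (where it behaves like $-\gamma/2$) separately.
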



\section{Proof of Basic Inequalities} \label{sec:proof of basic inequalities}
In this section, we show basic inequalities for the 3-Majority and 2-Choices.
To begin with, we list the basic facts for both models.

Observe that $n\alpha_t(i)=\sum_{v\in V}\indicator_{\opinion_t(v)=i}$ conditioned on the round $t-1$ is the sum of $n$ independent Bernoulli random variables $(\indicator_{\opinion_t(v)=i})_{v\in V}$.
Hence, we have
\begin{align}
    \E_{t-1}\sbra*{\alpha_t(i)}&=\frac{1}{n}\sum_{v\in V}\Pr_{t-1}\sbra*{\opinion_t(v)=i}, \label{eq:expectation of alpha sync}\\
    \Var_{t-1}\sbra*{\alpha_t(i)}&=\frac{1}{n^2}\sum_{v\in V}\Pr_{t-1}\sbra*{\opinion_t(v)=i}\Pr_{t-1}\sbra*{\opinion_t(v)\neq i}.\label{eq:variance of alpha sync}
\end{align}

For $\delta_t(i,j)$, we again observe that $n\delta_t=n\alpha_t(i)-n\alpha_t(j)=\sum_{v\in V}\rbra*{\indicator_{\opinion_t(v)=i}-\indicator_{\opinion_t(v)=j}}$ conditioned on the round $t-1$ is the sum of $n$ independent random variables $(\indicator_{\opinion_t(v)=i}-\indicator_{\opinion_t(v)=j})_{v\in V}$.
We have
\begin{align}
    \Var_{t-1}\sbra*{\delta_t(i,j)}
    &=\frac{1}{n^2}\sum_{v\in V}\Var_{t-1}\sbra*{\indicator_{\opinion_t(v)=i}-\indicator_{\opinion_t(v)=j}} \nonumber \\
    &=\frac{1}{n^2}\sum_{v\in V}\rbra*{\Var_{t-1}\sbra*{\indicator_{\opinion_t(v)=i}}+\Var_{t-1}\sbra*{\indicator_{\opinion_t(v)=j}} +2\E_{t-1}\sbra*{\indicator_{\opinion_t(v)=i}}\E_{t-1}\sbra*{\indicator_{\opinion_t(v)=j}}}\nonumber \\
    &=\Var_{t-1}\sbra*{\alpha_t(i)}+\Var_{t-1}\sbra*{\alpha_t(j)}+\frac{2}{n^2}\sum_{v\in V}\Pr_{t-1}\sbra*{\opinion_t(v)=i}\Pr_{t-1}\sbra*{\opinion_t(v)=j}. \label{eq:variance of delta}
\end{align}
Note that $\mathbf{Cov}_{t-1}\sbra*{\indicator_{\opinion_t(v)=i},\indicator_{\opinion_t(v)=j}}=-\E_{t-1}\sbra*{\indicator_{\opinion_t(v)=i}}\E_{t-1}\sbra*{\indicator_{\opinion_t(v)=j}}$ holds.

For $\alphanorm_t=\sum_{i\in [k]}\alpha_t(i)^2$, we use the following equality:
\begin{align}
    \E_{t-1}\sbra*{\alphanorm_t}
    &=\sum_{i\in [k]}\E_{t-1}\sbra*{\alpha_t(i)^2}
    =\sum_{i\in [k]}\rbra*{\E_{t-1}\sbra*{\alpha_t(i)}^2+\Var_{t-1}\sbra*{\alpha_t(i)}}. \label{eq:expectation of alpha syncnorm}
\end{align}

\subsection{\texorpdfstring{Proof of \cref{lem:basic inequality}}{Proof of basic inequalities}}
\begin{proof}[Proof of \cref{item:alpha} of \cref{lem:basic inequality}]
    Combining \cref{eq:expectation of alpha sync,eq:updating probability for 3Majority},
    \[
    \E_{t-1}\sbra*{\alpha_{t-1}(i)}=\frac{1}{n}\cdot n\cdot \alpha_{t-1}(i)\rbra*{1+\alpha_{t-1}(i)-\alphanorm_{t-1}}=\alpha_{t-1}(i)\rbra*{1+\alpha_{t-1}(i)-\alphanorm_{t-1}}
    \]
    holds for 3-Majority. 
    Combining \cref{eq:expectation of alpha sync,eq:updating probability for 2 choices},
    \[
    \E_{t-1}\sbra*{\alpha_{t-1}(i)}
    =\alpha_{t-1}(i)\rbra*{1-\alphanorm_{t-1}+\alpha_{t-1}(i)^2}
    +\rbra*{1-\alpha_{t-1}(i)}\alpha_{t-1}(i)^2
    =\alpha_{t-1}(i)\rbra*{1+\alpha_{t-1}(i)-\alphanorm_{t-1}}
    \]
    holds for 2-Choices. 

    For variance, combining \cref{eq:variance of alpha sync,eq:updating probability for 3Majority}, 
    \begin{align*}
        \Var_{t-1}[\alpha_t(i)]
        &=\frac{\alpha_{t-1}(i)\rbra*{1+\alpha_{t-1}(i)-\alphanorm_{t-1}}\qty(1-\alpha_{t-1}(i)\rbra*{1+\alpha_{t-1}(i)-\alphanorm_{t-1}})}{n}\\
        &\leq \frac{\alpha_{t-1}(i)\rbra*{1+\alpha_{t-1}(i)-\alphanorm_{t-1}}\qty(1-\alpha_{t-1}(i)+\alphanorm_{t-1})}{n}\\
        &=\frac{\alpha_{t-1}(i)\rbra*{1-\qty(\alpha_{t-1}(i)-\alphanorm_{t-1})^2}}{n}\\
        &\leq \frac{\alpha_{t-1}(i)}{n}
    \end{align*}
    holds for 3-Majority.
    Combining \cref{eq:variance of alpha sync,eq:updating probability for 3Majority}, 
    \begin{align}
        &\Var_{t-1}[\alpha_t(i)] \nonumber \\
        &= \frac{\alpha_{t-1}(i)\rbra*{1-\alphanorm_{t-1}+\alpha_{t-1}(i)^2}\rbra*{\alphanorm_{t-1}-\alpha_{t-1}(i)^2}}{n}
        +\frac{\rbra*{1-\alpha_{t-1}(i)}\alpha_{t-1}(i)^2\rbra*{1-\alpha_{t-1}(i)^2}}{n} \label{eq:var alpha 2}\\
        &\leq \frac{\alpha_{t-1}(i)\alphanorm_{t-1}}{n}
        +\frac{\alpha_{t-1}(i)^2}{n} \nonumber 
    \end{align}
    holds for 2-Choices.
\end{proof}

\begin{proof}[Proof of \cref{item:delta} of \cref{lem:basic inequality}]
   From \cref{item:alpha} of \cref{lem:basic inequality}, 
    \begin{align*}
        \E_{t-1}[\delta_{t}]&=\E_{t-1}[\alpha_t(i)]-\E_{t-1}[\alpha_t(j)]
        =(\alpha_{t-1}(i)-\alpha_{t-1}(j))\rbra*{1+\alpha_{t-1}(i)+\alpha_{t-1}(j)+\alphanorm_{t-1}}
    \end{align*}
    holds for both models.

    For variance, recall \cref{eq:variance of delta}.
    \paragraph*{3-Majority.}
    Write $f_i=\Pr_{t-1}\sbra*{\opinion_t(v)=i}=\alpha_{t-1}(i)(1+\alpha_{t-1}(i)-\alphanorm_{t-1})$ for convenience.
    From \cref{eq:variance of alpha sync}, we have $\Var_{t-1}[\alpha_t(i)]=\frac{f_i(1-f_i)}{n}$ and $\Var_{t-1}[\alpha_t(j)]=\frac{f_j(1-f_j)}{n}$. 
    Hence, from \cref{eq:variance of delta}, we obtain
    \begin{align*}
        \Var_{t-1}[\delta_t]
        &=\frac{f_i(1-f_i)}{n}+\frac{f_j(1-f_j)}{n}+\frac{2f_if_j}{n}\\
        &=\frac{f_i+f_j-(f_i-f_j)^2}{n}\\
        &\leq \frac{\alpha_{t-1}(i)(1+\alpha_{t-1}(i)-\alphanorm_{t-1})+\alpha_{t-1}(j)(1+\alpha_{t-1}(j)-\alphanorm_{t-1})}{n}\\
        &\leq \frac{2\alpha_{t-1}(i)+2\alpha_{t-1}(j)}{n}.
    \end{align*}
    \paragraph*{2-Choices.}
    First, from \cref{eq:updating probability for 2 choices}, we have
    \begin{align*}
        \frac{1}{n}\sum_{v\in V}\Pr_{t-1}\sbra*{\opinion_t(v)=i}\Pr_{t-1}\sbra*{\opinion_t(v)=j}
        &=\alpha_{t-1}(i)\rbra*{1-\alphanorm_{t-1}+\alpha_{t-1}(i)^2}\alpha_{t-1}(j)^2\\
        &\hspace{1em}+\alpha_{t-1}(j)\alpha_{t-1}(i)^2\rbra*{1-\alphanorm_{t-1}+\alpha_{t-1}(j)^2}\\
        &\hspace{1em}+\rbra*{1-\alpha_{t-1}(i)-\alpha_{t-1}(j)}\alpha_{t-1}(i)^2\alpha_{t-1}(j)^2\\
        &\leq \alpha_{t-1}(i)\alpha_{t-1}(j)^2+\alpha_{t-1}(i)^2\alpha_{t-1}(j)+\alpha_{t-1}(i)^2\alpha_{t-1}(j)^2.
    \end{align*}
    Hence, from \cref{eq:variance of delta} and \cref{item:alpha} of \cref{lem:basic inequality}, we have
    \begin{align*}
        \Var_{t-1}[\delta_t]
        &\leq \frac{\alpha_{t-1}(i)(\alpha_{t-1}(i)+\alphanorm_{t-1})}{n}
        +\frac{\alpha_{t-1}(j)(\alpha_{t-1}(j)+\alphanorm_{t-1})}{n}\\
        &\hspace{1em}+\frac{\alpha_{t-1}(i)\alpha_{t-1}(j)\qty(\alpha_{t-1}(i)+\alpha_{t-1}(j)+\alpha_{t-1}(i)\alpha_{t-1}(j))}{n}\\
        &= \frac{\alphanorm_{t-1}\qty(\alpha_{t-1}(i)+\alpha_{t-1}(j))+\qty(\alpha_{t-1}(i)+\alpha_{t-1}(j))^2-2\alpha_{t-1}(i)\alpha_{t-1}(j)}{n}\\
        &\hspace{1em}+\frac{\alpha_{t-1}(i)\alpha_{t-1}(j)\qty(\alpha_{t-1}(i)+\alpha_{t-1}(j)+\alpha_{t-1}(i)\alpha_{t-1}(j))}{n}\\
        &\leq \frac{(\alpha_{t-1}(i)+\alpha_{t-1}(j))(\alpha_{t-1}(i)+\alpha_{t-1}(j)+\alphanorm_{t-1})}{n}.
    \end{align*}
\end{proof}

\begin{proof}[Proof of \cref{item:expectation of alphanorm} of \cref{lem:basic inequality}]
    Recall \cref{eq:expectation of alpha syncnorm}.
    \paragraph*{3-Majority.}
    Write $f_i=\Pr_{t-1}\sbra*{\opinion_t(v)=i}=\alpha_{t-1}(i)(1+\alpha_{t-1}(i)-\alphanorm_{t-1})$ for convenience.
    From 
    \cref{lem:basic inequality} and 
    \cref{eq:variance of alpha sync}, we have 
    $\E_{t-1}[\alpha_t(i)]=f_i$ and 
    $\Var_{t-1}[\alpha_t(i)]=\frac{f_i(1-f_i)}{n}$.
    From \cref{eq:expectation of alpha syncnorm}, we have 
    \begin{align*}
        \E_{t-1}[\alphanorm_t]
        = \sum_{i\in [k]}\rbra*{\E_{t-1}\sbra*{\alpha_t(i)}^2 + \Var_{t-1}\sbra*{\alpha_t(i)}}
        =\sum_{i\in [k]}\rbra*{f_i^2+\frac{f_i(1-f_i)}{n}}
        =\rbra*{1-\frac{1}{n}}\sum_{i\in [k]}f_i^2+\frac{1}{n}.
    \end{align*}
    Furthermore, we have
    \begin{align}
        \sum_{i\in [k]}\alpha_{t-1}(i)^2\rbra*{1+\alpha_{t-1}(i)-\alphanorm_{t-1}}^2
        &\geq \sum_{i\in [k]}\alpha_{t-1}(i)^2+2\sum_{i\in [k]}\alpha_{t-1}(i)^2\rbra*{\alpha_{t-1}(i)-\alphanorm_{t-1}} \nonumber\\
        &=\alphanorm_{t-1}+2\rbra*{\|\alpha_{t-1}\|_3^3-\alphanorm_{t-1}^2}\nonumber \\
        &\geq \alphanorm_{t-1}. \label{eq:squared sum of alpha}
    \end{align}
        Note that $\left(\sum_{i\in [k]}\alpha_{t-1}(i)^2\right)^2\leq \sum_{i\in [k]}(\alpha_{t-1}(i)^{1/2})^2(\alpha_{t-1}(i)^{3/2})^2=\|\alpha_{t-1}\|_3^3$ holds from the Cauchy-Schwartz inequality.
   Thus, we obtain
    \begin{align}
         \E_{t-1}[\alphanorm_t]\geq \rbra*{1-\frac{1}{n}}\alphanorm_{t-1}+\frac{1}{n}
         =\alphanorm_{t-1}+\frac{1-\alphanorm_{t-1}}{n}
         \geq
         \alphanorm_{t-1}
    \end{align}
    holds and we obtain the claim.
    \paragraph*{2-Choices.}
    From \cref{item:alpha} of \cref{lem:basic inequality} and
    \cref{eq:squared sum of alpha}, we have 
    \begin{align}
       \sum_{i\in [k]}\E_{t-1}\sbra*{\alpha_t(i)}^2
        =\sum_{i\in [k]}\alpha_{t-1}(i)^2\rbra*{1+\alpha_{t-1}(i)-\alphanorm_{t-1}}^2
        \geq \alphanorm_{t-1}. \label{eq:alphanorm 2choices 1}
    \end{align}
    Furthermore, from \cref{eq:var alpha 2}, 
    \begin{align*}
    \sum_{i\in [k]}\Var_{t-1}[\alpha_t(i)]
    &\geq\sum_{i\in [k]}\frac{\rbra*{1-\alpha_{t-1}(i)}\alpha_{t-1}(i)^2\rbra*{1-\alpha_{t-1}(i)^2}}{n}\\
    &\geq \frac{\qty(1-\sqrt{\alphanorm_{t-1}})\qty(1-\alphanorm_{t-1})\alphanorm_{t-1}}{n}\\
    &\geq 0.
     \end{align*}
     Hence, from \cref{eq:expectation of alpha syncnorm}, we obtain the claim.
\end{proof}

\subsection{\texorpdfstring{Proof of \cref{lem:two strong opinions}}{Proof of Special cases}}\label{sec:special inequalities}
First, we observe the following holds for both models:
For any distinct $i,j\in [k]$ and $t-1<\min\{\tauiweak,\taujweak\}$, 
\begin{align}
    \|\alpha_{t-1}\|_\infty^2\leq \alphanorm_{t-1} \leq \frac{\min\qty{\alpha_{t-1}(i),\alpha_{t-1}(j)}}{1-\cweak} \leq \frac{1}{2(1-\cweak)}.
    \label{eq:upper bound for norm} 
\end{align}
    The first inequality is obvious from the definition of norms.
    The last inequality follows from $\min\qty{\alpha_{t-1}(i),\alpha_{t-1}(j)}\leq 1/2$.
    Furthermore, since $\alpha_{t-1}(i),\alpha_{t-1}(j)\geq (1-\cweak)\alphanorm_{t-1}$ holds, we have 
    $\min\qty{\alpha_{t-1}(i),\alpha_{t-1}(j)}\geq (1-\cweak)\alphanorm_{t-1}$ and we obtain \cref{eq:upper bound for norm}.
\begin{proof}[Proof of \cref{item:lower bound for multipricative term} of \cref{lem:two strong opinions}]
    From \cref{eq:upper bound for norm}, we obtain
    \begin{align*}
        &\alpha_{t-1}(i)+\alpha_{t-1}(j)-\alphanorm_{t-1}\\
        &\geq \max\qty{\alpha_{t-1}(i),\alpha_{t-1}(j)}+\min\qty{\alpha_{t-1}(i),\alpha_{t-1}(j)}-\frac{\min\qty{\alpha_{t-1}(i),\alpha_{t-1}(j)}}{1-\cweak}\\
        &= \max\qty{\alpha_{t-1}(i),\alpha_{t-1}(j)}-\frac{\cweak}{1-\cweak}\min\qty{\alpha_{t-1}(i),\alpha_{t-1}(j)}\\
        &\geq \frac{1-2\cweak}{1-\cweak}\max\qty{\alpha_{t-1}(i),\alpha_{t-1}(j)}.
    \end{align*}
\end{proof}
\begin{proof}[Proof of \cref{item:lower bound for variance} of \cref{lem:two strong opinions}]
    Recall \cref{eq:variance of delta}.
    \paragraph*{3-Majority.}
    From \cref{eq:variance of alpha sync,eq:updating probability for 3Majority,eq:upper bound for norm}, we have 
    \begin{align}
        \Var_{t-1}[\alpha_t(i)]
        &=\frac{\alpha_{t-1}(i)(1+\alpha_{t-1}(i)-\alphanorm_{t-1})\qty(1-\alpha_{t-1}(i)(1+\alpha_{t-1}(i)-\alphanorm_{t-1}))}{n} \nonumber\\
        &\geq \frac{\alpha_{t-1}(i)(1-\alphanorm_{t-1})\qty(1-\alpha_{t-1}(i)-\alpha_{t-1}(i)^2+\alpha_{t-1}(i)\alphanorm_{t-1})}{n} \nonumber\\
        &\geq \frac{\alpha_{t-1}(i)(1-\|\alpha_{t-1}\|_\infty)\qty(1-\alpha_{t-1}(i)-\alpha_{t-1}(i)^2+\alpha_{t-1}(i)^3)}{n} \nonumber\\
        &\geq \frac{\alpha_{t-1}(i)(1-\|\alpha_{t-1}\|_\infty)\qty(1-\|\alpha_{t-1}\|_\infty-\|\alpha_{t-1}\|_\infty^2+\|\alpha_{t-1}\|_\infty^3)}{n} \nonumber\\
        &=\frac{\alpha_{t-1}(i)(1-\|\alpha_{t-1}\|_\infty)^3(1+\|\alpha_{t-1}\|_\infty)}{n} \nonumber\\
        &\geq \qty(1-\frac{1}{\sqrt{2(1-\cweak)}})^3\frac{\alpha_{t-1}(i)}{n}. \label{eq:varalphalower}
    \end{align}
    Note that the function $f(x)=1-x-x^2+x^3$ is decreasing in range $[0,1]$.
    Since $ \Var_{t-1}\qty[\delta_t]\geq \Var_{t-1}[\alpha_t(i)]+\Var_{t-1}[\alpha_t(j)]$ holds from \cref{eq:variance of delta}, we obtain the claim.
    \paragraph*{2-Choices.}
    From \cref{eq:var alpha 2,eq:upper bound for norm}, we have
   \begin{align*}
       \Var_{t-1}[\alpha_{t}(i)]
       \geq \frac{\rbra*{1-\alpha_{t-1}(i)}\alpha_{t-1}(i)^2\rbra*{1-\alpha_{t-1}(i)^2}}{n}
       \geq \qty(1-\frac{1}{\sqrt{2(1-\cweak)}})^2\frac{\alpha_{t-1}(i)^2}{n}.
   \end{align*}
   Since $ \Var_{t-1}\qty[\delta_t]\geq \Var_{t-1}[\alpha_t(i)]+\Var_{t-1}[\alpha_t(j)]$ holds from \cref{eq:variance of delta}, we obtain the claim.
\end{proof}

\section{Deferred Proof}
\subsection{Additive Drift of the Bias}\label{sec:deferred gap grows}
\begin{proof}[Proof of \cref{lem:taudeltaplus general}]
    Write $L=16\xdelta^2$ and $x\vee y\defeq \max\{x,y\}$ for convenience.
    Obviously, we have
    \begin{align*}
        \E[\delta_{\tau}^2]
        =\E[\delta_{\tau}^2\indicator_{\delta_\tau^2\leq L}]+\E[\delta_{\tau}^2\indicator_{\delta_\tau^2> L}]
        \leq L+\E[\delta_{\tau}^2\indicator_{\delta_\tau^2> L}].
    \end{align*}
    Furthermore, 
    \begin{align*}
        \E[\delta_{\tau}^2\indicator_{\delta_\tau^2> L}]
        &=\sum_{t=1}^\infty\E\qty[\indicator_{\tau=t}\delta_{t}^2\indicator_{\delta_t^2> L}]
        \leq \sum_{t=1}^\infty\E\qty[\indicator_{\tau>t-1}\delta_{t}^2\indicator_{\delta_t^2> L}]
        =\sum_{t=1}^\infty\E\qty[\indicator_{\tau>t-1}\E_{t-1}\qty[\delta_{t}^2\indicator_{\delta_t^2> L}]]
    \end{align*}
    holds. 

    Now, we claim that
    \begin{align}
        \indicator_{\tau>t-1}\E_{t-1}[\delta_{t}^2\indicator_{\delta_t^2> L}]
        &\leq \frac{\varianceref{lem:taudeltaplus OST}}{2} \label{eq:taudeltaplusc1}
    \end{align}
    holds for a sufficiently large $n$.
    Assuming \cref{eq:taudeltaplusc1}, we obtain
    \begin{align*}
        \E[\delta_{\tau}^2]
        \leq 16\xdelta^2+\frac{\varianceref{lem:taudeltaplus OST}}{2}\sum_{t=1}^\infty\E[\indicator_{\tau>t-1}]\leq 16\xdelta^2+\frac{\varianceref{lem:taudeltaplus OST}}{2}\E[\tau]
    \end{align*}
    

    From here, we give a proof of \cref{eq:taudeltaplusc1}.
    To begin with, we show $\Var_{t-1}[\delta_t]\leq \Cdeltavarplus \varianceref{lem:taudeltaplus OST}$ for $\tau>t-1$.
    Indeed, for 3-Majority, 
    \begin{align*}
        \Var_{t-1}\qty[\delta_t]
        &\leq \frac{2(\alpha_{t-1}(i)+\alpha_{t-1}(j))}{n} &&(\because \textrm{\cref{item:delta} of \cref{lem:basic inequality}})\\
        &\leq \frac{(1+\calphaup)(\alpha_{0}(i)+\alpha_{0}(j))}{n} &&(\because \tauiup,\taujup>t-1)\\
        &\leq 2(1+\calphaup)\frac{\max\{\alpha_0(i),\alpha_0(j)\}}{n}\\
        &=\Cdeltavarplus \varianceref{lem:taudeltaplus OST}
    \end{align*}
    holds, and for 2-Choices, 
    \begin{align*}
        \Var_{t-1}\qty[\delta_t]
        &\leq \frac{(\alpha_{t-1}(i)+\alpha_{t-1}(j))(\alpha_{t-1}(i)+\alpha_{t-1}(j)+\alphanorm_{t-1})}{n} &&(\because \textrm{\cref{item:delta} of \cref{lem:basic inequality}})\\
        &\leq \frac{(\alpha_{t-1}(i)+\alpha_{t-1}(j))\qty(\alpha_{t-1}(i)+\alpha_{t-1}(j)+\frac{\alpha_{t-1}(i)}{1-\cweak})}{n}&&(\because \tauiweak>t-1)\\
        &\leq \frac{(1+\calphaup)^2(\alpha_{0}(i)+\alpha_{0}(j))\qty(\frac{2-\cweak}{1-\cweak}\alpha_{0}(i)+\alpha_{0}(j))}{n} &&(\because \tauiup,\taujup>t-1)\\
        &\leq 2(1+\calphaup)^2\frac{3-2\cweak}{1-\cweak} \cdot \frac{\max\{\alpha_0(i),\alpha_0(j)\}^2}{n}\\
        &=\Cdeltavarplus\varianceref{lem:taudeltaplus OST}.
    \end{align*}
    
    Next, we have
    \begin{align*}
        \E_{t-1}\qty[\delta_{t}^2\indicator_{\delta_t^2> L}]
        =\int_{0}^1\Pr_{t-1}\qty[\delta_{t}^2\indicator_{\delta_t^2> L}>y]dy
        =\int_{0}^1\Pr_{t-1}\qty[\delta_{t}^2>(y\vee L)]dy
        =\int_{0}^1\Pr_{t-1}\qty[\abs{\delta_{t}}>\sqrt{y\vee L}]dy.
    \end{align*}
    We observe that
    $\abs{\delta_t}\leq \abs{\delta_t-\E_{t-1}\qty[\delta_t]}+\frac{\sqrt{y\vee L}}{2}$ holds for $\taudeltaplus>t-1$, since
    $\abs{\E_{t-1}\qty[\delta_t]}\leq 2\abs{\delta_{t-1}}\leq 2\xdelta$ and
    $\abs{\delta_t}\leq \abs{\delta_t-\E_{t-1}\qty[\delta_t]}+\abs{\E_{t-1}\qty[\delta_t]}\leq \abs{\delta_t-\E_{t-1}\qty[\delta_t]}+2\xdelta\leq \abs{\delta_t-\E_{t-1}\qty[\delta_t]}+\frac{\sqrt{y\vee L}}{2}$ hold.
    Applying the Bernstein inequality (\cref{thm:Bernstein}) with $n\delta_t-\E_{t-1}[n\delta_t]$, 
    \begin{align*}
        \Pr_{t-1}\qty[\abs{\delta_{t}}>\sqrt{y\vee L}]
        &\leq \Pr_{t-1}\qty[\abs{\delta_{t}-\E_{t-1}\qty[\delta_t]}>\frac{\sqrt{y\vee L}}{2}]\\
        &\leq \Pr_{t-1}\qty[\abs{n\delta_{t}-\E_{t-1}\qty[n\delta_t]}>\frac{n\sqrt{\max\{y\vee L\}}}{2}]\\
        &\leq 2\exp\qty(-\frac{(y\vee L)n^2/4}{\Var_{t-1}[n\delta_t]+\sqrt{\max\{y\vee L\}}n/6})
        && (\because \textrm{\cref{thm:Bernstein}})\\
        &\leq 2\exp\qty(-\frac{3n(y\vee L)/2}{6n\Cdeltavarplus\varianceref{lem:taudeltaplus OST}+\sqrt{y\vee L}})
        && (\because \Var_{t-1}\qty[\delta_t]\leq \Cdeltavarplus\varianceref{lem:taudeltaplus OST})\\
        &\leq 2\exp\qty(-\frac{3}{4}n\sqrt{y\vee L})
        +2\exp\qty(-\frac{y\vee L}{8\Cdeltavarplus\varianceref{lem:taudeltaplus OST}})
    \end{align*}
    holds for $\tau>t-1$.
    By integrating each term, we obtain
    \begin{align*}
    &\int_0^1\exp\qty(-\frac{3}{4}n\sqrt{y\vee L})dy\\
    &=L\exp\qty(-\frac{3}{4}n\sqrt{L})+\int_L^1\exp\qty(-\frac{3}{4}n\sqrt{y})dy\\
    &\leq L\exp\qty(-\frac{3}{4}n\sqrt{L})+2\frac{(3/4)n\sqrt{L}+1}{(3/4)^2n^2}\exp\qty(-\frac{3}{4}n\sqrt{L})\\
    &\leq 2L\exp\qty(-\frac{3}{4}n\sqrt{L}) &&(\because L=16\xdelta^2\geq 16/n^2)
    \end{align*}
    and
    \begin{align*}
        \int_0^1\exp\qty(-\frac{\max\{L,y\}}{8\Cdeltavarplus\varianceref{lem:taudeltaplus OST}})dy
        &=L\exp\qty(-\frac{L}{8\Cdeltavarplus\varianceref{lem:taudeltaplus OST}})+\int_L^1\exp\qty(-\frac{y}{8\Cdeltavarplus\varianceref{lem:taudeltaplus OST}})dy\\
        &\leq L\exp\qty(-\frac{L}{8\Cdeltavarplus\varianceref{lem:taudeltaplus OST}})+8\Cdeltavarplus\varianceref{lem:taudeltaplus OST}\exp\qty(-\frac{L}{8\Cdeltavarplus\varianceref{lem:taudeltaplus OST}})\\
        &\leq 2L\exp\qty(-\frac{L}{8\Cdeltavarplus\varianceref{lem:taudeltaplus OST}}). &&(\because \Cdeltavarplus\leq 2\xdelta^2/\varianceref{lem:taudeltaplus OST})
    \end{align*}
    Now, we claim $\frac{\xdelta^2}{\varianceref{lem:taudeltaplus OST}}\leq n^4$ holds for both models and for a sufficiently large $n$.
    For 3-Majority, $\frac{\xdelta^2}{\varianceref{lem:taudeltaplus OST}}\leq \frac{n}{\constref{lem:two strong opinions}^3(1-\calphadown)\max\{\alpha_0(i),\alpha_0(j)\}}\leq \frac{n^2}{\constref{lem:two strong opinions}^3(1-\calphadown)}$.
    Similarly, for 2-Choices,
    $\frac{\xdelta^2}{\varianceref{lem:taudeltaplus OST}}\leq \frac{n}{\constref{lem:two strong opinions}^2(1-\calphadown)^2\max\{\alpha_0(i),\alpha_0(j)\}^2}\leq \frac{n^3}{\constref{lem:two strong opinions}^2(1-\calphadown)^2}$.
    Consequently, for $\tau>t-1$, 
    \begin{align*}
        \E_{t-1}[\delta_{t}^2\indicator_{\delta_t^2> L}]
        &\leq 2\varianceref{lem:taudeltaplus OST}\qty(\underbrace{\frac{16\xdelta^2}{\varianceref{lem:taudeltaplus OST}}}_{\leq 16n^{4}}\exp\qty(-3\underbrace{n\xdelta}_{\geq 2 \log n})
        +\underbrace{\frac{16\xdelta^2}{\varianceref{lem:taudeltaplus OST}}\exp\qty(-\frac{2\xdelta^2}{\Cdeltavarplus\varianceref{lem:taudeltaplus OST}})}_{\leq 16/100})\\
        &\leq 32\varianceref{lem:taudeltaplus OST}\qty(n^{-2}+\frac{1}{100}) \\
        &\leq \varianceref{lem:taudeltaplus OST}/2
    \end{align*}
    holds for a sufficiently large $n$ and that concludes the claim.
\end{proof}

\begin{proof}[Proof of \cref{lem:taudeltaplus one step sync}]
    From definition, $\abs{\delta_1}\leq \abs{\delta_1-\E[\delta_1]}-\abs{\E[\delta_1]}\leq \abs{\delta_1-\E[\delta_1]}-2\xdelta$ holds.
    Note that $\abs{\E[\delta_1]}\leq 2\abs{\delta_0}\leq 2\xdelta$.
    Since $n\delta_1=\sum_{v\in V}(\indicator_{\opinion_1(v)=i}-\indicator_{\opinion_1(v)=j})$ is the sum of $n$ independent random variables, $\lim_{n\to \infty}\Pr\qty[\frac{n\delta_1-\E[n\delta_1]}{\sqrt{\Var[n\delta_1]}}\leq x]=\Phi(x)$ holds from the central limit theorem.
    Here, $\Phi(x)=\int_{-\infty}^x \frac{1}{\sqrt{2\pi}}\e^{-y^2/2}dy$ is the cumulative distribution function of the standard normal distribution.
    Hence, there exists some positive constant $0<c<1$ such that
\begin{align*}
\Pr\qty[\tau>1]
&= \Pr\qty[\tau>1\textrm{ and }\abs{\delta_1}<\xdelta]\\
&\leq \Pr\qty[\abs{\delta_1-\E[\delta_1]}<3\xdelta \textrm{ and }\tau>1]\\
&=\Pr\qty[\abs{\frac{\delta_1-\E[\delta_1]}{\sqrt{\Var[\delta_1]}}}<\frac{3\xdelta}{\sqrt{\Var[\delta_1]}}\textrm{ and }\tau>1]\\
&\leq \Pr\qty[\abs{\frac{n\delta_1-\E[n\delta_1]}{\sqrt{\Var[n\delta_1]}}}<3\sqrt{C}\textrm{ and }\tau>1]
&&(\because \frac{x_\delta}{\sqrt{\Var[\delta_1]}}\leq \sqrt{C})\\
&\leq \Phi(3\sqrt{C})-\Phi(-3\sqrt{C})+o(1)\\
&\leq 1-c
\end{align*}
holds as $n\to \infty$.
Note that $\Var[\delta_1]\geq \varianceref{lem:taudeltaplus OST}$ holds for both model.
Indeed, for 3-Majority, 
\begin{align*}
    \Var[\delta_1]
    &\geq \constref{lem:two strong opinions}^3\frac{\alpha_{0}(i)+\alpha_{0}(j)}{n} &&(\because \textrm{\cref{item:lower bound for variance} of \cref{lem:two strong opinions}}) \nonumber\\
    &\geq \varianceref{lem:taudeltaplus OST}
\end{align*}
holds and for 2-Choices, 
\begin{align*}
    \Var[\delta_1]
    &\geq \constref{lem:two strong opinions}^2\frac{\alpha_{0}(i)^2+\alpha_{0}(j)^2}{n} &&(\because \textrm{\cref{item:lower bound for variance} of \cref{lem:two strong opinions}}) \nonumber\\
    &\geq \varianceref{lem:taudeltaplus OST}
\end{align*}
holds.
\end{proof}


\subsection{Bound on the Norm at a Stopping time}\label{sec:deferred norm grows}
\begin{proof}[Proof of \cref{lem:taunormplus expectation of norm}]
    Write $\tau=\taunormplus$ and $A_t(i)=n\alpha_t(i)=\sum_{v\in V}\indicator_{\opinion_t(v)=i}$.
    First, we decompose $\E[\alphanorm_\tau]$ into the following four terms as given in \cref{eq:taunormplus 1,eq:taunormplus 2,eq:taunormplus 3,eq:taunormplus 4}:
    \begin{align}
    \E[\alphanorm_\tau]
    &=\frac{1}{n^2}\sum_{i\in [k]}\E\qty[A_\tau(i)^2]
    =\frac{1}{n^2}\sum_{i\in [k]}\sum_{t=1}^\infty\E\qty[A_t(i)^2\indicator_{\tau=t}] \nonumber \\
    &=\frac{1}{n^2}\sum_{i\in [k]}\sum_{t=1}^\infty\E\qty[A_t(i)^2\indicator_{\tau=t}\indicator_{A_{t-1}(i)\geq C\lg n}\indicator_{A_{t}(i)\geq 4\e A_{t-1}(i)}] \label{eq:taunormplus 1}\\
    &+\frac{1}{n^2}\sum_{i\in [k]}\sum_{t=1}^\infty\E\qty[A_t(i)^2\indicator_{\tau=t}\indicator_{A_{t-1}(i)\geq C\lg n}\indicator_{A_{t}(i)< 4\e A_{t-1}(i)}] \label{eq:taunormplus 2}\\
    &+\frac{1}{n^2}\sum_{i\in [k]}\sum_{t=1}^\infty\E\qty[A_t(i)^2\indicator_{\tau=t}\indicator_{A_{t-1}(i)< C\lg n}\indicator_{A_{t}(i)\geq 4\e C\lg n}] \label{eq:taunormplus 3}\\
    &+\frac{1}{n^2}\sum_{i\in [k]}\sum_{t=1}^\infty\E\qty[A_t(i)^2\indicator_{\tau=t}\indicator_{A_{t-1}(i)< C\lg n}\indicator_{A_{t}(i)<4\e C\lg n}]. \label{eq:taunormplus 4}
    \end{align}
    Regarding \cref{eq:taunormplus 2,eq:taunormplus 4}, we use the following bounds that are straightforwardly derived from the definitions:
    \begin{align*}
        &\frac{1}{n^2}\sum_{i\in [k]}\sum_{t=1}^\infty\E\qty[A_t(i)^2\indicator_{\tau=t}\indicator_{A_{t-1}(i)\geq C\lg n}\indicator_{A_{t}(i)< 4\e A_{t-1}(i)}] \\
        &< \frac{1}{n^2}\sum_{i\in [k]}\sum_{t=1}^\infty\E\qty[16\e^2A_{t-1}(i)^2\indicator_{\tau=t}] & &(\because A_{t}(i)< 4\e A_{t-1}(i))\\
        &=16\e^2 \sum_{t=1}^\infty\E\qty[\alphanorm_{t-1}\indicator_{\tau=t}]\\
        &\leq 16\e^2 \sum_{t=1}^\infty\E\qty[\xnorm\indicator_{\tau=t}] & &(\because \tau=\taunormplus>t-1)\\
        &\leq 16\e^2\xnorm, 
    \end{align*}
    \begin{align*}
        &\frac{1}{n^2}\sum_{i\in [k]}\sum_{t=1}^\infty\E\qty[A_t(i)^2\indicator_{\tau=t}\indicator_{A_{t-1}(i)< C\lg n}\indicator_{A_{t}(i)<4\e C\lg n}]\\
        &< \frac{1}{n^2}\sum_{i\in [k]}\sum_{t=1}^\infty\E\qty[16\e^2C^2\lg^2n\indicator_{\tau=t}] & &(\because A_{t-1}(i)< C\lg n)\\
        &\leq \frac{16\e^2C^2\lg^2n}{n}.
    \end{align*}
    Now, we estimate \cref{eq:taunormplus 1,eq:taunormplus 3}.
    \paragraph{Bound for \cref{eq:taunormplus 1}: The case when $A_{t-1}(i)\geq C\lg n$ and $A_{t}(i)\geq 4\e A_{t-1}(i)$.}
    For \cref{eq:taunormplus 1}, we observe that
    \begin{align*}
        &\E\qty[A_t(i)^2\indicator_{\tau=t}\indicator_{A_{t-1}(i)\geq C\lg n}\indicator_{A_{t}(i)\geq 4\e A_{t-1}(i)}]\\
        &\leq \E\qty[A_t(i)^2\indicator_{\tau>t-1}\indicator_{A_{t-1}(i)\geq C\lg n}\indicator_{A_{t}(i)\geq 4\e A_{t-1}(i)}]\\
        &\leq \E\qty[\indicator_{\tau>t-1}\indicator_{A_{t-1}(i)\geq C\lg n}\E_{t-1}\qty[A_t(i)^2\indicator_{A_{t}(i)\geq 4\e A_{t-1}(i)}]]
    \end{align*}
    holds.
    For $A_{t-1}(i)\geq C\lg n$, applying \cref{thm:Chernoff} for $z=4\e A_{t-1}(i)\geq 2\e\E_{t-1}[A_t(i)]$ yields
    \begin{align*}
        \E_{t-1}\qty[A_t(i)^2\indicator_{A_{t}(i)\geq 4\e A_{t-1}(i)}]
        &=\sum_{\ell=1}^{\infty}\Pr_{t-1}\qty[A_t(i)^2\indicator_{A_{t}(i)\geq 4\e A_{t-1}(i)}\geq \ell]\\
        &=\sum_{\ell=1}^{n^2}\Pr_{t-1}\qty[A_t(i)^2\geq \ell \textrm{ and } A_{t}(i)\geq 4\e A_{t-1}(i)]\\
        &\leq n^2\Pr_{t-1}\qty[A_{t}(i)\geq 4\e A_{t-1}(i)]\\
        &\leq n^2 2^{-4\e A_{t-1}(i)} &&(\because \mathrm{\cref{thm:Chernoff}})\\
        &\leq n^2 2^{-4\e C\lg n}&&(\because A_{t-1}(i)\geq C\lg n)\\
        &=n^{-4\e C+2}.
    \end{align*}
    Thus,
    \begin{align*}
        \frac{1}{n^2}\sum_{i\in [k]}\sum_{t=1}^\infty\E\qty[A_t(i)^2\indicator_{\tau=t}\indicator_{A_{t-1}(i)\geq C\lg n}\indicator_{A_{t}(i)\geq 4\e A_{t-1}(i)}]
        &\leq \frac{1}{n^2}\sum_{i\in [k]}\sum_{t=1}^\infty\E\qty[\indicator_{\tau>t-1}n^{-4\e C+2}]\\
        &\leq n^{-4\e C+1}\E[\tau].
    \end{align*}

    \paragraph{Bound for \cref{eq:taunormplus 3}: The case when $A_{t-1}(i)< C\lg n$ and $A_{t}(i)\geq 4\e C\lg n$.}
    Similarly, for \cref{eq:taunormplus 3}, we have
    \begin{align*}
        &\E\qty[A_t(i)^2\indicator_{\tau=t}\indicator_{A_{t-1}(i)< C\lg n}\indicator_{A_{t}(i)\geq 4\e C\lg n}] \\
        &\leq \E\qty[A_t(i)^2\indicator_{\tau>t-1}\indicator_{A_{t-1}(i)< C\lg n}\indicator_{A_{t}(i)\geq 4\e C\lg n}] \\
        &\leq \E\qty[\indicator_{\tau>t-1}\indicator_{A_{t-1}(i)< C\lg n}\E_{t-1}\qty[A_t(i)^2\indicator_{A_{t}(i)\geq 4\e C\lg n}]].
    \end{align*}
    Since $2\e\E_{t-1}[A_t(i)]\leq 4\e A_{t-1}(i) <4\e C\lg n$ for $A_{t-1}(i)< C\lg n$, 
    applying \cref{thm:Chernoff} yields
    \begin{align*}
        \E_{t-1}\qty[A_t(i)^2\indicator_{A_{t}(i)\geq 4\e C\lg n}]
        &=\sum_{\ell=1}^{\infty}\Pr_{t-1}\qty[A_t(i)^2\indicator_{A_{t}(i)\geq 4\e C\lg n}\geq \ell]\\
        &=\sum_{\ell=1}^{n^2}\Pr_{t-1}\qty[A_t(i)^2\geq \ell \textrm{ and } A_{t}(i)\geq 4\e C\lg n]\\
        &\leq n^2\Pr_{t-1}\qty[A_{t}(i)\geq 4\e C\lg n]\\
        &\leq n^2 2^{-4\e C\lg n} &&(\because \mathrm{\cref{thm:Chernoff}})\\
        &=n^{-4\e C+2}
    \end{align*}
    for $A_{t-1}(i)< C\lg n$.
    Thus,
    \begin{align*}
        \frac{1}{n^2}\sum_{i\in [k]}\sum_{t=1}^\infty\E\qty[A_t(i)^2\indicator_{\tau=t}\indicator_{A_{t-1}(i)< C\lg n}\indicator_{A_{t}(i)\geq 4\e C\lg n}]
        &\leq \frac{1}{n^2}\sum_{i\in [k]}\sum_{t=1}^\infty\E\qty[\indicator_{\tau>t-1}n^{-4\e C+2}]\\
        &\leq n^{-4\e C+1}\E[\tau].
    \end{align*}
    Consequently, we obtain
    \begin{align*}
        \E[\alphanorm_\tau]
        &\leq 16\e^2\qty(\xnorm+\frac{C^2\lg^2n}{n})+2n^{-4\e C+1}\E[\tau].
    \end{align*}
\end{proof}

\subsection{Additive and Multiplicative Drift} \label{sec:proof of nazo lemma}
Our proof basically follows the proof technique of \cite{Doerr11}.
\begin{proof}[Proof of \cref{lem:nazo lemma}]
    We divide time into phases each consists of consecutive rounds of several length.
    Formally, the phase $ s $ begins at round $ \tau(s) $ and ends at round $ \tau(s+1) $, where
    \begin{align*} 
        \tau(s)  = 
            \begin{cases}
                0	& \text{for $s=0$},\\
                \inf\qty{ t \ge \tau(s-1) \colon 
                \begin{aligned}
                &&&t\ge \tau(s-1)+ T  \text{ or} \\
                &&&t\ge \tau  \text{ or}\\
                &&&\varphi(Z_t) \ge \max\qty{  x_0, (1+\cphiup)\varphi\qty(Z_{\tau(s-1)}) }
                \end{aligned}
                } & \text{for $s\ge 1$}.
            \end{cases}
    \end{align*}
    We say that the phase $ s $ is \emph{good} if it ends
    due to either the second or third condition being satisfied, i.e.,
    $ \tau(s) \ge \tau $ or $ \varphi(Z_{\tau(s)}) \ge \max\qty{  x_0, (1+\cphiup)\varphi\qty(Z_{\tau(s-1)})}$.
    Otherwise, the phase $ s $ is \emph{bad}.
    For example, if the phase $ s $ ends with the condition $ t\ge\tau $, then $ \tau(s+c) = \tau(s)$ for all $ c\in\Nat_0 $ (the length of a phase can be zero); thus, all subsequent phases are good.

    We shall count the number of consecutive good phases starting from round $ 0 $.
    By the first assumption of \cref{lem:nazo lemma}, for any $ z \in \Omega $, conditioned on $ Z_0 = z $,
    the phase $ 0 $ is good with probability $ C_1 $; then either $ \tau(1)\ge \tau $ or $ \varphi(Z_{\tau(1)}) \ge x_0$ holds.
    Again, by the second assumption (and since $ (Z_t) $ is a Markov chain), conditioned on the event that the phase $ 0 $ is good, the phase $ 1 $ is good with probability $ 1-\exp(-C_2 x_0^2) $; then either $ \tau(2) \ge \tau $ or $ \varphi(Z_{\tau(2)}) \ge (1 + \cphiup)x_0$ holds.
    By repeating this argument, conditioned on the event that the phases $0,\dots,s-1$ are good (in which case either $ \tau(s-1) \ge \tau $ or $ \varphi(Z_{\tau(s-1)}) \ge (1+\cphiup)^{s-1}x_0 $), we have that the phase $ s $ is good with probability $ 1-\exp(-C_2 (1+\cphiup)^{2s-2}\cdot x_0^2) $.
    Let $ S $ be the number of consecutive good phases starting at round $ 0 $.
    Note that $ S $ can be $ \infty $ when a phase ends with the condition $ t\ge \tau $.
    For the target value $ x^* $, let $ K\in\Nat $ be the minimum integer such that $ x_0 \cdot (1+\cphiup)^K \ge x^* $.
    Note that $ S>K $ implies that either $ t\ge \tau $ or $ \varphi(Z_t) \ge x_0 \cdot (1+\cphiup)^K \ge x^*$ for the time round $ t $ soon after the $ K+1 $ consecutive success phases, meaning that $ \varphi(Z_t) $ reaches the target value $ x^* $.

    Throughout the proof, we assume that the big-O notation hides factors depending on $C_1, C_2,x_0 $ and $ \cphiup $.
    For any $ \ell \ge 0 $ and any $ Z_0\in\Omega $, we have
    \begin{align*} 
        \Pr\qty[ S > \ell ] &= \Pr\qty[ \text{phases $0,\dots,\ell$ are good} ]  \\
        &\ge C_1 \prod_{s \in [\ell]} \qty( 1-\exp(-C_2 (1+\cphiup)^{2s-2}\cdot x_0^2) ) \\
        &\ge C_1 \prod_{s \in [\ell]} (1-p^s) & & \text{for some $ p <1$}\\
        &\ge C_1\cdot \prod_{s\in[\ell]} \exp\qty( -\frac{p^s}{1-p^s} ) & & (\because\text{$1-x \ge \e^{-\frac{x}{1-x}}$ for all $x\in[0,1)$}) \\
        &\ge C_1\cdot \exp\qty(-\frac{p}{1-p}\sum_{s\ge 0} p^s) \\
        &\ge C_1\cdot \exp\qty(-\frac{p}{(1-p)^2}) \\
        &=\Omega(1).
    \end{align*}
    Note that the inequality above holds regardless of the initial state $ Z_0 $.

    Consider the sequence $ (Z_t)_{t\ge 0} $.
    Let the number of consecutive successful phases be denoted sequentially as $S^{(0)},S^{(1)},\dots$.
    We stop the sequence $ (Z_t) $ when the number of consecutive good phases exceeds $ K $.
    Therefore, the number of phases during this process is at most $ S^{(0)} + S^{(1)} + \dots + S^{(U)} $,
        where $ U \in \Nat $ is the smallest integer such that $ S^{(U)} > K $ (here, we set $ S^{(U)}=K+1 $).
    Since each $ S^{(i)} $ satisfies $ \Pr[S^{(i)} > K] =\Omega(1)$, we have $ U= O(\log(1/\varepsilon)) $ with probability $ 1-\varepsilon $ (over randomness of $ (Z_t)$).

    We obtain an upper tail of $ S^{(0)} + \dots + S^{(U-1)} $.
    If $ \Pr[S^{(i)} > K] = 1 $, then $ U=1 $ and we have $ S^{(0)} = K+1$.
    If not, the marginal distribution of each $ S^{(i)} $ for $ i<U $ is the distribution of $ S $ conditioned on $ S\le K$.
    Moreover, for any $ \ell \le K$,
    \begin{align*}
        \Pr\qty[ S = \ell \condition S\le K] &\le \frac{\Pr\qty[S=\ell]}{\Pr\qty[S\le K]} \\
        &\le O(1)\cdot \Pr\qty[ S=\ell \condition S\ge \ell ] \\
        &=\Pr\qty[\text{phase $\ell$ is bad} \condition \text{phases $0,\dots,\ell-1$ are good}] \\
        &\le \begin{cases}
            \exp(-C_2 (1+\cphiup)^{2\ell-2}\cdot x_0^2) & \text{if $\ell\ge 1$} \\
            1-C_1 & \text{if $\ell=0$}
        \end{cases} \\
        &\le p^\ell.  & & \text{for some $ p <1$}
    \end{align*}
    In particular, $ \E\qty[S \condition S<\infty] \le O(1) $.
    Therefore, conditioned on $ U $, for a sufficiently large constant $ C'>0 $,
    from \cref{lem:concentration exponentially decay} (for $ \mu = O(U),m=U,\gamma = C'\log(1/\varepsilon)/\mu $),
    we have
    \begin{align*}
        \Pr\qty[ S^{(0)} + \dots +S^{(U-1)} \ge C'\log(1/\varepsilon) ] \le \varepsilon^{-\Omega(1)}.
    \end{align*}

    Therefore, for any $ \varepsilon>0 $, we have $ S^{(0)} + \dots + S^{(U)} = O(\log(1/\varepsilon)) + K = O(\log(1/\varepsilon) + \log(x^*/x_0)) $ with probability $ 1-\varepsilon $.
\end{proof}

\end{document}